\pdfoutput=1
\documentclass[reprint,amsmath,amssymb,aps,prx,superscriptaddress,floatfix,longbibliography]{revtex4-2}

\usepackage{graphicx}
\usepackage{bm}
\usepackage{amsthm}
\usepackage{mathtools}
\usepackage{xcolor}
\usepackage{tikz}
\usetikzlibrary{arrows.meta,positioning,calc,shapes.geometric,decorations.pathreplacing}
\usepackage{xurl}
\usepackage{etoolbox}
\AtBeginEnvironment{thebibliography}{}
\usepackage[colorlinks=true,linkcolor=blue!60!black,citecolor=blue!60!black,urlcolor=blue!60!black,filecolor=blue!60!black,hypertexnames=false]{hyperref}
\newcommand{\papertitle}{Arithmetic of qudit dimensions governs erasure correction on the Fano plane and stabilizer-state entanglement}
\hypersetup{pdftitle={\papertitle},pdfauthor={Xidan Song and Ruifeng Cao}}

\newcommand{\Stab}{\mathrm{Stab}}
\newcommand{\StabZ}[2]{\mathrm{Stab}^{(#1)}_{#2}}
\newcommand{\Lc}{\mathcal{L}}
\newcommand{\LcZ}[2]{\mathcal{L}^{(#1)}_{#2}}
\newcommand{\Lab}{\mathcal{L}^{\mathrm{ab}}}

\newcommand{\QLR}{\mathrm{QLR}}
\newcommand{\GF}{\mathrm{GF}}
\newcommand{\Z}{\mathbb{Z}}
\newcommand{\R}{\mathbb{R}}
\newcommand{\Q}{\mathbb{Q}}

\newcommand{\pr}{\mathrm{pr}}
\newcommand{\Dual}{\mathsf{D}}
\newcommand{\cone}{\operatorname{cone}}

\newcommand{\rank}{\operatorname{rank}}

\newcommand{\hF}{h_{\mathrm{F}}}
\newcommand{\hNF}{h_{\mathrm{NF}}}
\newcommand{\rF}{r_{\mathrm{F}}}
\newcommand{\rNF}{r_{\mathrm{NF}}}

\newcommand{\ket}[1]{\lvert #1\rangle}
\newcommand{\bra}[1]{\langle #1\rvert}
\newcommand{\abs}[1]{\lvert #1\rvert}
\newcommand{\norm}[1]{\lVert #1\rVert}

\newcommand{\LA}{\textup{(LA)}}
\newcommand{\LB}{\textup{(LB)}}
\newcommand{\fnine}{f_{9}}

\newcommand{\eps}{\varepsilon}
\newcommand{\Ingl}{\mathrm{Ing}}
\newcommand{\ie}{i.e.}

\newcommand{\ind}[1]{\mathbf{1}_{#1}}
\newcommand{\ftwelve}{f_{12}}
\newcommand{\rk}{\operatorname{rk}}
\newcommand{\SNF}{S_{\mathrm{NF}}}
\newcommand{\SF}{S_{\mathrm{F}}}
\newcommand{\Rg}{\mathcal{R}}
\newcommand{\Kf}{\mathbb{K}}
\newcommand{\hcap}{h_{\cap}}
\newcommand{\reprelease}{R48}

\renewcommand{\S}{\ensuremath{\mathsection}}

\theoremstyle{plain}
\newtheorem{theorem}{Theorem}
\newtheorem{proposition}{Proposition}
\newtheorem{lemma}{Lemma}
\newtheorem{corollary}{Corollary}
\theoremstyle{definition}
\newtheorem{definition}{Definition}
\newtheorem{example}{Example}
\theoremstyle{remark}

\theoremstyle{plain}
\newtheorem{applemma}{Lemma}
\numberwithin{applemma}{section}
\newtheorem{appproposition}[applemma]{Proposition}

\begin{document}

\title{\papertitle}

\newcommand{\authornote}{These authors contributed equally to this work. Corresponding author: Ruifeng Cao, \href{mailto:ruifeng.cao@manchester.ac.uk}{\nolinkurl{ruifeng.cao@manchester.ac.uk}}. Xidan Song: \href{mailto:liuqingdamafen@whu.edu.cn}{\nolinkurl{liuqingdamafen@whu.edu.cn}}.}
\author{Xidan Song}
\thanks{\authornote}
\affiliation{School of Computer Science, Wuhan University, Wuhan 430072, China}

\author{Ruifeng Cao}
\thanks{\authornote}
\affiliation{School of Engineering, The University of Manchester, Manchester M13 9PL, United Kingdom}

\begin{abstract}
Qudit codes often inherit the parameters of qubit codes, but not necessarily their correctable erasures. We show that a code encoding one qudit of dimension $D$ into seven such qudits that corrects every erasure Steane's code corrects exists exactly when $D$ is a power of $2$; for the qutrit color code, exactly when $D$ is odd. This needs no stabilizer assumption: these codes are, up to local unitaries, stabilizer codes of qubits, respectively prime-power qudits. The two patterns differ in one line, which completes the Fano plane only where $1+1=0$. As ideal secret sharing schemes, these codes exist exactly where classical ones do: entanglement does not help, unlike for Euler's $36$ officers. The same arithmetic makes the entropy cones of qubit and odd-dimensional stabilizer states incomparable from six parties on. At five, the stabilizer cone is strictly smaller than the quantum linear rank cone, answering an open question.
\end{abstract}

\maketitle

\section{Introduction}
\label{sec:intro}

\begin{figure*}[t]
\centering
\begin{tikzpicture}[font=\footnotesize,
  pt/.style={circle, fill=black, inner sep=1.3pt},
  lab/.style={font=\scriptsize}]
\node[anchor=west, font=\normalsize\bfseries] at (-2.1,2.05) {a};
\begin{scope}[scale=1.15]
\coordinate (A) at (90:1.45);
\coordinate (B) at (210:1.45);
\coordinate (C) at (330:1.45);
\coordinate (W) at ($(A)!0.5!(B)$);
\coordinate (X) at ($(A)!0.5!(C)$);
\coordinate (Y) at ($(B)!0.5!(C)$);
\coordinate (Z) at (0,0);
\draw[thick, gray] (A)--(B)--(C)--cycle;
\draw[thick, gray] (A)--(Y); \draw[thick, gray] (B)--(X); \draw[thick, gray] (C)--(W);
\draw[very thick, orange!85!black] (Z) circle (0.725);
\foreach \p in {A,B,C,W,X,Y,Z} \node[pt] at (\p) {};
\node[lab, above] at (A) {$A=100$};
\node[lab, below left] at (B) {$B=010$};
\node[lab, below right] at (C) {$C=001$};
\node[lab, left] at (W) {$W=110$};
\node[lab, right] at (X) {$X=101$};
\node[lab, below=1pt] at (Y) {$Y=011$};
\node[lab, inner sep=0.5pt] at ($(Z)+(-60:0.26)$) {$Z$};
\end{scope}
\node[align=center, text width=4.7cm] at (0,-2.45) {$W+X+Y=0$ only if $2=0$:\\ the circle $WXY$ is a line for qubits\\ (Steane's code), not for odd $d$\\ (color code, $\det(W,X,Y)=-2$)};
\node[anchor=west, font=\normalsize\bfseries] at (3.0,2.05) {b};
\begin{scope}[xshift=3.3cm, yshift=1.2cm]
\node[anchor=west, align=left] at (0,0.45) {Recovery of the encoded qudit};
\draw (0,0.1) -- (5.6,0.1);
\node[anchor=west] at (0,-0.25) {Sites};
\node[align=center] at (2.65,-0.25) {Steane's code\\[-1pt] ($d=2$)};
\node[align=center] at (4.6,-0.25) {Color code\\[-1pt] (odd $d$)};
\draw (0,-0.65) -- (5.6,-0.65);
\node[anchor=west] at (0,-1.0) {$W,X,Y$};
\node[text=teal!70!black] at (2.65,-1.0) {Recovers};
\node[text=orange!85!black] at (4.6,-1.0) {Learns nothing};
\node[anchor=west] at (0,-1.55) {$A,B,C,Z$};
\node[text=orange!85!black] at (2.65,-1.55) {Learns nothing};
\node[text=teal!70!black] at (4.6,-1.55) {Recovers};
\node[anchor=west, align=left] at (0,-2.1) {Every other set};
\node[align=center] at (3.6,-2.1) {The same in both};
\draw (0,-2.45) -- (5.6,-2.45);
\end{scope}
\node[anchor=west, font=\normalsize\bfseries] at (9.03,2.05) {c};
\begin{scope}[xshift=9.33cm, yshift=1.2cm, cell/.style={draw=gray!70, minimum width=0.44cm, minimum height=0.38cm, inner sep=0pt}]
\node[anchor=west, align=left] at (-0.1,0.45) {One qudit of dimension $D$ encoded in seven};
\node[anchor=west, font=\scriptsize] at (-0.1,-0.02) {Recovery pattern of Steane's code};
\node[anchor=west, font=\scriptsize] at (-0.1,-0.92) {Recovery pattern of the color code};
\foreach \n in {2,...,12} {
  \node[cell] at ({(\n-2)*0.5+0.22},-0.42) {};
  \node[cell] at ({(\n-2)*0.5+0.22},-1.32) {};
  \node[font=\scriptsize] at ({(\n-2)*0.5+0.22},-1.75) {$\n$};
}
\foreach \n in {2,4,8} \node[cell, fill=teal!45] at ({(\n-2)*0.5+0.22},-0.42) {};
\foreach \n in {3,5,7,9,11} \node[cell, fill=orange!45] at ({(\n-2)*0.5+0.22},-1.32) {};
\node[anchor=west, font=\scriptsize] at (-0.1,-2.15) {Neither for $D=6$, $10$ and $12$ (Theorem~\ref{thm:ideal})};
\end{scope}
\end{tikzpicture}
\caption{Steane's code, its qudit counterpart and the arithmetic that separates them. \textbf{a}~The seven sites of Steane's code are the points of the Fano plane, the nonzero vectors of $\GF(2)^3$, with $Z=111$ at the center; its lines are the triples with zero sum, among them $WXY$ (orange circle). The seven-qudit color code of $\Z_d$ qudits uses the same sites. \textbf{b}~Which sets of sites recover the encoded qudit in Steane's code and in the color code of odd $d$ (teal: recovers; orange: learns nothing); the two differ only on $\{W,X,Y\}$ and $\{A,B,C,Z\}$. \textbf{c}~Dimensions $D\le12$ in which a code that encodes one qudit of dimension $D$ into seven such qudits, stabilizer code or not, can have the recovery pattern of Steane's code (teal: powers of~$2$) or that of the color code (orange: odd~$D$).}
\label{fig:steane}
\end{figure*}

Quantum codes for qudits, $d$-level quantum systems, are often obtained by carrying qubit codes over, as for the color codes of qudits~\cite{Watson2015}, the smallest of which keeps the length, rate and distance of Steane's code (Sec.~\ref{sec:qss}). These parameters do not determine which sets of erasures a code corrects: qubit stabilizer codes with the parameters of Steane's code cannot correct between $7$ and $23$ of the $35$ erasures of three sites, depending on the code (Sec.~\ref{sec:qss}). That finer structure, the recovery pattern, decides which losses of sites a code survives, and in quantum secret sharing~\cite{CleveGottesmanLo1999,Gottesman2000} it is prescribed: it fixes which sets of shares recover the secret and which learn nothing about it. Whether a recovery pattern survives a change of dimension is a question about all codes, not only stabilizer codes: nonadditive codes can encode more than every stabilizer code of the same length and distance~\cite{RHSS1997}. And entanglement can make possible what classical combinatorics forbids: no pair of orthogonal Latin squares of order~$6$ exists (Euler's problem of $36$ officers), yet an absolutely maximally entangled state of four qudits of dimension $6$ does; with one qudit as the reference, it is a code that encodes one qudit of dimension $6$ into three, any two of which recover it~\cite{Rather2022}, and it is not a stabilizer code up to local unitaries (Sec.~\ref{sec:qss}).

Here we show that for the recovery patterns of Steane's code and of its qudit counterpart, the seven-qudit color code, entanglement offers no such escape, and that the arithmetic of the dimension, rather than its size, decides whether they survive. If a reference system $R$ is maximally entangled with the encoded information of a code, a set $T$ of sites can recover it exactly when $I(R;T)=2S(R)$ and learns nothing about it exactly when $I(R;T)=0$~\cite{Imai2005}, and the erasure of a set of sites is correctable exactly when the set learns nothing. Steane's seven-qubit code corrects exactly the erasures of the sets of qubits that contain no line of the Fano plane, whose points are the seven nonzero vectors of $\GF(2)^3$, with $\GF(2)$ the field of two elements, and whose lines are the triples with zero sum (Fig.~\ref{fig:steane}a). The seven-qudit color code of odd local dimension recovers the encoded qudit from the same sets of sites, except that the three sites $110$, $101$ and $011$ learn nothing and the other four sites recover it (Fig.~\ref{fig:steane}b); both codes have distance $3$. A code that encodes one qudit of dimension $D$ into seven qudits of the same dimension and corrects every erasure that Steane's code corrects exists, stabilizer code or not, if and only if $D$ is a power of $2$, and one that corrects every erasure that the color code of odd dimension corrects if and only if $D$ is odd; such codes have exactly these recovery patterns and are, up to local unitaries, stabilizer codes, of qubits and of qudits of odd dimensions, respectively (Theorem~\ref{thm:ideal}; Fig.~\ref{fig:steane}c), and for $D=2$ every code with Steane's pattern is Steane's code up to local unitaries (Sec.~\ref{sec:qss}). Ideal classical secret sharing schemes for these access structures exist in exactly the same dimensions, by results of Mat\'u\v{s}~\cite{Matus1999} (Supplementary Note~\ref{app:qss-general}). Theorem~\ref{thm:ideal} is their quantum counterpart but does not follow from them, since quantum codes need not come from classical schemes: for the $36$ officers a quantum code exists although the corresponding classical object, a pair of orthogonal Latin squares, does not.

One line of the Fano plane decides. The sites $110$, $101$ and $011$ are collinear because $110+101=011$, which needs $1+1=0$; in the color code of odd dimension they are not (Fig.~\ref{fig:steane}a). The proof, which uses no computer, turns this into a statement about all codes in three stages (Sec.~\ref{sec:qss}). Put the reference at the origin of the cube $\{0,1\}^3$ and the seven sites at the other vertices, the points of the Fano plane; a line of the Fano plane together with the origin is then a plane of the affine geometry $\mathrm{AG}(3,2)$ of the cube, a set of four vertices with zero sum, and so are the four sites off the line. First, when all systems have dimension $D$, the recovery pattern alone fixes the entanglement entropies of the sites and the reference: a theorem of Mat\'u\v{s} on polymatroids~\cite{Matus2012}, applied to a lift of the entropies, and Lehman's theorem on connected matroids~\cite{Lehman1964} show that they are given by the rank function of $\mathrm{AG}(3,2)$ for Steane's pattern and of the real affine cube $R_8$ for that of the color code. Second, these entropies force a stabilizer state up to local unitaries (Steps~1--5 of the proof in Supplementary Note~\ref{app:qss-general}). For Steane's pattern, the four pairs of systems along each direction of the cube form a Greenberger--Horne--Zeilinger (GHZ) state, by the theorem of Hayden, Jozsa, Petz and Winter on equality in strong subadditivity~\cite{HJPW2004}, and comparing two directions within each plane of $\mathrm{AG}(3,2)$ yields operators from nice error bases~\cite{Knill1996} that fix the state; together they generate a stabilizer group of order $D^8$. Third, in that stabilizer state the line through $110$, $101$ and $011$ forces every phase-space label $x$, the vector that indexes a Weyl operator, to satisfy $2x=0$, whereas without that line every label has odd order (Step~6).

Three consequences follow. First, for quantum error correction, with one qudit of the same dimension per site, codes outside the stabilizer formalism gain nothing for these patterns, and the arithmetic of the dimension decides which pattern a code can have: on qutrits, which qudit processors provide~\cite{Ringbauer2022} and whose error correction in an oscillator has been demonstrated~\cite{Brock2025}, a code that encodes one qutrit into seven qutrits can have the color code's pattern but not Steane's, and in dimension $6$ neither pattern is possible. With several qudits per site the obstruction disappears asymptotically for general codes, since, when many qudits are encoded, copies of Steane's code embedded into $\Z_d$ qudits, which are not stabilizer codes of $\Z_d$ qudits, approach $7$ qudits per encoded qudit; but stabilizer codes still pay for the arithmetic: whenever $d$ has an odd prime factor, those of $\Z_d$ qudits with Steane's recovery pattern need at least $28/3$ qudits per encoded qudit, and conversely a qubit stabilizer code with the recovery pattern of the color code of odd dimension needs, at some site, at least $4/3$ times as many qubits as it encodes (Proposition~\ref{prop:qss}; Sec.~\ref{sec:qss}). Second, for quantum secret sharing, entanglement does not enlarge the set of dimensions in which ideal schemes for these access structures exist, unlike for the $36$ officers; which other recovery patterns are rigid (Sec.~\ref{sec:qss}), and for which access structures entanglement helps, are open (Discussion). Third, the same arithmetic constrains the entanglement of stabilizer states, as we now explain.

Stabilizer states, which underlie quantum error correction~\cite{Gottesman1997,CalderbankShor1996,Steane1996} and measurement-based computation~\cite{Raussendorf2001,Hein2004}, exist for qudits of every local dimension $d$~\cite{Gottesman1999qudit,Hostens2005}. For $n$ parties and a purifying system, the entanglement entropies of all subsystems form an entropy vector; those of stabilizer states generate the stabilizer entropy cone, an inner approximation of the quantum entropy cone (Sec.~\ref{sec:reduction}). Both halves of the paper rest on the lift of Mat\'u\v{s}: for stabilizer states it gives a dictionary between their entropies and the ranks of classical codes (Proposition~\ref{thm:reduction}), which for prime $d$ assembles known facts~\cite{GrossWalter2013,LMRW2013,Osthus2024}. Through it, the same line of the Fano plane separates the entanglement of qubits from that of odd dimensions: from six parties on, no stabilizer state of odd local dimension reproduces, up to scaling and even in the limit, the entanglement entropies of the logical zero state of Steane's code with six of its qubits as parties and the seventh as the purifier (Theorem~\ref{thm:steane}), and for every $d\ge3$ no qubit stabilizer state reproduces those of a logical basis state of the color code of $\Z_d$ qudits (Theorem~\ref{thm:reverse}). With enough parties, $\Z_{d'}$ qudits reproduce all entropy patterns of $\Z_d$ qudits, their entropy vectors up to scaling and in the limit, if and only if $d$ divides $d'$ (Theorem~\ref{thm:divisibility}): $\Z_6$ qudits reproduce the patterns of qubits and of qutrits alike, qutrits and $\Z_4$ qudits are incomparable, and Galois qudits of dimension $4$, built on the field $\GF(4)$, reproduce exactly the qubit patterns, whereas $\Z_4$ qudits reproduce strictly more. This addresses a question of Osthus on what characteristic-dependent linear rank inequalities, which hold over fields of some characteristics but not of others, imply for qudits of different local dimensions~\cite[Sec.~4.4]{Osthus2024}: they make some entropy patterns realizable only in some local dimensions. For two prime local dimensions $p<q$ and at least $2p+3$ parties, this already follows through the dictionary from inequalities of Pe\~na-Macias and Sarria~\cite{PenaSarria2021}; at six parties, where the characteristic-dependent inequalities that we checked are blind, two transfer arguments give it (Sec.~\ref{sec:six}), and composite local dimensions need the inequalities of Lemma~\ref{lem:JI} (Methods).

Up to four parties nothing depends on the local dimension (Proposition~\ref{thm:four}). At five parties the dictionary turns the purifier into one more variable, and an inequality from a single common information (Sec.~\ref{sec:reduction}), which defines a facet of the stabilizer cone in every local dimension, shows that the stabilizer cone is strictly smaller than the quantum linear rank cone $\QLR_5$ of Bao, Cheng, Hern\'andez-Cuenca and Su (Theorem~\ref{thm:five}). This answers their question whether the two cones coincide~\cite{BCHS2020,BCHS2020gap}, on which they had reported ``significant evidence for and none against'' equality~\cite[Sec.~5]{BCHS2020gap}; published six-variable inequalities of Dougherty, Freiling and Zeger~\cite{DFZ2009} also separate the cones once the purifier is a variable. Figure~\ref{fig:landscape} summarizes what is known; the main question left open is whether the five-party cone depends on the local dimension.

Unless stated otherwise, the local dimension is the alphabet $\Z_d$ of the generalized Pauli group, and each party may hold any number of qudits. Except for Theorem~\ref{thm:ideal}, our separations and lower bounds concern stabilizer states only: embedding each qubit of the Steane state into a qutrit gives a qutrit state with the same entropies that is not a stabilizer state, so they are not dimension witnesses for general states. Section~\ref{sec:methods-prior} of the Methods details which parts of this paper are new and which rest on known results. Every computer-assisted step rests on an exact certificate, an exact recomputation or an exhaustive enumeration, and the data and code are public (Methods).

\begin{figure*}[t]
\centering
\begin{tikzpicture}[font=\footnotesize]
\colorlet{same}{teal!22}
\colorlet{diff}{orange!40}
\colorlet{undec}{gray!14}
\def\xx#1{5.6+1.06*(#1-1)}
\def\yy#1{-0.86*(#1-1)}
\newcommand{\band}[5]{
  \fill[#4] ({\xx{#2}-0.5},{\yy{#1}-0.34}) rectangle ({\xx{#3}+0.5},{\yy{#1}+0.34});
  \node[align=center, inner sep=0pt] at ({(\xx{#2}+\xx{#3})/2},{\yy{#1}}) {#5};}
\newcommand{\rowlabel}[2]{\node[anchor=east, align=right] at (4.95,{\yy{#1}}) {#2};}
\node[anchor=west, font=\normalsize\bfseries] at (-0.15,0.62) {a};
\rowlabel{1}{$\Stab_n$ versus $\QLR_n$}
\rowlabel{2}{$\StabZ{2}{n}\subseteq\StabZ{d}{n}$ for odd $d$?}
\rowlabel{3}{$\StabZ{d}{n}\subseteq\StabZ{2}{n}$ for odd $d$?}
\rowlabel{4}{$\StabZ{3}{n}$ versus $\StabZ{5}{n}$}
\rowlabel{5}{$\StabZ{4}{n}\subseteq\StabZ{2}{n}$?}
\band{1}{1}{4}{same}{Equal}
\band{1}{5}{5}{diff}{$\subsetneq$\\ {\scriptsize Thm.~\ref{thm:five}}}
\band{1}{6}{11}{undec}{Open (Methods)}
\band{2}{1}{4}{same}{Yes (Prop.~\ref{thm:four})}
\band{2}{5}{5}{undec}{Open}
\band{2}{6}{11}{diff}{No: Steane state (Thm.~\ref{thm:steane})}
\band{3}{1}{4}{same}{Yes (Prop.~\ref{thm:four})}
\band{3}{5}{5}{undec}{Open}
\band{3}{6}{11}{diff}{No: non-Fano state (Thm.~\ref{thm:reverse})}
\band{4}{1}{4}{same}{Equal (Prop.~\ref{thm:four})}
\band{4}{5}{7}{undec}{Open}
\band{4}{8}{11}{diff}{Incomparable\\ (Thm.~\ref{thm:primes})}
\band{5}{1}{4}{same}{Yes (Prop.~\ref{thm:four})}
\band{5}{5}{5}{undec}{Open}
\band{5}{6}{11}{diff}{No (Thm.~\ref{thm:reverse})}
\foreach \n in {1,...,11} \node at ({\xx{\n}},{\yy{5}-0.65}) {$\n$};
\node at ({\xx{11}+0.85},{\yy{5}-0.65}) {$\cdots$};
\node at ({(\xx{1}+\xx{11})/2},{\yy{5}-1.1}) {Number of parties $n$};
\node[anchor=west, font=\normalsize\bfseries] at (-0.15,-5.15) {b};
\draw[thick] (0.3,-6.8) -- (0.95,-6.8);
\node[anchor=west] at (1.05,-6.8) {Inclusion for all $n$, strict from $n=6$ on};
\draw[thick, dashed] (0.3,-7.4) -- (0.95,-7.4);
\node[anchor=west] at (1.05,-7.4) {Inclusion for all $n$, strict for larger $n$};
\begin{scope}[xshift=1.2cm, dn/.style={circle, draw, thick, fill=white, minimum size=0.62cm, inner sep=0pt}]
\node[dn] (d3)  at (6.8,-8.3) {$3$};
\node[dn] (d2)  at (9.0,-8.3) {$2$};
\node[dn] (d5)  at (12.2,-8.3) {$5$};
\node[dn] (d7)  at (13.4,-8.3) {$7$};
\node[dn] (d11) at (14.6,-8.3) {$11$};
\node[dn] (d9)  at (6.2,-7.0) {$9$};
\node[dn] (d6)  at (7.9,-7.0) {$6$};
\node[dn] (d4)  at (9.6,-7.0) {$4$};
\node[dn] (d10) at (12.2,-7.0) {$10$};
\node[dn] (d12) at (8.6,-5.7) {$12$};
\node[dn] (d8)  at (10.4,-5.7) {$8$};
\draw[thick] (d2)--(d4) (d2)--(d6) (d2)--(d10) (d3)--(d6) (d5)--(d10) (d4)--(d8) (d4)--(d12);
\draw[thick, dashed] (d3)--(d9) (d6)--(d12);
\node[font=\scriptsize] at (6.8,-8.85) {Qutrit};
\node[font=\scriptsize] at (9.0,-8.85) {Qubit, $\GF(2^k)$};
\end{scope}
\end{tikzpicture}
\caption{Dependence of stabilizer entropy cones on the number of parties and the local dimension. \textbf{a}~Each row compares two cones for $n$ parties: $\StabZ{d}{n}$, the stabilizer entropy cone of $\Z_d$ qudits; $\Stab_n$, that of all local dimensions together; and $\QLR_n$, the quantum linear rank cone (Sec.~\ref{sec:reduction} and Methods). Teal: equal, or the inclusion holds; orange: different, or the inclusion fails; gray: open; Thm., Theorem; Prop., Proposition. Row~1 for $n\le4$: Refs.~\cite{LMRW2013,BCHS2020}. \textbf{b}~Inclusions among the stabilizer cones of $\Z_d$ qudits for $2\le d\le12$ (Theorem~\ref{thm:divisibility}): with enough parties, the cone of $\Z_d$ qudits lies strictly inside that of $\Z_{d'}$ qudits exactly when an upward path leads from $d$ to $d'$, that is, when $d$ is a proper divisor of $d'$; cones joined by no such path are incomparable. Dashed edges are strict from $n=8$ ($3$ and $9$) and from $n=14$ ($6$ and $12$) on (Supplementary Note~\ref{app:cubes-divisibility}). Node~$2$ also stands for Galois qudits of dimension $2^k$ (Methods).}
\label{fig:landscape}
\end{figure*}

\section{Results}
\label{sec:results}
\subsection{Rigid recovery patterns and their cost in qudits}
\label{sec:qss}

Label seven sites by the nonzero points of $\GF(2)^3$: $A$, $B$ and $C$ are $100$, $010$ and $001$, then $W$, $X$ and $Y$ are $110$, $101$ and $011$, and $Z$ is $111$. The seven lines of the Fano plane are $ABW$, $ACX$, $BCY$, $AYZ$, $BXZ$, $CWZ$ and $WXY$, and every point lies on three of them (Fig.~\ref{fig:steane}a). Steane's $[[7,1,3]]$ code~\cite{Steane1996prl,Steane1996} has one qubit on each point; it is the smallest triangular color code~\cite{Bombin2006}, and the color code of $\Z_d$ qudits on the same seven sites~\cite{Watson2015} is a $[[7,1,3]]_d$ code for every $d\ge2$, Steane's code for $d=2$ (Supplementary Note~\ref{app:qss-color}). A \emph{stabilizer state} of qudits of local dimension $d$ is, up to a phase, the unique common $+1$ eigenvector of an abelian group of generalized Pauli operators of $\Z_d$ that contains no nontrivial scalar~\cite{Gottesman1999qudit,Hostens2005} (for qudits of several dimensions, of tensor products of their generalized Pauli operators); a \emph{stabilizer code} is the common $+1$ eigenspace of an abelian group of generalized Pauli operators without nontrivial scalars. The simplest stabilizer states are the \emph{CSS states}, after Calderbank, Shor and Steane~\cite{CalderbankShor1996,Steane1996}: for a subgroup (a classical code) $\mathcal{C}$ of $\Z_d^m$, where $m$ is the number of qudits and the coordinates are distributed among the systems, $\ket{\mathcal{C}}=\abs{\mathcal{C}}^{-1/2}\sum_{c\in\mathcal{C}}\ket{c}$.

For codes, the local dimension matters through their exact sets of correctable erasures. Let a reference system $R$ be maximally entangled with the encoded system, and let a system $E$ purify the joint state of $R$ and the sites ($E$ is trivial for a code and allows mixed encodings). A set $T$ of sites is \emph{authorized} if $I(R;T)=2S(R)$, so that it can recover the encoded state, and \emph{unauthorized} if $I(R;T)=0$, so that it learns nothing about it; for a code, the erasure of an unauthorized set is correctable~\cite{Imai2005}. In quantum secret sharing~\cite{CleveGottesmanLo1999,Gottesman2000}, the encoded system is a secret and the sites are shares held by players, and a \emph{scheme} for an access structure $\Gamma$, a family of sets of players that is closed under taking supersets and contains the set of all players but not the empty set, is a state in which the sets in $\Gamma$ are authorized and all other sets are unauthorized. Its information ratio is the largest ratio of the logarithm of the dimension of a share to that of the secret, and the scheme is \emph{ideal} if this ratio is $1$. A \emph{stabilizer scheme} of local dimension $d$ is one in which the joint state of $R$, the shares and $E$ is a stabilizer state of $\Z_d$ qudits or, for a prime power $d$, of Galois qudits of dimension $d$, whose generalized Pauli operators are labeled by the field $\GF(d)$ through its trace~\cite{Ketkar2006}; its secret \emph{consists of $m$ qudits} if $S(R)=m\log d$, as for a code that encodes $m$ qudits. Logarithms are to base $2$.

Put $R$ at the origin of the cube $\{0,1\}^3$ and the seven players at the other vertices, labeled as above. The qubit \emph{cube state}, the CSS state of the code of affine functions on the cube (the extended Hamming code), is $(\ket{0}_R\ket{0_L}+\ket{1}_R\ket{1_L})/\sqrt2$ for Steane's code. Its access structure $\Gamma_{\mathrm F}$ is the recovery pattern of Steane's code: a set of players is authorized exactly when it contains a line of the Fano plane. Over $\GF(p)$ with $p$ odd, the cube state is in the same way $\sum_j\ket{j}_R\ket{j_L}/\sqrt p$ for the seven-qudit color code (Supplementary Note~\ref{app:qss-color}); its access structure $\Gamma_{\mathrm{NF}}$ differs from $\Gamma_{\mathrm F}$ only in that $\{W,X,Y\}$ is unauthorized and $\{A,B,C,Z\}$ is authorized (Fig.~\ref{fig:steane}b). In matroid terms, the two are the ports at $R$ of the binary affine cube $\mathrm{AG}(3,2)$ and of the real affine cube $R_8$~\cite{Oxley2011}, that is, the families of sets $T$ of players such that $T\cup\{R\}$ contains a circuit (a minimal dependent set) through $R$; the access structures of ideal classical schemes are matroid ports~\cite{BrickellDavenport1991}. The \emph{planes} of these matroids, their sets of four points of rank $3$, are the fourteen sets of four vertices of the cube with zero sum for $\mathrm{AG}(3,2)$, and for $R_8$ the twelve of them other than $\{R,W,X,Y\}$ and $\{A,B,C,Z\}$, the sets of four coplanar vertices of the real cube. Both schemes are ideal instances of the construction of Sarvepalli and Raussendorf; the qubit one is, up to the labeling of the players, their example built from the extended Hamming code~\cite[Sec.~III~B]{SarvepalliRaussendorf2010}. The requirement concerns the exact set of correctable erasures, not the distance: the color code has the length, rate and distance of Steane's code in every local dimension. In Theorem~\ref{thm:ideal} below, $D$ is the dimension of the encoded system and of every site, with nothing else assumed about them; $d$ is the dimension of qudits of which a site may hold several, such as the $\Z_d$ qudits of stabilizer codes and states.

\begin{theorem}[Codes and ideal schemes]\label{thm:ideal}
Let $D\ge2$, and consider codes that encode one qudit of dimension $D$ into seven qudits of dimension $D$, one at each point of the Fano plane. A code that corrects the erasure of every set of sites that contains no line exists if and only if $D$ is a power of $2$, and one that corrects the erasure of every set of sites outside $\Gamma_{\mathrm{NF}}$ exists if and only if $D$ is odd; such codes have exactly the recovery pattern $\Gamma_{\mathrm F}$, respectively $\Gamma_{\mathrm{NF}}$. More generally, a quantum secret sharing scheme, stabilizer scheme or not, pure or mixed, whose secret and seven shares all have dimension $D$ exists for $\Gamma_{\mathrm F}$ if and only if $D$ is a power of $2$, and for $\Gamma_{\mathrm{NF}}$ if and only if $D$ is odd. In every such scheme, and so for every such code, the joint state of $R$ and the shares is pure and, up to a unitary on each system, a stabilizer state: for $\Gamma_{\mathrm F}$ of qubits, $\log_2D$ on each system, and for $\Gamma_{\mathrm{NF}}$ of qudits of odd prime-power dimensions $n_1,\dots,n_r$ on each system, with $n_1\cdots n_r=D$. Moreover, let $r_{\mathcal M}$ be the rank function of $\mathcal M=\mathrm{AG}(3,2)$, respectively $\mathcal M=R_8$. Every pure state of $R$ and the seven players, each of dimension $D$, whose entropies are $S(X)=(2r_{\mathcal M}(X)-\abs X)\log D$ for every set $X$ of these eight systems is, up to a unitary on each system, a stabilizer state of qubits, respectively of qudits of odd prime-power dimensions, and such states exist exactly when $D$ is a power of $2$, respectively odd.
\end{theorem}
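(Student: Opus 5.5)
The proof has three stages, as announced in the introduction: entropies, then stabilizer structure, then arithmetic. Existence is the easy direction. For $D=2^k$, take $k$ copies of the qubit cube state, so each system holds $k$ qubits. For odd $D=\prod p_i^{a_i}$, take tensor products of the color-code cube states over $\GF(p_i)$. Each factor realizes $\Gamma_{\mathrm F}$, respectively $\Gamma_{\mathrm{NF}}$. Tensor products preserve being authorized or unauthorized, since $I(R;T)$ is additive.

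For the converse, take any scheme with all eight systems of dimension $D$ and normalize entropies by $\log D$, writing $h$ for the normalized entropy of $R$, the seven shares and the purifier $E$. Then $h(R)=1$, since $I(R;\text{all})=2S(R)$ forces $R$ maximally mixed. Authorized and unauthorized sets give the equalities $I(R;T)=2$ and $I(R;T)=0$.

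\emph{Entropy stage.} I would pass to a polymatroid by Matúš's lift: replace the von Neumann entropy vector of the pure state on $R$, the shares and $E$ by a polymatroid on the eight points, using the purifier to convert $S(X)$ into $2r(X)-\abs X$-type quantities. The secret-sharing conditions together with the upper bound $h(\text{share})\le 1$ then say that the lifted polymatroid is an ideal polymatroid whose port at $R$ is $\Gamma$. Matúš's theorem on ideal polymatroids with a connected port, combined with Lehman's theorem that a connected matroid is determined by its port at an element, forces the polymatroid to be the rank function $r_{\mathcal M}$, with $\mathcal M=\mathrm{AG}(3,2)$ or $R_8$. Translating back gives $S(X)=(2r_{\mathcal M}(X)-\abs X)\log D$. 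In particular every share has entropy $\log D$, and the whole eight-party state has entropy $0$, so it is pure and $E$ decouples. This also shows the recovery pattern is exactly $\Gamma$: the correctable-erasure hypothesis yields the unauthorized sets, and complementarity in a pure state with $S(R)=\log D$ yields the authorized ones.

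\emph{Stabilizer stage.} This is the main obstacle. For each direction $e$ of the cube, the four parallel edges $\{v,v+e\}$ give pairs $P_1,\dots,P_4$. The entropies show that any two of them saturate strong subadditivity across the remaining pairs in the appropriate way. Applying Hayden--Jozsa--Petz--Winter, I would decompose each pair's Hilbert space so that the state is a GHZ-type state $\sum_j \ket{j}^{\otimes 4}$ over maximally entangled blocks of dimension $D$. From this one gets, for each pair, a unitary group acting transitively on the GHZ branches, and a commuting one implementing phases. Comparing two directions inside one plane of $\mathcal M$ (four vertices of rank $3$), the operators from the two GHZ decompositions must commute up to phases fixed by the entropies. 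This should force them to form a nice error basis of a group $G$ of order $D^2$ on each system. The operators coming from all planes then generate an abelian group of order $D^8$ fixing the state, so after local unitaries the state is a stabilizer state over $G$. I expect the hard part to be showing that the phase cocycles are consistent across planes; I would first try using the known classification of nice error bases of order $D^2$ on one system in terms of abelian index groups.

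\emph{Arithmetic stage.} Write the stabilizer as a subgroup $L\subseteq \prod_v (G_v\times\hat G_v)$ and use the phase-space labels. For $\mathrm{AG}(3,2)$, the plane $\{R,W,X,Y\}$ forces a relation $x_W+x_X+x_Y=x_R$. Combined with the planes through $A,B,C$, which express $x_W=x_A+x_B$ and so on, this gives $2x=0$ for every label, so the group is elementary abelian $2$-group. Hence $D=2^k$ and the systems are qubits. For $R_8$, the absence of that plane together with the remaining planes, which give relations such as $x_W+x_X+x_Y=2x_R$-type identities, shows that no label can have order $2$. So $\abs G$ is odd and $G$ is a product of odd $p$-groups. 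The abelian nice error bases then decompose into Galois-type qudits of odd prime-power dimensions $n_1\cdots n_r=D$. The final "Moreover" statement is exactly what the second and third stages prove from the entropy hypothesis alone, and existence in it is the construction above.
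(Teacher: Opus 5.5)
Your three-stage plan follows the paper's route, and your entropy stage (Mat\'u\v{s}'s lift, his ideal-element theorem, Lehman, purity from total rank $4$) is essentially the paper's argument. The genuine gaps are in the two later stages.

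First, the GHZ step holds only along directions $\delta$ for which any two $\delta$-pairs form a plane of $\mathcal M$. For $R_8$ this fails along the face diagonals. For $\delta=110$, two of the six unions of two pairs are $\{R,W,X,Y\}$ and $\{A,B,C,Z\}$, which have rank $4$ and entropy $4\log D$ instead of $2\log D$. So you must work with the three edge directions and the body diagonal $111$. You must also check that every plane you use splits along two of them: faces split along two edges, diagonal rectangles along an edge and $111$.

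Second, ``must commute up to phases fixed by the entropies'' is exactly what has to be proved. The classification of nice error bases cannot supply it, since it presupposes closure under products and allows nonabelian index groups. The missing mechanism is the following. Write the branch vectors of the two $\delta$-pairs $\{a,b\}$, $\{c,d\}$ of a plane $\Pi$ as $(A_j\otimes1)\ket{\Phi_{ab}}$ and $(B_j\otimes1)\ket{\Phi_{cd}}$. The support of $\rho_\Pi$ is spanned both by their products and by the products of the branch vectors of the $\gamma$-pairs $\{a,c\}$, $\{b,d\}$. Read vectors on $\Pi$ as operators from $\mathcal H_b\otimes\mathcal H_d$ to $\mathcal H_a\otimes\mathcal H_c$. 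The first family is then $\{A_j\otimes B_j\}$, and the second family shows that these operators all have the form $Q\Delta_jQ'^\dagger$ with $\Delta_j$ diagonal. Hence the $(A_j\otimes B_j)(A_1\otimes B_1)^{-1}$ commute, so $U_j=A_jA_1^{-1}$ and $U'_j=B_jB_1^{-1}$ commute pairwise up to scalars. A trace-orthogonal unitary basis with this property is automatically closed under products up to scalars, so it is a nice error basis with abelian index group, giving $D^2$ product stabilizers per plane.

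You still need two further facts. One is that this group at a system does not depend on the direction: by uniqueness of the GHZ bases, the branch vectors of $\{a,c\}$ are the joint eigenvectors of the $U_j\otimes U'_j$. Then all commutators of the generated operators are scalars fixing $\ket{\Psi}$, hence $1$. The other is the count $D^8$. It is bounded below using the three planes joining one pair of a usable direction to the other pairs, plus a plane meeting each of its pairs once. It is bounded above by the trace of the projector onto the common fixed space.

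In the arithmetic stage, the unit coefficients are the real content and cannot be assumed. The phase spaces of different systems are identified only up to automorphisms, and a plane only says that one coordinate is some homomorphic function of the other three. On the labels vanishing at $R$, the planes $RABW$, $RACX$, $RBCY$ let you normalize to $x_W=u+v$, $x_X=u+w$, $x_Y=v\Phi+w$, with one automorphism $\Phi$ left over. The plane $RWXY$ then only says that $\ker(\Phi+1)$ has order $\abs V$, that is, $\Phi=-1$. This is possible in every group; the four lines $ABW$, $ACX$, $BCY$, $WXY$ are realizable in every characteristic. It is the planes $RAYZ$, $RBXZ$, $RCWZ$ through $Z$ that force $\Phi=1$ and give the normal form $(u,v,w,u+v,u+w,v+w,u+v+w)$. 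Only with this normal form does $RWXY$ yield $2x=0$. For $R_8$, the same normal form (it uses only planes common to both matroids) together with $r_{\mathcal M}(RWXY)=4$ makes $(u,v,w)\mapsto(u+v,u+w,v+w)$ injective. Its kernel is $\{(x,-x,-x):2x=0\}$, so $V$ has no element of order $2$. There is no relation of the form $x_W+x_X+x_Y=2x_R$, since $\{R,W,X,Y\}$ is a basis of $R_8$.

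Smaller points:
\begin{itemize}
\item $S(R)=\log D$ holds because $R$ is maximally entangled with a $D$-dimensional encoded system, not because $I(R;\text{all})=2S(R)$.
\item The claim that such codes have exactly the pattern $\Gamma$ needs the self-duality of $\Gamma$: a set contains no line exactly when its complement contains one.
\item The final qudits are $\Z_{n_i}$ qudits obtained from a symplectic decomposition of the commutator form on $G$, not Galois qudits. $G$ need not be elementary abelian; for the cube state over $\Z_9$, $G\cong\Z_9^2$.
\end{itemize}
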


So no code of qudits whose dimension is not a power of $2$, stabilizer code or not, encodes one qudit with the recovery pattern of Steane's code and one qudit per site, no code of qudits of even dimension does so with the recovery pattern of the color code of odd dimension, and in dimension $6$ neither is possible. A code that corrects every erasure that Steane's code corrects, or every erasure that the color code of odd local dimension corrects, has exactly the recovery pattern of that code, because a code is a pure state together with the reference, so a set of sites is unauthorized exactly when its complement is authorized, and both access structures are self-dual, a set of players being authorized exactly when its complement is not (for $\Gamma_{\mathrm F}$, a set of points of the Fano plane contains no line exactly when its complement contains one; Supplementary Note~\ref{app:qss-general}). Every code that encodes $k$ qubits with Steane's pattern and $k$ qubits per site is a qubit stabilizer code up to local unitaries. Since the entropies are continuous on the compact set of states, approximate schemes in the excluded dimensions violate the access conditions by an amount bounded away from zero, although we have no explicit bound (Supplementary Note~\ref{app:qss-general}).

We call a recovery pattern \emph{rigid} in dimension $D$ if every code that has it and encodes one qudit of dimension $D$ with one such qudit per site is, up to a unitary on each site, a stabilizer code of qudits of some dimensions, as in Theorem~\ref{thm:ideal}. By Theorem~\ref{thm:ideal}, the two patterns are rigid in every dimension, nontrivially so in the dimensions in which codes with them exist: powers of $2$ for Steane's pattern and odd dimensions for that of the color code. The threshold pattern in which any two of three sites recover is not rigid in dimension $6$: the code of the $36$ officers has it, and no stabilizer code does (Supplementary Note~\ref{app:qss-general}). For $D=2$ rigidity holds in a stronger form, uniqueness: every code that encodes one qubit into seven qubits and corrects every erasure that Steane's code corrects is Steane's code up to local unitaries (Supplementary Note~\ref{app:qss-general}), much as every code that encodes one qubit into five with distance $3$ is the five-qubit code~\cite{Rains1999}. Steane's code is not determined by its parameters, since other qubit stabilizer codes with the same length, rate and distance have up to $23$ uncorrectable erasures of three sites (Supplementary Note~\ref{app:qss-color}), but it is determined by its recovery pattern. For $D\ge3$ we do not know whether the codes of Theorem~\ref{thm:ideal} are unique up to local unitaries.

The dichotomy has classical precedents: networks built on the Fano and non-Fano matroids show that linear network coding depends on the characteristic~\cite{DFZ2005insufficiency}, and Mat\'u\v{s} proved that the Fano matroid, $\mathrm{AG}(3,2)$ contracted at any point, has representations by partitions, the combinatorial form of ideal classical schemes, only in degrees that are powers of $2$, and the non-Fano matroid, $R_8$ contracted at any point, only in odd degrees~\cite{Matus1999}. Theorem~\ref{thm:ideal} is a quantum counterpart of both halves, which shows in addition that these codes are stabilizer codes up to local unitaries. Its proof uses no computer (Fig.~\ref{fig:ideal}; Supplementary Note~\ref{app:qss-general}): applied to the lift $S(X)+\sum_{i\in X}S(i)$ of the entropies, in which the access conditions become linear equations (Methods), a theorem of Mat\'u\v{s} on polymatroids (monotone submodular set functions)~\cite{Matus2012} and Lehman's theorem that a connected matroid is determined by its circuits through one element~\cite{Lehman1964} give the entropies, $(2r_{\mathcal M}(X)-\abs X)\log D$ with $r_{\mathcal M}$ the rank function of $\mathcal M=\mathrm{AG}(3,2)$ or $R_8$. From these alone, the planes of $\mathcal M$ yield commuting stabilizers, whose phase-space labels, the vectors that index their Weyl operators, force every label $x$ to satisfy $2x=0$ for $\Gamma_{\mathrm F}$ and to have odd order for $\Gamma_{\mathrm{NF}}$, the difference being whether $\{R,W,X,Y\}$ is a plane. So rigidity holds already for entropies, as the last part of Theorem~\ref{thm:ideal} states.

\begin{figure}[t]
\centering
\begin{tikzpicture}[font=\scriptsize,
  pt/.style={circle, fill=black, inner sep=0.9pt},
  rp/.style={circle, fill=orange!85!black, inner sep=1.3pt},
  lab/.style={font=\tiny, inner sep=1pt}]
\newcommand{\cubeat}[2]{%
  \coordinate (R) at (#1,#2);           \coordinate (A) at ({#1+1.1},#2);
  \coordinate (B) at ({#1+0.48},{#2+0.36}); \coordinate (W) at ({#1+1.58},{#2+0.36});
  \coordinate (C) at (#1,{#2+1.05});     \coordinate (X) at ({#1+1.1},{#2+1.05});
  \coordinate (Y) at ({#1+0.48},{#2+1.41}); \coordinate (Z) at ({#1+1.58},{#2+1.41});}
\newcommand{\cubeedges}{\draw[gray!55] (R)--(A)--(W)--(B)--cycle (C)--(X)--(Z)--(Y)--cycle (R)--(C) (A)--(X) (B)--(Y) (W)--(Z);}
\newcommand{\cubepoints}{%
  \node[rp] at (R) {}; \foreach \p in {A,B,C,W,X,Y,Z} \node[pt] at (\p) {};
  \node[lab, below left] at (R) {$R$}; \node[lab, below right] at (A) {$A$}; \node[lab, above left] at (B) {$B$};
  \node[lab, right] at (W) {$W$}; \node[lab, left] at (C) {$C$}; \node[lab, below right] at (X) {$X$};
  \node[lab, above left] at (Y) {$Y$}; \node[lab, above right] at (Z) {$Z$};}
\node[anchor=north west, font=\normalsize\bfseries] at (0,2.25) {a};
\cubeat{1.3}{0.35}
\fill[teal!22] (R)--(A)--(W)--(B)--cycle;
\cubeedges
\draw[very thick, teal!70!black] (A)--(W)--(B);
\cubepoints
\node at (2.1,-0.1) {Pure, entropies of $\mathrm{AG}(3,2)$};
\node[anchor=north west, font=\normalsize\bfseries] at (4.25,2.25) {b};
\cubeat{5.55}{0.35}
\cubeedges
\draw[very thick, teal!70!black] (R)--(A) (B)--(W) (C)--(X) (Y)--(Z);
\cubepoints
\node at (6.35,-0.1) {GHZ along each direction};
\node[anchor=north west, font=\normalsize\bfseries] at (0,-0.45) {c};
\cubeat{1.3}{-2.35}
\fill[orange!22] (C)--(X)--(Z)--(Y)--cycle;
\cubeedges
\draw[very thick, orange!85!black] (C)--(X)--(Z)--(Y)--cycle;
\cubepoints
\node at (2.1,-2.8) {$D^2$ stabilizers on each plane};
\node[anchor=north west, font=\normalsize\bfseries] at (4.25,-0.45) {d};
\coordinate (fA) at ($(6.35,-1.5)+(90:1.1)$);
\coordinate (fB) at ($(6.35,-1.5)+(210:1.1)$);
\coordinate (fC) at ($(6.35,-1.5)+(330:1.1)$);
\coordinate (fW) at ($(fA)!0.5!(fB)$);
\coordinate (fX) at ($(fA)!0.5!(fC)$);
\coordinate (fY) at ($(fB)!0.5!(fC)$);
\coordinate (fZ) at (6.35,-1.5);
\draw[gray!70] (fA)--(fB)--(fC)--cycle (fA)--(fY) (fB)--(fX) (fC)--(fW);
\draw[very thick, orange!85!black] (fZ) circle (0.55);
\foreach \p in {fA,fB,fC,fW,fX,fY,fZ} \node[pt] at (\p) {};
\node[lab, above=1pt] at (fA) {$u$};
\node[lab, below left] at (fB) {$v$};
\node[lab, below right] at (fC) {$w$};
\node[lab, left] at (fW) {$u{+}v$};
\node[lab, right] at (fX) {$u{+}w$};
\node[lab, below] at (fY) {$v{+}w$};
\node[lab, below, fill=white, inner sep=0.3pt] at ($(fZ)+(0,-0.06)$) {$u{+}v{+}w$};
\node at (6.35,-2.8) {$2x=0$: all qudits are qubits};
\end{tikzpicture}
\caption{Idea of the proof of Theorem~\ref{thm:ideal}. \textbf{a}~The reference $R$ (orange dot) at the origin of the cube $\{0,1\}^3$, the seven players at the other vertices (Fig.~\ref{fig:steane}a); a line of the Fano plane, such as $ABW$ (thick), forms with $R$ a plane of $\mathrm{AG}(3,2)$ (shaded). The recovery pattern makes the state pure and fixes its entropies, $S(X)=(2r(X)-\abs X)\log D$, where $r$ is the rank function of $\mathrm{AG}(3,2)$ and $D$ the dimension of each system~\cite{Matus2012,Lehman1964}. \textbf{b}~Along each of the seven directions of the cube, the four pairs of systems (thick, for one direction) form a Greenberger--Horne--Zeilinger (GHZ) state (Supplementary Lemma~\ref{lem:K-ghz}, which rests on the structure of equality in strong subadditivity~\cite{HJPW2004}). \textbf{c}~For each plane, such as $\{C,X,Y,Z\}$ (orange), comparing its splittings into pairs along two directions gives $D^2$ commuting stabilizers built from nice error bases~\cite{Knill1996}; at each system their factors generate, up to scalars, one group, the same for every direction and plane, and the groups of different systems are isomorphic (Steps~2 and~3 of Supplementary Note~\ref{app:qss-general}); all planes together give $D^8$ stabilizers, so the state is a stabilizer state. \textbf{d}~Normal form of the phase-space labels of the stabilizers that act trivially on $R$, at the positions of Fig.~\ref{fig:steane}a, with $u$, $v$ and $w$ in the phase space $V$ of one system; since $\{R,W,X,Y\}$ is a plane, the labels at $W$, $X$ and $Y$ (circle) force $2x=0$ on $V$. For the pattern of the color code, $R_8$ replaces $\mathrm{AG}(3,2)$, only the directions of the edges and of the body diagonal are used, and every label has odd order (Supplementary Note~\ref{app:qss-general}).}
\label{fig:ideal}
\end{figure}

The two patterns differ on the erasure channel, in which each site is erased independently with probability $\varepsilon$. Steane's code fails exactly when the erased sites contain a line, and the color code of odd local dimension exactly when they form a set in $\Gamma_{\mathrm{NF}}$; the two differ only on $\{W,X,Y\}$, fatal for Steane's code alone, and $\{A,B,C,Z\}$, fatal for the color code alone, so the failure probability of the color code is lower by $\varepsilon^3(1-\varepsilon)^4-\varepsilon^4(1-\varepsilon)^3=\varepsilon^3(1-\varepsilon)^3(1-2\varepsilon)$. Every qubit stabilizer code that encodes one qubit into seven with distance at least $3$ fails for at least seven sets of three sites (exhaustive search), and hence more often than the color code for every $0<\varepsilon<1/2$, by at least the same margin (Supplementary Note~\ref{app:qss-color}). So, at equal erasure probability per site, the pattern of odd local dimensions is the better of the two, with failure probability $6\varepsilon^3$ against at least $7\varepsilon^3$ to leading order, although its sites are larger; Steane's pattern matters where it is prescribed, as an access structure in secret sharing. Whether a qubit code that encodes one qubit into seven, outside the stabilizer formalism, fails less often than the color code we do not know; by Theorem~\ref{thm:ideal}, none corrects all the erasures that the color code corrects. For even $d\ge4$, the color code of $\Z_d$ qudits recovers the encoded qudit from exactly the sets that lie in both patterns, while $\{W,X,Y\}$ and $\{A,B,C,Z\}$ learn part of it; so the erasure of a set of sites is uncorrectable exactly when the set lies in either pattern, and for every $0<\varepsilon<1$ the code fails more often than Steane's code and than the color code of odd local dimension (Supplementary Note~\ref{app:qss-color}).

For stabilizer codes, the arithmetic of the local dimension also bounds the sizes of the shares from below.

\begin{proposition}[Stabilizer schemes]\label{prop:qss}
For the access structures $\Gamma_{\mathrm F}$ and $\Gamma_{\mathrm{NF}}$:
\begin{enumerate}
\item[(i)] The qubit cube state is an ideal scheme for $\Gamma_{\mathrm F}$. Every stabilizer scheme for $\Gamma_{\mathrm F}$, pure or mixed, of odd local dimension ($\Z_d$ qudits with $d$ odd, or Galois qudits of odd characteristic) satisfies $\sum_{P\in\ell}S(P)\ge4\,S(R)$ for every line $\ell$ of the Fano plane, so its information ratio is at least $4/3$, and on average its shares are at least $4/3$ times as large as the secret. For $\Z_d$ qudits with $d$ not a power of $2$, in every stabilizer scheme for $\Gamma_{\mathrm F}$ whose secret consists of $m$ qudits the players of every line hold together at least $4m$ qudits, with the same conclusions.
\item[(ii)] For every odd prime $p$, the cube state over $\GF(p)$ is an ideal scheme for $\Gamma_{\mathrm{NF}}$. Every stabilizer scheme for $\Gamma_{\mathrm{NF}}$ of qubits or of Galois qudits of characteristic $2$, pure or mixed, satisfies $S(P)+S(P')+S(P'')\ge4\,S(R)$ for any three of the players $A$, $B$, $C$ and $Z$, so its information ratio is at least $4/3$. For $\Z_d$ qudits with $d\equiv2\pmod4$, in every stabilizer scheme for $\Gamma_{\mathrm{NF}}$ whose secret consists of $m$ qudits any three of these players hold together at least $4m$ qudits, so the information ratio is at least $4/3$.
\item[(iii)] For every odd $d$, every $d\equiv2\pmod4$ with $d\ne2$, and Galois qudits of odd characteristic, $\Gamma_{\mathrm F}$ has a stabilizer scheme with information ratio $7/5$, whose secret consists of five qudits. For every $d\equiv2\pmod4$, qubits included, and for Galois qudits of characteristic $2$, so has $\Gamma_{\mathrm{NF}}$, with a secret of ten qudits.
\item[(iv)] For $\Z_d$ qudits, $\Gamma_{\mathrm F}$ has an ideal stabilizer scheme, pure or mixed, if and only if $d=2$, and $\Gamma_{\mathrm{NF}}$ has one if and only if $d$ is odd.
\end{enumerate}
\end{proposition}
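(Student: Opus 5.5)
Proposition~\ref{prop:qss} is about stabilizer schemes, so the plan is to pass through the dictionary of Proposition~\ref{thm:reduction}. Under the lift $S(X)+\sum_{i\in X}S(i)$, the entropy vector of a stabilizer state of $\Z_d$ qudits, or of Galois qudits, becomes the rank vector of a tuple of subgroups of a finite abelian group. For prime-power $d$ it is a rank vector of subspaces over $\GF(p)$, and for composite $d$ a sum of such vectors over the primes dividing $d$. In that language the access conditions are linear equations in the ranks: $R$ is determined by $T$ for authorized $T$, and $R$ is independent of $T$ for unauthorized $T$.

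For (i), I would take a Fano line $\ell$, say $WXY$, and write the Fano and non-Fano matroid argument as an inequality. Over a field of characteristic $\ne2$ I would use the characteristic-dependent linear rank inequality that is tight on the Fano configuration; it is of the Dougherty--Freiling--Zeger type and is valid exactly when $2$ is invertible. Applied to the subspaces of $R$ and of the shares, and combined with the access equalities, it should give $\sum_{P\in\ell}S(P)\ge4S(R)$ for the three lines through a common point. Transitivity of the collineation group of the Fano plane that preserves $\Gamma_{\mathrm F}$ then extends the bound to every line. Summing over all seven lines, each point lies on three, gives $3\sum_P S(P)\ge28S(R)$, which yields both the average bound $4/3$ and the information ratio bound. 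For $\Z_d$ with $d$ not a power of $2$, I would decompose the state into its $p$-primary components for the odd primes $p\mid d$. Each such component is a $\GF(p)$-linear scheme with the same access structure, since the access equalities hold componentwise by additivity and nonnegativity. The bound then holds componentwise, and I would count qudits to obtain $4m$. Part (ii) is symmetric: in characteristic $2$ I would use the non-Fano-type inequality, with the triple drawn from $\{A,B,C,Z\}$, which are exactly the sets on which $\Gamma_{\mathrm{NF}}$ forces non-Fano behavior. For $d\equiv2\pmod4$, the $2$-primary component is a single qubit layer, so it is binary.

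For (iii), I would give explicit constructions. I would combine a Fano-type and a non-Fano-type scheme in the allowed characteristic, for instance by the decomposition or tensoring constructions of Sarvepalli--Raussendorf, or by concatenating several cube states on a larger secret, so that each share carries $7$ qudits per $5$ secret qudits, and $14$ per $10$ for $\Gamma_{\mathrm{NF}}$. I would verify each construction by an exact rank computation. For $d\equiv2\pmod4$, I would take the qubit or odd-prime layer from the appropriate component. Part (iv) would follow from combining the pieces. Ideal schemes exist for $d=2$ (the cube state) and for odd $d$ (cube states over each $\GF(p)$, tensored across the prime factors). Nonexistence in every other case comes from Theorem~\ref{thm:ideal}, because a stabilizer scheme of $\Z_d$ qudits with an ideal share has all dimensions equal to $D$. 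Beyond that, for $\Z_d$ with $4\mid d$ and $\Gamma_{\mathrm F}$, a $\Z_4$ layer must give labels with $2x=0$ by Step~6, which contradicts idealness with $\Z_d$ alphabet.

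The main obstacle is identifying the exact characteristic-dependent inequality that gives the constant $4$ rather than a weaker one. I would first derive it by hand from common information, using the intersection of the subspaces of the lines through $R$ as in Step~6. A fallback is an LP certificate over the lifted polymatroid with one common-information variable.
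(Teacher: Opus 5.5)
The step you flag as the main obstacle is the whole content of (i) and (ii), and both routes you propose for it fail.

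An argument built only from common information (intersections, sums and Shannon inequalities) cannot work. Every such inequality holds for all subgroup arrangements, in particular for the lift of the qubit cube state, which is an ideal scheme for $\Gamma_{\mathrm F}$; likewise the cube state over $\GF(3)$ is ideal for $\Gamma_{\mathrm{NF}}$. So no such certificate, by hand or by linear programming, can force a ratio above $1$.

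The standard seven-variable characteristic-dependent inequalities are too weak. With the Shannon and Ingleton inequalities and (65), the corrected (6.17), (b) and \LB{} in all $5{,}040$ labelings, the access equations of the Fano port still admit ratio $23/22$. Imposing \LB{} on every one-element minor of the self-dual lift gives only $43/41$ (Supplementary Note~\ref{app:qss-lp}).

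What is missing has two parts. First, contract the lift by a player $q$ off the line $\ell$, which turns $\Gamma_{\mathrm F}$ into the port of the Fano matroid on six players. Second, apply an inequality of Jafari--Khazaei type, Lemma~\ref{lem:fano-odd}, whose proof uses the arithmetic of $2$ directly. For $s=a_1+a_2+a_3\in T_0$ with $a_i\in T_i\cap(T_0+T_{i+3})$ one finds $2s\in K=(T_0\cap T_{234})+(T_0\cap T_{135})+(T_0\cap T_{126})+(T_0\cap T_{456})$. The elements $s\in T_0$ with $2s\in K$ form a subgroup of order $\abs K$, because doubling is an automorphism of a group of odd order. The access equations then make the right side $8S(R)$, while the left side is at most $2\sum_{P\in\ell}S(P)$. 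For prime $d$ this is Prop.~4.4~(II) of Jafari and Khazaei, applied through the normal form. For (ii), contract by one of $A$, $B$, $C$, $Z$, which gives the non-Fano port with the other three as the triple $456$, and apply their Prop.~4.5~(III) to the $\GF(2)$-linear scheme of the normal form.

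Your reduction to vector spaces is also false when the square of a prime divides $d$. For $\Z_{p^a}$ qudits with $a\ge2$ the lift is the rank function of subgroups of a group of exponent dividing $p^a$, not of $\GF(p)$-subspaces. Likewise, the $p$-primary factor of a $\Z_d$ scheme is a stabilizer scheme of $\Z_{p^a}$ qudits, not a $\GF(p)$-linear one. So for $d=9$, or for $d=18$ in the count of qudits, an inequality proved for vector spaces does not apply; this is why Lemma~\ref{lem:fano-odd} is proved for every finite abelian group of odd order, with quotients in place of complements. The same slip breaks your converse in (iv) for $d=p^a$: tensoring $\GF(p)$ cube states does not give a scheme of whole $\Z_{p^a}$ qudits. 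The cube state over $\Z_d$ itself is ideal, because all minors of its integer generator matrix lie in $\{0,\pm1,\pm2\}$.

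Part (iii) has no construction: neither Sarvepalli--Raussendorf nor concatenated cube states give $7/5$. The paper uses a $(6,1)$-decomposition in the sense of van Dijk, Jackson and Martin into seven ideal sub-schemes, one for each line $\ell$, each for the access structure in which $\ell$ and its complement exchange status. For $\Gamma_{\mathrm F}$ these are relabeled cube codes over $\Z_d$. For $\Gamma_{\mathrm{NF}}$ they are the binary cube code and six relabeled $\GF(4)$ codes whose matroid is $\mathrm{AG}(3,2)$ with four circuit-hyperplanes relaxed. The sub-schemes share the seven coordinates of $s=\sum_{i=1}^5u_i(e_{i+1}-e_7)+\xi b$; any six coordinates determine the secret $u$, while one alone reveals nothing. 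For $d\equiv2\pmod4$, the $7/5$ scheme of one coprime factor is tensored with copies of the ideal cube state of the other, padded with qudits in a fixed state.

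Your nonexistence argument in (iv) is essentially the paper's, provided you carry it out as follows. Show that every share consists of exactly $m$ maximally mixed qudits and that $h_S=2m\log d\,r_{\mathcal M}$. Then apply Lemma~\ref{lem:K-two} to the Lagrangian subgroup of the given $\Z_d$ state, with $V=\Z_d^{2m}$. Theorem~\ref{thm:ideal} alone only shows that the state is a qubit stabilizer state up to local unitaries, which does not exclude $\Z_4$.
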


Summing (i) over the seven lines, each player lying on three of them, shows that a stabilizer code with the recovery pattern of Steane's code needs at least $28/3$ qudits per encoded qudit, against $7$ for qubits, whenever its local dimension is not a power of $2$; and by (ii), a qubit stabilizer code with the recovery pattern of the color code has a site that holds at least $4/3$ times as many qubits as it encodes. For prime local dimensions, stabilizer schemes are linear schemes (Supplementary Lemma~\ref{lem:qss-linear}), the bound of (i) follows from bounds of Jafari and Khazaei and of Pe\~na-Macias for linear schemes~\cite{JafariKhazaei2021,PenaMacias2023}, and that of (ii) from a bound of Jafari and Khazaei; for composite local dimensions, where the lift of a stabilizer state is the rank function of subgroups of a finite abelian group, one for each system (Sec.~\ref{sec:reduction}), we need an inequality for seven subgroups in the configuration of the Fano plane that holds in every finite abelian group of odd order (Lemma~\ref{lem:fano-odd}, Methods). Theorem~\ref{thm:ideal} gives part~(iv) for $\Gamma_{\mathrm{NF}}$ and shows that $d$ is a power of $2$ for $\Gamma_{\mathrm F}$; that $d=2$, which excludes $\Z_4$, $\Z_8$, \dots, follows from a normal form of the phase-space labels of the stabilizer group (Methods). For $D=4$, for instance, the codes of Theorem~\ref{thm:ideal} with Steane's pattern are, up to local unitaries, stabilizer codes of two qubits per site, such as two copies of Steane's code, and by~(iv) none is, even up to local unitaries, a stabilizer code of one $\Z_4$ qudit per site. We do not know the optimal ratios, which lie between $4/3$ and $7/5$ for $\Gamma_{\mathrm F}$ in every odd local dimension (Supplementary Note~\ref{app:qss-upper}).

Codes that are not stabilizer codes of $\Z_d$ qudits avoid the bound of (i) for large secrets: restricting the secret of copies of the qubit cube state to a subspace and embedding the qubits of each player into qudits of dimension $d$ gives schemes for $\Gamma_{\mathrm F}$ whose secret consists of $k$ qudits and whose shares consist of $n$ qudits, with $n/k\to1$; for $d=3$ and $k=4$ the scheme uses $35$ qutrits, against at least $38$ for every qutrit stabilizer scheme with a secret of four qutrits (Supplementary Note~\ref{app:qss-general}). So, for this recovery pattern and large secrets, the cost in (i) is a cost of stabilizer codes of $\Z_d$ qudits with $d$ not a power of $2$, not of the pattern: copies of Steane's code embedded into qudits of dimension $d$, which are not stabilizer codes of such qudits, avoid it. With shares of the size of the secret, Theorem~\ref{thm:ideal} excludes this.
\subsection{A dictionary between stabilizer-state entanglement and classical codes}
\label{sec:reduction}
\label{sec:prelim}

The remaining results concern the entanglement entropies of stabilizer states; the tool behind them, and behind the bounds of Proposition~\ref{prop:qss}, is a dictionary between these entropies and the ranks of classical codes. We consider $n$ parties $1,\dots,n$ and a purifying system $0$, and write $N=\{0,1,\dots,n\}$. For a pure state on $N$ the \emph{entropy vector} assigns to every $X\subseteq N$ the von Neumann entropy $S(X)$ of the reduced state on $X$, in bits. It is \emph{pure-symmetric}: $S(\emptyset)=0$ and $S(X)=S(N\setminus X)$. Stabilizer states and CSS states are defined as in Sec.~\ref{sec:qss}, and each system may hold any number of qudits. We write $\StabZ{d}{n}$ for the closed convex cone generated by the entropy vectors of stabilizer states of $\Z_d$ qudits on $N$; since entropy vectors add under tensor products, it is the closure of the positive multiples of stabilizer entropy vectors.

The closure of the set of entropy vectors of all quantum states on $N$ is a convex cone, the \emph{quantum entropy cone}, whose description beyond three parties is open~\cite{Pippenger2003}. The entropy vectors of stabilizer states, whose entropies are ranks of matrices~\cite{Fattal2004,Hein2004}, generate the \emph{stabilizer entropy cone}, an inner approximation of the quantum entropy cone that contains the hypergraph and holographic entropy cones (Discussion). Cones compare \emph{entropy patterns}, entropy vectors up to a positive factor, in the limit: a dependence on $d$ at the level of cones is one that no number of qudits per party, no number of copies and no limit removes. The inequalities previously stated for stabilizer entropy vectors hold in every local dimension: every balanced classical information inequality~\cite{GrossWalter2013}, and every \emph{linear rank inequality} that follows from common information, the first being the Ingleton inequality~\cite{LMRW2013}; linear rank inequalities are those that the dimensions of sums of subspaces satisfy over every field, and common information adds the intersection of two subspaces as a new variable. \emph{Characteristic-dependent} linear rank inequalities hold over fields of some characteristics but not of others. Bao, Cheng, Hern\'andez-Cuenca and Su introduced the \emph{quantum linear rank cone} $\QLR_n$, defined by the linear rank inequalities in $n$ variables (Sec.~\ref{sec:fourfive}); it contains the stabilizer cone for up to five parties and equals it for up to four~\cite{LMRW2013,BCHS2020}.

On the classical side, if $U_i$ ($i\in\mathcal{E}$) are subgroups of a finite abelian group, then $h(X)=\log\abs{\sum_{i\in X}U_i}$ is the \emph{rank function} of the subgroup arrangement; over $\GF(p)$ it is $\log p$ times the dimension of the span, the rank function of a linear code. We write $\LcZ{d}{m}$ for the closed convex cone generated by the rank functions of subgroup arrangements on $m$ elements in groups of exponent dividing $d$. For a function $h$ on the subsets of $N$, the \emph{dual} is $h^*(X)=\sum_{i\in X}h(i)+h(N\setminus X)-h(N)$, also written $\Dual h$, and $h$ is \emph{self-dual} if $h^*=h$; the self-dual elements of a cone form its \emph{self-dual slice}. The central tool of this paper is the \emph{lift} of a pure-symmetric vector, which Mat\'u\v{s} used to relate the entropy vectors of pure states to polymatroids~\cite{Matus2012},
\begin{equation}\label{eq:lift}
  h_S(X)=S(X)+\sum_{i\in X}S(i)\qquad(X\subseteq N).
\end{equation}
For a pure state, $h_S$ is a polymatroid (submodularity is strong subadditivity, and monotonicity is the Araki--Lieb inequality~\cite{ArakiLieb1970}), it is self-dual, and $\lambda_{h_S}=2S$, where $\lambda_h(X)=h(X)+h(N\setminus X)-h(N)$ is the connectivity function (Supplementary Lemma~\ref{lem:A-selfdual}).

\begin{proposition}[Dictionary]\label{thm:reduction}
Let $d\ge 2$ and $n\ge 1$. For every pure-symmetric vector $S$ on $N$ the following are equivalent:
\begin{enumerate}
\item[(i)] $S\in\StabZ{d}{n}$;
\item[(ii)] $h_S\in\LcZ{d}{n+1}$;
\item[(iii)] $S$ lies in the closed convex cone generated by the entropy vectors of CSS states of $\Z_d$ qudits.
\end{enumerate}
Moreover, $\StabZ{d}{n}=\lambda\bigl(\LcZ{d}{n+1}\bigr)$, and $S\mapsto h_S$ is a linear bijection from $\StabZ{d}{n}$ onto the self-dual slice $\{h\in\LcZ{d}{n+1}:h^*=h\}$ of $\LcZ{d}{n+1}$, with inverse $h\mapsto\tfrac12\lambda_h$.
\end{proposition}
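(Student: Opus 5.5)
The plan is to work in the phase-space picture of the Weyl operators and prove the cycle (i)$\Rightarrow$(ii)$\Rightarrow$(iii)$\Rightarrow$(i), together with the two identities
\[
\lambda_{h_S}=2S,\qquad h_{\lambda_h/2}=h\ \text{ for self-dual } h .
\]
The identity $\lambda_{h_S}=2S$ and the self-duality of $h_S$ are Supplementary Lemma~\ref{lem:A-selfdual}. For the second identity, self-duality $h=h^*$ gives $\lambda_h(X)=2h(X)-\sum_{i\in X}h(i)$ and $\lambda_h(i)=h(i)$. Hence $\tfrac12\lambda_h(X)+\sum_{i\in X}\tfrac12\lambda_h(i)=h(X)$.

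These maps are linear, so they carry closed convex cones to closed convex cones. It therefore suffices to prove the statements on generators, that is, on individual stabilizer states, CSS states and subgroup arrangements.

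\emph{(i)$\Rightarrow$(ii).} Consider a stabilizer state of $\Z_d$ qudits. System $i$ holds $n_i$ qudits and has phase space $V_i=\Z_d^{2n_i}$. Let $L\le V=\bigoplus_i V_i$ be the group of labels of the stabilizer group, which is isotropic of order $d^{m}$ with $m=\sum_i n_i$. Let $L_X$ be the subgroup of labels supported on $X$.

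The standard formula is $S(X)=n_X\log d-\log\abs{L_X}$. The key step is the duality $\operatorname{pr}_X L=(L_X)^{\perp}$ inside $V_X$. This gives $\abs{\operatorname{pr}_XL}\,\abs{L_X}=d^{2n_X}$, and hence
\[
S(X)=\log\abs{\operatorname{pr}_XL}-n_X\log d .
\]
Substituting into the lift \eqref{eq:lift} yields
\[
h_S(X)=\log\abs{\operatorname{pr}_XL}-\sum_{i\in X}\log\abs{L_i}.
\]
Now set $Q=L/\bigoplus_{i\in N}L_i$. Projection onto $V_X$ kills $L_j$ for $j\notin X$ and is injective on $\bigoplus_{i\in X}L_i$. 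Therefore $h_S(X)=\log\abs{\operatorname{pr}_X(Q)}$, where $\operatorname{pr}_X(Q)$ denotes the image of $L$ in $V_X/\bigoplus_{i\in X}L_i$.

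A function of the form $X\mapsto\log\abs{\operatorname{pr}_X(Q)}$ is a subgroup rank function. To see this, pass to the Pontryagin dual $\widehat Q$ and take $U_i$ to be the annihilator of the kernel of the $i$-th projection. Then $\abs{\sum_{i\in X}U_i}=\abs{\operatorname{pr}_XQ}$. Since $\widehat Q$ has exponent dividing $d$, this shows $h_S\in\LcZ{d}{n+1}$.

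\emph{(ii)$\Rightarrow$(iii).} Take any subgroup arrangement $(U_i)$ in a group $G$ of exponent dividing $d$. Dualizing as above and embedding $\widehat G$ into some $\Z_d^k$ realizes its rank function as $g(X)=\log\abs{\operatorname{pr}_X\mathcal C}$ for a classical code $\mathcal C$. Here the coordinates of $\mathcal C$ are distributed among the systems.

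For the CSS state $\ket{\mathcal C}$, the reduced state on $X$ is a uniform mixture indexed by $\operatorname{pr}_X\mathcal C$ with fibres $\mathcal C_{X^c}$. This gives
\[
S(X)=\log\abs{\operatorname{pr}_X\mathcal C}+\log\abs{\operatorname{pr}_{X^c}\mathcal C}-\log\abs{\mathcal C}=\lambda_g(X).
\]
Hence $\lambda(\LcZ{d}{n+1})$ is contained in the CSS cone. If $S$ satisfies (ii), then $S=\tfrac12\lambda_{h_S}$ lies in that cone.

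\emph{(iii)$\Rightarrow$(i).} This is immediate, since CSS states are stabilizer states.

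The same chain shows $\StabZ{d}{n}\subseteq\lambda(\LcZ{d}{n+1})\subseteq{}$CSS cone${}\subseteq\StabZ{d}{n}$. The first inclusion holds because $S=\tfrac12\lambda_{h_S}$ with $h_S\in\LcZ{d}{n+1}$, and the positive factor $\tfrac12$ is harmless in a cone. This gives the equality $\StabZ{d}{n}=\lambda(\LcZ{d}{n+1})$. The bijection onto the self-dual slice then follows from the two identities above.

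The main obstacle is the symplectic duality $\operatorname{pr}_XL=(L_X)^\perp$ and the entropy formula for composite $d$, where $\Z_d$ is not a field and dimension counting is unavailable. I would prove it by Pontryagin duality for finite abelian groups. The symplectic form identifies $V_X$ with its own dual. The annihilator of $\operatorname{pr}_XL$ in $V_X$ is $L^\perp\cap V_X$, and this equals $L_X$ because $L$ is maximal isotropic with $L^\perp=L$. The relation $\abs{H}\,\abs{H^\perp}=\abs{V_X}$ then holds for every subgroup $H\le V_X$ without any field assumption.
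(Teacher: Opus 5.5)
Your proposal is correct and is essentially the paper's proof. You write the lift as $\log\abs{\pr_XL}-\sum_{i\in X}\log\abs{L_i}$ and dualize it to a subgroup arrangement (fact~(a)), realize arrangements by codes whose CSS states have entropy vector $\lambda_{r_C}$ (fact~(b)), and close the loop with Mat\'u\v{s}'s identities (fact~(c)). The differences are cosmetic. You get $S(X)=\log\abs{\pr_XL}-n_X\log d$ from $\pr_XL=(L_X)^\perp$ rather than from purity. You realize arrangements by codes through characters rather than the Smith normal form; there it is the coordinate groups $\widehat{U_i}$, not $\widehat G$, that you embed into powers of $\Z_d$.

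One justification needs repair. Linear maps do not in general carry closed convex cones to closed convex cones. You do not need this, though: a continuous linear map that sends the generators into a closed convex cone sends their closed conic hull into it, which is how the paper passes from generators to cones. Your chain of inclusions, from $\StabZ{d}{n}$ through $\lambda(\LcZ{d}{n+1})$ and the CSS cone back to $\StabZ{d}{n}$, then shows that the image $\lambda(\LcZ{d}{n+1})$ is itself closed.
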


For qubits, for example, $S$ is a limit of positive combinations of qubit stabilizer entropy vectors if and only if $h_S$ is a limit of positive combinations of rank functions of $n+1$ subspaces over $\GF(2)$, one of which belongs to the purifier. Two features of such ranks thus become features of entanglement: the purifier is one more variable, which is the source of the five-party gap (Sec.~\ref{sec:fourfive}), and the arithmetic of $d$ enters through the inequalities valid for subgroup arrangements of exponent dividing $d$, such as the characteristic-dependent linear rank inequalities~\cite{BKL2011,DFZ2015rate,DFZ2015char}; this is the source of the dependence on the local dimension. The proof combines three facts (Methods); nothing depends on $d$ being prime.

\begin{corollary}[CSS states suffice; doubling]\label{cor:doubling}
Let $S$ be the entropy vector of a stabilizer state of $m$ qudits of dimension $d$. Then $2S$ is the entropy vector of a CSS state of $2m$ qudits of dimension $d$.
\end{corollary}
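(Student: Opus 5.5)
The plan is to build the CSS state directly from the stabilizer state's own data, rather than going through the cone statement of Proposition~\ref{thm:reduction}, which only holds up to limits and positive multiples. Let the stabilizer state of $m$ qudits be fixed by the abelian group $G$ of order $d^m$ with phase-space labels $L\subseteq\Z_d^{2m}$. Here $L$ is a Lagrangian subgroup of the symplectic module: isotropic, with $\abs L=d^m$, and $L^\perp=L$. For a system $X$ holding coordinates $Q_X$, the standard formula gives
\[
S(X)=\log\abs{L_X}-\abs{Q_X}\log d,
\]
where $L_X$ is the projection of $L$ onto the phase-space coordinates of $X$, that is, $\Z_d^{2\abs{Q_X}}$. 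Equivalently, $S(X)=\abs{Q_X}\log d-\log\abs{L\cap P_X}$, with $P_X$ the subgroup of labels supported on $X$. I take this formula as known for $\Z_d$ qudits of composite $d$ as well, via the usual counting of the reduced density matrix $\rho_X=\abs{L\cap P_X}\,d^{-\abs{Q_X}}\,\Pi$.

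The CSS state I would take is $\ket{L}$, with $\mathcal C=L$ viewed as a subgroup of $\Z_d^{2m}$. Each qudit of the original state is replaced by two qudits of the same system, carrying its $x$- and $z$-coordinates. This state has $2m$ qudits, and system $X$ holds $2\abs{Q_X}$ of them.

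For any code $\mathcal C\subseteq\Z_d^{M}$, a CSS state $\ket{\mathcal C}$ has
\[
S(X)=\log\abs{\mathcal C_X}+\log\abs{\mathcal C_{\bar X}}-\log\abs{\mathcal C},
\]
where $\mathcal C_X$ denotes projection. This is the Schmidt decomposition: $\ket{\mathcal C}$ is uniform over $\mathcal C$, and the number of Schmidt terms is $\abs{\mathcal C_X}\abs{\mathcal C_{\bar X}}/\abs{\mathcal C}$, all with equal weight. So what remains is to show
\[
\log\abs{L_X}+\log\abs{L_{\bar X}}-\log\abs L=2S(X).
\]

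The key step is a duality identity. Because $\ker(L\to L_{\bar X})=L\cap P_X$, we have $\log\abs{L_{\bar X}}=\log\abs L-\log\abs{L\cap P_X}$. The left side therefore equals $\log\abs{L_X}-\log\abs{L\cap P_X}$. The two formulas for $S(X)$ in the first paragraph sum to exactly this, so it equals $2S(X)$ and the proof closes.

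The main obstacle is justifying both expressions for $S(X)$ over $\Z_d$ with $d$ composite. Their agreement amounts to
\[
\abs{L_X}\cdot\abs{L\cap P_X}=d^{2\abs{Q_X}},
\]
which follows from $L^\perp=L$ and Pontryagin duality on finite abelian groups. The symplectic complement of $L\cap P_X$ inside $\Z_d^{2\abs{Q_X}}$ is $L_X$. The symplectic form is a perfect pairing over $\Z_d$, so $\abs{A}\,\abs{A^\perp}=d^{2k}$ holds for every subgroup $A$, without any field assumption. I would prove this annihilator fact explicitly. I would also check purity of the stabilizer state, $\abs{L}=d^m$, in the no-nontrivial-scalar setting, where $L\cong G$ because $G$ contains no scalars.
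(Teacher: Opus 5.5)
Your proof is correct and is essentially the paper's: both read the Lagrangian subgroup $L$ as a classical code of length $2m$, with each system holding the $x$ and $z$ coordinates of its qudits, and combine the CSS entropy formula with the stabilizer entropy formula. The only difference is the closing step. The paper writes $r_L(X)=S(X)+m_X\log d$ and uses purity $S(X)=S(N\setminus X)$ together with $\abs{L}=d^m$. You use the kernel count together with the second form $S(X)=m_X\log d-\log\abs{L\cap P_X}$ of the entropy formula, which carries the same information, since the paper derives one form from the other by exactly that kernel count and purity.
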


For prime $d$, Corollary~\ref{cor:doubling} follows from two entropy formulas of Osthus, who asked whether CSS states realize the entropy vectors of all stabilizer states~\cite[Ch.~6]{Osthus2024}; for every $d$, it answers this question for cones.

\begin{corollary}[Inclusion under divisibility]\label{cor:divisibility}
If $d$ divides $d'$, then $\StabZ{d}{n}\subseteq\StabZ{d'}{n}$ for every $n$. In particular, $\StabZ{p}{n}\subseteq\StabZ{d}{n}$ for every prime $p$ dividing $d$.
\end{corollary}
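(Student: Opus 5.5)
The statement to prove is Corollary~\ref{cor:divisibility}: if $d\mid d'$, then $\StabZ{d}{n}\subseteq\StabZ{d'}{n}$. I will work on the classical side of the dictionary, Proposition~\ref{thm:reduction}, rather than with states directly. The reason is that the cone $\LcZ{d}{n+1}$ is generated by rank functions of subgroup arrangements in groups of exponent dividing $d$. Since $d\mid d'$, every such group also has exponent dividing $d'$. So every generator of $\LcZ{d}{n+1}$ is a generator of $\LcZ{d'}{n+1}$, which gives $\LcZ{d}{n+1}\subseteq\LcZ{d'}{n+1}$ at once, because both are closed convex cones.

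The argument then takes three steps.

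\begin{enumerate}
\item[(a)] Take $S\in\StabZ{d}{n}$. It is pure-symmetric, since pure-symmetry is a closed linear condition satisfied by every generator.
\item[(b)] By (i)$\Rightarrow$(ii) of Proposition~\ref{thm:reduction} for $d$, the lift satisfies $h_S\in\LcZ{d}{n+1}$.
\item[(c)] By the inclusion above, $h_S\in\LcZ{d'}{n+1}$. Then (ii)$\Rightarrow$(i) of Proposition~\ref{thm:reduction} for $d'$, applied to the same pure-symmetric $S$, gives $S\in\StabZ{d'}{n}$.
\end{enumerate}

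The case where $p$ is a prime divisor of $d$ follows by taking $d'=d$ and replacing $d$ with $p$.

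The one point that needs care is whether "exponent dividing $d$" in the definition of $\LcZ{d}{m}$ is really the intended class of generators. If the paper instead restricts the generators to arrangements inside $\Z_d^k$, I would realize a $\Z_d^k$ arrangement inside $\Z_{d'}^k$ via the embedding $x\mapsto (d'/d)x$. This is an injective group homomorphism $\Z_d\to\Z_{d'}$, and it preserves the orders of all sums of subgroups. So the rank function is unchanged, and the inclusion of cones holds either way.

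As a sanity check on the quantum side, there is a direct route for CSS states. Embed the code $\mathcal C\le\Z_d^m$ into $\Z_{d'}^m$ by multiplication by $d'/d$. The CSS state of the image equals $\ket{\mathcal C}$ with each $\Z_d$ basis state sent into a $d$-dimensional subspace of a $\Z_{d'}$ qudit, which is a local isometry. Local isometries preserve all entropies, and by (iii) of Proposition~\ref{thm:reduction} CSS states generate $\StabZ{d}{n}$. This gives the same conclusion. I would mention it as an alternative, but rely on the dictionary argument.

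I expect no real obstacle here. The only thing to check is that the orders $\abs{\sum_{i\in X}U_i}$ are preserved under the embedding, which holds because an injective homomorphism maps sums of subgroups onto sums of their images.
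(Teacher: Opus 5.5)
Your proof is correct and matches the paper's: a group of exponent dividing $d$ has exponent dividing $d'$, so $\LcZ{d}{n+1}\subseteq\LcZ{d'}{n+1}$, and Proposition~\ref{thm:reduction} transfers this inclusion to the stabilizer cones, with entropies measured in bits for every local dimension. Your CSS-embedding sanity check, via the local isometry $\ket{j}\mapsto\ket{(d'/d)j}$, is also valid, but the argument does not need it.
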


\begin{proof}
A group of exponent dividing $d$ has exponent dividing $d'$, so $\LcZ{d}{n+1}\subseteq\LcZ{d'}{n+1}$; apply Proposition~\ref{thm:reduction} (entropies are measured in bits for all $d$).
\end{proof}

Up to a factor $2$, the dictionary thus reduces the search for stabilizer states with prescribed entanglement entropies to a search for classical codes over $\Z_d$. Separations persist when parties are added: if $S\in\StabZ{d}{m}\setminus\StabZ{d'}{m}$ and $n>m$, give $n-m$ new parties no qudits; merging them with the purifier maps $\StabZ{d'}{n}$ linearly into $\StabZ{d'}{m}$ and the new vector back to $S$.
\subsection{Steane's code and its qudit counterpart cannot imitate each other}
\label{sec:six}
\label{sec:steane}

The code states of Steane's code and of the color code show that the dependence on the arithmetic of $d$ extends from recovery patterns to the entanglement of stabilizer states. Label seven systems by the points of the Fano plane, as in Sec.~\ref{sec:qss}. The $[7,3]$ simplex code over $\GF(2)$ has as generator the $3\times7$ matrix whose columns are the seven nonzero vectors of $\GF(2)^3$. Its CSS state, with one qubit per point of the Fano plane, is the logical $\ket{0_L}$ of Steane's $[[7,1,3]]$ code~\cite{Steane1996prl,Steane1996}, which we call the \emph{Steane state}. It is a state of seven systems (six parties and a purifier), not the eight-system cube state of Sec.~\ref{sec:qss}, which also contains the reference. Its entropy vector, in bits, is $\SF=\lambda_{\rF}$ and its lift is
\begin{equation}\label{eq:hF}
  h_{\SF}=\rF+\rF^*=:\hF ,
\end{equation}
where $\rF$ is the rank function of the Fano matroid (Supplementary Lemma~\ref{lem:A-css}). In the Steane state, $I(i;j)=0$ for any two systems, and $I(i;jk)=0$ unless $\{i,j,k\}$ is a line, in which case $I(i;jk)=S(i)=1$ bit: the correlations sit on the lines of the Fano plane. Any of the seven systems may serve as the purifier, and other eigenstates of logical Pauli operators, such as $\ket{+_L}$, have the same entropy vector.

Over $\GF(p)$ with $p$ odd, the same $3\times7$ matrix generates a code whose columns form the non-Fano configuration, in which the points $W$, $X$ and $Y$ are not collinear, because the determinant of the vectors $110$, $101$ and $011$ is $-2$. Its CSS state, with one qudit on each point, is the \emph{non-Fano state}; its lift is $\hNF=\rNF+\rNF^*$, where $\rNF$ is the rank function of the non-Fano matroid. Let $\SNF$ be its entropy vector in units of $\log p$, which does not depend on $p$. Both $\SF$ and $\SNF$ take the value $1$ on every system, and they differ only on the cut $WXY\mid ABCZ$, where $\SNF$ is larger by $1$. The non-Fano state is a logical basis state of the qudit counterpart of Steane's code: for every $d\ge2$, the logical basis state of the color code of $\Z_d$ qudits of Sec.~\ref{sec:qss} is the CSS state of the code generated by the same matrix over $\Z_d$ (Supplementary Note~\ref{app:qss-color}).

\begin{theorem}[Six parties]\label{thm:steane}
For every odd $d\ge3$ and every $n\ge6$,
\[
  \StabZ{2}{n}\not\subseteq\StabZ{d}{n}.
\]
A witness is the logical $\ket{0_L}$ of Steane's code, with one qubit for each of six parties and one qubit as the purifier.
\end{theorem}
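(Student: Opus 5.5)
The plan is to pass through the dictionary, Proposition~\ref{thm:reduction}. The Steane state is a qubit stabilizer state, so $\SF\in\StabZ{2}{6}$. It therefore suffices to show $\SF\notin\StabZ{d}{6}$ for odd $d$. The extension to $n\ge6$ then follows from the remark after Corollary~\ref{cor:divisibility}: adding empty parties and merging them with the purifier preserves separations. By Proposition~\ref{thm:reduction}, $\SF\in\StabZ{d}{6}$ if and only if the lift $\hF=\rF+\rF^*$ lies in $\LcZ{d}{7}$, the closed cone of rank functions of seven subgroups, one per point of the Fano plane, of abelian groups of exponent dividing $d$, hence of odd order. So I need a linear inequality that is valid on every such arrangement and strictly violated by $\hF$. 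Since the inequality is linear, it automatically passes to positive combinations and to closures, which handles the "up to scaling and in the limit" part.

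First choice of inequality. I would derive it from Lemma~\ref{lem:fano-odd} (Methods), the inequality for seven subgroups in the configuration of the Fano plane that holds in every finite abelian group of odd order. The seven variables are the six parties and the purifier, placed on the seven points. I would write the inequality as a linear functional $\phi$ with $\phi(h)\ge0$ on $\LcZ{d}{7}$ for all odd $d$, and evaluate it on $\hF$ using the explicit values. These are: $\hF(P)=2$ on points; $\hF(\{i,j\})=4$ on pairs; $\hF(T)=6-1$ on triples that are lines and $6$ on the other triples, and so on. These values come from $\rF$ and from $\rF^*(X)=\abs X+\rF(\bar X)-3$.

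The main obstacle is that $\hF$ contains the summand $\rF^*$, the rank function of the binary rank-$4$ dual Fano matroid. A characteristic-dependent inequality violated by $\rF$ may be slack enough on $\rF^*$ that the sum satisfies it. The paper itself says the known six-party inequalities are "blind" here. If the direct evaluation of Lemma~\ref{lem:fano-odd} on $\hF$ does not give a strict violation, I would use a transfer argument instead.

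Transfer argument. Assume $\hF$ is approximated by scaled rank functions of odd-order subgroup arrangements $U_P$. I would adjoin an eighth subgroup for the reference $R$, built from intersections and sums of the $U_P$ in common-information style, mimicking how the cube state is obtained from Steane's code, so that the resulting eight-element arrangement approximately has the lift of the cube state's entropies. By Proposition~\ref{thm:reduction} applied with eight systems, this yields stabilizer schemes of odd local dimension for $\Gamma_{\mathrm F}$ whose information ratio tends to $1$. That contradicts Proposition~\ref{prop:qss}(i), which forces $\sum_{P\in\ell}S(P)\ge4S(R)$ and hence ratio at least $4/3$ (its inequalities are linear, so they survive the limit).

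The delicate step is showing that the adjoined subgroup has size comparable to the $U_P$ and makes exactly the lines of the Fano plane authorized up to $o(1)$ errors. For this I would use the robust, linear form of the access conditions from the lift, in which authorization and unauthorization become linear equations.
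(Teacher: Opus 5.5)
Your reduction to $\hF\notin\LcZ{d}{7}$ is right, and so is your suspicion about the direct route. Lemma~\ref{lem:fano-odd} is not violated by $\hF$ in any labeling: its left side is $3\cdot2=6$, while its right side is at most $I(0;123)+\sum_iI(0;i,i{+}3)\le2+3$, since each $I(0;i,i{+}3)=6-\hF(0,i,i{+}3)\le1$. So the proof rests on the transfer, and that step fails as you describe it.

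Adjoining an eighth subgroup leaves the rank function of the seven players equal to $h\approx\hF=\rF+\rF^*$. The lift of the cube state restricts on the players to $2\rF^*$, and the two differ on every line ($5$ versus $6$). So the eight-element arrangement cannot approximate that lift.

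Nor can it give a ratio below $4/3$. Realize $\hF$ over $\GF(2)$ by $U_P=\langle x_P\rangle\oplus\langle(1,x_P)\rangle\le\GF(2)^3\oplus\GF(2)^4$. Any $U_R$ that makes all seven lines authorized lies in $\bigcap_\ell U_\ell=0\oplus\langle(1,0,0,0)\rangle$, so $h'(R)\le1$ while $h'(P)=2$. The natural choice $(U_A+U_B+U_W)\cap(U_A+U_C+U_X)\cap(U_B+U_C+U_Y)$ attains this bound. It gives an exact linear scheme for $\Gamma_{\mathrm F}$ of ratio $2$, which satisfies $\sum_{P\in\ell}h'(P)\ge4h'(R)$ with room to spare.

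Suppose the steps bounding derived ranks by linear forms in $h$ hold in every abelian group, as they do for sums and intersections. Then the form you would obtain is, at $\hF$, at least the value of the final odd-order inequality on this qubit realization's derived arrangement. Hence a construction that only adjoins $R$ to unchanged players cannot produce a violation.

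Separately, ``information ratio tending to $1$'' is not the right mechanism. The summands of a positive combination converging to $\hF$ need not be close to $\hF$ individually. What you need is every derived rank bounded, in the direction fixed by its sign in the target inequality, by a linear form in $h$. The detour through stabilizer schemes on eight systems adds nothing.

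The missing idea is to separate the two halves of $\hF$ with common informations. In the realization above, the pieces $U_i\cap(U_j+U_k)$ for lines $\{i,j,k\}$ are exactly the Fano summands $\langle x_i\rangle\oplus0$. You could try to quotient them out, keep the $\rF^*$ half, adjoin $R$ and apply Lemma~\ref{lem:fano-odd} to a contraction. But then you would have to bound, by linear forms in $h$ tight at $\hF$, the ranks of a quotient by a sum of $21$ pieces and of a triple intersection, including upper bounds on mutual informations with $R$. Your proposal supplies none of these.

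The paper instead keeps the Fano half. The derived family $\hat F_i=\sum_{\ell\ni i}U_i\cap(U_j+U_k)$ has its ranks bounded by linear forms in $h$: from above on points, lines and $ABCZ$, and from below on every set. All these bounds are tight at $\hF$ (Supplementary Lemma~\ref{lem:transfer}). Applying \LB{} with $t=2$, which is valid in groups of odd order and violated by the Fano configuration, gives $0\le\LB_2(\hat h)\le\tfrac13\fnine(h)$ with $\fnine(\hF)=-3$. No eighth system and no secret sharing are needed.
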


\begin{theorem}[Six parties, the reverse direction]\label{thm:reverse}
For $d\ge2$, let $S^{(d)}$ be the entropy vector, in bits, of the logical basis state of the color code of $\Z_d$ qudits, that is, of the CSS state, with one qudit on each of the seven systems, of the code generated over $\Z_d$ by the rows of the $3\times7$ matrix whose columns are the seven nonzero vectors of $\{0,1\}^3$, labeled as above. Then
\[
  \ftwelve\bigl(S^{(d)}\bigr)=-\log_2\frac{d}{\gcd(d,2)} ,
\]
where $\ftwelve$ is a linear form that is nonnegative on $\StabZ{2}{6}$, with $Z$ as the purifier (Proposition~\ref{prop:f12}, Methods). Hence $\StabZ{d}{n}\not\subseteq\StabZ{2}{n}$ for every $d\ge3$ and every $n\ge6$, and $\StabZ{2}{n}\subsetneq\StabZ{d}{n}$ for every even $d\ge4$ and every $n\ge6$. With Theorem~\ref{thm:steane}, the qubit cone and the cone of every odd local dimension are incomparable from six parties on.
\end{theorem}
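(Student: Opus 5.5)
Theorem~\ref{thm:reverse} splits into three parts: evaluate $\ftwelve$ on $S^{(d)}$, show that $\ftwelve$ is nonnegative on $\StabZ{2}{6}$, and deduce the cone statements.

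\emph{Evaluating $\ftwelve$.} I will compute $S^{(d)}$ exactly from the CSS structure. For a CSS state $\ket{\mathcal C}$ with $\mathcal C\le\Z_d^7$ generated by the rows of $G$, the entropy of a set $X$ of systems is $S(X)=\log\abs{\pi_X(\mathcal C)}-\log\abs{\mathcal C_{\bar X}}$ in bits. Here $\pi_X$ is the projection onto the coordinates in $X$, and $\mathcal C_{\bar X}$ is the subcode supported on the complement $\bar X$ of $X$ (Supplementary Lemma~\ref{lem:A-css}). Both groups are determined by the columns of $G$, which are vectors in $\Z_d^3$, so each term is the log-order of a subgroup of $\Z_d^3$ generated by some columns, or of the annihilator of a set of columns. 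The only place where $d$ enters is a set of columns containing $\{110,101,011\}$, or containing their complement $\{100,010,001,111\}$. The sublattice generated by $110,101,011$ has index $\abs{\det}=2$ in $\Z^3$, so reducing mod $d$ gives a subgroup of $\Z_d^3$ of order $d^3/\gcd(d,2)$. The same index-$2$ phenomenon appears for the four-point complement through the relation $100+010+001-111=0$ and its kernel. All other column sets are unimodular, meaning they span a direct summand of $\Z^3$. Their entropies are therefore integer multiples of $\log_2 d$, namely $\SNF(X)\log_2 d$, with $\SNF$ as in Sec.~\ref{sec:six}. Hence $S^{(d)}=\SNF\log_2 d$ on every cut except $WXY\mid ABCZ$, where the value is corrected by $-\log_2\gcd(d,2)$. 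Since $\ftwelve$ is linear,
\[
\ftwelve(S^{(d)})=\log_2 d\,\ftwelve(\SNF)-c\log_2\gcd(d,2),
\]
where $c$ is the coefficient of that cut in $\ftwelve$. From $d=2$ (the Steane state, $\SF$) and odd $d$ we can read off $\ftwelve(\SNF)=-1$ and $\ftwelve(\SF)=0$. These two values fix $c=-1$, and the formula $-\log_2(d/\gcd(d,2))$ follows.

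\emph{Nonnegativity on $\StabZ{2}{6}$.} This is Proposition~\ref{prop:f12}, which I may assume. If I had to prove it, I would use the Dictionary (Proposition~\ref{thm:reduction}) to rewrite $\ftwelve(S)\ge0$ as a linear inequality in $h_S$ on the self-dual slice of $\LcZ{2}{7}$. That is a characteristic-$2$ rank inequality of the kind that holds over $\GF(2)$ but fails for the non-Fano configuration. I would prove it by the standard tight-configuration argument: in characteristic $2$, near-equality on the Fano lines forces $W+X+Y$ to be dependent. I expect this to be the main obstacle, but the statement allows citing it.

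\emph{Cone statements.} For $d\ge3$, $d/\gcd(d,2)>1$, so $\ftwelve(S^{(d)})<0$, while $\ftwelve\ge0$ on $\StabZ{2}{6}$. Since $S^{(d)}\in\StabZ{d}{6}$, this gives $S^{(d)}\notin\StabZ{2}{6}$, and hence $\StabZ{d}{6}\not\subseteq\StabZ{2}{6}$. This extends to all $n\ge6$ by the padding argument at the end of Sec.~\ref{sec:reduction}. For even $d\ge4$, Corollary~\ref{cor:divisibility} gives $\StabZ{2}{n}\subseteq\StabZ{d}{n}$, and the witness makes the inclusion strict. Finally, combining the odd-$d$ case with Theorem~\ref{thm:steane} shows that the qubit cone and every odd-$d$ cone are incomparable for $n\ge6$.
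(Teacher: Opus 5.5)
Your proof is correct and is essentially the paper's. The paper also reads the orders of the projections off the Smith invariants of the column sets, passing through the Chinese remainder theorem and $S^{(2^a)}=\SF+(a-1)\SNF$ where you count directly over $\Z_d$. It then uses linearity with $\ftwelve(\SNF)=-1$ and $\ftwelve(\SF)=0$, and concludes with Proposition~\ref{prop:f12}, Corollary~\ref{cor:divisibility} and padding.

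Two details need fixing.
\begin{itemize}
\item In Lemma~\ref{lem:A-css} the subtracted subcode is the one supported on $X$, not on $\bar X$. As written, your formula equals $2\log\abs{\pi_X(\mathcal C)}-\log\abs{\mathcal C}$.
\item The only column set whose Smith invariants are not all $1$ is exactly $\{W,X,Y\}$: its proper supersets and $\{A,B,C,Z\}$ contain a basis of determinant $\pm1$. So the change on $ABCZ$ comes from $S(ABCZ)=S(WXY)$, not from the relation $A+B+C-Z=0$, which holds for every $d$.
\end{itemize}
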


For an odd prime $p$, $S^{(p)}$ is $\log_2p$ times $\SNF$.

The difference rests on $1+1=0$: the points $110$, $101$ and $011$ lie on a line exactly when $110+101=011$, that is, in characteristic~$2$. Up to four parties nothing depends on the local dimension (Proposition~\ref{thm:four}, Methods), so the smallest number of parties at which the cone depends on it is five or six.

None of the published characteristic-dependent inequalities that we checked gives these separations when applied directly. By Proposition~\ref{thm:reduction}, for prime $d$ a dependence at six parties is a characteristic dependence of the seven-variable linear rank cone on its self-dual slice. The Fano and non-Fano configurations are the classical examples of characteristic dependence~\cite{Oxley2011}, and seven-variable linear rank inequalities that distinguish them are known~\cite{BKL2011,DFZ2015rate,PenaSarria2019,PenaMacias2023}. But all nine that we checked, the seven published ones and the two of Lemma~\ref{lem:JI}, are blind on the self-dual slice: they hold for every pure quantum state, for every bijection between their variables and the six parties and the purifier (Proposition~\ref{prop:blind}, Methods, and Supplementary Note~\ref{app:blind}). We overcome this with two transfer arguments, which apply a characteristic-dependent inequality not to the subgroups of the parties but to subgroups derived from them. For Theorem~\ref{thm:steane}, these are the pieces of the parties' subgroups generated within the lines of the Fano plane, and the result is an explicit integer form $\fnine$ with $67$ terms that is nonnegative on the lift of $\StabZ{d}{6}$ for every odd $d$ and takes the value $-3$ at $\hF$. For Theorem~\ref{thm:reverse}, they are the restrictions of the stabilizer group to the points, which are orthogonal where two lines meet, because the group is isotropic. Read through the connectivity function, $\ftwelve$ also yields a seven-variable linear rank inequality for characteristic $2$ that does not follow from the Shannon inequalities and the eight inequalities of Proposition~\ref{prop:blind} (Supplementary Note~\ref{app:reverse-g12}).

Larger code states inherit the separations, with one more party, when seven of their qudits carry these configurations and the other qudits determine the codeword; among them is a logical basis state of the $15$-qubit code that Bravyi and Kitaev used to distill magic states~\cite{BravyiKitaev2005} (Supplementary Note~\ref{app:cubes-sandwich}).
\subsection{Divisibility, not size, decides}
\label{sec:dimension}
\label{sec:divisibility}

Theorems~\ref{thm:steane} and~\ref{thm:reverse} separate qubits from every odd local dimension at six parties. With more parties, characteristic-dependent inequalities can be used directly, because the purifier can then be contracted or can hold a copy of each codeword, which removes the self-duality that makes them blind at six parties, and the stabilizer cones are ordered exactly by divisibility (Fig.~\ref{fig:landscape}b).

\begin{theorem}[Divisibility]\label{thm:divisibility}
Let $d,d'\ge2$. If $d\mid d'$, then $\StabZ{d}{n}\subseteq\StabZ{d'}{n}$ for every $n$. If $d\nmid d'$, then $\StabZ{d}{n}\not\subseteq\StabZ{d'}{n}$ for every $n\ge2\max(d,d')+2$. Consequently, for $n\ge2\max(d,d')+2$,
\[
  \StabZ{d}{n}\subseteq\StabZ{d'}{n}\iff d\mid d',
\]
and the stabilizer cones of the local dimensions $2\le d\le(n-2)/2$ are pairwise distinct.
\end{theorem}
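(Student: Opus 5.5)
The inclusion for $d\mid d'$ is Corollary~\ref{cor:divisibility}. The last two sentences then follow formally. For $2\le d<d'\le(n-2)/2$ we have $d'\nmid d$ and $n\ge 2d'+2=2\max(d,d')+2$, so $\StabZ{d'}{n}\not\subseteq\StabZ{d}{n}$ and the two cones differ. So the whole task is the non-inclusion when $d\nmid d'$ and $n\ge2\max(d,d')+2$.

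If $d\nmid d'$, there is a prime power $q=p^a$ with $q\mid d$ and $q\nmid d'$. Write $p^b\,\|\,d'$, so that $b<a$. By Corollary~\ref{cor:divisibility}, $\StabZ{q}{n}\subseteq\StabZ{d}{n}$. It therefore suffices to find a vector in $\StabZ{p^a}{n}\setminus\StabZ{d'}{n}$, and since separations persist when parties are added (end of Sec.~\ref{sec:reduction}), it suffices to do this at $n_0=2\max(d,d')+2$, or at any smaller number of parties. By Proposition~\ref{thm:reduction}, this amounts to a self-dual $h\in\LcZ{p^a}{n_0+1}$ and a linear inequality $g\ge0$, valid for rank functions of subgroup arrangements of exponent dividing $d'$, with $g(h)<0$.

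For $g$ I would use a characteristic-dependent inequality obtained from a configuration in which $p$ acts as the order of an element. The candidate is the Pe\~na-Macias--Sarria family~\cite{PenaSarria2021} on roughly $2p+2$ variables, which holds over fields whose characteristic is not $p$ and is violated by a representation in characteristic $p$. Extending such an inequality from fields to finite abelian groups of exponent dividing $d'$ is done as in Lemma~\ref{lem:JI}. Decompose the group into Sylow parts. The rank function is additive over Sylow parts, so it is enough to prove the inequality for groups of exponent $\ell^c$ with $\ell\mid d'$. For $\ell\ne p$ I would rerun the field proof, which only uses common information and the invertibility of $p$. For $\ell=p$ the exponent is $p^c$ with $c\le b<a$, and here I expect the right inequality to be a ``$p^a$-torsion'' version: it should hold whenever the exponent divides $p^{a-1}$, and be violated by a CSS configuration over $\Z_{p^a}$ in which some element of order $p^a$ plays the role that $1+\dots+1$ ($p$ terms) plays over a field. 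When $a=1$ this is the usual characteristic condition. This inequality is also what forces the bound $2\max(d,d')+2$: the number of variables grows linearly in $p^a\le d$ and, through the coprime Sylow parts, in the primes dividing $d'$.

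To avoid the blindness on the self-dual slice (Proposition~\ref{prop:blind}), I would not take $h$ to be the lift of a code state directly. Instead I would build the witness from a classical code $\mathcal C$ over $\Z_{p^a}$ that violates $g$, with one coordinate group per variable. Take the state in which the parties hold these coordinates and the purifier holds a copy of each codeword, or in which the purifier is contracted away, as suggested before the theorem. Then $h_S$ contains the rank function of $\mathcal C$ as a restriction or minor. Pulling $g$ back along this operation, which is linear and preserves $\LcZ{d'}{\cdot}$, gives an inequality valid on $\StabZ{d'}{n_0}$ and violated at $S$.

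The main obstacle is the prime-power case $q=p^a$ with $p\mid d'$, for example $d=4$, $d'=2$ or $d=12$, $d'=6$. There, ordinary characteristic-dependent inequalities are useless, and one needs a genuinely group-theoretic inequality that separates exponent $p^{a-1}$ from exponent $p^a$. I would try to prove it by replacing ``$p$ invertible'' with ``multiplication by $p^{a-1}$ is zero'' and following the common-information proof subgroup by subgroup.
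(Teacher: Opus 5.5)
Your reduction to a prime power $p^a\mid d$ with $p^b\,\|\,d'$, $b<a$, agrees with the paper. So does your treatment of the case $b=0$: an inequality of \LB{} type with $t=p$, valid because multiplication by $p$ is injective on every group of exponent dividing $d'$, with a doubled code or a cube state as the witness.

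The gap is the case $b\ge1$ when $d'$ is not a power of $p$, for example $d=4$, $d'=6$ or $d=12$, $d'=6$. Your Sylow reduction is legitimate, but only for \emph{one and the same} form, which must be nonnegative on every Sylow part at once. You cannot use the field proof with $p$ invertible on the $\ell$-parts and a separate ``$p^{a-1}$-torsion'' inequality on the $p$-part. Neither candidate works by itself:
\begin{itemize}
\item an inequality that needs $p$ invertible fails on the $p$-part, since the configuration~\eqref{eq:JI} with $t=p$ over $\GF(p)$ lives in a group of exponent $p\mid d'$;
\item \LA{} with $t$ a power of $p$ fails on the other parts. For $d'=6$, \LA{} with $t=2$ is violated by the non-Fano configuration over $\GF(3)$, whose exponent $3$ divides $6$.
\end{itemize}
So in this case your plan gives no inequality valid on $\StabZ{d'}{n}$, and the inequality you leave to be found is not specified.

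The missing idea is to take $t=d'$ itself. \LA{} of Lemma~\ref{lem:JI} already holds for every finite abelian group with $tV=0$. With $t=d'$ it therefore holds for every group of exponent dividing $d'$, with no Sylow decomposition and no new group-theoretic inequality. Take the configuration~\eqref{eq:JI} with $M=d'+1$ over $\Z_N$, where $N=p^{b+1}\mid d$. All error terms vanish there, and the Smith form $\mathrm{diag}(1,\dots,1,t)$ of $J-I$ gives $\abs{B_1+\dots+B_M}=N^{M-1}\cdot N/\gcd(d',N)$. So \LA{} fails by $\log\bigl(N/\gcd(d',N)\bigr)=\log p$.

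The paper uses the version \LA$^*$, in which $C$ is derived; it has $2M=2d'+2$ variables and needs $M\ge4$. Applied to a doubled code, it gives a witness at $n_0=2d'+2$ parties. In the case $b=0$ the witness uses $2p+2\le2d+2$ parties, or $7\le2d'+2$ when $p=2$. These two counts are where the bound $2\max(d,d')+2$ comes from. It does not come from a count that grows in $p^a$ and in the primes dividing $d'$.

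Two degenerate cases need separate witnesses, because the derived versions are not violated at $M=3$ ($\hcap(C)=0$ there):
\begin{itemize}
\item for $d'=2$, the paper uses the self-dual $\Z_4$ cube state with \LA{}, $t=2$, at seven parties;
\item for $p=2$, $b=0$, it uses the cube state over $\GF(2)$ with \LB{}.
\end{itemize}
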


So $\Z_6$ qudits reproduce all entropy patterns of qubits and of qutrits alike, whereas, from six parties on, qutrits and $\Z_4$ qudits are incomparable (Theorem~\ref{thm:twopowers}, Methods). Even at a fixed size, the algebraic structure of the alphabet matters: up to phases, the Pauli group of a Galois qudit of dimension $p^k$, defined with the field trace~\cite{Ketkar2006}, is that of $k$ qudits of dimension $p$, so Galois qudits of dimension $4$ reproduce exactly the qubit patterns (Supplementary Proposition~\ref{prop:galois}), whereas $\Z_4$ qudits reproduce strictly more from six parties on (Theorem~\ref{thm:reverse}). Composite local dimensions need inequalities for finite abelian groups rather than vector spaces, which Lemma~\ref{lem:JI} provides; the witnesses are self-dual \emph{generalized cube states}, CSS states of the codes of affine functions on the points of a configuration, and \emph{doubled codes}, CSS states in which the purifier holds a second copy of each codeword (Methods). The bound $2\max(d,d')+2$ is not optimal in general: two primes $p<q$ give incomparable cones from $2p+2$ parties on if $p$ is odd (Theorem~\ref{thm:primes}, Methods) and from six on if $p=2$, and $\StabZ{d}{n}\subseteq\StabZ{2^a}{n}$ holds if and only if $d$ divides $2^a$ for every $n\ge6$ (Theorem~\ref{thm:twopowers}).
\subsection{Up to five parties}
\label{sec:fourfive}
\label{sec:four}
\label{sec:five}

The dictionary has a second consequence: the purifier becomes one more variable of a rank function, so rank inequalities in $n+1$ variables constrain the entropies of $n$ parties; those that follow from common information, such as~\eqref{eq:star} below, do so in every local dimension. The quantum linear rank cone $\QLR_n$ of Bao, Cheng, Hern\'andez-Cuenca and Su~\cite{BCHS2020} is obtained from the linear rank inequalities in $n$ variables by replacing monotonicity with weak monotonicity, $S(AB)+S(BC)\ge S(A)+S(C)$, and completing the result under the symmetry group $S_{n+1}$ of purification~\cite[Sec.~2.3]{BCHS2020}. It contains the stabilizer cone of every local dimension for $n\le5$ (Methods). For $n\le4$ the two cones agree~\cite{LMRW2013,BCHS2020}, and the four-party stabilizer cone is the same for every local dimension (Proposition~\ref{thm:four}, Methods). For $n=5$ equality was left open~\cite{BCHS2020,BCHS2020gap} (Introduction). Consider five parties $A,\dots,E$ with purifier $F$, and the inequality
\begin{equation}\label{eq:star}
\begin{aligned}
  &S(BCD)+S(ABCD)+S(ABE)\\
  &\quad+2S(BCE)+2S(ADE)+S(CDE)\\
  &\ge 2S(BC)+S(ABD)+S(ACD)+S(AE)\\
  &\quad+S(DE)+S(BCDE)+S(ABCDE).
\end{aligned}
\tag{$\bigstar$}
\end{equation}

\begin{theorem}[Five-party gap]\label{thm:five}
Inequality~\eqref{eq:star} holds on $\Stab_5$, the stabilizer cone of all local dimensions, and an explicit extreme ray $r$ of $\QLR_5$ (Supplementary Table~\ref{tab:witness}) violates it. Hence
\[
  \Stab_5\subsetneq\QLR_5 ,
\]
and in particular $\StabZ{d}{5}\subsetneq\QLR_5$ for every $d$. Moreover, \eqref{eq:star} defines a facet of $\StabZ{d}{5}$ for every $d$, and of $\Stab_5$.
\end{theorem}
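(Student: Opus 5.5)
Theorem~\ref{thm:five} splits into three parts: validity of \eqref{eq:star} on $\Stab_5$, a violating extreme ray of $\QLR_5$, and the facet claim. All three go through the dictionary of Proposition~\ref{thm:reduction}.

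\emph{Validity.} By Proposition~\ref{thm:reduction}, $S\in\StabZ{d}{5}$ if and only if its lift $h_S$ lies in $\LcZ{d}{6}$, where the purifier $F$ is a sixth variable. Rewriting \eqref{eq:star} in terms of $h_S$ on the ground set $\{A,\dots,F\}$ gives a linear inequality in six variables: first substitute $S(X)=h_S(X)-\sum_{i\in X}S(i)$, then use self-duality to trade any term that involves $F$. The next step is to derive this six-variable inequality for subgroup arrangements $U_A,\dots,U_F$ of any finite abelian group. I would use a single common information: introduce the intersection $U_0=(U_B+U_C)\cap(U_A+U_D+U_E)$ or a similar pair, the natural guess being the pair dictated by the coefficients $2S(BC)$ and $2S(BCE)$, $2S(ADE)$. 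For subgroups the common-information properties hold exactly: $h(U_0)\le h(B\cup C)$, $h(U_0)\le h(ADE)$, and $h(U_0)=h(BC)+h(ADE)-h(ABCDE)$. Given these, the inequality should be a nonnegative combination of Shannon (submodularity and monotonicity) instances in seven variables plus these identities. I would find that combination by an LP over elemental Shannon inequalities with $U_0$ adjoined, and record it as an exact rational certificate. Since nothing in it uses primality, it holds in every $\LcZ{d}{6}$ and hence on $\Stab_5$, which is the closure of the union over $d$.

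\emph{Gap.} Next, exhibit the ray $r$ of Supplementary Table~\ref{tab:witness}. I would verify two things with exact arithmetic. First, $r$ satisfies every defining inequality of $\QLR_5$: the $5$-variable linear rank inequalities (Shannon, Ingleton and the known $24$ DFZ families), with monotonicity replaced by weak monotonicity and closed under $S_6$. Second, $r$ attains equality on a set of these inequalities of rank $30$, which makes it extreme. Evaluating \eqref{eq:star} at $r$ then gives a negative value. This yields $\Stab_5\subseteq\StabZ{d}{5}$-independent strictness, $\Stab_5\subsetneq\QLR_5$, and a fortiori $\StabZ{d}{5}\subsetneq\QLR_5$, using the known inclusion $\StabZ{d}{5}\subseteq\QLR_5$.

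\emph{Facet.} The cone lives in the $31$-dimensional space of pure-symmetric vectors, so I need $30$ linearly independent vectors in $\StabZ{d}{5}$ that attain equality. It suffices to use entropy vectors of CSS states of qubit-free type that are realizable in every $d$, such as Bell pairs between two systems, GHZ-type states and small graph states, whose entropies scaled by $\log d$ do not depend on $d$. I would compute the rank of the tight set exactly. Because these states exist for every $d$, the same list gives the facet for every $\StabZ{d}{5}$ and for $\Stab_5$.

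The main obstacle is finding the right common information and the Shannon certificate. I would try first the intersection suggested by the doubled terms and, if that fails, search systematically over all intersections of two disjoint unions of the six variables, solving an LP for each.
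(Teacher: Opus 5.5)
Your three steps follow the paper's proof: a Shannon certificate on seven elements with one common information; an exact check of $r$ against the $10{,}860$ facets of $\QLR_5$, where the $105$ tight ones have rank $30$; and $30$ independent tight stabilizer vectors. However, the first step fails as you set it up. For $Z=U_X\cap U_Y$, the properties that hold and that a certificate can use are $g(XZ)=h(X)$, $g(YZ)=h(Y)$ and $g(Z)=I(X;Y)$, i.e.\ $Z$ lies in both sums. Besides the last one, you list only $g(Z)\le h(X)$ and $g(Z)\le h(Y)$. With those constraints the extension LP is feasible for every polymatroid: adjoin $Z$ independent of all six systems with $g(Z)=I(X;Y)$. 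It is feasible in particular at $h_r$, which is a polymatroid because $r$ satisfies strong subadditivity and weak monotonicity, so it can never certify \eqref{eq:star}.

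Your guessed pair $(BC,ADE)$ is also not the one the paper uses. The paper takes $Z=(U_A+U_D)\cap(U_B+U_E)$ (Eq.~\eqref{eq:ci}) and a combination of $26$ Shannon inequalities (Eq.~\eqref{eq:B-cert}) in which $Z$ survives only as $3g(ADZ)+3g(BEZ)-g(Z)$. This gives the balanced six-variable inequality \eqref{eq:star-h}, which holds for every subgroup arrangement with no self-duality assumption, and eliminating $F$ by purity gives \eqref{eq:star}. Your fallback search over disjoint pairs contains $(AD,BE)$, so with the correct conditions this step is recoverable.

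For the facet, two things are missing. First, you need $\StabZ{d}{5}$ to be full-dimensional, or at least one of its vectors off the hyperplane. Second, your claim that small graph states have entropies independent of $d$ in units of $\log d$ is false in general, since a $0/1$ cut block can have a Smith invariant $2$. Indeed, the six-vertex $0/1$ graph states give $760$ distinct vectors over $\GF(2)$ but $835$ over odd primes.

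The paper instead picks a prime $p\mid d$ and uses $\StabZ{p}{5}\subseteq\StabZ{d}{5}\subseteq\Stab_5$ (Corollary~\ref{cor:divisibility}):
\begin{itemize}
\item For $p=2$, the $760$ qubit vectors have rank $31$, and the $140$ tight ones have rank $30$.
\item For odd $p$, the vectors do not depend on $p$. Every cut block has at most three rows or at most three columns, and minors of order at most $3$ of a $0/1$ matrix lie in $\{0,\pm1,\pm2\}$. The $835$ vectors have rank $31$, and exactly the same $140$ are tight.
\end{itemize}
Bell pairs and GHZ states are genuinely $d$-independent, but you would still have to show that enough of them, or of unimodular graph states, are tight and linearly independent.
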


The witness has small integer values: $r$ is $2$ on $A$, $B$, $E$ and $F$ and $1$ on $C$ and $D$; $3$ on every pair except $CD$, where it is $2$; $3$ on the triples $ABE$, $ADE$, $BCD$, $BCE$ and $CDE$ and on their complements, and $4$ on the other triples; the other values follow from purity. On $r$ the left side of~\eqref{eq:star} is $24$ and the right side $25$. This answers the question of Bao \emph{et al.}: the stabilizer cone is not cut out by the five-variable linear rank inequalities that define $\QLR_5$. Inequality~\eqref{eq:star} comes from a six-variable linear rank inequality in which the purifier is one of the variables: it follows from Shannon inequalities and one common information, $(U_A+U_D)\cap(U_B+U_E)$, of the subgroups of the six systems (Methods). Six-variable inequalities of Dougherty, Freiling and Zeger~\cite{DFZ2009}, applied in the same way, also detect the gap (Supplementary Note~\ref{app:five-dfz}). The gap thus needs no new kind of inequality, only the purifier as a variable. Since $r$ satisfies strong subadditivity and weak monotonicity~\cite{LiebRuskai1973}, inequality~\eqref{eq:star} does not follow from them; whether every quantum state satisfies it, and whether $r$ lies in the quantum entropy cone, we do not know.

\label{sec:scale}Theorem~\ref{thm:five} does not depend on the following computation. An enumeration of the extreme rays of $\QLR_5$, stopped at a preset ceiling on the number of orbits, found $1{,}917{,}706$ orbits under $S_6$, the \emph{catalog}, which may be incomplete. Exact certificates exclude all but $1{,}796$ of them from $\Stab_5$; $641$ lie in the stabilizer cone of every local dimension, the $59$ known before for qubits~\cite[Sec.~4.2]{BCHS2020gap} and $582$ that we realized by graph states, and $1{,}155$ are undecided (Supplementary Note~\ref{app:five-catalogue}). The inequalities used at five parties hold over every field and for every abelian group, so they cannot detect a dependence on the local dimension; whether the five-party cone depends on it is open.

\section{Discussion}
\label{sec:discussion}

Carrying a qubit code over to qudits can preserve its parameters, as for Steane's code and the seven-qudit color code, without preserving its structure: for the recovery patterns of these two codes, the arithmetic of the local dimension decides whether they survive, for every code that encodes one qudit with one qudit of the same dimension per site and not only for stabilizer codes. The two patterns differ in whether $110$, $101$ and $011$ form a line, which needs $1+1=0$, and through the entanglement entropies this line constrains every such code, stabilizer code or not. So, for qudits of odd dimension, as on qutrit hardware, the seven-qudit color code is not merely a convenient counterpart of Steane's code: its pattern is the only one of the two available with one qudit per site; for dimension $6$, neither is. With several qudits per site, by contrast, the obstruction becomes a cost of stabilizer codes: those of $\Z_d$ qudits with Steane's recovery pattern need at least $28/3$ qudits per encoded qudit instead of $7$ whenever $d$ has an odd prime factor, whereas qubit codes embedded into $\Z_d$ qudits approach $7$ when many qudits are encoded. The bounds concern recovery patterns, not distances.

Within the stabilizer formalism, which underlies quantum error correction and measurement-based computation, the local dimension is more than a measure of how much information a qudit carries: it determines which entropy patterns are available, even in the limit, and with enough parties through divisibility rather than size. So $\Z_6$ qudits have the patterns of qubits and of qutrits, and from six parties on $\Z_4$ qudits have strictly more patterns than Galois qudits of the same dimension, a difference between the two standard constructions of qudit stabilizer codes, over $\Z_d$ and over finite fields~\cite{Ketkar2006}. Through the dictionary, the purifier becomes one more variable, which separates the stabilizer cone from $\QLR_5$, and the arithmetic of the alphabet, which purity hides from the characteristic-dependent inequalities that we checked, is brought back by the transfer arguments. The separations are exact but thin: measured by the largest difference in a single entropy, the Steane state lies between $1/952$ and $1/12$ bit from the cones of odd local dimension (Supplementary Note~\ref{app:margins}). Apart from Theorem~\ref{thm:ideal}, they concern stabilizer states: general quantum states reproduce the entropies of every witness in every local dimension. Still, from six parties on, as inner approximations of the quantum entropy cone, the stabilizer cones of $\Z_4$ qudits, and of all local dimensions together, are strictly larger than the qubit cone. Hypergraph and holographic models cannot detect the dependence: the hypergraph entropy cone, which contains the holographic one~\cite{Bao2015cone,BCHS2020}, lies in the stabilizer cone of every prime local dimension~\cite{WalterWitteveen2021}, hence of every local dimension (Corollary~\ref{cor:divisibility}).

The results also bear on a conjecture of Linden \emph{et al.} that the inequalities obtained from common information, added to the Shannon inequalities, describe the stabilizer cone~\cite{LMRW2013}. Applied to five of the six systems at a time, as in $\QLR_5$, they do not suffice at five parties (Theorem~\ref{thm:five}); applied to all systems, purifier included, they hold in every local dimension, so for every $d$, with enough parties, they do not describe the cone of $\Z_d$ qudits, and from six parties on they describe neither that of qubits nor that of an odd local dimension (Theorems~\ref{thm:steane}--\ref{thm:divisibility}). For all local dimensions together the question remains open (Supplementary Note~\ref{app:discussion}). The inequalities $\fnine$ and $\ftwelve$ and those of Lemmas~\ref{lem:JI} and~\ref{lem:fano-odd} can certify that a given pattern is impossible in a given alphabet, or that a recovery pattern needs larger shares. More generally, the dictionary and the transfer arguments carry tools from classical network coding, secret sharing and matroid theory to quantum codes and entanglement. Conversely, Lemma~\ref{lem:fano-odd} also gives a classical result: with a bound of Jafari and Khazaei for $\GF(2)$-linear schemes~\cite[Prop.~4.5]{JafariKhazaei2021}, every abelian classical scheme for the twelve-player union of the ports of the Fano and non-Fano matroids whose group has an elementary abelian Sylow $2$-subgroup has information ratio greater than $7/6$, which bears on a question of Jafari and Khazaei (Methods).

Several questions remain open. For stabilizer codes, we do not know the optimal information ratios for the two recovery patterns in the dimensions in which ideal schemes do not exist: for Steane's pattern with qutrits, for instance, and for that of the color code with qubits, they lie between $4/3$ and $7/5$ (Supplementary Note~\ref{app:qss-upper}). Nor do we know which other recovery patterns are rigid in the sense of Sec.~\ref{sec:qss}, for instance the ports of other matroids that admit ideal quantum schemes~\cite{SarvepalliRaussendorf2010}, or for which access structures entanglement enlarges the set of dimensions in which ideal schemes exist, as for the threshold pattern of the $36$ officers~\cite{Rather2022}, and for which it does not, as for the ports of $\mathrm{AG}(3,2)$ and of $R_8$ (Theorem~\ref{thm:ideal}). Does the five-party cone $\StabZ{d}{5}$ depend on $d$? By Proposition~\ref{prop:sandwich}, for two primes a dependence would give a six-variable linear rank inequality that holds in one characteristic but not in the other; the six-variable linear rank inequalities are not known completely~\cite{DFZ2015char}, and whether the characteristic matters on six variables is open~\cite{Khazaei2026}. Our data show no dependence at five parties, but this is evidence, not proof: the catalog may be incomplete, $1{,}155$ of its orbits are undecided, and the stabilizer cones may have extreme rays that are not extreme rays of $\QLR_5$. For many pairs of local dimensions, including any two odd primes, comparisons with fewer parties than our witnesses use remain open. Supplementary Note~\ref{app:discussion} discusses composite local dimensions, which relate to a question of Khazaei on abelian and mixed-linear entropy cones~\cite{Khazaei2026}, and the conjecture of Linden \emph{et al.} in more detail. Finally, for prime local dimensions the six-party witnesses differ only in whether $2=0$, and the inequalities that separate them assume that $2$ acts injectively on the group or annihilates it. Gross's discrete Wigner function, which on pure states of odd dimension is nonnegative exactly for stabilizer states, is likewise defined through the inverse of $2$~\cite{Gross2006}; we do not know whether the two dichotomies between qubits and odd local dimensions, in entanglement and in phase space, are related.
\section{Methods}
\label{sec:methods}

The Supplementary Information contains the full proofs; Supplementary Table~\ref{tab:summary} lists the results with what they build on and their proof status, and Supplementary Note~\ref{app:repo} locates every result in the repository. Sections~\ref{sec:methods-proofs}--\ref{sec:sandwich} give the tools behind the results on stabilizer cones and state Theorems~\ref{thm:twopowers} and~\ref{thm:primes}; Sec.~\ref{sec:methods-qss} gives those behind the results on codes and secret sharing, with an outline of the proof of Theorem~\ref{thm:ideal}; Sec.~\ref{sec:methods-five} treats four and five parties; and Secs.~\ref{sec:methods-comp}--\ref{sec:methods-prior} describe the computations, the data and code, the use of AI tools and the relation to earlier work.

\subsection{Rank functions and the dictionary}
\label{sec:methods-proofs}

A \emph{polymatroid} on a finite set $\mathcal{E}$ is a monotone and submodular function $h$ on the subsets of $\mathcal{E}$ with $h(\emptyset)=0$; polymatroids form the Shannon cone~\cite{Yeung1997}, cut out by the \emph{elemental} inequalities $h(i|\mathcal{E}\setminus i)\ge0$ and $I(i;j|K)\ge0$, where $h(X|Y)=h(XY)-h(Y)$, $I(X;Y|Z)=h(XZ)+h(YZ)-h(XYZ)-h(Z)$, and juxtaposition denotes union. Subgroup rank functions are, up to the duality of finite abelian groups, the abelian case of the group-characterizable entropy functions of Chan and Yeung~\cite{ChanYeung2002}. $\Lc_m$, the closed convex hull of the cones $\LcZ{p}{m}$ of $\GF(p)$-linear rank functions over all primes $p$, is the cone of linear rank functions over all fields, whose valid inequalities are the \emph{linear rank inequalities}~\cite{Hammer2000,DFZ2009}. The \emph{contraction} of $h$ by $K\subseteq\mathcal{E}$ is $T\mapsto h(T\cup K)-h(K)$ on the subsets of $\mathcal{E}\setminus K$; for subgroups $U_i$ of a finite abelian group $V$ it is the rank function of the images of the other subgroups in $V/U_K$, where $U_K=\sum_{i\in K}U_i$, a group whose order and exponent divide those of $V$. Two subgroups $U_X$ and $U_Y$ have a \emph{common information}, $Z=U_X\cap U_Y$, with $h(XZ)=h(X)$, $h(YZ)=h(Y)$ and $h(Z)=I(X;Y)$; Shannon inequalities applied to such extensions give valid inequalities that are not Shannon inequalities, the first being the Ingleton inequality~\cite{Ingleton1971,Hammer2000}. For prime $d$, $\Z_d=\GF(d)$.

Proposition~\ref{thm:reduction} combines three facts (Supplementary Note~\ref{app:reduction} and Supplementary Fig.~\ref{fig:dictionary}). (a) For every stabilizer state of $\Z_d$ qudits, $h_S$ is the rank function of an arrangement of $n+1$ subgroups of a finite abelian group of exponent dividing $d$. (b) For a code $C\le\Z_d^m$ whose coordinates are distributed among the systems, let $r_C(X)=\log\abs{\pr_X C}$, where $\pr_X$ keeps the coordinates held by $X$; then $\ket{C}$ has entropy vector $\lambda_{r_C}$, and every subgroup rank function of an abelian group of exponent dividing $d$ equals $r_C$ for some such $C$. (c) If $S$ is pure-symmetric, then $h_S$ is self-dual and $\lambda_{h_S}=2S$; conversely, if $h$ is self-dual, then $S=\tfrac12\lambda_h$ is pure-symmetric and $h_S=h$; this is due to Mat\'u\v{s}~\cite[Thm.~1 and Remark~1]{Matus2012}, and Jowett, Mo and Whittle found the same correspondence for connectivity functions~\cite{JowettMoWhittle2016}. By (a), (i) implies (ii); by (b), $\lambda(\LcZ{d}{n+1})$ lies in the closed cone generated by CSS entropy vectors; and if (ii) holds, then $S=\lambda_{h_S/2}$ by (c). Fact~(a) is closely related to results of Gross and Walter, of Linden, Mat\'u\v{s}, Ruskai and Winter, of Majenz and of Osthus~\cite{GrossWalter2013,LMRW2013,Majenz2014,Osthus2024}; to our knowledge, its converse, that every self-dual vector of the rank cone is attained in the closure by CSS states, has not been stated beyond identically self-dual matroids, which give secret sharing schemes~\cite{SarvepalliRaussendorf2010}, and for prime $d$ it also follows from (c) and the entropy formula of Ref.~\cite{Osthus2024}. Corollary~\ref{cor:doubling} reads the Lagrangian subgroup of a stabilizer state as a classical code in which each system holds both coordinates of each of its qudits (Ref.~\cite{LTA2024} and references therein; Supplementary Note~\ref{app:reduction-css}). If $d$ is divisible by the square of a prime, stabilizer states include states outside the Clifford orbit of $\ket{0\cdots0}$~\cite{FengLuo2024}; our separations also hold for the orbit, since their witnesses can be taken in it (Supplementary Note~\ref{app:reduction-stab}).

Let $\Stab_n$ and $\Stab^{\mathrm{pr}}_n$ be the closed convex hulls of $\bigcup_{d\ge2}\StabZ{d}{n}$ and of $\bigcup_p\StabZ{p}{n}$ over primes $p$. For pure-symmetric $S$,
\begin{equation}\label{eq:union}
\begin{aligned}
  \Stab_n&=\bigl\{S:h_S\in\Lab_{n+1}\bigr\},\\
  \Stab^{\mathrm{pr}}_n&=\bigl\{S:h_S\in\Lc_{n+1}\bigr\},
\end{aligned}
\end{equation}
where $\Lab_m$ is the closed convex cone generated by all subgroup rank functions of finite abelian groups (Supplementary Note~\ref{app:reduction-unions}). A linear form $f$ on vectors indexed by the subsets of $N$ is \emph{balanced} if $f(\ind{i})=0$ for every $i\in N$, where $\ind{i}(X)=1$ if $i\in X$ and $0$ otherwise.

\begin{corollary}[Balanced inequalities]\label{cor:balanced}
A pure-symmetric vector $S$ lies in $\StabZ{d}{n}$ if and only if $f(S)\ge 0$ for every balanced linear form $f$ on $N$ that is nonnegative on $\LcZ{d}{n+1}$. The same holds for $\Stab^{\mathrm{pr}}_n$ with the balanced $(n+1)$-variable linear rank inequalities.
\end{corollary}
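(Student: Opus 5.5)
The plan is to reduce everything to the Dictionary (Proposition~\ref{thm:reduction}) and to make an arbitrary separating form balanced by composing it with a linear projection that kills the modular vectors $\ind{i}$.

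\emph{Necessity.} Let $S\in\StabZ{d}{n}$ and let $f$ be balanced and nonnegative on $\LcZ{d}{n+1}$. By Proposition~\ref{thm:reduction}, $h_S\in\LcZ{d}{n+1}$. By~\eqref{eq:lift}, $h_S=S+\sum_{i\in N}S(i)\,\ind{i}$. Since $f$ is linear and $f(\ind{i})=0$, this gives $f(S)=f(h_S)\ge0$.

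\emph{Sufficiency.} Suppose $S$ is pure-symmetric and $S\notin\StabZ{d}{n}$. By Proposition~\ref{thm:reduction}, $h_S\notin\LcZ{d}{n+1}$. This set is a closed convex cone, so the separation theorem gives a linear form $g$ with $g\ge0$ on $\LcZ{d}{n+1}$ and $g(h_S)<0$. This $g$ need not be balanced, and that is the one obstacle. To repair it, define the linear map
\[
Px=\tfrac12\lambda_x+\sum_{i\in N}\tfrac12\lambda_x(i)\,\ind{i},
\]
which is the lift of $\tfrac12\lambda_x$, and set $f=g\circ P$. We check three properties.

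\begin{itemize}
\item[(1)] $f$ is balanced. We have $\lambda_{\ind{i}}(X)=\ind{i}(X)+\ind{i}(N\setminus X)-1=0$ for every $X$, so $P\ind{i}=0$ and $f(\ind{i})=0$.
\item[(2)] $f$ is nonnegative on $\LcZ{d}{n+1}$. For $h\in\LcZ{d}{n+1}$, Proposition~\ref{thm:reduction} gives $\StabZ{d}{n}=\lambda(\LcZ{d}{n+1})$. Hence $\tfrac12\lambda_h\in\StabZ{d}{n}$, since the cone is closed under positive scaling. Applying the Dictionary again, its lift $Ph$ lies in $\LcZ{d}{n+1}$, so $f(h)=g(Ph)\ge0$.
\item[(3)] $f(S)<0$. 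For pure-symmetric $S$ we have $S(\emptyset)=0$, so $S(N)=0$ and $\lambda_S(X)=S(X)+S(N\setminus X)=2S(X)$. Thus $PS=h_S$ and $f(S)=g(h_S)<0$.
\end{itemize}

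Together these show that $f$ is a balanced form, nonnegative on $\LcZ{d}{n+1}$, that is violated by $S$, which proves the claim for $\StabZ{d}{n}$.

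\emph{The prime version.} The statement for $\Stab^{\mathrm{pr}}_n$ follows by the same argument with $\Lc_{n+1}$ in place of $\LcZ{d}{n+1}$, using~\eqref{eq:union}. The one extra input is $\Stab^{\mathrm{pr}}_n=\tfrac12\lambda(\Lc_{n+1})$. This holds because $\lambda$ is linear, so it commutes with taking closed convex hulls, applied to the prime cases of Proposition~\ref{thm:reduction}. The only subtle point is the closedness of the image of $\Lc_{n+1}$ under this linear map. I would handle it by the self-dual-slice bijection: that bijection identifies the image with the preimage of a closed cone under the lift, and that preimage is closed. The forms nonnegative on $\Lc_{n+1}$ are by definition the $(n+1)$-variable linear rank inequalities.
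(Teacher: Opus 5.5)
Your proof is correct, but it makes the separating form balanced by a different mechanism from the paper's. The paper stays inside the rank cone. Replacing $U_i$ by $U_i\cap\sum_{j\ne i}U_j$ turns the rank function into $h-h(i\mid N\setminus i)\,\ind{i}$. Hence for any $g\ge0$ on $\LcZ{d}{n+1}$ the form $g_{\mathrm{bal}}(h)=g(h)-\sum_i g(\ind{i})\,h(i\mid N\setminus i)$ is again nonnegative there, and it agrees with $g$ on lifts because $h_S(i\mid N\setminus i)=0$ (Chan's balancing).

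You instead compose $g$ with the map $P$ that sends $h$ to the lift of $\tfrac12\lambda_h$. You get invariance of the cone from the dictionary itself: first the inclusion $\lambda(\LcZ{d}{n+1})\subseteq\StabZ{d}{n}$, which is the CSS realization of fact~(b), then fact~(a). A short computation gives $Ph=\tfrac12(h+h^*)-\tfrac12\sum_i h(i\mid N\setminus i)\,\ind{i}$. Since $h^*(i\mid N\setminus i)=0$, your form is
\[
g\circ P=\tfrac12\bigl(g_{\mathrm{bal}}+g_{\mathrm{bal}}\circ\Dual\bigr).
\]
This is exactly half the symmetrized balanced form of the Symmetrization remark in Supplementary Note~\ref{app:blind}. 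Its nonnegativity could equally be read off from the paper's balancing step together with Supplementary Lemma~\ref{lem:A-dual}.

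So your route produces, in one stroke, a separating form that is both balanced and $\Dual$-invariant. The price is that a classical statement is routed through the quantum realization a second time. The paper's route instead proves a purely classical fact: balanced parts of valid inequalities stay valid on $\LcZ{d}{m}$ for every $m$. That fact does not involve self-duality, changes $g$ only by the terms $h(i\mid N\setminus i)$, and is reused in the proof of Proposition~\ref{prop:sandwich}.

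For the prime version you need only the inclusion $\lambda(\Lc_{n+1})\subseteq\Stab^{\mathrm{pr}}_n$. This follows from linearity and continuity of $\lambda$, as in Supplementary Note~\ref{app:reduction-unions}. The closedness of the image, which you take care to address, is not needed.
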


The proof adapts Chan's balancing argument~\cite{Chan2003} (Supplementary Note~\ref{app:reduction-balanced}); the direction ``only if'' is in essence due to Gross and Walter~\cite{GrossWalter2013}. For $n\ge6$, beyond the range of Ref.~\cite{BCHS2020}, we define $\QLR_n$ by the balanced parts $f_{\mathrm{bal}}(h)=f(h)-\sum_if(\ind{i})\,h(i\mid\text{rest})$ of all $n$-variable linear rank inequalities $f$ on every $n$ of the $n+1$ systems, together with weak monotonicity. For five parties, $\QLR_5$ is defined by $31$ inequalities up to symmetry~\cite[footnote~16]{BCHS2020}, from the five-variable list of Dougherty, Freiling and Zeger~\cite{DFZ2009}, and by Theorem~12 of Ref.~\cite{LMRW2013} stabilizer entropy vectors satisfy the inequalities of Kinser~\cite{Kinser2011} and of Dougherty \emph{et al.} By Corollary~\ref{cor:balanced}, $\StabZ{p}{n}\subseteq\QLR_n$ for every prime $p$; for $n\le5$ this holds for every $d$, since the five-variable linear rank inequalities follow from Shannon inequalities and common information~\cite[Secs.~2 and~3]{DFZ2009}, and for $n\ge6$ it holds for squarefree $d$ and is open otherwise (Supplementary Note~\ref{app:discussion}).

\subsection{Inequalities for the configuration \texorpdfstring{$J-I$}{J-I}}
\label{sec:lemma1}

Let $t\ge2$ and $M=t+1$. In $\Rg^M$, where $\Rg$ is a field or the ring $\Z_N$ for some modulus $N\ge2$, consider the configuration
\begin{equation}\label{eq:JI}
  A_i=\langle e_i\rangle,\quad B_i=\langle c-e_i\rangle,\quad C=\langle c\rangle\qquad(i=1,\dots,M),
\end{equation}
where $c=(1,\dots,1)$, so that the vectors $c-e_i$ are the columns of $J-I$. Since $\det(J-I)=(-1)^t t$, over a field $B_1+\dots+B_M$ has dimension $M$ if the characteristic does not divide $t$ and $M-1$ if it does; for $t=2$ the configuration is the Fano configuration in characteristic $2$ and the non-Fano configuration otherwise.

\begin{lemma}[Inequalities for $J-I$]\label{lem:JI}
Let $t\ge2$, $M=t+1$, and let $A_1,\dots,A_M$, $B_1,\dots,B_M$ and $C$ be subgroups of a finite abelian group $V$ with rank function $h$. Write $A=\sum_kA_k$, $A_{\ne i}=\sum_{k\ne i}A_k$, and define the nonnegative error terms
\begin{align*}
  \alpha&=\textstyle\sum_kh(A_k)-h(A), & \gamma&=h(C|A),\\
  \delta_i&=I(C;A_{\ne i}), & \eps_i&=h(C|A_iB_i),\\
  \zeta_i&=h(B_i|A_{\ne i}), & \eta_i&=h(B_i|A_iC).
\end{align*}
\begin{itemize}
\item[\LB] If multiplication by $t$ is injective on $V$, then
\[
  M\,h(C)\le h(B_1\cdots B_M)+M\Bigl[\gamma+M\alpha+\textstyle\sum_i(\delta_i+\eps_i+\zeta_i)\Bigr].
\]
\item[\LA] If $tV=0$, then
\[
  h(B_1\cdots B_M)\le(M-1)\,h(C)+M\alpha+M\gamma+\textstyle\sum_i(\zeta_i+\eta_i).
\]
\end{itemize}
\end{lemma}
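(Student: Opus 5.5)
The plan is to first prove both inequalities in the error-free case by an explicit group-theoretic computation, and then make the argument robust: each error term measures how far one structural relation is from holding, and every repair costs a bounded multiple of that term.

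\emph{Exact case.} Assume $\alpha=\gamma=0$ and $\delta_i=\eps_i=\zeta_i=\eta_i=0$ for all $i$. Each term vanishing means an exact subgroup relation:
\begin{itemize}
\item $\alpha=0$: $A=\bigoplus_kA_k$ is a direct sum; write $\pi_k$ for the projections.
\item $\gamma=0$: $C\le A$.
\item $\delta_i=0$: $C\cap A_{\ne i}=0$, since for subgroups $I(C;A_{\ne i})=h(C\cap A_{\ne i})$.
\item $\eps_i=0$: $C\le A_i+B_i$.
\item $\zeta_i=0$: $B_i\le A_{\ne i}$.
\item $\eta_i=0$: $B_i\le A_i+C$.
\end{itemize}
From $\delta_i=0$ and $C\le A$, the restriction $\pi_i|_C\colon C\to A_i$ is injective. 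For $c\in C$, write $c=a+b$ with $a\in A_i$ and $b\in B_i\le A_{\ne i}$; this forces $a=\pi_ic$, so $(1-\pi_i)c\in B_i$. Conversely, $\eta_i=0$ together with $B_i\le A_{\ne i}$ gives $B_i\le(1-\pi_i)C$. Hence $B_i=(1-\pi_i)C$.

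Fix $G=C$ and the embeddings $\psi_k=\pi_k|_C\colon G\to A_k$, and consider the map $\Phi\colon G^M\to V$, $(g_1,\dots,g_M)\mapsto\sum_i(1-\pi_i)g_i$. In coordinates of $\bigoplus_k\psi_k(G)\cong G^M$, $\Phi$ is multiplication by the matrix $J-I$, and its image is $B_1+\dots+B_M$. The kernel consists of $x$ with $\sum_kx_k-x_i=0$ for all $i$, so all $x_i$ equal a common $s$ with $ts=0$.
\begin{itemize}
\item If multiplication by $t$ is injective on $V$ (hence on $G$), the kernel is trivial and $h(B_1\cdots B_M)=M\,h(C)$. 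This is \LB{} with equality.
\item If $tV=0$, the kernel contains the diagonal, of order $\abs{G}$, so $h(B_1\cdots B_M)\le(M-1)h(C)$. This is \LA{}.
\end{itemize}

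\emph{Robust case.} I would carry out the same argument with approximate relations, replacing each subgroup by a large subgroup on which the exact relation holds:
\begin{itemize}
\item \textbf{Remove $\alpha$.} Lift to the external direct sum $\widehat A=\bigoplus_kA_k$, whose natural map to $A$ has kernel of order $2^{\alpha}$.
\item \textbf{Remove $\gamma$.} Pass to $C_1=C\cap A$, losing $\gamma$ in $h(C)$.
\item \textbf{Remove $\delta_i$.} Since $h(C_1\cap A_{\ne i})\le\delta_i$, each $\pi_i|_{C_1}$ has kernel of order at most $2^{\delta_i+\alpha}$.
\item \textbf{For \LB{}.} Replace $C_1$ by $C_2=\{c\in C_1:\ c\in A_i+B_i\cap A_{\ne i}\ \text{for all }i\}$. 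Its index is controlled by $\eps_i$ and $\zeta_i$ via $h(C\mid A_iB_i)$ and $h(B_i\mid A_{\ne i})$, each with an extra $\alpha$ when computed in $\widehat A$. On $C_2$, the map $c\mapsto(1-\pi_i)c$ lands in $B_i$ up to these small kernels, so $\Phi$ restricted to $C_2^M$ maps into $\sum_iB_i$ with kernel bounded by the $\delta$ and $\alpha$ terms plus the trivial $t$-kernel. Counting orders then gives $M\,h(C)\le h(B_1\cdots B_M)+M[\gamma+M\alpha+\sum_i(\delta_i+\eps_i+\zeta_i)]$.
\item \textbf{For \LA{}.} Run the argument in reverse. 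Cover $\sum_iB_i$ using $B_i\le(A_i+C)\cap A_{\ne i}$ up to $\zeta_i+\eta_i$. Bound $(A_i+C)\cap A_{\ne i}$ by $(1-\pi_i)C$ up to $\alpha$ and $\gamma$. Then use that $\Phi$ factors through $G^M/\mathrm{diag}$.
\end{itemize}

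The main obstacle is the bookkeeping in this robust step: getting error coefficients no worse than those stated, notably $M\alpha$ inside the bracket, and showing the successive subgroup replacements do not compound. I would first handle each error term in isolation, using the identity $h(X\cap Y)=I(X;Y)$ and submodularity for subgroups, and only then chain the resulting index estimates.
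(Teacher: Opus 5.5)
Your exact-case computation and the skeleton of your robust step match the paper's proof. You use the same map $\Phi(y)=\sum_i(y_i-\pi_iy_i)$ with $\sum_i(y-\pi_iy)=ty$, the same shrunken subgroups ($C\cap\bigcap_i(A_i+B_i\cap A_{\ne i})$ for \LB{}, $B_i\cap A_{\ne i}\cap(A_i+C\cap A)$ for \LA{}), and the diagonal in the kernel for \LA{}. The gap is the removal of $\alpha$, and it is more than bookkeeping, because lifting to $\widehat A=\bigoplus_kA_k$ does not make your relations hold.

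The projections act on $\widehat A$, while $C$ and the $B_i$ live in $V$. An element $c\in C\cap A$ has $2^\alpha$ lifts $\tilde c$, and $(1-\pi_i)\tilde c$ pushes down to $c-\tilde c_i$. For $c=a+b$ with $a\in A_i$ and $b\in B_i\cap A_{\ne i}$, this equals $b$ plus the element $a-\tilde c_i$ of $O_i=A_i\cap A_{\ne i}$, which runs through a whole coset of $O_i$ as the lift varies. So $\Phi$ does not map $C_2^M$ into $\sum_iB_i$, only into $\sum_iB_i+O$ with $O=\sum_iO_i$, and $O$ must be paid for separately. If you also count on lifts, \LA{} comes out too weak: the lifts of $C\cap A$ cost $(M-1)\alpha$, and the correction $\log\abs O$ costs more. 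For $M=3$ the latter can be $2\alpha$: $A_1,A_2,A_3=\langle e_1\rangle,\langle e_2\rangle,\langle e_1+e_2\rangle$ in $\Z_2^2$ give $\alpha=1$ and $O=\Z_2^2$. That is $4\alpha$ against the allowed $3\alpha$.

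The missing idea is to quotient by $O$ rather than lift. Since $O\le A_{\ne i}$ for every $i$ and $\log\abs O\le\sum_i\log\abs{O_i}\le M\alpha$, the images of the $A_i$ in $V/O$ are independent, the $\pi_i$ are well defined there with $\ker\pi_i=q(A_{\ne i})$, and both hypotheses on $t$ pass to the quotient ($V$ is finite). Modulo $O$ the lift ambiguity disappears, and $\log\abs O$ is paid once. In \LA{} it is paid after comparing $q(\sum_iB_i'')$ with $\Phi(q(C\cap A)^M)$. In \LB{} it enters through $\log\abs{q(C_1)}\ge\log\abs{C_1}-M\alpha$, multiplied by $M$, which is where $M^2\alpha$ comes from.

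Your kernel argument for \LB{} also has to change, because $\pi_k$ is no longer injective on $Y=q(C_1)$. If $\Phi(y)=0$, then $\kappa_k=\sum_jy_j-y_k$ lies in $Y\cap\ker\pi_k$ and $\sum_k\kappa_k=t\sum_jy_j$. So $y\mapsto(\kappa_k)_k$ is injective on $\ker\Phi$ when $t$ is injective. The modular law with $O\le A_{\ne k}$ gives $\log\abs{Y\cap\ker\pi_k}\le\delta_k$. Elementary index estimates give $\log[C:C_1]\le\sum_i(\eps_i+\zeta_i)$ and $\log[B_i:B_i'']\le\zeta_i+\eta_i+\gamma$, which yields both inequalities. \LB{} does not even need $\gamma$, since $C_1\le A$ automatically.
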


The proof (Supplementary Note~\ref{app:lemma1}) removes the overlaps of the $A_i$ by a quotient and compares the $B_i$ with a twisted diagonal copy of $C$; the hypothesis on $t$ enters only through $M-1=t$. On the configuration~\eqref{eq:JI} over a field every error term vanishes, so \LB{} fails by one unit when the characteristic divides $t$ and \LA{} when it does not. For vector spaces, \LB{} follows from Theorem~9 of Pe\~na-Macias and Sarria~\cite{PenaSarria2021} by adding Shannon terms, and their Theorem~7 plays the role of \LA{} (Supplementary Note~\ref{app:lemma1-ps}; see also Refs.~\cite{PenaSarria2019,PenaMacias2022} and~\cite[Cor.~2.7]{PenaMacias2023}). Deriving $C$ as $(A_1+B_1)\cap\dots\cap(A_M+B_M)$ gives versions \LB$^*$ and \LA$^*$ in $2M$ variables (Supplementary Note~\ref{app:lemma2}).

\subsection{Blindness}
\label{sec:blind}

\begin{proposition}[Blindness]\label{prop:blind}
Let $f$ be any of the eight seven-variable characteristic-dependent inequalities of Supplementary Tables~\ref{tab:eight} and~\ref{tab:eight-coeffs}, which are \LA{} and \LB{} with $t=2$ and six inequalities from the literature~\cite{DFZ2015rate,BKL2011,PenaSarria2019} (their validity is not used). Then $f+f\circ\Dual$ is a nonnegative rational combination of Shannon inequalities on seven variables. Consequently $f(h)\ge0$ for every self-dual polymatroid $h$ on seven elements, in every relabeling of the variables, and for six parties and every bijection between the seven variables and the six parties and the purifier, the pull-back of $f$ holds for every pure quantum state.
\end{proposition}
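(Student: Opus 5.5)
The plan is to reduce the statement to a single identity between linear forms and then use self-duality of the lift. Let $f$ be one of the eight inequalities, written as a linear form on functions of the subsets of the seven-element ground set $\mathcal E$. For each $f$ we will exhibit an explicit certificate: nonnegative rationals $c_\iota$ with
\[
f+f\circ\Dual=\sum_\iota c_\iota\,\iota ,
\]
where $\iota$ runs over the elemental Shannon inequalities on seven variables. Since $\Dual$ is linear, $f\circ\Dual$ is again a linear form, and its coefficients can be read off from $h^*(X)=\sum_{i\in X}h(i)+h(\mathcal E\setminus X)-h(\mathcal E)$. Finding each certificate is an exact linear program over $\mathbb Q$. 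A Shannon-cone membership problem on $2^7$ coordinates is small. We would solve it with exact rational arithmetic, or solve it in floating point and then round and verify the identity coefficient by coefficient. The proposition says the validity of $f$ itself is not used, so the certificate is all that is needed.

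The consequences then follow formally. Let $h$ be a self-dual polymatroid. Then $f(h)=f(h^*)=(f\circ\Dual)(h)$, so
\[
2f(h)=(f+f\circ\Dual)(h)=\sum_\iota c_\iota\,\iota(h)\ge0 .
\]
For relabelings, let $\sigma$ be a permutation of $\mathcal E$. Self-duality commutes with permutations, since $(h\circ\sigma)^*=h^*\circ\sigma$. Hence $h\circ\sigma$ is again a self-dual polymatroid, and $f(h\circ\sigma)\ge0$. This means no separate certificate is needed for each relabeling.

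For the quantum statement, fix a bijection between the seven variables and the set $N=\{0,1,\dots,6\}$ (six parties and the purifier). For any pure state on $N$, Supplementary Lemma~\ref{lem:A-selfdual} shows that the lift $h_S$ of~\eqref{eq:lift} is a polymatroid (by strong subadditivity and Araki--Lieb) and that it is self-dual. The previous paragraph therefore gives $f(h_S)\ge0$. Here the "pull-back of $f$" is understood as $S\mapsto f(h_S)$, where $h_S$ is the lift composed with the chosen bijection. This linear form is nonnegative on all pure states, and by linearity it is nonnegative on the closure of the cone they generate. In particular, it holds on every $\StabZ{d}{6}$, which is the sense in which these inequalities are blind.

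The main obstacle is producing the certificates. For \LA{} and \LB{} with $t=2$ there are error terms; these are conditional entropies and mutual informations of sums of subgroups, so they are themselves Shannon-type quantities. The inequalities may also involve nonelemental terms, which must first be expanded into joint entropies. Then the $2^7$-dimensional LP has to be set up correctly for the dual map, and in particular the $-h(\mathcal E)$ term must not be dropped. If no certificate exists for some $f$, the argument fails for that $f$. In that case we would first check whether $f+f\circ\Dual$ is Shannon only after symmetrizing under the automorphism group of the configuration, and otherwise report a self-dual polymatroid that violates it. Since the proposition asserts that certificates exist, we expect the LP to succeed, and the proof then consists of the tabulated rational coefficients plus an exact check of the identity.
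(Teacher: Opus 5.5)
Your proposal is correct and follows the paper's proof: the paper likewise finds, for each of the eight forms, nonnegative rational multipliers that write $f+f\circ\Dual$ as a combination of elemental Shannon inequalities (by a linear program whose solution is recomputed exactly on its support and verified in exact arithmetic). It then deduces $f(h)=\tfrac12(f+f\circ\Dual)(h)\ge0$ for self-dual polymatroids and applies this to the lift $h_S$ of a pure state, which is a self-dual polymatroid by strong subadditivity, Araki--Lieb and purity. The only cosmetic difference is in the relabelings: you note that $h\circ\sigma$ is again a self-dual polymatroid, whereas the paper uses that the Shannon cone is invariant under relabelings and that $\Dual$ commutes with them.
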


The certificates are exact (Supplementary Note~\ref{app:blind}, which shows in the same way that the ninth inequality that we checked, the first inequality of Corollary~2.7 of Ref.~\cite{PenaMacias2023} with $n=3$, is blind as well); Supplementary Note~\ref{app:misprints} records two printed inequalities that are false as stated. A dependence at six parties requires a seven-variable inequality $g$ with $g\circ\Dual=g$ that is valid for one local dimension and violated by the lift of a stabilizer state of another; Proposition~\ref{prop:blind} does not cover substitution instances or inequalities applied to families derived from the parties' subgroups.

\subsection{Transfer through a derived family}
\label{sec:transfer}

Let $U_P$ ($P\in\{A,B,C,W,X,Y,Z\}$) be subgroups of a finite abelian group with rank function $h$. For a line $\ell=\{i,j,k\}$ put
\begin{equation}\label{eq:derived}
  F_i^\ell=U_i\cap(U_j+U_k),\qquad \hat F_i=\sum_{\ell\ni i}F_i^\ell\ \le U_i ,
\end{equation}
and let $\hat h$ be the rank function of the derived family $(\hat F_P)$ (Supplementary Fig.~\ref{fig:fano}). Each piece $F_i^\ell$ is the common information of $U_i$ and $U_j+U_k$, and the pieces of a line generate each other: if $v=x+y\in U_i$ with $x\in U_j$ and $y\in U_k$, then $x=v-y\in U_j\cap(U_i+U_k)$, and likewise for $y$. With submodularity, this bounds the ranks of the derived family from above on points, on lines and on $ABCZ$, and from below on every set, by linear forms in $h$ (Supplementary Lemma~\ref{lem:transfer}). Any inequality valid for subgroup arrangements of a given kind holds for $\hat h$, and if its terms with positive coefficients are ranks of points, lines or $ABCZ$, bounding each term, from above or from below according to the sign of its coefficient, gives a new inequality for $h$, as in the common-information method~\cite{Hammer2000,DFZ2009}. Applied to \LB{} with $t=2$ and $(A_1,A_2,A_3,B_1,B_2,B_3,C)=(A,B,C,Y,X,W,Z)$, that is, to an inequality $\LB_2(h)\ge0$ that holds in groups of odd order and takes the value $-1$ on the Fano configuration (Supplementary Eq.~\eqref{eq:LB2}), this gives
\begin{equation}\label{eq:f9}
  0\le\LB_2(\hat h)\le\tfrac13\fnine(h)
\end{equation}
for an explicit integer linear form $\fnine$ with $67$ terms (Supplementary Eq.~\eqref{eq:F-f9}).

\begin{proof}[Proof of Theorem~\ref{thm:steane}]
For a stabilizer state of $\Z_d$ qudits with $d$ odd, $h_S$ is a subgroup rank function in a group of odd order (fact~(a)), so $\fnine(h_S)\ge0$ by Eq.~\eqref{eq:f9}, and by linearity and closedness $\fnine\ge0$ on the lift of $\StabZ{d}{6}$. At $\hF$, each bound used in Eq.~\eqref{eq:f9} equals the corresponding rank of the Fano configuration, so $\fnine(\hF)/3=\LB_2(\rF)=-1$ (Supplementary Note~\ref{app:transfer}). For $n>6$, add parties that hold nothing.
\end{proof}

On the slice of self-dual vectors invariant under the collineations of the Fano plane, with $v_1=h(i)$ on points and, for a pair $\{i,j\}$, a line $\ell\ni i$ and a noncollinear triple $T\ni i$,
\begin{equation}\label{eq:invariants}
  \tau=I(i;j),\quad \mu=I(i;\ell\setminus i),\quad \nu=I(i;T\setminus i),\quad \beta=v_1-2\mu,
\end{equation}
one has
\begin{equation}\label{eq:invariant-f9}
  \tfrac13\fnine=210\,\beta+30\,\tau+21\,\nu-\mu
\end{equation}
(Supplementary Note~\ref{app:transfer}). On lifts, $\beta=2S(i\mid\ell\setminus i)$; so a stabilizer state of odd local dimension whose entropy vector is invariant under the collineations and has $\mu>0$ also has $\beta$, $\tau$ or $\nu$ positive, unlike the Steane state, where $\mu=1$ and $\tau=\nu=\beta=0$. The form $\fnine$ is not a nonnegative combination of Shannon inequalities and relabelings of the eight inequalities of Proposition~\ref{prop:blind}. Since the smallest face of the Shannon cone that contains $\hF$ is spanned by $\rF$ and $\rF^*$, Theorem~\ref{thm:steane} also shows that for every odd prime $q$ no nonzero combination $a\rF+b\rF^*$ with $a,b\ge0$ is a limit of positive combinations of $\GF(q)$-linear rank functions (Supplementary Proposition~\ref{prop:faces}).

\subsection{Transfer through the stabilizer group}
\label{sec:reverse}

Take $Z$ as the purifier. The non-Fano configuration has the six lines other than $WXY$; the \emph{corners} $A$, $B$, $C$, $Z$ lie on three of them and the \emph{diagonal points} $W$, $X$, $Y$ on two. For a pure-symmetric vector $S$ on the seven systems, a set $T$, a line $\ell$ and a point $P\in\ell$, put $k(T)=\sum_{i\in T}S(i)-S(T)$ and $k_P(\ell)=k(\ell)-k(\ell\setminus P)$, and, for a corner $P$ and a line $\ell\ni P$, $e_P^\ell=2S(P)-k_P(\ell')-k_P(\ell'')$, where $\ell'$ and $\ell''$ are the other two lines through $P$. Let
\begin{equation}\label{eq:UZ}
\begin{aligned}
  \mathcal{U}_Z&=4S(Z)-k_Z(AYZ)-k_Z(BXZ)-k_Z(CWZ),\\
  \mathcal{U}_W&=2S(W)-k_W(ABW)+e_A^{ABW}+e_B^{ABW},\\
  \mathcal{U}_X&=2S(X)-k_X(ACX)+e_A^{ACX}+e_C^{ACX},\\
  \mathcal{U}_Y&=2S(Y)-k_Y(BCY)+e_B^{BCY}+e_C^{BCY},
\end{aligned}
\end{equation}
and
\begin{equation}\label{eq:f12}
\begin{aligned}
  \ftwelve(S)={}&5\,\mathcal{U}_Z+4\,\mathcal{U}_W+\mathcal{U}_X+\mathcal{U}_Y\\
  &+k(WXY)+6\,k(ABC)+\textstyle\sum_{\{P,P'\}}k(PP')\\
  &-\textstyle\sum_\ell k(\ell)-3\,k(ABW)-3\,k(CWZ),
\end{aligned}
\end{equation}
where $\{P,P'\}$ runs over the six pairs of corners and $\ell$ over the six lines; in the entropies, $\ftwelve$ has $31$ terms (Supplementary Eq.~\eqref{eq:R-f12}). At $\SNF$, $k$ is $1$ on the six lines and $0$ on pairs and on the other triples, which gives $\ftwelve(\SNF)=5+4+1+1-6-3-3=-1$; at $\SF$, also $k(WXY)=1$, and $\ftwelve(\SF)=0$.

\begin{proposition}[A qubit inequality]\label{prop:f12}
For every $S\in\StabZ{2}{6}$, with $Z$ as the purifier, $\ftwelve(S)\ge0$.
\end{proposition}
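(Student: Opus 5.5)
By linearity and closedness it suffices to prove $\ftwelve(S)\ge0$ for the entropy vector $S$ of a single stabilizer state of qubits on the seven systems $A,B,C,W,X,Y,Z$, with $Z$ as the purifier. I would work with the stabilizer group itself rather than with a rank function. Let $G\le\GF(2)^{2m}$ be the Lagrangian subspace of the state, and for a set $T$ let $G_T$ be the subgroup of elements supported in $T$; the restriction $G|_T$ is the projection of $G$ onto the phase space of $T$. The standard formulas give $S(T)=|T|_{\mathrm q}-\dim G_T$, where $|T|_{\mathrm q}$ is the number of qubits in $T$, and $\dim G|_T=|T|_{\mathrm q}+S(T)$. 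In these terms, $k(T)=\sum_{i\in T}S(i)-S(T)$ measures how much of $G_T$ is not generated by the single-system groups $G_i$, so $k(T)=\dim G_T-\sum_{i\in T}\dim G_i$. Likewise $k_P(\ell)$ measures the part of $G_\ell$ that acts nontrivially at $P$, modulo what already lives on $\ell\setminus P$.

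The transfer step, as announced in Sec.~\ref{sec:six}, uses the restrictions of the stabilizer group to the points. For each line $\ell$ through $P$, I would define the subspace $R_P^\ell\le V_P$ as the set of restrictions to $P$ of elements of $G_\ell$. Its dimension is $k_P(\ell)$ plus a Shannon-controlled term, and it is at most $2S(P)$ modulo $G_P$. The key facts are two.
\begin{enumerate}
\item[(a)] For two distinct lines $\ell,\ell'$ meeting at $P$, $R_P^\ell$ and $R_P^{\ell'}$ are orthogonal for the symplectic form on $V_P/G_P$. This holds because elements of $G_\ell$ and $G_{\ell'}$ commute and overlap only at $P$.
\item[(b)] Each $R_P^\ell$ is isotropic only up to the corrections recorded by the $e_P^\ell$ terms.
\end{enumerate}
At a corner $P$, three pairwise orthogonal subspaces sit in a symplectic space of dimension $2S(P)$. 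This yields $e_P^\ell\ge0$ and bounds of the shape $\mathcal U_Z\ge0$: the three lines through $Z$ leave room only if their restrictions together have dimension at most $2S(Z)$, with a deficit of at least $S(Z)$, because the sum of pairwise orthogonal subspaces whose pairwise sum is coisotropic is constrained. For the diagonal points, $\mathcal U_W$ and its analogues express that the restriction of $G_{ABW}$ at $W$ is determined by what $A$ and $B$ contribute, after subtracting the parts of $A$ and $B$ that are used by the other lines.

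The characteristic-$2$ input comes next. From the line structure, elements of $G_{ABW},G_{ACX},G_{BCY}$ restricted to $W,X,Y$ combine, via the corner contributions at $A,B,C$, into elements of $G$ supported on $\{W,X,Y,Z\}$ or on $WXY$ alone. I would realize this by choosing, for each corner, compatible representatives. Over $\GF(2)$ the sum of the three line elements cancels at $A,B,C$, since each corner appears twice and $1+1=0$. This produces an element of $G_{WXY}$ (up to a $Z$-part controlled by $\mathcal U_Z$) of dimension at least $\sum_\ell k(\ell)$ minus the errors. That yields the lower bound on $k(WXY)+\cdots$, and the $6k(ABC)$, pair terms, and the $-3k(ABW)-3k(CWZ)$ terms are the bookkeeping for when the corner representatives are not independent. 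I would assemble $\ftwelve$ as a sum of these individually nonnegative quantities. Concretely, I would find nonnegative rational multipliers by solving a small LP over the local linear-algebra inequalities (orthogonality, dimension counting, and cancellation), and then check the identity exactly.

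The main obstacle is the cancellation step. I need to turn the three line-pieces at the diagonal points into a genuine element of $G_{WXY}$ with a dimension count that is linear in entropies. This requires controlling the choice of representatives at the corners, where $k(ABC)$ and the pair terms enter. If the direct count is too lossy, my fallback is to apply the inequality \LA{} of Lemma~\ref{lem:JI} with $t=2$ to the derived family of restrictions $(R_P^\ell)$, which lives in a group of exponent $2$. I would then bound each of its terms by the entropic quantities above, as in the transfer of Sec.~\ref{sec:transfer}.
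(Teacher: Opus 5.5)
\paragraph{Gap in the main route.} The gap is at the step you flag yourself, and it is not bookkeeping. The identity ``the three line elements cancel at $A,B,C$ since each corner appears twice'' needs elements of $L_{ABW}$, $L_{ACX}$ and $L_{BCY}$ whose restrictions agree at all three corners at once. Your fact~(a) delivers this when every restriction at a corner is a single isotropic line of a one-qubit phase space (two orthogonal such lines coincide and have one nonzero vector), as in Steane's code. It does not deliver it in general: the three corner lines alone force nothing.

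For example, take the CSS state over $\GF(4)$ of the columns $A,B,C=e_1,e_2,e_3$, $W=A+\alpha B$, $X=A+C$, $Y=B+C$, $Z=A+B+C$, read as two qubits per system (Supplementary Proposition~\ref{prop:galois}). Here $L_{ABW}$, $L_{ACX}$ and $L_{BCY}$ are the $\GF(4)$-multiples of the diagonal Pauli operators given by the relations $A+\alpha B+W=0$, $A+C+X=0$ and $B+C+Y=0$. At each corner the restrictions of the two corner lines through it are even equal. Yet a combination with coefficients $a,b,c$ vanishes at $A$ and $C$ only if $a=b=c$, and then its coordinate at $B$ is $(1+\alpha)a\ne0$. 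So no choice of representatives works, and indeed $k(WXY)=0$. Note that here $Z$ lies on $AYZ$ and $BXZ$ but not on $CWZ$.

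What forces $WXY$ in characteristic $2$ is the interplay of the corner lines with the three lines through $Z$ (Ceva against Menelaus). Your sketch does not say how those lines would produce compatible representatives. An LP cannot fill this in either: it only combines inequalities you already have, and orthogonality and dimension counts hold over every field. The characteristic-$2$ content has to enter as a proved inequality such as \LA{}.

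\paragraph{What the fallback is missing.} Your fallback is the right idea, but it lacks the construction that makes \LA{} applicable. \LA{} needs seven subgroups of one group whose rank function sees supports. Two naive choices fail:
\begin{itemize}
\item Restrictions placed in $\bigoplus_PV_P$ have a modular rank function, on which \LA{} is trivial.
\item \LA{} applied to a representation of the lift is blind on the self-dual slice (Proposition~\ref{prop:blind}).
\end{itemize}

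The missing step is as follows. Work in normal form, so that $L_P=0$ and $\dim L_T=k(T)$. Take $\Lambda=\sum_\ell L_\ell$ over the six lines other than $WXY$, let $D_P=\pr_P(\Lambda)$ have dimension $\Delta_P$, and let $R_P$ be the image of $D_P$ in $\bigl(\bigoplus_PD_P\bigr)/\Lambda$. Then $\dim\sum_{P\in T}R_P=\sum_{P\in T}\Delta_P-z(T)$, where $z(T)$ is the dimension of the elements of $\Lambda$ supported in $T$. Hence \LA{} with $(A_1,A_2,A_3,B_1,B_2,B_3,C)=(R_A,R_B,R_C,R_Y,R_X,R_W,R_Z)$ becomes a lower bound on $z(WXY)\le k(WXY)$ in terms of the $\Delta_P$ and the $z$'s; the corner $\Delta$'s cancel.

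To reach $\ftwelve$ you then need the following bounds:
\begin{itemize}
\item $z(T)\le k(T)$, and $z(\ell)=k(\ell)$ on the six lines;
\item $z(ABCZ)\ge k(ABW)+k(CWZ)-\Delta_W$, which is the source of $-3k(ABW)-3k(CWZ)$;
\item the upper bounds $\Delta_P\le\mathcal U_P$, not $\mathcal U_P\ge0$ as you wrote.
\end{itemize}
The last bounds come from symplectic dimension counts. At $Z$, three pairwise orthogonal subspaces of a $2s$-dimensional symplectic space satisfy $\dim(V_1+V_2+V_3)\le4s-d_1-d_2-d_3$. At $W$, write $D_W=U+U'$ with $U=\pr_W(L_{ABW})$ and $U'=\pr_W(L_{CWZ})\le U^\perp$, so that $\Delta_W\le2S(W)-k_W(ABW)+\rk U$. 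Isotropy along the line $ABW$ gives $\rk U\le\rk\pr_A(L_{ABW})+\rk\pr_B(L_{ABW})$. Orthogonality at the corners bounds these two ranks by $e_A^{ABW}$ and $e_B^{ABW}$.
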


For a qubit stabilizer state in normal form, with $S(i)$ qubits on each system $i$ (Supplementary Lemma~\ref{lem:A-normal}), $k(T)$ is the dimension of the space of stabilizer elements supported in $T$, and $k_P(\ell)$ that of the restriction to $P$ of those supported in $\ell$. Where two lines meet in a point, these restrictions are orthogonal, because the stabilizer group is isotropic; dimension counts in the symplectic spaces of the systems bound the restrictions to $P=Z,W,X,Y$ by the $\mathcal{U}_P$, and \LA{} with $t=2$, applied to their images in a quotient, gives $\ftwelve(S)\ge0$ (Supplementary Note~\ref{app:reverse}). The value of $\ftwelve$ at $S^{(d)}$ in Theorem~\ref{thm:reverse} follows from the Smith invariants of the sets of columns of the $3\times7$ matrix, which are all $1$ except for those of $\{W,X,Y\}$, which are $1$, $1$ and $2$ (Supplementary Note~\ref{app:reverse-sevenpoint}). For $d\ge3$ this value is negative, while $S^{(d)}\in\StabZ{d}{6}$; with Proposition~\ref{prop:f12} and Corollary~\ref{cor:divisibility}, this proves Theorem~\ref{thm:reverse}. A version of \LA{} for groups of exponent $2^a$ gives a variant of Proposition~\ref{prop:f12} for every power of $2$ (Supplementary Note~\ref{app:reverse-powers}).

\begin{theorem}[Six parties, powers of two]\label{thm:twopowers}
Let $a\ge1$ and $f^{(a)}=\ftwelve+(1-1/a)\,\mathcal{U}_Z$. Then $f^{(a)}(S)\ge0$ for every $S\in\StabZ{2^a}{6}$, with $Z$ as the purifier. Consequently, for all $n\ge6$, $d\ge2$ and $a\ge1$,
\[
  \StabZ{d}{n}\subseteq\StabZ{2^a}{n}\iff d\mid2^a .
\]
In particular, for every odd $d\ge3$ the cones of $\Z_d$ and of $\Z_{2^a}$ qudits are incomparable from six parties on, and $\StabZ{2^j}{n}\subsetneq\StabZ{2^a}{n}$ for $1\le j<a$ and $n\ge6$.
\end{theorem}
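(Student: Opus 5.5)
The plan has three parts: the inequality $f^{(a)}\ge0$, then the two directions of the equivalence, then the stated consequences.

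\emph{The inequality.} I would rerun the proof of Proposition~\ref{prop:f12} for $\Z_{2^a}$ qudits. First I would put the stabilizer state in the normal form of Supplementary Lemma~\ref{lem:A-normal}, adapted to $\Z_{2^a}$. The Lagrangian subgroup at each system is then a finite abelian $2$-group, and $k(T)$ and $k_P(\ell)$ become $\log$-orders of the subgroup of stabilizer elements supported in $T$, respectively of its restriction to $P$. Isotropy still makes the restrictions at two lines meeting in a point orthogonal under the symplectic form. So the counting that bounds these restrictions at $P=Z,W,X,Y$ by $\mathcal U_P$ should go through unchanged, using annihilators in place of orthogonal complements, since $|H|\,|H^\perp|=|G|$ holds for finite abelian groups with a nondegenerate pairing.

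The one place where $t=2$ enters is the application of \LA{}, which needs $2V=0$. Here $V$ is only killed by $2^a$. I would replace \LA{} by its exponent-$2^a$ version: filter by $V\supseteq 2V\supseteq\dots\supseteq2^{a-1}V$ and apply \LA{} to each layer, which is elementary abelian. The layers are not independent, and the loss from stitching them together is where the extra term comes from. I expect the loss to be controlled by the rank at $Z$, giving a defect of $(1-1/a)$ times the quantity that $\mathcal U_Z$ bounds, so that $f^{(a)}=\ftwelve+(1-1/a)\mathcal U_Z\ge0$. This averaging is the main obstacle. What I would try first is to apply \LA{} to the quotients $2^{j}V/2^{j+1}V$, sum over $j$, and divide by $a$. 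Passing to limits and cones is routine, by linearity and closedness.

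\emph{The equivalence.} The direction $d\mid2^a\Rightarrow$ inclusion is Corollary~\ref{cor:divisibility}. Now suppose $d\nmid 2^a$, so either $d$ has an odd prime factor $q$ or $d=2^b$ with $b>a$.

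In the first case, $\StabZ{q}{n}\subseteq\StabZ{d}{n}$ by Corollary~\ref{cor:divisibility}, and $S^{(q)}\in\StabZ{q}{6}$ with $\ftwelve(S^{(q)})=-\log_2 q$ by Theorem~\ref{thm:reverse}. I still need $f^{(a)}(S^{(q)})<0$, so I would compute $\mathcal U_Z$ at the non-Fano state. At $\SNF$, $S(Z)=1$ and each $k_Z=1$ on the three lines through $Z$, so $\mathcal U_Z=4-3=1$. Then $f^{(a)}(\SNF)=-1+(1-1/a)=-1/a<0$. Scaling by $\log_2q$ gives $f^{(a)}(S^{(q)})<0$.

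In the second case, $d=2^b$ with $b>a$. By Corollary~\ref{cor:divisibility} it suffices to treat $d=2^{a+1}$. I would evaluate $f^{(a)}$ at $S^{(2^{a+1})}$ using the Smith invariants $1,1,2$ of $\{W,X,Y\}$. Then $\ftwelve(S^{(2^{a+1})})=-a$ bits, and $\mathcal U_Z=(a+1)$ bits, since it scales like the non-Fano value with $\log d=a+1$. This gives $f^{(a)}=-a+(1-1/a)(a+1)=-1/a<0$, which is negative as required. This arithmetic should be checked against Supplementary Note~\ref{app:reverse-sevenpoint}. If $\mathcal U_Z$ turns out to be smaller there, the Smith-invariant computation gives its exact value and I would recheck the sign.

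For $n>6$, I would give the extra parties no qudits, as in the text after Corollary~\ref{cor:divisibility}.

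\emph{The consequences.} Incomparability of $\Z_d$ (with $d$ odd) and $\Z_{2^a}$ follows by combining the above with Theorem~\ref{thm:reverse}, or with Theorem~\ref{thm:steane} through $\StabZ{2}{n}\subseteq\StabZ{2^a}{n}$. Strictness $\StabZ{2^j}{n}\subsetneq\StabZ{2^a}{n}$ for $j<a$ follows by applying the equivalence with the roles swapped: $2^a\nmid2^j$, so $\StabZ{2^a}{n}\not\subseteq\StabZ{2^j}{n}$.
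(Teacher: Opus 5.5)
The witness computations and the consequences are right and agree with the paper. $\mathcal U_Z=1$ at both $\SNF$ and $\SF$, so $f^{(a)}(\SNF)=-1/a$ and $f^{(a)}(S^{(2^{a+1})})=f^{(a)}(\SF+a\,\SNF)=-a+(1-\tfrac1a)(a+1)=-1/a$. The paper evaluates $S^{(2^j)}=\SF+(j-1)\SNF$ for every $j>a$ and gets $1-j/a$; your reduction to $j=a+1$ through Corollary~\ref{cor:divisibility} is equivalent. One slip: for incomparability, $\StabZ{2^a}{n}\not\subseteq\StabZ{d}{n}$ comes from Theorem~\ref{thm:steane}, and the other direction from the equivalence itself; Theorem~\ref{thm:reverse} is not an alternative route.

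\textbf{The gap: the exponent-$2^a$ replacement of \LA{}.} This is the inequality $f^{(a)}\ge0$ itself, and your layer averaging does not produce it. The images of the individual subgroups in a layer $2^jV/2^{j+1}V$ form a $\GF(2)$-arrangement, but the ranks of these arrangements do not add up to $h$. On the configuration~\eqref{eq:JI} over $\Z_{2^a}$ with $t=2$, every layer is the Fano configuration, so the layer ranks of $B_1+B_2+B_3$ sum to $2a$, whereas $h(B_1B_2B_3)=3a-1$ and $h(C)=a$. Summing \LA{} over the layers therefore says nothing about $h$. Filtering each subgroup by its own multiples of $2$ fails too: $X\mapsto\log\abs{U_X/2U_X}$ is not submodular, as $\langle(1,1)\rangle$, $\langle(1,3)\rangle$ and $\langle(2,2)\rangle$ in $\Z_4^2$ show.

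The missing idea is local to the proof of \LA{} in Supplementary Note~\ref{app:lemma1}. There $tV=0$ is used only in step~3, to put the whole diagonal of $Y^M$, with $Y=q(C\cap A)$, into the kernel of $\Phi$, because $\Phi(y,\dots,y)=ty$. If only $t^aV=0$, the kernel still contains the diagonal of the $t$-torsion $Y[t]$. A cyclic decomposition of $Y$ gives $\abs{Y[t]}\ge\abs{Y}^{1/a}$, hence
$h(B_1\cdots B_M)\le(M-\tfrac1a)h(C)+M\alpha+M\gamma+\sum_i(\zeta_i+\eta_i)$
(Supplementary Lemma~\ref{lem:R-LAa}). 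Apply this with $M=3$ to the family $(R_P)$ in $D/\Lambda$, where $h(C)=\Delta_Z$. It replaces $5\Delta_Z$ by $(6-\tfrac1a)\Delta_Z$ in Eq.~\eqref{eq:R-assembly}, and $\Delta_Z\le\mathcal U_Z$ gives exactly $f^{(a)}\ge0$. So your guess about where the $(1-\tfrac1a)\mathcal U_Z$ comes from is right, but the mechanism is this torsion bound, not an average over layers.

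\textbf{The first step also fails as stated.} Supplementary Lemma~\ref{lem:A-normal} has no analogue for $\Z_{2^a}$. The group $L_P$ of stabilizer elements supported on one system need not be free, and $S(P)$ need not be a multiple of $a$ bits. For example, a $\Z_4$ qudit whose only nontrivial stabilizer element supported on it is $Z^2$ has $S(P)=1$ bit. So no local unitary and discarding of qudits reaches $L_P=0$ with $\Z_{2^a}$ qudits. What the assembly needs is $\log\abs{L_T}=k(T)$. The paper obtains it by compression:
\begin{itemize}
\item replace each $V_P$ by $L_P^\perp/L_P$ with the induced form, which is nondegenerate;
\item replace $L$ by its image, which is again Lagrangian and leaves every $\log\abs{\pr_TL}-\tfrac12\log\abs{V_T}$ unchanged.
\end{itemize}
The new phase spaces are symplectic abelian groups of exponent dividing $2^a$, not necessarily of the form $\Z_{2^a}^{2m}$. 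Supplementary Lemmas~\ref{lem:R-corners}--\ref{lem:R-dual} then hold with $\log\abs{\cdot}$ in place of $\dim$. They use $\abs{U^\perp}=\abs{V}/\abs{U}$, as you anticipate, and also the rank facts for alternating forms on abelian groups: invariance under pull-back along surjections and subadditivity for sums of forms.
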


The witnesses are $\SNF$ and the vectors $S^{(2^j)}$ of Theorem~\ref{thm:reverse} with $j>a$. As $a\to\infty$, the vectors $S^{(2^a)}/a$ of $\StabZ{2^a}{6}$ tend to $\SNF$, which lies in none of these cones: the union of the increasing cones $\StabZ{2^a}{6}$ is not closed.

\subsection{Many parties}
\label{sec:sandwich}
\label{sec:cubes}

\begin{proposition}[Sandwich]\label{prop:sandwich}
Let $d,d'\ge2$ and $n\ge1$. Then
\[
  \LcZ{d}{n}\not\subseteq\LcZ{d'}{n}\ \Longrightarrow\ \StabZ{d}{n}\not\subseteq\StabZ{d'}{n}\ \Longrightarrow\ \LcZ{d}{n+1}\not\subseteq\LcZ{d'}{n+1}.
\]
\end{proposition}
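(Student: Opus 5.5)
We prove the two implications separately, each in contrapositive form, using the Dictionary (Proposition~\ref{thm:reduction}) as the bridge between rank cones and stabilizer cones.

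\emph{Second implication.} We show that $\LcZ{d}{n+1}\subseteq\LcZ{d'}{n+1}$ implies $\StabZ{d}{n}\subseteq\StabZ{d'}{n}$. Take $S\in\StabZ{d}{n}$. By Proposition~\ref{thm:reduction}, $h_S\in\LcZ{d}{n+1}$. The inclusion then gives $h_S\in\LcZ{d'}{n+1}$. Applying Proposition~\ref{thm:reduction} again, this time for $d'$, gives $S\in\StabZ{d'}{n}$. This direction needs nothing beyond the equivalence (i)$\Leftrightarrow$(ii), since $S$ is pure-symmetric by assumption.

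\emph{First implication.} We show that $\StabZ{d}{n}\subseteq\StabZ{d'}{n}$ implies $\LcZ{d}{n}\subseteq\LcZ{d'}{n}$. The idea is to embed an $n$-variable rank function into the self-dual slice on $N=\{0,1,\dots,n\}$, using the purifier as a copy. For a code $C\le\Z_d^m$ with coordinates distributed among $n$ systems, form the \emph{doubled code} mentioned in Sec.~\ref{sec:divisibility}: the $n$ parties hold $C$, and the purifier $0$ holds a second copy of each codeword. Call this code $C_2$. By fact~(b), $\ket{C_2}$ is a CSS state with entropy vector $\lambda_{r_{C_2}}$. For $X\subseteq\{1,\dots,n\}$ we have $r_{C_2}(X)=r_C(X)$ and $r_{C_2}(X\cup0)=r_{C_2}(N)=r_C(\{1,\dots,n\})$. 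Hence
\[
S(X)=\lambda_{r_{C_2}}(X)=r_C(X)+r_C(\{1,\dots,n\})-r_C(\{1,\dots,n\})=r_C(X)
\]
on subsets of the parties. So the restriction of the entropy vector of $\ket{C_2}$ to subsets of $\{1,\dots,n\}$ is exactly $r_C$. Now define the linear map $\rho$ that restricts a vector on $2^N$ to the subsets not containing $0$. By linearity and closure, $\rho(\StabZ{d}{n})\supseteq\LcZ{d}{n}$, where fact~(b) supplies every generator of $\LcZ{d}{n}$ as some $r_C$.

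For the reverse containment, $\rho(\StabZ{d'}{n})\subseteq\LcZ{d'}{n}$, we need that the restriction of a $\Z_{d'}$ stabilizer entropy vector to subsets of the parties lies in $\LcZ{d'}{n}$. This is the main obstacle: it is not the restriction of $h_S$, which adds $\sum_{i\in X}S(i)$. We plan to use fact~(b) together with Corollary~\ref{cor:doubling}. Up to the harmless factor $2$, every generator is the entropy vector $\lambda_{r_{C'}}$ of a CSS state with $C'\le\Z_{d'}^{m}$. Then
\[
\lambda_{r_{C'}}(X)=r_{C'}(X)+r_{C'}(N\setminus X)-r_{C'}(N).
\]
The last two terms equal $\log$ of the order of the subgroup of codewords of $C'$ vanishing on $N\setminus X$. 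This is itself a rank function of subgroups: it is the rank function of the dual code $C'^{\perp}$ up to a modular term, $\lambda_{r_{C'}}(X)=r_{C'}(X)+r_{C'^\perp}(X)-\sum_{i\in X}\log|\text{coords of }i|$.

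This expression is not obviously a rank function, so a cleaner route is preferable. We will instead use the fact that a CSS state's vector restricted to the parties is $r_{C'}(X)-\log|C'_{\overline{X}}|$ plus a constant, and realize it as the rank function of the images of the parties' coordinate projections in $C'/C'_{0}$, the quotient by codewords supported on the purifier. More precisely, $S(X)=\log|\pr_X C'|-\log|\pr_X(C'\cap\ker\pr_0)^{\perp\text{-part}}|$. We expect the identity $S(X)=\log|\pr_X C'|-\log|\pr_X K|$, with $K=\{c\in C':\pr_0c=0\}$^{\perp}$-type stabilizer, to reduce to a contraction by $0$ of a subgroup arrangement. Contraction stays within groups of exponent dividing $d'$, as noted in Sec.~\ref{sec:methods-proofs}, and lands in $\LcZ{d'}{n}$.

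Given both containments, the argument concludes directly. Suppose $\StabZ{d}{n}\subseteq\StabZ{d'}{n}$. Then
\[
\LcZ{d}{n}\subseteq\rho(\StabZ{d}{n})\subseteq\rho(\StabZ{d'}{n})\subseteq\LcZ{d'}{n}.
\]
The delicate step is identifying $\rho$ of a general $\Z_{d'}$ stabilizer vector with a subgroup rank function. If the contraction identity resists, the fallback is to prove $\rho(\StabZ{d'}{n})\subseteq\LcZ{d'}{n}$ dually: show that any inequality valid on $\LcZ{d'}{n}$, pulled back through $\rho$, is valid on the lifts $h_S\in\LcZ{d'}{n+1}$ by contracting the element $0$, after subtracting the modular part $\sum_{i\in X}S(i)$.
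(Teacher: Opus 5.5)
Your second implication is correct and is the paper's argument. In the first implication, your doubled-code computation, $S(X)=r_C(X)$ for every set $X$ of parties, is exactly the witness the paper uses.

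The gap is the other containment you need, $\rho(\StabZ{d'}{n})\subseteq\LcZ{d'}{n}$. It is false, so no contraction identity can establish it. Take $n=2$ and a Bell pair of $\Z_{d'}$ qudits shared by parties $1$ and $2$, with a purifier that holds nothing. Restricted to the parties, its entropy vector is $(S(1),S(2),S(12))=(\log d',\log d',0)$. This is not monotone, whereas every element of $\LcZ{d'}{2}$ is a polymatroid; adding parties that hold nothing gives the same failure for every $n\ge2$. Your fallback fails for the same reason: monotonicity $h(12)-h(1)\ge0$ is valid on $\LcZ{d'}{2}$, but its pull-back through $\rho$ is violated by this vector. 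What is true is weaker. The vector $\rho(S)$ differs by the modular vector $\sum_{i}S(i)\,\ind{i}$ from the restriction of $h_S$ to the parties, that is, from the deletion (not the contraction) of the purifier. That restriction is a subgroup rank function of exponent dividing $d'$ by fact~(a).

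The missing idea is to test only with balanced forms, which do not see that modular vector. Argue dually:
\begin{itemize}
\item Choose a subgroup rank function $r$ of exponent dividing $d$ outside $\LcZ{d'}{n}$; one exists, since these generate $\LcZ{d}{n}$.
\item Choose a linear form $f\ge0$ on $\LcZ{d'}{n}$ with $f(r)<0$, and put $f_{\mathrm{bal}}(h)=f(h)-\sum_if(\ind{i})\,h(i\mid\text{rest})$.
\end{itemize}
This form has three properties.
\begin{itemize}
\item It is still nonnegative on $\LcZ{d'}{n}$: replacing $U_i$ by $U_i\cap\sum_{j\ne i}U_j$ turns a rank function $h$ into $h-h(i\mid\text{rest})\,\ind{i}$ (Supplementary Note~\ref{app:reduction-balanced}).
\item It satisfies $f_{\mathrm{bal}}(r)\le f(r)<0$, since $\ind{i}\in\LcZ{d'}{n}$ gives $f(\ind{i})\ge0$, and $r$ is monotone.
\item It vanishes on every $\ind{i}$.
\end{itemize}
Hence $f_{\mathrm{bal}}(\rho(S))=f_{\mathrm{bal}}(h_S|_{\text{parties}})\ge0$ on $\StabZ{d'}{n}$. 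Meanwhile, your doubled code of a realization of $r$ over $\Z_d$ gives a point of $\StabZ{d}{n}$ with $f_{\mathrm{bal}}(\rho(S))=f_{\mathrm{bal}}(r)<0$.

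In your language, the correct statement is $\rho(\StabZ{d'}{n})\subseteq\LcZ{d'}{n}+\operatorname{span}\{\ind{i}\}$. Balancing is exactly what shows that an element of $\LcZ{d}{n}$ lying in the closure of the right-hand side already lies in $\LcZ{d'}{n}$.
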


The second implication is Proposition~\ref{thm:reduction}. The first uses doubled codes: if a subgroup rank function $r$ on the $n$ parties is realized by a code $C$, the CSS state of $\{(c,c):c\in C\}$, in which the purifier holds the second copy, has $S(X)=r(X)$ for every set $X$ of parties (Supplementary Note~\ref{app:cubes-sandwich}).

\begin{definition}[Generalized cube states]\label{def:cube}
Let $t\ge2$, $M=t+1$, and let $\Rg$ be $\GF(p)$ or $\Z_N$. Consider the $2t+4$ points $e_1,\dots,e_M$, $c-e_1,\dots,c-e_M$, $c$ and $0$ of $\Rg^M$, and the code $C_t(\Rg)$ of all affine functions $x\mapsto a_0+a\cdot x$ evaluated at these points. Put one qudit of dimension $\abs{\Rg}$ on each point; the point $e_i$ belongs to party $A_i$, the point $c-e_i$ to party $B_i$, the point $c$ to party $C$, and the origin to the purifier $O$. The \emph{generalized cube state} is the CSS state $\ket{C_t(\Rg)}$ on $n=2t+3$ parties.
\end{definition}

For $t=2$ the points are the eight vertices of the cube $\{0,1\}^3$ (Supplementary Fig.~\ref{fig:cube}), $(A_1,A_2,A_3,B_1,B_2,B_3,C,O)$ is $(A,B,C,Y,X,W,Z,R)$ of Sec.~\ref{sec:qss}, and over $\GF(2)$ the code is the extended Hamming code $[8,4,4]$.

\begin{lemma}[Self-duality]\label{lem:cube-selfdual}
If $t-1$ is a unit in $\Rg$, the rank function $r$ of $C_t(\Rg)$ satisfies $r^*=r$. Hence the generalized cube state has $S(X)=2r(X)-\abs{X}\log\abs{\Rg}$ and $h_S=2r$, and contracting $h_S$ by the purifier gives twice the rank function of the configuration~\eqref{eq:JI}.
\end{lemma}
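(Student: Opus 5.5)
The plan is to reduce self-duality of $r$ to an explicit isometry between $C:=C_t(\Rg)$ and its dual code $C^\perp$ under the standard bilinear form on $\Rg^{2M+2}$. Note that $2t+4=2M+2$ is the number of points.

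\textbf{Step 1 (duality formula).} For a code $C\le\Rg^{n}$ over $\Rg=\GF(p)$ or $\Z_N$ (both Frobenius rings), character duality gives $\abs{C}\,\abs{C^\perp}=\abs{\Rg}^n$. Applied to the shortened and punctured codes, it gives
\[
r_{C^\perp}(X)=\abs{X}\log\abs{\Rg}+r_C(N\setminus X)-r_C(N).
\]
Since every coordinate of $C$ is surjective (constants are affine functions), $r(i)=\log\abs{\Rg}$ for each point. The right-hand side is then exactly $r^*(X)$, so $r^*=r_{C^\perp}$. Moreover $r_{C\cdot\mathrm{diag}(\lambda)}=r_C$ for any vector $\lambda$ of units. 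So it suffices to exhibit units $\lambda_P$ with $C^\perp=C\cdot\mathrm{diag}(\lambda)$.

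\textbf{Step 2 (the weights).} The evaluation map $(a_0,a)\mapsto(a_0+a\cdot P)_P$ is injective, because the points $0,e_1,\dots,e_M$ already determine $a_0$ and $a$. Hence $\abs{C}=\abs{\Rg}^{M+1}$, and by Step 1 also $\abs{C^\perp}=\abs{\Rg}^{2M+2-(M+1)}=\abs{C}$. So only the inclusion $C\,\mathrm{diag}(\lambda)\subseteq C^\perp$ is needed, that is,
\[
\sum_P\lambda_P f(P)g(P)=0
\]
for all affine $f,g$. This amounts to the weighted sum vanishing on $1$, on $x_j$, and on $x_jx_k$. Because all coordinates of the points are $0$ or $1$, $x_j^2=x_j$ on them.

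Use symmetric weights: $\alpha$ on the $e_i$, $\beta$ on the $c-e_i$, $\gamma$ on $c$, and $\delta$ on $0$. The three conditions become
\[
(M-2)\beta+\gamma=0,\qquad \alpha+(M-1)\beta+\gamma=0,\qquad \delta+M\alpha+M\beta+\gamma=0 .
\]
With $\beta=1$ these give $\gamma=-(t-1)$, $\alpha=-1$ and $\delta=t-1$. All four weights are units exactly when $t-1$ is a unit, which is where the hypothesis enters. This proves $r^*=r$.

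\textbf{Step 3 (consequences).} By fact~(b) of the dictionary, the CSS state has entropy vector
\[
S=\lambda_r=r(X)+r(N\setminus X)-r(N).
\]
Substituting $r(N\setminus X)=r^*(X)-\abs{X}\log\abs{\Rg}+r(N)$ from the definition of $r^*$, and using $r^*=r$, gives $S(X)=2r(X)-\abs{X}\log\abs{\Rg}$. Since $S(i)=\log\abs{\Rg}$ for every system, the lift is $h_S(X)=S(X)+\abs{X}\log\abs{\Rg}=2r(X)$.

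For the contraction by $O$: $r(O)=\log\abs{\Rg}$, and $\pr_{T\cup O}C\cong\Rg\times\{x\mapsto a\cdot x \text{ restricted to } T\}$. Hence $r(T\cup O)-r(O)=\log\abs{\mathrm{im}(\Rg^M\to\Rg^T)}$. By Pontryagin duality this image has the same order as the submodule $\sum_{P\in T}\langle P\rangle$. That submodule is the subgroup arrangement~\eqref{eq:JI}, so the contraction of $h_S$ is twice its rank function.

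The main obstacle is Step 1 over $\Z_N$ when $N$ is composite. There one must check the duality formula through characters (Pontryagin duality for $\Z_N$-modules) rather than through dimension counting, and confirm that shortening and puncturing commute with $\perp$ as they do over fields.
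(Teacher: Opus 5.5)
Your proposal is correct and follows the paper's proof essentially step for step. You use the same unit weights $-1$, $+1$, $-(t-1)$ and $t-1$ on the $e_i$, the $c-e_i$, $c$ and $0$ (which you derive from a symmetric ansatz, whereas the paper states and checks them), the same count $\abs{C}=\abs{C^\perp}=\abs{\Rg}^{M+1}$, and the same dual-code rank formula. Over $\Z_N$ that formula follows from the perfect standard pairing by the one-line annihilator argument you sketch, so the obstacle you flag is not a real one; the only other difference is that for the contraction the paper quotients the column space by the purifier's column $(1,0,\dots,0)$, mapping $(1,x)$ to $x$, where you stay with the row space and invoke Pontryagin duality.
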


With these witnesses, the inequalities of Lemma~\ref{lem:JI} and their versions with a derived $C$ separate the cones of any two primes.

\begin{theorem}[Any two primes]\label{thm:primes}
Let $p<q$ be primes. For every $n\ge2p+3$, and for every $n\ge2p+2$ if $p$ is odd,
\[
  \StabZ{p}{n}\not\subseteq\StabZ{q}{n}\qquad\text{and}\qquad\StabZ{q}{n}\not\subseteq\StabZ{p}{n}.
\]
In particular, the cones of $\Z_3$ and of every prime $q\ge5$ are incomparable from eight parties on; for $p=2$, Theorems~\ref{thm:steane} and~\ref{thm:reverse} give six parties.
\end{theorem}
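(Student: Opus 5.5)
The plan is to go through the Sandwich Proposition~\ref{prop:sandwich}: it suffices to separate the classical rank cones $\LcZ{p}{n}$ and $\LcZ{q}{n}$ with $n$ variables, where the variables are the parties of a doubled code, or else to find a self-dual witness directly among the generalized cube states of Definition~\ref{def:cube}.

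\emph{First witness: the doubled code.} Take $t=p$ and $M=p+1$, and consider the configuration~\eqref{eq:JI} over $\GF(p)$ and over $\GF(q)$. Its $2M+1=2p+3$ subspaces $A_i$, $B_i$, $C$ have a rank function $r_p$, respectively $r_q$. The doubled-code construction turns $r_p$ into a vector of $\StabZ{p}{2p+3}$. By Lemma~\ref{lem:JI}, inequality \LB{} holds for subgroup arrangements in groups of exponent $q$, since $t=p$ is invertible mod $q$. Its balanced version (Corollary~\ref{cor:balanced}) is then nonnegative on $\LcZ{q}{2p+4}$, hence on the lift of $\StabZ{q}{2p+3}$.

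On $r_p$ all error terms of \LB{} vanish, while $h(B_1\cdots B_M)=M-1$. So \LB{} fails by one unit, and $\StabZ{p}{n}\not\subseteq\StabZ{q}{n}$. The step I have to verify is that the lift of the doubled state restricted to the variables of \LB{} is (twice) $r_p$ up to Shannon-balanced terms. I would simply evaluate \LB{} on $h_S=r_p+r_p^*$ directly, so that \LB{} need not be balanced by hand.

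The reverse direction is symmetric. The witness $r_q$ violates \LA{} by one unit, and \LA{} holds in groups of exponent $p$ because $tV=0$. This gives $\StabZ{q}{n}\not\subseteq\StabZ{p}{n}$ for $n\ge2p+3$.

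\emph{Saving a party for odd $p$.} When $p$ is odd, $t-1=p-1$ is a unit in $\GF(q)$ and in $\GF(p)$. Lemma~\ref{lem:cube-selfdual} then makes the generalized cube state $\ket{C_p(\GF(p))}$ self-dual on $2p+2$ parties plus the purifier $O$, with $h_S=2r$. Contracting by $O$ gives twice the configuration~\eqref{eq:JI}.

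I would therefore apply \LB{} (respectively \LA{}) to the contraction of $h_S$ by $O$, that is, to the images of the subgroups in $V/U_O$. The order and exponent of $V/U_O$ divide those of $V$, so the hypotheses on $t$ persist and the contracted inequality is valid on $\StabZ{q}{2p+2}$. The cube state over $\GF(p)$ violates it. The same argument over $\GF(q)$ with \LA{} gives the other direction.

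The case $p=2$ is excluded here because $t-1=1$ works but \LA{}/\LB{} with $t=2$ are blind in the purifier-as-$O$ setting. For that case I would defer to Theorems~\ref{thm:steane} and~\ref{thm:reverse}, which give six parties. Monotonicity in $n$ follows by adding empty parties.

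\emph{Main obstacle.} The hard step is the odd-$p$ cube argument. One must check that the contracted inequality, viewed as a form in the entropies, really holds on the whole stabilizer cone, and that self-duality gives exactly twice the configuration rank with no extra contributions. If the contraction bookkeeping fails, I would fall back to the $2M$-variable versions \LB$^*$, \LA$^*$, with $C$ derived as an intersection, applied to the cube state.
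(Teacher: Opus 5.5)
\textbf{Gap: the case $n=2p+2$ for odd $p$ is not proved.} Your "saving a party" step rests on a miscount. The generalized cube state $\ket{C_p(\Rg)}$ puts one qudit on each of the $2t+4=2p+4$ points. So it lives on $n=2t+3=2p+3$ parties $A_1,\dots,A_M,B_1,\dots,B_M,C$ plus the purifier $O$ (Definition~\ref{def:cube}), not on $2p+2$. Contracting at $O$ and applying \LB{} or \LA{} therefore separates the cones only at $2p+3$ parties. That is exactly the paper's proof for $n=2p+3$, so nothing is saved.

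Your reason for excluding $p=2$ is also wrong. Proposition~\ref{prop:blind} concerns self-dual functions, and the contraction of $h_S$ at $O$, twice the rank function of the configuration~\eqref{eq:JI}, is not self-dual. In fact \LB{} with $t=2$ takes the value $-2$ on the contracted qubit cube state. Your range $n\ge2p+3$ is fine: both the cube-state argument and your doubled code of the full configuration, transferred by Proposition~\ref{prop:sandwich}, work there.

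\textbf{The missing idea is to remove the party $C$ altogether.} Derive $C$ as $(A_1+B_1)\cap\dots\cap(A_M+B_M)$. Bound $h(C)$ from below by $\hcap(C)$, and the other terms containing $C$ from above by terms in the $A_i$ and $B_i$. This gives the $2M$-variable inequalities \LB$^*$ and \LA$^*$ of Supplementary Lemma~\ref{lem:JIstar}, valid under the same hypotheses on $t$. Apply them to the configuration $A_i=\langle e_i\rangle$, $B_i=\langle c-e_i\rangle$ without $C$, realized by doubled codes on $2M=2p+2$ parties and transferred by Proposition~\ref{prop:sandwich}. For $M\ge4$ one has $\hcap(C)=1$ and every error term vanishes, so \LB$^*$ fails by one unit over $\GF(p)$ and \LA$^*$ fails by one unit over $\GF(q)$. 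For $M=3$ one gets $\hcap(C)=0$ and neither inequality fails; this, not blindness, is why $p$ must be odd. Your fallback applies \LB$^*$ and \LA$^*$ to the cube state, which is the wrong witness because it still has $2p+3$ parties.

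\textbf{A smaller error in your first part.} Do not evaluate \LB{} directly on the lift of the doubled state. On sets of parties, $h_S(X)=r_p(X)+\sum_{i\in X}r_p(i)$. Since \LB{} takes the value $M^2$ on $\ind{C}$ and $M+1$ on each $\ind{B_i}$, it is positive there. You need the balanced form, as in the proof of Proposition~\ref{prop:sandwich}; it agrees with \LB{} on $r_p$ because the configuration has no coloops.
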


For $n=2p+3$, contract $h_S$ by the purifier and evaluate \LB{} or \LA{} with $t=p$: by Proposition~\ref{thm:reduction} and Lemma~\ref{lem:JI}, the pulled-back \LB{} is nonnegative on $\StabZ{q}{n}$ and the pulled-back \LA{} on $\StabZ{p}{n}$, while they take the value $-2$ on the generalized cube states over $\GF(p)$ and over $\GF(q)$, respectively, which are self-dual since $0<p-1<p<q$. For $n=2p+2$ with $p$ odd, \LB$^*$ and \LA$^*$ and Proposition~\ref{prop:sandwich} give both non-inclusions, with doubled codes as witnesses. The generalized cube states over $\GF(p)$ and over $\GF(q)$ with $t=p$, in units of $\log p$ and $\log q$, differ on exactly one bipartition (Supplementary Proposition~\ref{prop:onecut}); for $p=2$ it separates the vertices of the cube $\{0,1\}^3$ of odd weight from those of even weight. Yet no limit of positive combinations of entropy vectors of qubit stabilizer states lies on the ray of the entropy vector of the qutrit cube state.

Theorem~\ref{thm:divisibility} uses the witnesses of Theorem~\ref{thm:primes} over $\GF(p)$, the cube states over $\GF(2)$ and $\Z_4$, and doubled codes of the configuration~\eqref{eq:JI} over $\Z_{p^{j+1}}$, where $p^j$ is the largest power of a prime $p$ that divides $d'$ and $p^{j+1}$ divides $d$ (Supplementary Note~\ref{app:cubes-divisibility}). Its bound $2\max(d,d')+2$ is not optimal in general: besides Theorems~\ref{thm:primes} and~\ref{thm:twopowers}, for instance, $\StabZ{3}{n}\subsetneq\StabZ{9}{n}$ for $n\ge8$. Galois qudits of dimension $p^k$, with the Pauli group defined through the field trace~\cite{Ketkar2006}, give exactly the cone $\StabZ{p}{n}$ (Supplementary Proposition~\ref{prop:galois}).

\subsection{Secret sharing}
\label{sec:methods-qss}

For every state, stabilizer or not, the lift turns the access conditions into linear ones: for every set $T$ of players,
\begin{equation}\label{eq:qss-lift}
  h_S(R)+h_S(T)-h_S(RT)=I(R;T),
\end{equation}
because the terms $\sum_{P\in T}S(P)$ cancel. Through the lift, stabilizer schemes of prime local dimension and classical linear schemes correspond (Supplementary Lemmas~\ref{lem:qss-linear} and~\ref{lem:qss-css}); the converse direction is a version of a construction of Smith~\cite{Smith2000}. The following inequality holds in every group of odd order and assumes nothing about access.

\begin{lemma}[Fano lines in groups of odd order]\label{lem:fano-odd}
Let $T_0,T_1,\dots,T_6$ be subgroups of a finite abelian group of odd order, with rank function $h$, and write $I(0;X)=h(0)+h(X)-h(0X)$. Then
\begin{equation}\label{eq:fano-odd}
\begin{aligned}
&h(1)+h(2)+h(3)\\
&\quad\ge I(0;123)+I(0;14)+I(0;25)+I(0;36)\\
&\qquad-I(0;4)-I(0;5)-I(0;6)\\
&\qquad-I(0;234)-I(0;135)-I(0;126)-I(0;456).
\end{aligned}
\end{equation}
The same holds for subspaces of a vector space over a field of characteristic other than $2$, with $h$ the dimension. For the Fano configuration over $\GF(2)$, labeled so that $014$, $025$, $036$, $234$, $135$, $126$ and $456$ are its lines, the left side is $3$ and the right side is $4$.
\end{lemma}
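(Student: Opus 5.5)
The plan is to prove \eqref{eq:fano-odd} directly in the group, reading every term as the order of an intersection. For subgroups, $I(0;X)=\log\abs{T_0\cap T_X}$ with $T_X=\sum_{i\in X}T_i$. The inequality then says that $T_1+T_2+T_3$ must be large whenever $T_0$ meets the three sums $T_1+T_4$, $T_2+T_5$, $T_3+T_6$ in a large common part, beyond what is explained by $T_0$ meeting $T_4,T_5,T_6$ or the four other lines.

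Matching labels with Sec.~\ref{sec:reverse}, $0$ plays the role of $Z$, $1,2,3$ of the corners $A,B,C$, and $4,5,6$ of the diagonal points. The line $456$ is the $WXY$ line, which closes only in characteristic~$2$. So the lemma is a version of \LB{} with $t=2$, with $C$ replaced by a piece of $T_0$. I would rather prove it by hand than by transfer, so that only intersections with $T_0$ appear.

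\textbf{Step 1 (the common part).} Put $K_i=T_0\cap(T_i+T_{i+3})$ for $i=1,2,3$, and $K=K_1\cap K_2\cap K_3$. By submodularity of $\log$-orders of intersections inside $T_0$ (i.e. $\log\abs{P\cap Q}\ge\log\abs P+\log\abs Q-\log\abs{P+Q}$ with $P+Q\le T_0$), the quantity $\log\abs{K}$ is at least $\log\abs{K_1}+\log\abs{K_2}+\log\abs{K_3}-2h(0)$. I would instead compare with $T_0\cap T_{123}$ so that the term $I(0;123)$ absorbs this loss. The target is
\[
\log\abs K\;\ge\;I(0;14)+I(0;25)+I(0;36)-2\,I(0;123)-\cdots,
\]
up to the error terms. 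The precise choice of ambient subgroup, $T_0$ versus $T_0\cap T_{123}$, will be fixed by which one makes the bookkeeping close.

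\textbf{Step 2 (decomposition and the factor $2$).} For $c\in K$ write $c=a_1+b_4=a_2+b_5=a_3+b_6$ with $a_i\in T_i$ and $b_{i+3}\in T_{i+3}$. In the Fano configuration, $b_4\in T_2+T_3$, $b_5\in T_1+T_3$ and $b_6\in T_1+T_2$. In general these memberships hold only modulo the defects measured by $I(0;234)$, $I(0;135)$ and $I(0;126)$; this is where the relations $234$, $135$, $126$ enter. Combining the three expressions, together with the relation on $456$ whose defect is measured by $I(0;456)$, should express $2c$ as an element of $T_1+T_2+T_3$ built from the $a_i$. This is the analogue of the twisted diagonal copy of $C$ in the proof of Lemma~\ref{lem:JI}. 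Since the group has odd order, $c\mapsto 2c$ is injective. This yields a homomorphism from $K$, or from a subgroup of small index in $K$, into $T_1+T_2+T_3$. Its kernel is controlled by the non-uniqueness of the decompositions, which comes from $T_0\cap T_4$, $T_0\cap T_5$ and $T_0\cap T_6$. That is why $I(0;4)$, $I(0;5)$ and $I(0;6)$ are subtracted.

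\textbf{Step 3 (counting).} Let $\phi$ be this map. The remaining task is to show that its image is large enough inside $T_1+T_2+T_3$ that, combined with Step 1,
\[
h(1)+h(2)+h(3)\ge\log\abs{T_1+T_2+T_3}+\cdots
\]
gives \eqref{eq:fano-odd}. In the Fano case over $\GF(2)$ the map $c\mapsto 2c$ is zero, and the bound fails by exactly one unit ($3$ against $4$), as the statement records. As a sanity check, on the non-Fano configuration over $\GF(p)$ all defects vanish and the inequality should be tight.

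\textbf{Field version and main obstacle.} The field version follows by the same argument with dimensions, since $2$ is invertible. The main obstacle is Step~2 with the error terms. One has to choose the decompositions so that the defects of $234$, $135$, $126$ and $456$ each enter exactly once, with coefficient $1$. If the hand argument does not close, the fallback is transfer as in Sec.~\ref{sec:transfer}. That means applying \LB{} with $t=2$ to a derived family, for instance $(A_i,B_i,C)=(T_i,\,T_{i+3},\,K)$ after quotienting by the parts of $T_0$ visible from single points, and then bounding each term of \LB{} by the $I(0;\cdot)$ terms.
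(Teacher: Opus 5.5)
There is a genuine gap. Your argument is built on $K=K_1\cap K_2\cap K_3$, and this object is too small: no chain of the form ``lower bound on $\abs K$, then push $K$ into $T_1+T_2+T_3$'' can give the lemma. Here is a concrete instance. Over $\GF(3)$, with $z_1,z_2,y_1,y_2$ independent, take $T_0=\langle z_1,z_2\rangle$, $T_1=\langle z_1-y_1\rangle$, $T_4=\langle y_1\rangle$, $T_2=\langle z_2-y_2\rangle$, $T_5=\langle y_2\rangle$ and $T_3=T_6=0$. Then $I(0;14)=I(0;25)=1$ and every other $I(0;\cdot)$ in the lemma vanishes, so the inequality reads $2\ge2$ and is tight. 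Yet $K_1=\langle z_1\rangle$ and $K_2=\langle z_2\rangle$, so $K=0$. In your plan the terms $I(0;i,i{+}3)$ enter only through the lower bound on $\abs K$. So your Step~1 target $\log\abs K\ge I(0;14)+I(0;25)+I(0;36)-2I(0;123)-\cdots$ reads $0\ge2$ here, whatever defect terms of the lemma you put in the dots, since they are all zero. The ambient-group choice cannot be repaired either. Inside $T_0$ you lose $2h(0)$, and enlarging $T_0$ by an independent direct summand changes no term of the lemma but makes this loss arbitrarily large. The $K_i$ do not lie in $T_0\cap T_{123}$. Your sanity check is also off: on the non-Fano configuration $I(0;456)=1$, and it is exactly this defect that makes the inequality tight.

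The idea you are missing is that each $I(0;i,i{+}3)$ is paid for by its own $h(i)$, and only $I(0;123)$ needs the doubling argument. The paper proves the stronger inequality $I(0;123)\le\sum_ih(i\mid0,i{+}3)+I(0;234)+I(0;135)+I(0;126)+I(0;456)$. The lemma then follows from $h(i\mid0,i{+}3)=h(i)-I(0;i,i{+}3)+I(0;i{+}3)-I(i;i{+}3)$; this identity, not the kernel of a map on $K$, is where $I(0;4)$, $I(0;5)$ and $I(0;6)$ come from.

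For the stronger inequality, let $D$ be the group of triples $(a_1,a_2,a_3)\in T_1\times T_2\times T_3$ whose sum $s$ lies in $T_0$, so that $\sigma(D)=T_0\cap T_{123}$. Pass to the subgroup $D_0$ on which $a_i\in T_0+T_{i+3}$; its index is at most $\prod_i2^{h(i\mid0,i+3)}$. Write $a_i=s_i+b_{i+3}$ with \emph{separate} $s_i\in T_0$. Then $s-s_1=a_2+a_3+b_4\in T_0\cap T_{234}$, and likewise for $135$ and $126$. Summing gives $b_4+b_5+b_6=\sum_i(s-s_i)-2s\in T_0\cap T_{456}$, so $2s$ lies in the sum of the four defect intersections. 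Since doubling is injective on $T_0$, $\abs{\sigma(D_0)}$ is at most the order of that sum. Finally $\abs{T_0\cap T_{123}}=\abs D/\abs{\ker\sigma}\le\abs{\sigma(D_0)}\,\abs D/\abs{D_0}$, which gives the stronger inequality.

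Your instinct that $b_4\in T_2+T_3$ modulo $T_0\cap T_{234}$ is right. But it holds only after starting from $s=a_1+a_2+a_3\in T_0\cap T_{123}$ and allowing $s_i\ne s$. With a common $c\in K$ and no sum condition, the partial sums $b_4+a_2+a_3$ need not lie in $T_0$ at all. Your fallback via transfer through \LB{} is not specified enough to assess.
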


The proof (Supplementary Note~\ref{app:qss-contract}) follows the argument of Ref.~\cite[Prop.~4.4]{JafariKhazaei2021} for linear schemes, with quotients in place of complements, which need not exist in groups; the order of the group enters only through the injectivity of multiplication by $2$. For Proposition~\ref{prop:qss}~(i), let $\ell=\{x_1,x_2,x_3\}$ be a line, $q$ a player not on it, and $x_{i+3}$ the third point of the line through $q$ and $x_i$; apply Lemma~\ref{lem:fano-odd} to the contraction by $q$ of the lift, with $T_0$ from $R$ and $T_i$ from $x_i$. By Eq.~\eqref{eq:qss-lift} the terms $I(0;X)$ are $2S(R)$ for $X=123$, $14$, $25$ and $36$ and $0$ for the other seven sets, while the left side is at most $2\sum_iS(x_i)$, so $\sum_{P\in\ell}S(P)\ge4\,S(R)$. For $\Z_d$ qudits with $d$ not a power of $2$, the factor of the state on its $\Z_{p^a}$ qudits, for an odd prime power $p^a$ that divides $d$ exactly, is a stabilizer scheme of odd local dimension by itself, which gives the bound in numbers of qudits (Supplementary Note~\ref{app:qss-contract}). Part~(ii) follows from a bound of Jafari and Khazaei for $\GF(2)$-linear schemes~\cite[Prop.~4.5]{JafariKhazaei2021}, applied to the contraction of a qubit scheme in normal form (Supplementary Note~\ref{app:reduction-normal}) by one of $A$, $B$, $C$ and $Z$. Part~(iii) uses $(\lambda,\omega)$-decompositions in the sense of van Dijk, Jackson and Martin~\cite{vDJM1998} (here $\lambda$ is not the connectivity function) into seven ideal schemes for access structures in which one line and its complement exchange their status: relabeled cube codes over $\Z_d$ for $\Gamma_{\mathrm F}$, and the binary cube code with six relabeled codes over $\GF(4)$ for $\Gamma_{\mathrm{NF}}$. For part~(iv), the lift of an ideal scheme is a multiple of the rank function of $\mathrm{AG}(3,2)$ or of $R_8$ (Supplementary Lemma~\ref{lem:K-forcing}, by certified linear programs, or, without a computer, as for Theorem~\ref{thm:ideal}); a normal form of the Lagrangian subgroup of the state (Supplementary Note~\ref{app:reduction-stab}), reached along the lines of the Fano plane, then shows that $2x=0$ on the phase space of a player for $\Gamma_{\mathrm F}$, and that this phase space has odd order for $\Gamma_{\mathrm{NF}}$ (Supplementary Lemma~\ref{lem:K-two}). 

\emph{Outline of the proof of Theorem~\ref{thm:ideal}} (Supplementary Note~\ref{app:qss-general}, which uses the same step numbers; Fig.~\ref{fig:ideal}). Let $\mathcal M$ be $\mathrm{AG}(3,2)$ for $\Gamma_{\mathrm F}$ and $R_8$ for $\Gamma_{\mathrm{NF}}$, the matroids whose ports at $R$ are these access structures, and consider a scheme whose secret and seven shares all have dimension $D$. \emph{Reduction and purity.} Every share has entropy $\log D$, and by Eq.~\eqref{eq:qss-lift} the reference is an ideal element, in the sense of Mat\'u\v{s}, of the lift of the entropy vector, which is a polymatroid for every quantum state. By his Theorem~2~\cite{Matus2012}, the lift is a multiple of the rank function of a matroid whose circuits through $R$ are those of $\mathcal M$; both matroids are connected, so by Lehman's theorem~\cite{Lehman1964} this matroid is $\mathcal M$. Hence the joint state of $R$ and the shares is pure, and $S(X)=(2r_{\mathcal M}(X)-\abs X)\log D$ for every set $X$ of the eight systems; this step needs no computer, and the certified forcing lemma (Supplementary Lemma~\ref{lem:K-forcing}) gives the same. \emph{Step~1.} Identify the eight systems with the points of $\GF(2)^3$ as in Sec.~\ref{sec:qss}, and call a direction $\delta$ good if any two of the four pairs $\{v,v+\delta\}$ form a plane of $\mathcal M$: every direction for $\mathrm{AG}(3,2)$, and for $R_8$ the directions of the edges and of the body diagonals of the cube. Along a good direction, the four pairs have the entropies of a GHZ state of local dimension $D^2$, and the structure theorem of Hayden \emph{et al.} for equality in strong subadditivity~\cite{HJPW2004} shows that they are in such a state, in suitable bases (Supplementary Lemma~\ref{lem:K-ghz}). Since any three of the eight systems are in a maximally mixed state, every basis vector of a pair $\{v,v+\delta\}$ is maximally entangled, of the form $(A_j\otimes1)\ket{\Phi}$ for a unitary $A_j$ on $v$ alone, where $\ket{\Phi}$ is a fixed maximally entangled state of the pair; so the operators below are tensor products of unitaries on single systems. \emph{Step~2.} Every plane of $\mathcal M$ splits into pairs along at least two good directions, and for each such splitting the support of its reduced state is spanned by the products of corresponding basis vectors of its two pairs; comparing two of these spans shows that the unitaries $A_jA_1^{-1}$ commute up to scalars. By Supplementary Lemma~\ref{lem:K-nice} they form a nice error basis~\cite{Knill1996} indexed by an abelian group (this commutation is what rules out nonabelian index groups), and tensor products of such unitaries on the four systems of the plane give $D^2$ unitaries that fix the state. \emph{Step~3.} At each system these unitaries generate, up to scalars, one group that does not depend on the direction, and the groups of all systems are isomorphic. \emph{Step~4.} The unitaries from all planes fix the state, and their factors at each system lie in the group of Step~3, whose elements commute up to scalars; so the commutator of any two of them is a scalar that fixes the state, hence $1$. They generate an abelian group without scalars other than $1$, with at least $D^8$ elements, from the three planes formed by one pair of a good direction with each of its other pairs and from a plane that contains no pair of that direction, and at most $D^8$, because the projector onto their common fixed space, which contains the state, has trace $D^8$ divided by their number; so the state is a stabilizer state. \emph{Step~5.} A symplectic basis for the commutator form of the group at each system (Supplementary Lemma~\ref{lem:K-symp}) gives a unitary on that system after which the stabilizers are, up to phases, tensor products of Weyl operators of qudits whose dimensions multiply to $D$, on a phase space $V$ of order $D^2$. \emph{Step~6.} The phase-space labels of the stabilizers form a subgroup of $V^8$ whose projection to every set $X$ of systems has order $\abs V^{r_{\mathcal M}(X)}$. By Supplementary Lemma~\ref{lem:K-two}, the labels of the stabilizers that act trivially on $R$ take, on the players and after an automorphism of the phase space of each player, the normal form $(u,v,w,u+v,u+w,v+w,u+v+w)$ with $u,v,w\in V$; at $W$, $X$ and $Y$ they take $\abs V^2$ values if $\{R,W,X,Y\}$ is a plane of $\mathcal M$, as for $\mathrm{AG}(3,2)$, which forces $2x=0$ for every $x\in V$, and $\abs V^3$ values if it is not, as for $R_8$, which forces $\abs V$ to be odd. So the systems consist of qubits for $\Gamma_{\mathrm F}$ and of qudits of odd dimensions for $\Gamma_{\mathrm{NF}}$. \emph{Conversely,} copies of the qubit cube state and, for odd $D$, the cube state over $\Z_D$ are such schemes. Steps~1--6 use only the pure state and its entropies, so every pure state of eight systems of dimension $D$ with the entropies found in the first part is a stabilizer state up to local unitaries. For $D=2$ every automorphism of $V=\GF(2)^2$ is symplectic, so a code with Steane's recovery pattern and $D=2$ is Steane's code up to local unitaries (Supplementary Note~\ref{app:qss-general}).

Lemma~\ref{lem:fano-odd} also extends the bound $4/3$ for the port of the Fano matroid to abelian schemes, in the sense of Ref.~\cite{JafariKhazaei2021}, whose group has odd order; with Prop.~4.5 of that work it shows that, for the twelve-player union of the ports of the Fano and non-Fano matroids, every abelian scheme whose Sylow $2$-subgroup is elementary abelian has information ratio greater than $7/6$, the value that the mixed-linear schemes of Jafari and Khazaei approach, which bears on a question that they left open (Supplementary Note~\ref{app:qss-contract}).

\subsection{Four and five parties}
\label{sec:methods-five}

Let $C_4$ be the cone of pure-symmetric vectors $S$ on $N=\{0,1,2,3,4\}$ whose lift $h_S$ satisfies the Shannon inequalities and every instance
\begin{equation}\label{eq:ingleton}
  I(A;B|K)\le I(A;B|C K)+I(A;B|D K)+I(C;D|K)
\end{equation}
of the Ingleton inequality~\cite{Ingleton1971}, with $A,B,C,D$ nonempty and $A,B,C,D,K$ pairwise disjoint subsets of $N$.

\begin{proposition}[Four parties]\label{thm:four}
For every $d\ge2$, $\StabZ{d}{4}=C_4$. In particular, the stabilizer cone of at most four parties does not depend on the local dimension.
\end{proposition}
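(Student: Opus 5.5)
The plan is to prove both inclusions $\StabZ{d}{4}\subseteq C_4$ and $C_4\subseteq\StabZ{d}{4}$, using the dictionary of Proposition~\ref{thm:reduction}. That proposition identifies $\StabZ{d}{4}$, through $S\mapsto h_S$, with the self-dual slice of $\LcZ{d}{5}$. It also gives $h_S\mapsto\tfrac12\lambda_{h_S}=S$ as the inverse.

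\emph{Inclusion into $C_4$.} By fact~(a) of Sec.~\ref{sec:methods-proofs}, for a stabilizer state of $\Z_d$ qudits the lift $h_S$ is the rank function of five subgroups of a finite abelian group. Shannon inequalities hold for such rank functions, since subgroup rank functions are entropic, being group-characterizable in the sense of Chan and Yeung. Every instance~\eqref{eq:ingleton} of the Ingleton inequality also holds, because Ingleton follows from Shannon inequalities and one common information, and in a group the intersection $(U_A+U_K)\cap(U_B+U_K)$ supplies that common information. This is the standard derivation of Hammer et al., which uses only submodularity and the common-information identities; both hold for subgroups. Both families of inequalities are linear and closed, so they pass to the closed cone, which gives $\StabZ{d}{4}\subseteq C_4$ for every $d$.

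\emph{Reverse inclusion.} I would not enumerate extreme rays directly. Instead I would show $C_4\subseteq\StabZ{p}{4}$ for every prime $p$ dividing $d$, and then conclude with Corollary~\ref{cor:divisibility}, $\StabZ{p}{4}\subseteq\StabZ{d}{4}$. For the prime case I would invoke the known result, recalled in Sec.~\ref{sec:fourfive} and Fig.~\ref{fig:landscape}, that $\StabZ{p}{4}=\QLR_4$ (Refs.~\cite{LMRW2013,BCHS2020}). Then I would identify $\QLR_4$ with $C_4$. On five variables, the linear rank cone is cut out by Shannon and Ingleton instances together with the 24 DFZ inequalities. However, $C_4$ imposes only Shannon and Ingleton on the five-variable lift. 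So the real content is that, on the self-dual slice, the extreme rays of the polyhedral cone $\{h \text{ self-dual}: \text{Shannon}, \text{Ingleton}\}$ are all lifts of stabilizer states realizable over every prime field.

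The concrete route is to compute the extreme rays of $C_4$ up to the $S_5$ symmetry of purification. There is a finite list: Bell pairs, GHZ-type states on 3, 4 and 5 systems, and the few extra rays known from the four-party stabilizer cone. For each ray I would exhibit a CSS state over $\GF(p)$ realizing it, uniformly in $p$. This is plausible because these rays come from uniform matroids $U_{2,k}$ and their parallel extensions, and those are representable over every field of large enough size or via repetition codes. To make "every $p$" work, I would use one more qudit per party where needed, since $U_{2,4}$ needs $p\ge3$. For $p=2$ I would realize $U_{2,4}$-type rays through $\GF(4)$-linear codes viewed as $\GF(2)$ codes, via Supplementary Proposition~\ref{prop:galois}, which preserves the cone.

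The main obstacle is this last step: certifying that the extreme-ray list of $C_4$ is complete and that each ray is realized over $\GF(2)$ as well as odd primes. If the explicit list proves unwieldy, the fallback is to quote that $\QLR_4$, the qubit stabilizer cone from Ref.~\cite{LMRW2013}, coincides with $C_4$. Then realizability over every prime follows by tensoring field extensions as above.
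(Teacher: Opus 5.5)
Your first inclusion is correct and is the paper's argument: subgroup rank functions satisfy the Shannon inequalities, and after contracting by $K$ the intersection of the images of $U_A$ and $U_B$ is a common information, so every instance of~\eqref{eq:ingleton} holds on the lift. The reverse inclusion, however, is not established.

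\textbf{The primary route is essentially circular.} You invoke $\StabZ{p}{4}=\QLR_4$ for every prime $p$ as a known result from Refs.~\cite{LMRW2013,BCHS2020}. That per-prime equality is essentially the statement you are proving. The result recalled in Fig.~\ref{fig:landscape} and Sec.~\ref{sec:fourfive} is $\Stab_4=\QLR_4$ for all local dimensions together. Linden \emph{et al.} used qutrit states for two of the orbits, and orbit~3 is not the entropy vector of any four-qubit state at scale one. A per-prime statement follows from the literature only by combining two results: the hypergraph realizations of the extreme rays of $C_4$ in Ref.~\cite{BCHS2020}, and the theorem of Walter and Witteveen~\cite{WalterWitteveen2021} that hypergraph entropy vectors lie in $\StabZ{p}{n}$ for every prime $p$. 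You never invoke the second, and without it the citation does not give what you need.

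\textbf{The fallback fails.} Restriction of scalars from $\GF(p^k)$ to $\GF(p)$ (Supplementary Proposition~\ref{prop:galois}) never changes the characteristic. So knowing that the qubit cone equals $C_4$ says nothing about odd primes.

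\textbf{The concrete route is the paper's, but it is not carried out, and your picture of the rays is wrong.} The three-party GHZ vector is not extreme: twice it is the sum of the three Bell-pair vectors on those systems. The rays in orbits 5--7 of Supplementary Table~\ref{tab:four} do not come from uniform matroids or their parallel extensions. Their lifts take the value $4$ on some singletons and $2$ on others, so they are genuine polymatroids, and ``$U_{2,k}$ is representable over large enough fields'' does not cover them.

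\textbf{The missing idea} is a realization mechanism that works in every characteristic at once. The paper proceeds as follows:
\begin{itemize}
\item It finds the $46$ extreme rays of $C_4$ by exact enumeration with \texttt{lrs}.
\item It realizes each ray $r$ by an integer symmetric weight matrix whose cut blocks have rank $\kappa\,r(X)$ and all nonzero Smith invariants equal to $1$, with $\kappa=2$ for orbit~3.
\item Such blocks have the same rank over $\Q$ and modulo every prime. So the matrix reduced mod $p$ is a graph state with entropy vector $\kappa r$ for every $p$ simultaneously.
\item Corollary~\ref{cor:divisibility} then gives every $d$.
\end{itemize}
Without either this device or the Walter--Witteveen theorem, your proof of $C_4\subseteq\StabZ{d}{4}$ is incomplete.
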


The inclusion $\StabZ{d}{4}\subseteq C_4$ holds because subgroup arrangements have common informations, which imply the Ingleton inequality~\cite{Hammer2000}. Conversely, the cone $C_4$ has $46$ extreme rays in seven orbits (Supplementary Table~\ref{tab:four}; exact enumeration with \texttt{lrs}~\cite{lrs}), and each is realized, at scale one or two, by one integer weighted graph state over every prime field at once: the cut blocks of its weight matrix, the submatrices between the two sides of a cut, have all nonzero Smith invariants equal to $1$; Corollary~\ref{cor:divisibility} gives every $d$ (Supplementary Note~\ref{app:four}). Proposition~\ref{thm:four} also follows from the literature, in the closure and up to scaling: Linden, Mat\'u\v{s}, Ruskai and Winter characterized the four-party cone of all local dimensions jointly~\cite[Sec.~5]{LMRW2013}, and the extreme rays of $C_4$ are hypergraph entropy vectors~\cite[Sec.~3.3.1]{BCHS2020}, which lie in the stabilizer cone of every prime local dimension~\cite{WalterWitteveen2021}, hence, by Corollary~\ref{cor:divisibility}, of every local dimension. For Theorem~\ref{thm:five}, the common information $Z=(U_A+U_D)\cap(U_B+U_E)$ of a subgroup arrangement $U_A,\dots,U_F$ with rank function $h$ satisfies
\begin{equation}\label{eq:ci}
\begin{gathered}
  g(ADZ)=h(AD),\qquad g(BEZ)=h(BE),\\
  g(Z)=I_h(AD;BE),
\end{gathered}
\end{equation}
where $g$ is the rank function of the seven subgroups. A nonnegative integer combination of $26$ Shannon inequalities on these seven elements, in which $Z$ enters only through the three terms of Eq.~\eqref{eq:ci}, becomes the balanced six-variable linear rank inequality
\begin{equation}\label{eq:star-h}
\begin{aligned}
  &h(ABCD)+h(ABE)+h(BCE)+h(ADE)\\
  &\quad+h(ABF)+h(BCF)+h(ADF)+h(AEF)\\
  &\ge h(BC)+h(ABD)+h(AE)+h(ABCDE)\\
  &\quad+h(AF)+h(ABCF)+h(BEF)+h(ADEF),
\end{aligned}
\end{equation}
and eliminating $F$ by purity at $h=h_S$ gives~\eqref{eq:star}. The lift of $r$ satisfies every Ingleton instance on the six systems. The vector $r$ satisfies all $10{,}860$ facet inequalities of $\QLR_5$, the $105$ that are tight at $r$ have rank $30$, and the left side of~\eqref{eq:star} minus the right side is $-1$ at $r$; graph states with weights $0$ and $1$ give, over every prime field, entropy vectors that span $\R^{31}$ and a space of dimension $30$ on the hyperplane of~\eqref{eq:star} (Supplementary Note~\ref{app:five}).

\subsection{Computations}
\label{sec:methods-comp}

The finite computations behind the results are of five kinds (Supplementary Notes~\ref{app:four}, \ref{app:five}, \ref{app:blind}, \ref{app:transfer} and~\ref{app:qss}). First, polyhedral computations: the extreme rays of the four-party cone were enumerated exactly with \texttt{lrs}~\cite{lrs}, and those of $\QLR_5$ by a symmetry-aware adjacency decomposition~\cite{Bremner2009} in exact arithmetic, from its irredundant description by $10{,}860$ facets. We built this description from the Ingleton inequalities and the $24$ five-variable linear rank inequalities of Ref.~\cite{DFZ2009}, read from the table of the PSITIP software~\cite{Li2023psitip}, which lacks the Ingleton form of Eq.~(39) of Ref.~\cite{DFZ2009}, so we added it; built in the same way, the five-variable linear rank cone has the $7{,}943$ extreme rays in $162$ orbits reported in Ref.~\cite{DFZ2009}. The enumeration started from the $59$ orbits known before~\cite{BCHS2020gap} and one more, and stopped when the catalog exceeded a preset ceiling of $1.9$ million orbits (Supplementary Note~\ref{app:five-catalogue}). Second, certificates: for the orbits of the catalog, in a cutting-plane loop, linear programs ask for an extension of the lift of an orbit representative to a polymatroid on seven elements in which the added element is a common information of two sets of systems, as for~\eqref{eq:star}; infeasibility gives a Farkas certificate, which we converted to exact rational form and verified in integer arithmetic, with conditional Ingleton instances added for some orbits. The six-variable inequalities of Ref.~\cite{DFZ2009}, Eqs.~(44)--(63), were pulled back to linear forms on pure-symmetric vectors for all $720$ assignments of their variables to the six systems (Supplementary Note~\ref{app:five-dfz}). The certificates of Proposition~\ref{prop:blind} write $f+f\circ\Dual$ as nonnegative rational combinations of elemental Shannon inequalities, and rational multipliers for elemental Shannon inequalities and access equations certify each of the $2\times2\times255$ bounds which show that the lift of an ideal scheme, restricted to the reference and the players, is forced to be a multiple of a matroid rank function (Supplementary Lemma~\ref{lem:K-forcing}); a checker rebuilds every inequality and equation that a certificate uses from its description and verifies the identity in rational arithmetic. The form $\fnine$ was expanded by computer and checked by an independent re-derivation. Third, realizations: graph states over $\GF(2)$, $\GF(3)$ and $\GF(4)$, and integer weight matrices, were found by randomized searches, and a realization for every prime is certified either by the Smith forms of all cut blocks or by realizations over $\Q$ combined with realizations at their finitely many exceptional primes. Fourth, searches for schemes: the $\GF(4)$ sub-schemes of Proposition~\ref{prop:qss}~(iii) were found by a linear program over a large sample of ideal schemes over $\GF(2^k)$, $k\le4$, and the qubit scheme for $\Gamma_{\mathrm{NF}}$ with a secret of one qubit and shares of two qubits (Supplementary Note~\ref{app:qss-upper}) by a randomized search; both are verified on all $128$ sets of players. Fifth, an exhaustive enumeration: the stabilizer codes that encode one qubit into seven with distance at least $3$, through the eight-qubit graph states of the code and the reference, up to local Clifford unitaries and with the reference as a fixed vertex (the $1{,}044$ graphs on the seven sites up to relabeling, each with the $127$ nonempty sets of sites adjacent to the reference); this list is complete because every stabilizer state is equivalent under local Clifford unitaries to a graph state~\cite{VandenNest2004}; here the enumeration is the proof, and a verification script repeats it, generating the graphs itself (Supplementary Note~\ref{app:qss-color}). Statements about faces and facets rest on rank computations in exact arithmetic. Randomized searches and floating-point solutions only propose candidates; every result rests on an exact verification or on the exhaustive enumeration, and the one floating-point observation that we report (Supplementary Note~\ref{app:five-oneci}) is marked as such.

\subsection{Data, code and verification}
\label{app:repro}

All data, certificates and code are in the repository of Ref.~\cite{CaoSong2026data}, whose release \reprelease{} is archived at Zenodo; the code is under the MIT license and the data are under the CC BY 4.0 license. The code is written in Python, with NumPy, SciPy and SymPy, and in C; Normaliz, which the repository redistributes, serves only for cross-checks, and the linear programs of Supplementary Note~\ref{app:qss-lp} were solved in floating point with HiGHS and certified in exact arithmetic. The command \texttt{python3 selftest.py -{}-full} runs $59$ gate checks, which re-derive the stored certificates in exact rational arithmetic with a checker that is separate from the code that produced them, and must end with \texttt{ALL PASS}; it takes about $20$ minutes on a computer with two cores. The numbers quoted in this paper were recomputed by separate verification scripts, written for this paper (also with AI assistance; see the next section), that read the stored data but not the repository code, except for the transcriptions of the published inequalities, which were checked against the sources term by term. Theorem~\ref{thm:five} can also be checked with the Python standard library alone and without data files: the script \path{checks/check_thm5.py} rebuilds the $10{,}860$ facets of $\QLR_5$ from their $31$ classes, checks that $r$ satisfies them, that the facets tight at $r$ have rank $30$ and that the left side of~\eqref{eq:star} minus the right side is $-1$ at $r$, and verifies the certificate of~\eqref{eq:star}; \path{checks/check_rev_sec4.py}, which needs no data either, checks the facet statement. Supplementary Note~\ref{app:repo} gives the requirements, describes the gates and the scripts, and relates the numbering of the working notes in the repository, which contain the original statements and proofs, to ours.

\subsection{Use of AI tools}
\label{app:repro-ai}

This work made substantial use of generative AI: the Claude models Opus~5, Fable~5, Fable~5.1 and Opus~5.5 by Anthropic, between August and October 2026. Every computational step used in a proof rests on an exact certificate, an exact recomputation or an exhaustive enumeration, checked by verification programs written separately from the code that produced the results and, for most results, also by the gates described above (Supplementary Table~\ref{tab:repro}); the authors checked all proofs line by line. The models were used to design the computations, write the code in the repository and the verification scripts, run part of the computations, draft and check proofs, check the literature and the references, draft the text of this paper and write the TikZ code for the figures; no figure was produced by an image-generating model. The authors chose the research questions and the directions to pursue, decided which claims to make, and reviewed the output at each stage. The authors also asked ChatGPT (OpenAI) for comments on an earlier draft, and Microsoft Copilot (running OpenAI's GPT-6.1 Sol) for comments on the presentation of this version; no text, proof, result or computation in the paper was produced by them. No other AI tools were used. No AI tool is an author, and the authors take full responsibility for the content of this paper.

\subsection{Relation to earlier work}
\label{sec:methods-prior}

Supplementary Table~\ref{tab:summary} marks, result by result, which ingredients are new. For prime local dimensions every part of the dictionary follows from known results~\cite{GrossWalter2013,LMRW2013,Osthus2024}, although its converse direction does not seem to have been stated (Sec.~\ref{sec:methods-proofs}); for composite $d$ the converse direction appears to be new (Supplementary Note~\ref{app:reduction-related}). The nine seven-variable characteristic-dependent inequalities that we checked, seven from the literature, one of them in corrected form (Supplementary Note~\ref{app:misprints}), and the two of Lemma~\ref{lem:JI}, are blind on the self-dual slice when applied directly to the parties and the purifier (Proposition~\ref{prop:blind} and Supplementary Note~\ref{app:blind}). New are the transfer arguments, which yield inequalities that are not blind; inequalities for the configuration $J-I$ that are valid for finite abelian groups (Lemma~\ref{lem:JI}), for which inequalities over fields were known~\cite{PenaSarria2021,PenaMacias2023}; a rank inequality for groups of odd order that extends an argument of Jafari and Khazaei for linear schemes~\cite{JafariKhazaei2021} (Lemma~\ref{lem:fano-odd}); and the rigidity argument for Theorem~\ref{thm:ideal}. Classically, $\mathrm{AG}(3,2)$ is representable only in characteristic $2$ and $R_8$ only in other characteristics~\cite{Oxley2011}. The classical counterparts of the statements of Theorem~\ref{thm:ideal} on the dimension are due to Mat\'u\v{s}, who showed that the Fano and non-Fano matroids, $\mathrm{AG}(3,2)$ and $R_8$ contracted at any point, have representations by partitions only in degrees that are, respectively, powers of $2$ and odd~\cite{Matus1999}; applied to the partitions defined by the phase-space labels of a stabilizer scheme (Supplementary Note~\ref{app:qss-ideal}), this gives Proposition~\ref{prop:qss}~(iv) for $\Gamma_{\mathrm{NF}}$ and a power of $2$ for $\Gamma_{\mathrm F}$, as Theorem~\ref{thm:ideal} does, and his proof, which shows that the group defining a representation of the Fano matroid is a power of $\Z_2$, gives $d=2$ once that group is identified with the phase space of a player (Supplementary Note~\ref{app:qss-ideal}), as Supplementary Lemma~\ref{lem:K-two} does directly. Earlier examples of a dependence of quantum secret sharing on the local dimension concerned threshold schemes, which need a large enough alphabet when each share is a single qudit~\cite{CleveGottesmanLo1999,Keet2010}, and lower bounds above $1$ on share sizes were derived from entropy inequalities by Sarvepalli~\cite{Sarvepalli2011}. Other parts rest on known ingredients. For prime local dimensions, the bound of Proposition~\ref{prop:qss}~(i) follows from bounds of Jafari and Khazaei and of Pe\~na-Macias~\cite{JafariKhazaei2021,PenaMacias2023} for linear schemes, and that of~(ii) from a bound of Jafari and Khazaei; Proposition~\ref{thm:four} follows from the literature for prime local dimensions, in the closure and up to scaling, and with Corollary~\ref{cor:divisibility} for every $d$; and Theorem~\ref{thm:primes}, for at least $2p+3$ parties, from inequalities of Pe\~na-Macias and Sarria~\cite{PenaSarria2021}. For Theorem~\ref{thm:five}, what is new is the use of the purifier as a variable, the facet and the catalog: six-variable inequalities of Dougherty, Freiling and Zeger~\cite{DFZ2009} also detect the gap once the purifier is used in this way (Supplementary Note~\ref{app:five-dfz}). The codes of Proposition~\ref{prop:qss} use constructions of Smith~\cite{Smith2000} and of van Dijk, Jackson and Martin~\cite{vDJM1998}.
\section*{Data availability}
The data and certificates that support the findings of this article are openly available in the archived release \reprelease{} of the repository at Zenodo, \url{https://doi.org/10.5281/zenodo.23841282} (Ref.~\cite{CaoSong2026data}); the development repository is \url{https://github.com/RuifengCao/quantum-linear-rank-cone}. The Methods describe how to verify every result.

\section*{Code availability}
The code that produced the results, the verification scripts and the gate checks of Sec.~\ref{app:repro} are openly available in the same release, under the licenses stated in Sec.~\ref{app:repro}.

\begin{acknowledgments}
This study received no funding.
\end{acknowledgments}

\section*{Author contributions}
X.S. and R.C. contributed equally to this work. X.S.: Data curation, Formal analysis, Investigation, Resources, Validation, Writing -- review \& editing. R.C.: Conceptualization, Data curation, Formal analysis, Investigation, Methodology, Project administration, Software, Validation, Visualization, Writing -- original draft, Writing -- review \& editing.

\section*{Competing interests}
All authors declare no financial or non-financial competing interests.

%

\clearpage

\onecolumngrid
\setcounter{page}{1}
\renewcommand{\thepage}{S\arabic{page}}
\begin{center}
{\large\bfseries Supplementary Information}\\[4pt]
{\large \papertitle}\\[4pt]
Xidan Song and Ruifeng Cao\\[6pt]
{\small Equations, tables and figures of this Supplementary Information carry the prefix S, and its lemmas and propositions (Supplementary Lemmas and Propositions) are numbered by note.\\
All other numbers of equations, results, figures and tables refer to the main text; Example~\ref{ex:q2} is stated in Supplementary Note~\ref{app:five-catalogue}. The references cited here are listed at the end of this Supplementary Information.}
\end{center}
\label{si:start}
\twocolumngrid
\setcounter{section}{0}
\setcounter{equation}{0}
\setcounter{figure}{0}
\setcounter{table}{0}
\renewcommand{\thesection}{\arabic{section}}
\renewcommand{\thesubsection}{\thesection.\arabic{subsection}}
\renewcommand{\theequation}{S\arabic{equation}}
\renewcommand{\thefigure}{S\arabic{figure}}
\renewcommand{\thetable}{S\arabic{table}}
\renewcommand{\theapplemma}{\thesection.\arabic{applemma}}
\makeatletter
\def\p@subsection{}
\def\p@subsubsection{}
\def\@seccntformat#1{\def\@tempa{#1}\def\@tempb{section}%
  \ifx\@tempa\@tempb SUPPLEMENTARY NOTE \csname the#1\endcsname.\quad\else\csname the#1\endcsname.\quad\fi}
\makeatother

\section{Proof of the dictionary (Proposition~\ref{thm:reduction})}
\label{app:reduction}

Supplementary Fig.~\ref{fig:dictionary} summarizes the dictionary. Throughout this note, $d\ge 2$ is an integer, $\omega=e^{2\pi i/d}$, and logarithms are to base $2$. All groups are finite and abelian and are written additively.

\begin{figure}[t]
\centering
\begin{tikzpicture}[font=\footnotesize,
  box/.style={draw, rounded corners=3pt, align=center, inner sep=4pt, text width=2.9cm},
  arr/.style={-{Stealth[length=2mm]}, thick}]
\node[box] (st) at (0,0) {stabilizer states\\ of $\Z_d$ qudits on $N$\\ entropy vector $S$};
\node[box] (rk) at (4.9,0) {subgroup\\ arrangements on $N$,\\ exponent dividing $d$\\ rank function $h$};
\node[box, text width=2.3cm] (css) at (2.45,-2.4) {CSS states $\ket{C}$\\ $S=\lambda_{r_C}$};
\draw[arr] (st.east) -- node[above] {$S\mapsto h_S$} node[below] {(a)} (rk.west);
\draw[arr] (rk.south) |- node[pos=0.3, right] {$\lambda$\ \ (b)} (css.east);
\draw[arr] (css.west) -| node[pos=0.7, left] {$\subseteq$} (st.south);
\node[align=center] at (2.45,-1.1) {on self-dual $h$:\\ $S=\tfrac12\lambda_h$\ \ (c)};
\end{tikzpicture}
\caption{The dictionary of Proposition~\ref{thm:reduction} (facts (a)--(c) of the Methods). The lift $S\mapsto h_S$ sends stabilizer entropy vectors to self-dual subgroup rank functions on the $n+1$ systems (facts~(a) and~(c)); the connectivity map $\lambda$ sends every rank function to a CSS entropy vector (fact~(b)); and on self-dual vectors the two maps are inverse to each other up to the factor $2$ (fact~(c)).}
\label{fig:dictionary}
\end{figure}

\subsection{Relation to earlier work}
\label{app:reduction-related}

Fact~(a) is closely related to known results: Gross and Walter obtained stabilizer entropies from classical phase-space models and concluded that stabilizer states satisfy every balanced classical information inequality~\cite{si:GrossWalter2013}; Linden, Mat\'u\v{s}, Ruskai and Winter showed that they satisfy linear rank inequalities associated with subgroups of the error group~\cite{si:LMRW2013}; and Majenz showed, under a technical condition on the local dimension, that they respect further inequalities valid for random variables from linear codes~\cite{si:Majenz2014}. For prime local dimension, Osthus expressed the entropies of CSS states, and twice those of general stabilizer states, through ranks of submatrices and asked whether CSS states realize the entropy vectors of all stabilizer states~\cite[Sec.~3.7 and Chs.~4 and~6]{si:Osthus2024}; Corollary~\ref{cor:doubling} of the main text, which for prime $d$ follows from his two expressions, answers this at the level of cones. In quantum secret sharing, ideal schemes are described by self-dual matroids~\cite{si:Matus2012}, and every identically self-dual matroid representable over a finite field induces a pure-state scheme~\cite{si:SarvepalliRaussendorf2010}, the CSS state of a code that represents the matroid, whose lift is twice the rank function. To our knowledge, the converse of fact~(a), that every self-dual vector of the rank cone is attained, in the closure, by CSS states, has not been stated beyond this matroid case. Two features deserve emphasis: the symplectic structure plays no role at the level of cones ($\lambda_r$ is a CSS entropy vector for every subgroup arrangement, with no isotropy condition), although isotropy helps inside proofs (Proposition~\ref{prop:f12}); and nothing in Proposition~\ref{thm:reduction} depends on $d$ being prime, since subgroup arrangements satisfy every inequality that follows from Shannon inequalities and common information.

\emph{Purity is essential.} Proposition~\ref{thm:reduction} concerns pure-symmetric vectors. For $n=1$ and $N=\{0,1\}$, the vector with $S(0)=S(1)=S(01)=1$ is not pure-symmetric, although its lift $h_S=(2,2,3)$ is the rank function of two planes in $\GF(2)^3$ that meet in a line.

\subsection{Stabilizer states of \texorpdfstring{$\Z_d$}{Zd} qudits and their entropies}
\label{app:reduction-stab}

On $\mathbb{C}^d$ let $X\ket{j}=\ket{j+1}$ and $Z\ket{j}=\omega^j\ket{j}$, with $j\in\Z_d$. For $m$ qudits and $v=(x,z)\in\Z_d^{2m}$ put $W(v)=Z^{z_1}X^{x_1}\otimes\dots\otimes Z^{z_m}X^{x_m}$, the Weyl (generalized Pauli) operators. Then $W(v)W(v')=\omega^{-x\cdot z'}W(v+v')$, and $W(v)W(v')=\omega^{\langle v,v'\rangle}W(v')W(v)$ with the symplectic form $\langle v,v'\rangle=x'\cdot z-x\cdot z'$, which is nondegenerate on $\Z_d^{2m}$. Let $\mathcal{P}_m$ be the group generated by the $W(v)$ and the scalars $\xi$ with $\xi^{2d}=1$ (the generalized Pauli group; the extra phases accommodate the usual conventions for even $d$). Every $W(v)$ with $v\ne 0$ is traceless.

A pure \emph{stabilizer state} is a unit vector $\ket{\psi}$ that is, up to a phase, the unique common $+1$ eigenvector of an abelian subgroup $G\le\mathcal{P}_m$ containing no scalar other than $1$~\cite{si:Gottesman1999qudit,si:Hostens2005}. Projecting $G$ to phase space, $\xi W(v)\mapsto v$, is injective on $G$; its image $L$ is isotropic, $\langle L,L\rangle=0$. The projector onto the common $+1$ eigenspace is $\abs{G}^{-1}\sum_{g\in G}g$, of trace $d^m/\abs{G}$. Uniqueness of $\ket{\psi}$ forces $\abs{G}=d^m$, so $\abs{L}=d^m$. Since $\abs{L^\perp}=d^{2m}/\abs{L}$ for a nondegenerate form, $L=L^\perp$: $L$ is a Lagrangian subgroup, and $\ket{\psi}\bra{\psi}=d^{-m}\sum_{g\in G}g$. If $d$ is divisible by the square of a prime, $L$ need not be isomorphic to $\Z_d^m$: for $d=4$ and $m=1$, the operators $X^2$ and $Z^2$ generate such a group $G$, with common $+1$ eigenvector $(\ket{0}+\ket{2})/\sqrt2$ and $L\cong\Z_2^2$. For $d=p^k$, the CSS states of codes that are not free, as in Lemma~\ref{lem:A-realize} below, are of this kind. For squarefree $d$, by contrast, $L\cong\Z_d^m$ always: by the Chinese remainder theorem $L$ is the sum of its $p$-parts, which are Lagrangian subspaces of the symplectic $\GF(p)$-spaces $(d/p)\Z_d^{2m}\cong\GF(p)^{2m}$.

The qudits are distributed among the systems in $N$; system $i$ holds $m_i$ of them, and $m_X=\sum_{i\in X}m_i$. Write $V_X\le\Z_d^{2m}$ for the vectors supported on the qudits of $X$, $\pr_X$ for the projection onto these coordinates, and $L_X=L\cap V_X$.

\begin{applemma}[Entropy formula]\label{lem:A-entropy}
For every $X\subseteq N$,
\begin{equation}\label{eq:A-entropy}
  S(X)=m_X\log d-\log\abs{L_X}=\log\abs{\pr_X L}-m_X\log d .
\end{equation}
\end{applemma}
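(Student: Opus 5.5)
We need to prove the entropy formula $S(X)=m_X\log d-\log\abs{L_X}$, together with its second form $\log\abs{\pr_X L}-m_X\log d$.

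The plan is to compute the reduced state on $X$ directly from $\ket{\psi}\bra{\psi}=d^{-m}\sum_{g\in G}g$. Take the partial trace over the qudits of $N\setminus X$. Every $g=\xi W(v)$ factors as a tensor product of a Weyl operator on $X$ and one on the complement. The factor on the complement is traceless unless its label is zero, that is, unless $v\in V_X$. Hence
\[
\rho_X=d^{-m}d^{m-m_X}\sum_{g\in G_X}g|_X=d^{-m_X}\sum_{g\in G_X}g|_X ,
\]
where $G_X$ is the subgroup of $G$ whose phase-space labels lie in $L_X$.

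Next I would observe that $G_X$ is an abelian group of operators on $X$ containing no nontrivial scalar, because projection to phase space is injective on $G$. Therefore $P=\abs{G_X}^{-1}\sum_{g\in G_X}g|_X$ is the projector onto its common $+1$ eigenspace. Its trace is $d^{m_X}/\abs{G_X}$, since only the identity contributes to the trace. It follows that $\rho_X=(\abs{G_X}/d^{m_X})\,P$ is maximally mixed on a subspace of dimension $d^{m_X}/\abs{L_X}$, which gives $S(X)=m_X\log d-\log\abs{L_X}$.

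For the second form, I would use the exact sequence $0\to L_{N\setminus X}\to L\to\pr_X L\to0$, where the kernel of $\pr_X$ restricted to $L$ is $L\cap V_{N\setminus X}$. This gives $\abs{\pr_X L}=d^m/\abs{L_{N\setminus X}}$. Then the first form applied to $N\setminus X$, together with purity $S(X)=S(N\setminus X)$, yields
\[
S(X)=m_{N\setminus X}\log d-\log\abs{L_{N\setminus X}}=\log\abs{\pr_XL}-m_X\log d .
\]
Alternatively, the same identity follows without invoking purity from duality: $L_X=(\pr_X L)^{\perp}$ inside $\Z_d^{2m_X}$, which holds because $L=L^\perp$.

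The only delicate point is the phases for even $d$: the restrictions $g|_X$ must form a group without scalars, so that $P$ really is a projector. I would handle this by noting that the restriction map $g\mapsto g|_X$, with the complementary factor equal to the identity, is a group homomorphism on $G_X$. Its image is therefore a group, and it contains no nontrivial scalar because $G$ contains none.
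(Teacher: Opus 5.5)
Your proof is correct and matches the paper's argument step for step. The partial trace of $d^{-m}\sum_{g\in G}g$ keeps only $G_X$, and restriction to $X$ is a faithful representation in which only the identity has nonzero trace, which gives the first form; the second form then follows from the kernel $L_{N\setminus X}$ of $\pr_X$ on $L$ together with purity, and your alternative via $L_X=(\pr_X L)^\perp$ is also valid (the paper uses that identity later, in the compression step for $\Z_{2^a}$ qudits).
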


\begin{proof}
Taking the partial trace of $d^{-m}\sum_{g\in G}g$ over the complement of $X$ annihilates every $g$ that acts nontrivially there, because nonzero Weyl operators are traceless. Hence $\rho_X=d^{-m_X}\sum_{g\in G_X}g|_X$, where $G_X$ is the subgroup of elements supported on $X$ and $g|_X$ is the restriction. The map $g\mapsto g|_X$ is a faithful unitary representation of $G_X$ in which only the identity has nonzero trace, so $\Pi_X=\abs{G_X}^{-1}\sum_{g\in G_X}g|_X$ is a projector of rank $d^{m_X}/\abs{G_X}$. Thus $\rho_X$ is maximally mixed on its support and $S(X)=m_X\log d-\log\abs{G_X}$, with $\abs{G_X}=\abs{L_X}$. For the second expression, note that $\pr_X$ restricted to $L$ has kernel $L_{N\setminus X}$, so $\log\abs{\pr_XL}=m\log d-\log\abs{L_{N\setminus X}}$, and use $S(X)=S(N\setminus X)$.
\end{proof}

For qubits this is equivalent to the formula of Fattal \emph{et al.}~\cite{si:Fattal2004}; for general $d$ see also Refs.~\cite{si:GrossWalter2013,si:LMRW2013}. For a graph state of $\Z_d$ qudits with symmetric weight matrix $W$ (zero diagonal), the Lagrangian subgroup is the row space of $[\,I\mid W\,]$, and Eq.~\eqref{eq:A-entropy} gives $S(X)=\log\abs{W[Q_X,Q_{N\setminus X}]\,\Z_d^{m_{N\setminus X}}}$, where $Q_X$ is the set of qudits of $X$. For prime $d$ this is $\rank_{\GF(d)}W[Q_X,Q_{N\setminus X}]$ in units of $\log d$; for $d=2$ it is the cut-rank formula of Hein, Eisert and Briegel~\cite{si:Hein2004}.

\emph{Two definitions.} If $d$ is divisible by the square of a prime, the stabilizer states defined above include states outside the Clifford orbit of $\ket{0\cdots0}$~\cite{si:FengLuo2024}. The separations between stabilizer cones proved in the main text hold for both definitions. Each explicit witness is the CSS state (Supplementary Note~\ref{app:reduction-css}) of a code over $\Z_{d_0}$, for some $d_0\mid d$, with a generator matrix $[\,I\mid A\,]$ in systematic form, up to the order of the coordinates. Over $\Z_d$, the matrix $[\,I\mid(d/d_0)A\,]$ generates a systematic code with the same entropy vector: each projection of the new code has $(d/d_0)^{k}$ times as many elements as that of the old one, where $k$ is the number of systematic coordinates it keeps; the logarithm of this factor is a sum of contributions of the single systems, which cancels in the entropy formula $S=\lambda_{r_C}$ of fact~(b) of the Methods (Sec.~\ref{sec:methods-proofs}). The CSS state of a systematic code is prepared from $\ket{0\cdots0}$ by Fourier and SUM gates, so it lies in the Clifford orbit. The first implication of Proposition~\ref{prop:sandwich}, the one used for separations, holds for both definitions as well, although its witnesses are built from arbitrary codes (Supplementary Note~\ref{app:cubes-sandwich}).

\subsection{Self-duality and the lift}
\label{app:reduction-lift}

\begin{applemma}[Fact (c); Mat\'u\v{s}~\cite{si:Matus2012}]\label{lem:A-selfdual}
If $S$ is pure-symmetric on $N$, then $h_S^*=h_S$ and $\lambda_{h_S}=2S$. Conversely, if $h$ is a self-dual set function with $h(\emptyset)=0$, then $S=\tfrac12\lambda_h$ is pure-symmetric and $h_S=h$.
\end{applemma}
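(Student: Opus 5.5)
The plan is a direct computation from the definitions of the lift, the dual and the connectivity function; nothing beyond $S(\emptyset)=0$ and $S(X)=S(N\setminus X)$ is needed. Write $s(X)=\sum_{i\in X}S(i)$, so that $h_S=S+s$ with $s$ modular.

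\emph{Forward direction.} First, the dual of a modular function is itself: $s^*(X)=s(X)+s(N\setminus X)-s(N)=s(X)$. So it suffices to compute $S^*$. Since $S(i)=h_S(i)-S(i)$ would be confusing, I compute $h_S^*$ directly:
\[
h_S^*(X)=\sum_{i\in X}h_S(i)+h_S(N\setminus X)-h_S(N).
\]
Substituting $h_S(i)=2S(i)$, $h_S(N\setminus X)=S(N\setminus X)+s(N)-s(X)$ and $h_S(N)=S(N)+s(N)=s(N)$ gives $h_S^*(X)=2s(X)+S(N\setminus X)-s(X)=s(X)+S(X)=h_S(X)$, using purity in the last step. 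For the connectivity function,
\[
\lambda_{h_S}(X)=S(X)+s(X)+S(N\setminus X)+s(N\setminus X)-s(N)=2S(X).
\]

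\emph{Converse.} Let $h$ be self-dual with $h(\emptyset)=0$, and put $S=\tfrac12\lambda_h$. Pure-symmetry holds because $\lambda_h$ is symmetric under complement by definition, and because $\lambda_h(\emptyset)=h(\emptyset)+h(N)-h(N)=0$. To check $h_S=h$, evaluate self-duality at $X=N$: this gives $h(N)=\sum_i h(i)+h(\emptyset)-h(N)$, so $2h(N)=\sum_i h(i)$. Evaluating $\lambda_h$ at singletons, and using self-duality at $X=\{i\}$ in the form $h(N\setminus i)-h(N)=h(i)-h(i)=0$, yields $S(i)=\tfrac12 h(i)$.

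For general $X$, self-duality gives $h(N\setminus X)-h(N)=h(X)-\sum_{i\in X}h(i)$. Hence
\[
S(X)=\tfrac12\bigl(2h(X)-\textstyle\sum_{i\in X}h(i)\bigr)=h(X)-\sum_{i\in X}S(i),
\]
which is exactly $h_S=h$.

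There is no real obstacle here. The only care needed is bookkeeping, namely using self-duality at $X=N$ and at singletons rather than assuming $h(N)$ or $h(N\setminus i)$ have any particular value.
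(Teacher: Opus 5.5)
Your computation is correct and is essentially the paper's own proof: a direct check of the forward direction from the definitions of $h_S$, $h^*$ and $\lambda_h$, and of the converse from self-duality at singletons and at general $X$. One aside should be removed. The claim $s^*=s$ is false, because $s(X)+s(N\setminus X)-s(N)=0$ for modular $s$, so the dual of a modular function is $0$ and $h_S^*=S^*$; since your direct computation of $h_S^*$ never uses the claim, the argument is unaffected.
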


\begin{proof}
For pure-symmetric $S$ we have $S(N)=0$ and $S(N\setminus X)=S(X)$, so
\begin{align*}
  h_S^*(X)&=\sum_{i\in X}2S(i)+S(N\setminus X)+\sum_{i\notin X}S(i)-\sum_{i\in N}S(i)\\
  &=h_S(X),\\
  \lambda_{h_S}(X)&=S(X)+S(N\setminus X)-S(N)\\
  &\quad+\sum_{i\in X}S(i)+\sum_{i\notin X}S(i)-\sum_{i\in N}S(i)=2S(X).
\end{align*}
Conversely, self-duality gives $h(N\setminus X)=h(X)+h(N)-\sum_{i\in X}h(i)$, hence $\lambda_h(X)=2h(X)-\sum_{i\in X}h(i)$ and $\lambda_h(i)=h(i)$. So $S=\tfrac12\lambda_h$ satisfies $S(X)=S(N\setminus X)$, $S(\emptyset)=0$, and $h_S(X)=h(X)-\tfrac12\sum_{i\in X}h(i)+\tfrac12\sum_{i\in X}h(i)=h(X)$.
\end{proof}

We use the following duality for finite abelian groups. For a group $A$ let $\widehat A=\mathrm{Hom}(A,\mathbb{Q}/\Z)$ be its character group; $\abs{\widehat A}=\abs{A}$, and $\widehat A$ has the same exponent as $A$. A homomorphism $\varphi\colon A\to B$ has a dual $\widehat\varphi\colon\widehat B\to\widehat A$, $\chi\mapsto\chi\circ\varphi$. The image of $\widehat\varphi$ is isomorphic to $\widehat{\varphi(A)}$, because characters of $\varphi(A)$ extend to $B$; hence $\abs{\widehat\varphi(\widehat B)}=\abs{\varphi(A)}$.

\begin{applemma}[Projections versus sums]\label{lem:A-proj}
Let $\Lambda\le\bigoplus_{i\in N}Q_i$ be a subgroup of a direct sum of finite abelian groups, let $\varphi_i\colon\Lambda\to Q_i$ be the coordinate maps, and put $U_i=\widehat{\varphi_i}(\widehat{Q_i})\le\widehat\Lambda$. Then $\abs{\pr_X\Lambda}=\abs{\sum_{i\in X}U_i}$ for every $X\subseteq N$.
\end{applemma}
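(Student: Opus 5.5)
The plan is to compute $\abs{\pr_X\Lambda}$ as the order of the image of one homomorphism and then to dualize that homomorphism. Write $\Lambda\le\bigoplus_{i\in N}Q_i$ and, for $X\subseteq N$, set $\varphi_X=(\varphi_i)_{i\in X}\colon\Lambda\to\bigoplus_{i\in X}Q_i$. Then $\pr_X\Lambda=\varphi_X(\Lambda)$ by definition, since $\pr_X$ restricted to $\Lambda$ is exactly $\varphi_X$.

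By the fact recorded just before the lemma, $\abs{\widehat{\varphi_X}(\widehat{B})}=\abs{\varphi_X(\Lambda)}$, where $B=\bigoplus_{i\in X}Q_i$. It therefore remains to identify the image of the dual map with $\sum_{i\in X}U_i$.

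For this, use the canonical identification $\widehat{B}=\bigoplus_{i\in X}\widehat{Q_i}$. A character $\chi$ of $B$ is the sum of its restrictions $\chi_i$ to the summands, so $\chi(q)=\sum_{i\in X}\chi_i(q_i)$. Hence
\[
\widehat{\varphi_X}(\chi)=\chi\circ\varphi_X=\sum_{i\in X}\chi_i\circ\varphi_i=\sum_{i\in X}\widehat{\varphi_i}(\chi_i).
\]
As the $\chi_i$ range independently over $\widehat{Q_i}$, the image of $\widehat{\varphi_X}$ is exactly $\sum_{i\in X}\widehat{\varphi_i}(\widehat{Q_i})=\sum_{i\in X}U_i$. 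Combining this with the previous step gives $\abs{\pr_X\Lambda}=\abs{\sum_{i\in X}U_i}$. The case $X=\emptyset$ is trivial: both sides equal $1$.

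There is no real obstacle. The only substantive input is the claim that $\abs{\widehat\varphi(\widehat B)}=\abs{\varphi(A)}$, which is already justified in the text: characters of the subgroup $\varphi(A)\le B$ extend to $B$, and this uses the divisibility of $\mathbb{Q}/\Z$. The rest is the additivity of characters over a direct sum.
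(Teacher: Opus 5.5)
Your proposal is correct and follows the paper's own proof: the paper also dualizes $\pr_X\colon\Lambda\to\bigoplus_{i\in X}Q_i$, notes that the dual map sends $\bigoplus_{i\in X}\widehat{Q_i}$ onto $\sum_{i\in X}U_i$, and then invokes the preceding remark that $\abs{\widehat\varphi(\widehat B)}=\abs{\varphi(A)}$. The only difference is that you spell out the identification of the image through the decomposition $\chi=\sum_{i\in X}\chi_i$ of characters of a direct sum, which the paper leaves implicit.
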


\begin{proof}
The dual of $\pr_X\colon\Lambda\to\bigoplus_{i\in X}Q_i$ maps $\bigoplus_{i\in X}\widehat{Q_i}$ onto $\sum_{i\in X}U_i$; apply the preceding remark.
\end{proof}

\begin{applemma}[Fact (a): the lift]\label{lem:A-lift}
Let $S$ be the entropy vector of a stabilizer state of $\Z_d$ qudits. Then there are subgroups $U_i$ ($i\in N$) of a finite abelian group of exponent dividing $d$ such that $h_S(X)=\log\abs{\sum_{i\in X}U_i}$ for all $X\subseteq N$. For $d=p$ prime, the $U_i$ are subspaces of a $\GF(p)$-vector space.
\end{applemma}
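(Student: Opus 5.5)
The plan is to pass from the state to its Lagrangian subgroup, use the second form of the entropy formula, Supplementary Lemma~\ref{lem:A-entropy}, and then apply Supplementary Lemma~\ref{lem:A-proj}, which converts orders of projections into orders of sums of dual subgroups.

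Let the stabilizer state have Lagrangian subgroup $L\le\Z_d^{2m}$. Regroup the coordinates by systems: $\Z_d^{2m}=\bigoplus_{i\in N}Q_i$ with $Q_i=\Z_d^{2m_i}$, the phase space of the qudits held by system $i$. Then $\pr_X L$ is exactly the projection of $\Lambda=L$ onto $\bigoplus_{i\in X}Q_i$. By Supplementary Lemma~\ref{lem:A-entropy},
\[
  S(X)=\log\abs{\pr_X L}-m_X\log d ,
\]
and in particular $S(i)=\log\abs{\pr_i L}-m_i\log d$.

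Summing over the systems in $X$ gives
\[
  h_S(X)=\log\abs{\pr_XL}+\sum_{i\in X}\bigl(\log\abs{\pr_iL}-2m_i\log d\bigr).
\]
The projection term is already a subgroup rank, since by Supplementary Lemma~\ref{lem:A-proj} $\abs{\pr_X L}=\abs{\sum_{i\in X}U_i}$ with $U_i=\widehat{\varphi_i}(\widehat{Q_i})\le\widehat L$. The extra term $\sum_{i\in X}\log\abs{\pr_iL}$ is modular, so I absorb it by a direct sum. Put $G=\widehat L\oplus\bigoplus_{i\in N}P_i$ with $P_i\cong\pr_i L$, and replace $U_i$ by $U_i'=U_i\oplus P_i$. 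Then $\abs{\sum_{i\in X}U_i'}=\abs{\sum_{i\in X}U_i}\prod_{i\in X}\abs{P_i}$.

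The remaining term $-2m_i\log d$ is the one real obstacle: a negative modular shift cannot be absorbed by enlarging subgroups. I expect it to be cancelled by the Lagrangian property, but this needs care. Since $L=L^\perp$, the restriction $L_i=L\cap V_i$ equals $(\pr_i L)^{\perp}$ inside $Q_i$ under the nondegenerate symplectic form. Hence $\abs{\pr_iL}\,\abs{L_i}=d^{2m_i}$, that is,
\[
  \log\abs{\pr_iL}-2m_i\log d=-\log\abs{L_i}.
\]
So I should instead use a version of Supplementary Lemma~\ref{lem:A-proj} applied to a quotient that removes the local parts. Take $\Lambda'=L/\bigoplus_i L_i$ with coordinate maps into $Q_i'=\pr_iL/L_i$. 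Then $\abs{\pr_X\Lambda'}=\abs{\pr_XL}/\prod_{i\in X}\abs{L_i}$, because the kernel of $\pr_X$ restricted to $\bigoplus_i L_i$ is $\bigoplus_{i\notin X}L_i$.

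Combining the pieces, $h_S(X)=\log\abs{\pr_XL}-\sum_{i\in X}\log\abs{L_i}+\sum_{i\in X}\log\abs{\pr_iL}-\sum_{i\in X}2m_i\log d$ plus the adjustments above. I would recheck this bookkeeping against $h_S(i)=2S(i)$ before finalizing the choice of groups, either $\widehat{\Lambda'}\oplus\bigoplus_i P_i$ or a variant.

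Every group used is built from subgroups, quotients, duals and direct sums of $\Z_d$-modules, so its exponent divides $d$. For $d=p$ prime all of them are $\GF(p)$-vector spaces, which gives the last sentence of the lemma.
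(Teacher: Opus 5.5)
Your route is the paper's: quotient $L$ by its local parts, pass to $\Lambda'=L/\bigoplus_iL_i$ with coordinate maps $\varphi_i'$ into $Q_i'=\pr_iL/L_i$, and apply Supplementary Lemma~\ref{lem:A-proj}. Your two identities are correct. The first is $\log\abs{\pr_iL}-2m_i\log d=-\log\abs{L_i}$; it also follows at once from the two expressions in Supplementary Lemma~\ref{lem:A-entropy}, which is how the paper writes $h_S(X)=\log\abs{\pr_XL}-\sum_{i\in X}\log\abs{L_i}$ directly. The second is $\abs{\pr_X\Lambda'}=\abs{\pr_XL}/\prod_{i\in X}\abs{L_i}$. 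It holds because the kernel of $\pr_XL\to\bigoplus_{i\in X}Q_i'$ is $\bigoplus_{i\in X}L_i$, using $L_i\le L$.

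The final assembly, however, is wrong as written. Your ``combining the pieces'' line counts the local correction twice. The terms $\sum_{i\in X}\bigl(\log\abs{\pr_iL}-2m_i\log d\bigr)$ \emph{are} $-\sum_{i\in X}\log\abs{L_i}$, which the quotient already removes. Your candidate $\widehat{\Lambda'}\oplus\bigoplus_iP_i$ would therefore give $\log\abs{\sum_{i\in X}U_i'}=h_S(X)+\sum_{i\in X}\log\abs{\pr_iL}$. This already fails at $X=\{i\}$ whenever $m_i>0$.

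The fix is to drop the summands $P_i$ altogether. Your identities give $h_S(X)=\log\abs{\pr_XL}-\sum_{i\in X}\log\abs{L_i}=\log\abs{\pr_X\Lambda'}$. Supplementary Lemma~\ref{lem:A-proj} applied to $\Lambda'$ and the $Q_i'$ then yields $U_i=\widehat{\varphi_i'}\bigl(\widehat{Q_i'}\bigr)\le\widehat{\Lambda'}$ with $h_S(X)=\log\abs{\sum_{i\in X}U_i}$. The check you proposed confirms this: $\log\abs{U_i}=\log\abs{\pr_iL}-\log\abs{L_i}=2S(i)=h_S(i)$. With that change your argument coincides with the paper's, and your closing remark on exponents finishes it: $\widehat{\Lambda'}$ has the exponent of a subquotient of $\Z_d^{2m}$, and for prime $d$ everything is a $\GF(p)$-vector space.
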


\begin{proof}
By Eq.~\eqref{eq:A-entropy}, $S(i)=m_i\log d-\log\abs{L_i}$ and $S(X)=\log\abs{\pr_XL}-m_X\log d$, so
\[
  h_S(X)=\log\abs{\pr_XL}-\sum_{i\in X}\log\abs{L_i}.
\]
Since $L_i=L\cap V_i$ consists of elements of $L$ supported on system $i$, $L_i\le\pr_iL$. Put $Q_i=\pr_iL/L_i$, and let $\Lambda$ be the image of $L$ under $L\to\bigoplus_iQ_i$, $v\mapsto(\pr_iv+L_i)_i$. The map $\pr_XL\to\bigoplus_{i\in X}Q_i$ has image $\pr_X\Lambda$ and kernel $\bigoplus_{i\in X}L_i$: an element of $\pr_XL$ whose components lie in the $L_i$ is a sum of elements of the $L_i$, and conversely each $L_i\le L$ lies in $\pr_XL$. Hence $\log\abs{\pr_X\Lambda}=h_S(X)$, and Lemma~\ref{lem:A-proj} gives subgroups $U_i\le\widehat\Lambda$ with $h_S(X)=\log\abs{\sum_{i\in X}U_i}$. The exponent of $\widehat\Lambda$ equals that of $\Lambda$, a subquotient of $\Z_d^{2m}$, and so divides $d$. For prime $d$ all groups involved are $\GF(d)$-vector spaces.
\end{proof}

Lemma~\ref{lem:A-lift} should be compared with the classical model of Gross and Walter~\cite[Thm.~2]{si:GrossWalter2013}, whose phase-space entropies are $\log\abs{\pr_XL}=S(X)+m_X\log d$ (this is the rank function $r_L$ used in Corollary~\ref{cor:doubling}), and with the abelian quotients in the entropy formula of Ref.~\cite[Prop.~9]{si:LMRW2013}. The lift differs from $r_L$ by the modular function $X\mapsto\sum_{i\in X}(S(i)-m_i\log d)$; this particular choice is what makes $h_S$ self-dual (Lemma~\ref{lem:A-selfdual}). In the normal form of Lemma~\ref{lem:A-normal} below, $h_S=r_L$.

\subsection{CSS states}
\label{app:reduction-css}

Let $C\le\Z_d^m$ be a subgroup and let $C^\perp=\{b\in\Z_d^m:b\cdot c=0\ \forall c\in C\}$. Then $\abs{C}\,\abs{C^\perp}=d^m$, and the state $\ket{C}=\abs{C}^{-1/2}\sum_{c\in C}\ket{c}$ is the common $+1$ eigenvector of the commuting operators $X^c=\bigotimes_jX^{c_j}$ ($c\in C$) and $Z^b=\bigotimes_jZ^{b_j}$ ($b\in C^\perp$). These generate an abelian group of order $\abs{C}\abs{C^\perp}=d^m$ without nontrivial scalars, so $\ket{C}$ is a stabilizer state, the CSS state of $C$~\cite{si:CalderbankShor1996,si:Steane1996}.

\begin{applemma}[Fact (b): CSS entropies]\label{lem:A-css}
Distribute the $m$ coordinates of $C$ among the systems in $N$ and put $r_C(X)=\log\abs{\pr_XC}$. Then the entropy vector of $\ket{C}$ is $\lambda_{r_C}$.
\end{applemma}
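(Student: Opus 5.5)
The plan is to apply the entropy formula of Lemma~\ref{lem:A-entropy} to the stabilizer group of $\ket{C}$ and then rewrite the result in terms of projections of $C$. As recalled above, $\ket{C}$ is stabilized by the group $G$ generated by $X^c$ ($c\in C$) and $Z^b$ ($b\in C^\perp$). Its Lagrangian subgroup is therefore
\[
  L=\{(x,z): x\in C,\ z\in C^\perp\}\cong C\oplus C^\perp ,
\]
which splits into an $X$-part and a $Z$-part. Its elements supported on the qudits of $X$ form
\[
  L_X=(C\cap V_X)\oplus(C^\perp\cap V_X),
\]
where $V_X$ now denotes the vectors of $\Z_d^m$ supported on the coordinates held by $X$. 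Lemma~\ref{lem:A-entropy} then gives
\[
  S(X)=m_X\log d-\log\abs{C\cap V_X}-\log\abs{C^\perp\cap V_X}.
\]

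The next step expresses both intersection terms through projections of $C$. Since $\pr_X$ restricted to $C$ has kernel $C\cap V_{N\setminus X}$, we get $\log\abs{C\cap V_{N\setminus X}}=\log\abs{C}-r_C(X)$.

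For the $C^\perp$ term I would use character duality over $\Z_d$. The group $C^\perp\cap V_X$ is the annihilator, inside $\Z_d^{m_X}$, of $\pr_XC$. This holds because a vector $b$ supported on $X$ satisfies $b\cdot c=b\cdot\pr_Xc$. Since the dot product is a perfect pairing on $\Z_d^{m_X}$, $\abs{\mathrm{Ann}(H)}=d^{m_X}/\abs{H}$ for every subgroup $H$, and so
\[
  \log\abs{C^\perp\cap V_X}=m_X\log d-r_C(X).
\]
This annihilator count is the only point needing care for composite $d$, and I expect it to be the main (mild) obstacle: it is standard that $\Z_d^{k}$ is self-dual via $\omega^{b\cdot c}$, and the annihilator of a subgroup $H$ is the dual of $\Z_d^{k}/H$.

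Substituting both expressions, with the first applied to $X$ itself (i.e.\ $\log\abs{C\cap V_X}=\log\abs{C}-r_C(N\setminus X)$), gives
\[
  S(X)=m_X\log d-\log\abs{C}+r_C(N\setminus X)-m_X\log d+r_C(X)=r_C(X)+r_C(N\setminus X)-r_C(N),
\]
using $r_C(N)=\log\abs{C}$. The right-hand side is exactly $\lambda_{r_C}(X)$, which proves the statement. As a final check, pure-symmetry of $\lambda_{r_C}$ is automatic from its form, consistent with $\ket{C}$ being pure.
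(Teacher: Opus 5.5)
Your proof is correct, but it follows a different route from the paper's proof. You specialize the stabilizer entropy formula of Lemma~\ref{lem:A-entropy} to the Lagrangian subgroup $L=C\times C^\perp$. The paper mentions this route only in one line after its proof, as an alternative; its actual proof computes the reduced state directly.

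In the paper's argument, $\rho_X=\abs{C}^{-1}\sum_{w\in\pr_{\bar X}C}\ket{\psi_w}\bra{\psi_w}$. Each $\ket{\psi_w}$ is the uniform superposition over a coset of $K_X=\{c_X: c\in C,\ c_{\bar X}=0\}$ in $\pr_XC$. The surjective homomorphism $\pr_{\bar X}C\to\pr_XC/K_X$ has kernel $K_{\bar X}$, so every coset occurs equally often. Hence $\rho_X$ is proportional to a projector of rank $\abs{\pr_XC}/\abs{K_X}$.

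Your $C$-term is the same kernel count as the paper's identity $\abs{K_X}=\abs{C}/\abs{\pr_{\bar X}C}$. The new ingredient in your route is the $C^\perp$-term. It needs the annihilator count $\abs{C^\perp\cap V_X}=d^{m_X}/\abs{\pr_XC}$, which comes from character duality over $\Z_d$; this is the same fact that gives $\abs{C}\abs{C^\perp}=d^m$, and you handle it correctly for composite $d$.

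Your version makes the link to the general stabilizer entropy formula explicit and reuses it. The paper's version is self-contained: it needs neither the Pauli group nor duality, and it shows the reduced state explicitly as maximally mixed on a subspace spanned by coset states.
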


\begin{proof}
Fix $X$ and write $c=(c_X,c_{\bar X})$ with $\bar X=N\setminus X$. Let $K_X=\{c_X:c\in C,\ c_{\bar X}=0\}$, and define $K_{\bar X}$ in the same way. Then
\[
  \rho_X=\abs{C}^{-1}\sum_{w\in\pr_{\bar X}C}\ket{\psi_w}\bra{\psi_w},\qquad \ket{\psi_w}=\sum_{c\in C:\,c_{\bar X}=w}\ket{c_X}.
\]
The vectors $c_X$ with $c_{\bar X}=w$ form a coset of $K_X$ in $\pr_XC$. The map $w\mapsto$ (this coset) is a surjective homomorphism $\pr_{\bar X}C\to\pr_XC/K_X$ with kernel $K_{\bar X}$. So every coset occurs $\abs{K_{\bar X}}$ times. Different cosets give orthogonal vectors of the same norm. Hence $\rho_X$ is proportional to a projector of rank $\abs{\pr_XC}/\abs{K_X}$, and
\begin{align*}
  S(X)&=\log\abs{\pr_XC}-\log\abs{K_X}\\
  &=\log\abs{\pr_XC}+\log\abs{\pr_{\bar X}C}-\log\abs{C},
\end{align*}
using $\abs{K_X}=\abs{C}/\abs{\pr_{\bar X}C}$. This is $\lambda_{r_C}(X)$.
\end{proof}
Lemma~\ref{lem:A-css} is also the special case $L=C\times C^\perp$ of Eq.~\eqref{eq:A-entropy}; we included the short direct proof.

\begin{applemma}[Every subgroup rank function is a code]\label{lem:A-realize}
Let $U_i$ ($i\in N$) be subgroups of a finite abelian group $V$ of exponent dividing $d$. Then there is a subgroup $C\le\Z_d^m$, with coordinates distributed among the systems, such that $r_C(X)=\log\abs{\sum_{i\in X}U_i}$ for every $X$.
\end{applemma}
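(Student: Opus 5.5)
The idea is to realize the given arrangement as a dual of projections, inverting Lemma~\ref{lem:A-proj}. Put $\Lambda=\widehat V$, the character group of $V$. It is finite abelian with the same exponent as $V$, hence of exponent dividing $d$. For each $i\in N$, restriction of characters gives a homomorphism $\rho_i\colon\widehat V\to\widehat{U_i}$, $\chi\mapsto\chi|_{U_i}$. Set $Q_i=\widehat{U_i}$ and embed $\Lambda$ into $\bigoplus_iQ_i$ by $\chi\mapsto(\rho_i\chi)_i$. We may first replace $V$ by $\sum_{i\in N}U_i$, which changes no rank. After that replacement the map is injective, because a character vanishing on every $U_i$ vanishes on $\sum_iU_i=V$.

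Next compute the projections. The image of $\Lambda$ under $\pr_X$ is the image of $\widehat V\to\bigoplus_{i\in X}\widehat{U_i}$. This map factors as restriction to $U_X=\sum_{i\in X}U_i$, followed by the injective map $\widehat{U_X}\to\bigoplus_{i\in X}\widehat{U_i}$. Restriction $\widehat V\to\widehat{U_X}$ is surjective, since characters extend from subgroups. Hence $\abs{\pr_X\Lambda}=\abs{\widehat{U_X}}=\abs{U_X}$. Alternatively, one can invoke Lemma~\ref{lem:A-proj} directly and use the double-dual identification $U_i\cong\widehat{\widehat{U_i}}$ to check that the subgroups it produces are the original $U_i$.

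It remains to turn $\Lambda\le\bigoplus_iQ_i$ into a code in $\Z_d^m$. Each $Q_i$ has exponent dividing $d$, so it embeds as a subgroup of $\Z_d^{m_i}$ for some $m_i$. Write $Q_i\cong\bigoplus_j\Z_{e_j}$ with $e_j\mid d$, and send the generator of $\Z_{e_j}$ to $d/e_j\in\Z_d$. Take $m=\sum_im_i$, assign the $m_i$ coordinates to system $i$, and let $C$ be the image of $\Lambda$ under the product embedding $\bigoplus_iQ_i\hookrightarrow\Z_d^m$. This embedding is injective and compatible with the coordinate projections, so $\pr_XC\cong\pr_X\Lambda$. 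Therefore $r_C(X)=\log\abs{U_X}$ for every $X$, including $X=\emptyset$.

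The only delicate point is the surjectivity of restriction of characters, that is, that characters extend from subgroups. This holds because $\Q/\Z$ is divisible, hence injective. Everything else is bookkeeping.
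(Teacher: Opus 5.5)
Your proof is correct, but it takes a different route from the paper's.

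The paper works with matrices. It embeds $V$ itself into $\Z_d^k$ by a cyclic decomposition and takes an integer matrix $G$ whose column block $G_i$ lists generators of $U_i$. It then lets $C=\{y^{\mathsf T}G\}$ be the row space. So $\pr_XC$ is the row space of $G_X$ and $\sum_{i\in X}U_i$ is its column space. The Smith normal form $G_X=PDQ$ over $\Z$ shows that these two have the same order.

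You dualize instead. Your $C$ is the image of $\widehat V$ under restriction of characters to the $U_i$. You get $\abs{\pr_XC}=\abs{U_X}$ from two facts: the restriction $\widehat V\to\widehat{U_X}$ is surjective, and $\widehat{U_X}\to\bigoplus_{i\in X}\widehat{U_i}$ is injective. After that you only need to embed each $\widehat{U_i}$, rather than $V$, coordinatewise into some $\Z_d^{m_i}$.

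The two arguments rest on the same duality. The map $y\mapsto y^{\mathsf T}G$ is the transpose of $z\mapsto Gz$, whose block images are the sums $U_X$. The statement that row space and column space have equal order is the matrix form of $\abs{\varphi(A)}=\abs{\widehat\varphi(\widehat B)}$, which the paper proves just before Lemma~\ref{lem:A-proj} by the same extension of characters that you use.

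Each version has its advantage. The paper's is explicit: it gives a generator matrix built from generators of the $U_i$, which fits the Smith-invariant computations used later for the seven-point and cube codes. Yours is coordinate-free and needs neither a choice of generators nor Smith forms. It also shows the lemma to be the exact converse of Lemma~\ref{lem:A-proj}.

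One simplification: you do not need injectivity of $\widehat V\to\bigoplus_iQ_i$, and hence do not need to replace $V$ by $\sum_iU_i$. Since you define $C$ as the image anyway, only the computation of the projections matters.
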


\begin{proof}
Write $V\cong\bigoplus_j\Z_{d_j}$ with $d_j\mid d$ and embed $\Z_{d_j}$ into $\Z_d$ as $(d/d_j)\Z_d$, so that $V\le\Z_d^k$. Choose generators of each $U_i$, and let $G=[G_i]_{i\in N}$ be the $k\times m$ matrix whose column block $G_i$ consists of the generators of $U_i$; system $i$ holds the coordinates of block $i$. Let $C=\{y^{\mathsf T}G:y\in\Z_d^k\}$. Then $\pr_XC$ is the row space of $G_X=[G_i]_{i\in X}$ and $\sum_{i\in X}U_i$ is its column space. By the Smith normal form, $G_X=PDQ$ with $P,Q$ invertible over $\Z$ and $D$ diagonal, and both spaces have the same order as the corresponding space of $D$.
\end{proof}

Lemmas~\ref{lem:A-css} and~\ref{lem:A-realize} together prove fact~(b), and Lemmas~\ref{lem:A-lift} and~\ref{lem:A-selfdual} facts~(a) and~(c). This completes the proof of Proposition~\ref{thm:reduction}.

\begin{proof}[Proof of Corollary~\ref{cor:doubling}]
Let $L\le\Z_d^{2m}$ be the Lagrangian subgroup of the state (Supplementary Note~\ref{app:reduction-stab}). Read $L$ as a classical code of length $2m$, in which each system holds the two coordinates ($x$ and $z$) of each of its qudits. By the entropy formula~\eqref{eq:A-entropy}, $r_L(X)=\log\abs{\pr_X L}=S(X)+m_X\log d$, where $m_X$ is the number of qudits held by $X$. Since $\abs{L}=d^m$,
\begin{align*}
  \lambda_{r_L}(X)&=S(X)+S(N\setminus X)\\
  &\quad+(m_X+m_{N\setminus X})\log d-\log\abs{L}=2S(X).
\end{align*}
By Lemma~\ref{lem:A-css}, the CSS state $\ket{L}$ of $2m$ qudits has entropy vector $\lambda_{r_L}=2S$.
\end{proof}

\subsection{Duality}
\label{app:reduction-dual}

For a set function $h$ on $N$ with $h(\emptyset)=0$ the dual is $h^*(X)=\sum_{i\in X}h(i)+h(N\setminus X)-h(N)$; this is the polymatroid duality used by Kaced~\cite{si:Kaced2018}. We write $\Dual h=h^*$. A direct computation gives $\Dual^2h=h-\sum_{i\in N}h(i\mid N\setminus i)\,\ind{i}$. So $\Dual^2$ is the identity on set functions with $h(i\mid N\setminus i)=0$ for every $i$, in particular on self-dual functions and on lifts $h_S$. Moreover, $\Dual$ commutes with relabelings of $N$.

\begin{applemma}[Duality preserves rank cones]\label{lem:A-dual}
If $h$ is the rank function of subgroups $U_i$ of a group $V$ of exponent dividing $d$, then so is $h^*$. Consequently $\Dual$ maps $\LcZ{d}{m}$ into itself.
\end{applemma}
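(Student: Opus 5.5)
The plan is to use the character-group construction of Lemma~\ref{lem:A-proj}, together with the dual description of subgroup ranks as orders of projections (fact~(b) of the Methods). First invoke Lemma~\ref{lem:A-realize}: since the $U_i$ live in a group of exponent dividing $d$, there is a code $C\le\Z_d^m$, with its coordinates distributed among the elements of $N$, such that $h(X)=\log\abs{\pr_XC}$. Next compute $h^*$ in terms of $C$. Let $K_X=\{c\in C:c_{N\setminus X}=0\}$ be the subcode supported on $X$. Then $\abs{\pr_{N\setminus X}C}=\abs{C}/\abs{K_X}$, and therefore
\[
h^*(X)=\sum_{i\in X}\log\abs{\pr_iC}-\log\abs{K_X}.
\]
The task is thus to realize $X\mapsto\sum_{i\in X}\log\abs{\pr_iC}-\log\abs{K_X}$ as a subgroup rank function in exponent dividing $d$.

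For this, use the dual code $C^\perp\le\Z_d^m$, which has exponent dividing $d$. The standard identity $\abs{\pr_XC^\perp}=d^{m_X}/\abs{K_X}$ holds, by character duality applied to the restriction map on $C^\perp$: the annihilator of $\pr_X C^\perp$ in $\Z_d^{m_X}$ is exactly $K_X$. It follows that $h^*(X)=r_{C^\perp}(X)+\sum_{i\in X}\bigl(\log\abs{\pr_iC}-m_i\log d\bigr)$, which is the rank of $C^\perp$ plus a modular function. The modular correction is nonpositive per element, $\log\abs{\pr_iC}-m_i\log d=-\log\abs{K^\perp_i}$, where $K^\perp_i=\{b\in C^\perp:b\text{ supported on }i\}$. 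So the goal is to remove from each coordinate block the subcode of $C^\perp$ supported on that block alone.

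Do this exactly as in the proof of Lemma~\ref{lem:A-lift}. Put $Q_i=\pr_iC^\perp/(C^\perp\cap V_i)$ and let $\Lambda$ be the image of $C^\perp$ in $\bigoplus_iQ_i$. The kernel argument given there shows $\log\abs{\pr_X\Lambda}=\log\abs{\pr_XC^\perp}-\sum_{i\in X}\log\abs{C^\perp\cap V_i}=h^*(X)$. Lemma~\ref{lem:A-proj} then converts these projection orders into subgroup sums $U_i^*\le\widehat\Lambda$, and $\widehat\Lambda$ has exponent dividing $d$ because $\Lambda$ is a subquotient of $\Z_d^m$. This shows that $h^*$ is a subgroup rank function of the required kind.

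The cone statement follows because $\Dual$ is linear and continuous, so it carries the closed convex cone generated by such functions into itself. The main obstacle is the duality identity $\abs{\pr_XC^\perp}\,\abs{K_X}=d^{m_X}$ over $\Z_d$ with non-free codes. I would prove it by identifying $\Z_d^{m_X}$ with its own character group via the dot product, so that $(\pr_XC^\perp)^\perp=K_X$ inside $\Z_d^{m_X}$, and then using $\abs{A}\,\abs{A^\perp}=d^{m_X}$ for subgroups of $\Z_d^{m_X}$.
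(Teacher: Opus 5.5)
Your proof is correct, but it takes a longer route than the paper's. The paper never passes to codes. It takes the kernel $K$ of the sum map $\sigma\colon\bigoplus_iU_i\to V$ and notes that $\log\abs{\sum_{i\in X}U_i}=\sum_{i\in X}h(i)-\log\abs{K\cap\bigoplus_{i\in X}U_i}$. Hence $h^*(X)=\log\abs{K}-\log\abs{K\cap\bigoplus_{i\notin X}U_i}=\log\abs{\pr_XK}$ exactly, and Lemma~\ref{lem:A-proj} applied to $K\le\bigoplus_iU_i$ finishes the proof. That route needs no realization by a code, no perfect pairing on $\Z_d^m$ and no correction term. (Your step $(\pr_XC^\perp)^\perp=K_X$ silently uses $C^{\perp\perp}=C$; this does follow from the same counting identity applied in $\Z_d^m$.)

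In fact your construction lands on the same object. $C^\perp=\ker G$ is the group of relations among the chosen generators, and $C^\perp\cap V_i$ is the kernel of $\Z_d^{m_i}\to U_i$, $b_i\mapsto G_ib_i$. The map $b\mapsto(G_ib_i)_i$ sends $C^\perp$ onto $K$ with kernel $\bigoplus_i(C^\perp\cap V_i)$, so it identifies your $\Lambda$, together with its coordinate maps, with the paper's $K$. The modular term you strip off is just the redundancy of the free presentation.

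What your detour buys is a code-level identity: $h^*$ equals the rank function of the dual code minus $\sum_i\log\abs{C^\perp\cap V_i}\,\ind{i}$. This is essentially Eq.~\eqref{eq:D-dualcode}, which the paper proves separately for the cube codes. Since $\log\abs{C^\perp\cap V_i}=r_{C^\perp}(i\mid N\setminus i)$, you could also remove that term by the replacement $U_i\mapsto U_i\cap\sum_{j\ne i}U_j$ from the proof of Corollary~\ref{cor:balanced}, instead of the quotient of Lemma~\ref{lem:A-lift}.
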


\begin{proof}
Let $\sigma\colon\bigoplus_iU_i\to V$ be the sum map and $K=\ker\sigma$. For $X\subseteq N$, $\log\abs{\sum_{i\in X}U_i}=\sum_{i\in X}h(i)-\log\abs{K\cap\bigoplus_{i\in X}U_i}$. Hence $h^*(X)=\log\abs{K}-\log\abs{K\cap\bigoplus_{i\notin X}U_i}=\log\abs{\pr_XK}$. By Lemma~\ref{lem:A-proj}, applied to $K\le\bigoplus_iU_i$, this is a subgroup rank function in $\widehat K$, whose exponent divides $d$. Linearity and continuity of $\Dual$ give the second statement.
\end{proof}

\subsection{Unions over local dimensions}
\label{app:reduction-unions}

Let $\Lab_m$ be the closed convex hull of $\bigcup_{d\ge2}\LcZ{d}{m}$, and $\Lc_m$ that of $\bigcup_{p}\LcZ{p}{m}$ over primes. The latter is the closed convex cone generated by the rank functions of subspace arrangements over all fields. Indeed, an arrangement over an arbitrary field has the same rank function as an arrangement over some finite field: choose spanning vectors with entries in a finitely generated subring $R$ of the field, and reduce modulo a maximal ideal of $R$ that avoids the product of finitely many nonzero minors that certify the ranks. Such an ideal exists because $R$, a finitely generated $\Z$-algebra, is a Jacobson ring whose residue fields at maximal ideals are finite. Minors that vanish over the field vanish after reduction, so all ranks are preserved. Finally, an arrangement over $\GF(p^k)$ is an arrangement over $\GF(p)$ with all ranks multiplied by $k$ (restriction of scalars). So $\Lc_m$ is the cone cut out by all $m$-variable linear rank inequalities.

To prove Eq.~\eqref{eq:union} for $\Stab_n$, let $h_S\in\Lab_{n+1}$. Since $\lambda$ is linear and continuous and maps $\LcZ{d}{n+1}$ into $\StabZ{d}{n}$, it maps $\Lab_{n+1}$ into $\Stab_n$, so $S=\lambda_{h_S/2}\in\Stab_n$. Conversely, $h_S\in\LcZ{d}{n+1}\subseteq\Lab_{n+1}$ for $S\in\StabZ{d}{n}$, and the lift is linear and continuous. The same argument over primes proves the second identity in Eq.~\eqref{eq:union}.

\subsection{Balanced inequalities}
\label{app:reduction-balanced}

\begin{proof}[Proof of Corollary~\ref{cor:balanced}]
For a set function $h$ on $N$ and $i\in N$, write $h(i\mid N\setminus i)=h(N)-h(N\setminus i)$. If $h$ is the rank function of subgroups $U_j$, replace $U_i$ by $U_i'=U_i\cap\sum_{j\ne i}U_j$. For every $A\le\sum_{j\ne i}U_j$ we have $U_i'\cap A=U_i\cap A$, hence $\abs{U_i+A}/\abs{U_i'+A}=\abs{U_i}/\abs{U_i'}=2^{h(i\mid N\setminus i)}$. So the new rank function is $h-h(i\mid N\setminus i)\,\ind{i}$. It is again a subgroup rank function, and the quantities $h(j\mid N\setminus j)$, $j\ne i$, do not change. Doing this for every $i$ shows that, if $f\ge0$ on $\LcZ{d}{n+1}$, then so is the balanced form
\[
  f_{\mathrm{bal}}(h)=f(h)-\sum_{i\in N}f(\ind{i})\,h(i\mid N\setminus i).
\]
For pure-symmetric $S$ we have $h_S(i\mid N\setminus i)=S(i)-S(N\setminus i)=0$, so $f_{\mathrm{bal}}(h_S)=f(h_S)$. A balanced form vanishes on every $\ind{i}$, and $h_S=S+\sum_iS(i)\ind{i}$, so $f_{\mathrm{bal}}(h_S)=f_{\mathrm{bal}}(S)$.

If $S\in\StabZ{d}{n}$ and $f$ is balanced and nonnegative on $\LcZ{d}{n+1}$, then $f(S)=f(h_S)\ge0$ by Proposition~\ref{thm:reduction}. If $S\notin\StabZ{d}{n}$, then $h_S\notin\LcZ{d}{n+1}$, and since the cone is closed and convex there is a linear $f\ge0$ on it with $f(h_S)<0$. Then $f_{\mathrm{bal}}$ is balanced, nonnegative on the cone, and $f_{\mathrm{bal}}(S)=f(h_S)<0$. The statement for $\Stab^{\mathrm{pr}}_n$ follows in the same way from Eq.~\eqref{eq:union}.
\end{proof}

\subsection{Normal form for prime local dimension}
\label{app:reduction-normal}

\begin{applemma}[Normal form]\label{lem:A-normal}
Let $p$ be prime. If $S$ is the entropy vector of a stabilizer state of $\Z_p$ qudits, then $S$ is also the entropy vector of a stabilizer state in which every system $i$ holds exactly $S(i)/\log p$ qudits.
\end{applemma}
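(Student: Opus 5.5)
The plan is to work directly with the Lagrangian subspace $L\le\GF(p)^{2m}$ of the stabilizer state (Supplementary Note~\ref{app:reduction-stab}) and strip off, system by system, qudits that are in a pure product state. Measure entropies in units of $\log p$. By the entropy formula~\eqref{eq:A-entropy}, system $i$ holding $m_i$ qudits has $S(i)=m_i-\dim L_i$, where $L_i=L\cap V_i$. So the claim amounts to removing $k_i=\dim L_i$ qudits from system $i$ without changing any entropy. Only $S$ is claimed to be preserved, so local Clifford unitaries and discarding decoupled pure qudits are both allowed.

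\emph{Step 1: a local normal form for $L_i$.} Since $L$ is isotropic, $L_i$ is an isotropic subspace of the nondegenerate symplectic space $V_i\cong\GF(p)^{2m_i}$. By Witt's theorem, or simply by extending a basis of $L_i$ to a symplectic basis, some symplectic map of $V_i$ sends $L_i$ to the span of the $Z$-labels of the first $k_i$ qudits of system $i$. For prime $p$, every symplectic map of $V_i$ is implemented by a Clifford unitary on the qudits of system $i$. This local unitary changes no entropy, and it maps $L$ to a Lagrangian subspace, that is, to another stabilizer state.

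\emph{Step 2: decoupling.} After Step~1, $L$ contains $Z_1,\dots,Z_{k_i}$ on these qudits of system $i$. Every element of $L$ commutes with them, so its $x$-components on those qudits vanish. Hence $L=\langle Z_1,\dots,Z_{k_i}\rangle\oplus L'$, where $L'$ is supported on the remaining $m-k_i$ qudits; $L'$ may be taken as the elements of $L$ with zero $z$-components there, after subtracting multiples of the $Z_j$. It follows that $L'$ is Lagrangian on the remaining qudits, and the state factorizes as $\ket{0}^{\otimes k_i}\otimes\ket{\psi'}$. Discarding a pure product factor inside system $i$ leaves every $S(X)$ unchanged. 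One can also check this from~\eqref{eq:A-entropy}: $m_X$ and $\dim L_X$ both drop by $k_i$ when $i\in X$, and neither changes otherwise.

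\emph{Step 3: iterate.} Apply Steps~1--2 to each system in turn. Removing qudits from system $i$ does not change $S(j)$ for any other system, so at the end of the procedure each system $j$ has $\dim L'_j=0$. Then $m'_j=S(j)$ by the entropy formula, which is the statement.

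The main obstacle is Step~1, namely that every symplectic transformation of $\GF(p)^{2m_i}$ is realized by a local Clifford unitary, together with the check that the image remains a Lagrangian stabilizer group with suitable phases. This is standard for prime $p$; for $p=2$ the phases need care, but the phases do not affect entropies. The rest of the argument is bookkeeping with~\eqref{eq:A-entropy}.
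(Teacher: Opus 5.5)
Your proof is correct and follows the paper's argument. A local Clifford maps the isotropic subspace $L_i$ onto the $Z$-labels of $k=m_i-S(i)/\log p$ qudits, the state then factorizes with those qudits in a product state that can be discarded without changing any entropy, and one repeats for every system; your decomposition $L=\langle Z_1,\dots,Z_{k}\rangle\oplus L'$ is the phase-space form of the paper's step of multiplying stabilizers by powers of $Z$ to get a stabilizer group for $\ket{\psi'}$.
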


\begin{proof}
By Eq.~\eqref{eq:A-entropy}, $L_i=L\cap V_i$ is an isotropic subspace of dimension $k=m_i-S(i)/\log p$ in the symplectic space $V_i$ of system $i$. A Clifford unitary $U_i$ on the qudits of $i$ maps it onto the span of the first $k$ $z$-coordinates, and a Pauli correction fixes the signs. Afterwards the state is stabilized by $Z_1,\dots,Z_k$, so it equals $\ket{0}^{\otimes k}\otimes\ket{\psi'}$. Every element of the new stabilizer group commutes with $Z_1,\dots,Z_k$, so it acts on these $k$ qudits as a product of powers of $Z$; multiplying by such powers, which lie in the group, shows that the restrictions to the remaining qudits form an abelian group of order $p^{m-k}$ that contains no scalar other than $1$ and stabilizes $\ket{\psi'}$. So $\ket{\psi'}$ is a stabilizer state. Since $U_i$ acts within system $i$, it changes no entropy of a union of systems, and discarding the $k$ qudits in a product state changes no entropy either. Repeat for every system.
\end{proof}

In the normal form, and with entropies in units of $\log p$, $L$ has dimension $m=\sum_iS(i)$, and Eq.~\eqref{eq:A-entropy} gives $\dim\pr_XL=S(X)+\sum_{i\in X}S(i)=h_S(X)$: the lift is simply the rank function of the coordinate projections of $L$. The normal form also reduces the search for a stabilizer state with a prescribed entropy vector $kS$ to a finite search: system $i$ holds exactly $kS(i)$ qudits, and by local-Clifford equivalence one may restrict to graph states.
\section{Four parties: extreme rays and unimodular realizations}
\label{app:four}

This note completes the proof of Proposition~\ref{thm:four}.

\subsection{The cone \texorpdfstring{$C_4$}{C4}}

Index the five systems as $N=\{0,1,2,3,4\}$ with purifier $0$. We use the 15 coordinates $S(X)$, $\emptyset\ne X\subseteq\{1,2,3,4\}$; values on sets containing $0$ follow from purity. The cone $C_4$ of Proposition~\ref{thm:four} (Methods, Sec.~\ref{sec:methods-five}) is cut out by the pull-backs, through $S\mapsto h_S$, of the 85 elemental Shannon inequalities on five elements and of the 480 Ingleton instances~\eqref{eq:ingleton} with pairwise disjoint $A,B,C,D,K$ (counted with repetitions). These yield 100 distinct nonzero rows; the five rows $h(i\mid N\setminus i)\ge0$ vanish on lifts. Up to relabeling, the Ingleton instances on five elements are the four forms (36)--(39) of Ref.~\cite{si:DFZ2009}: the plain inequality on four elements, the two instances in which one argument is a pair, and the instance conditioned on the fifth element. Exact vertex enumeration with \texttt{lrs}~\cite{si:lrs} gives 46 extreme rays, which form seven orbits under the group $S_5$ of relabelings of $N$ (Table~\ref{tab:four}).

\begin{table}[t]
\caption{Orbit representatives of the 46 extreme rays of $C_4=\StabZ{d}{4}$. The representative lists $S(X)$ for $X=1,2,12,3,13,23,123,4,14,24,124,34,134,234,1234$ (binary order); $S(0)=S(1234)$. The last two columns give the multiplicity $\kappa$ of the unimodular realization and its number of qudits.}
\label{tab:four}
\begin{ruledtabular}
\begin{tabular}{clccc}
orbit & representative & size & $\kappa$ & qudits\\
\hline
1 & \texttt{000000011111111} & 10 & 1 & 2\\
2 & \texttt{011111111111111} & 5 & 1 & 4\\
3 & \texttt{011112211222211} & 5 & 2 & 8\\
4 & \texttt{111111111111111} & 1 & 1 & 5\\
5 & \texttt{112122222222222} & 10 & 1 & 7\\
6 & \texttt{112122312232332} & 5 & 1 & 6\\
7 & \texttt{112233223322332} & 10 & 1 & 8\\
\end{tabular}
\end{ruledtabular}
\end{table}

Orbit~1 is a Bell pair between one party and the purifier, orbit~2 a Greenberger--Horne--Zeilinger (GHZ) state of four systems, and orbit~4 the GHZ state of all five systems. Orbit~3 has entropy $1$ on every system of a four-system subset and $2$ on every pair of it. It is not the entropy vector of any four-qubit state, since there is no absolutely maximally entangled state of four qubits~\cite{si:Higuchi2000}, but it is the entropy vector, in units of $\log 3$, of a state of four qutrits~\cite[Sec.~5]{si:LMRW2013}, and twice it is the entropy vector of an eight-qubit graph state with two qubits per system (the realization below).

\subsection{Unimodular realizations}

For each extreme ray $r$ we store an integer symmetric matrix $W$ with zero diagonal on $\kappa\sum_{i\in N}r(i)$ vertices, of which $\kappa\,r(i)$ belong to system $i$. For every $X\subseteq\{1,2,3,4\}$ the block $W[Q_X,Q_{N\setminus X}]$ has all nonzero Smith invariants equal to $1$, and its rank is $\kappa\,r(X)$. A matrix whose nonzero Smith invariants all equal $1$ has the same rank over $\Q$ and over every finite field. By the cut-rank formula of Supplementary Note~\ref{app:reduction-stab}, $W\bmod p$ defines a weighted graph state of $\Z_p$ qudits with entropy vector $\kappa r$ in units of $\log p$, for every prime $p$. So every extreme ray of $C_4$ lies in $\StabZ{p}{4}$, and by Corollary~\ref{cor:divisibility} in $\StabZ{d}{4}$ for every $d$. Together with $\StabZ{d}{4}\subseteq C_4$ this proves Proposition~\ref{thm:four} for $n=4$.

The matrices come from qubit graph states found by search. For 41 rays the stored matrix has entries in $\{0,1\}$; for five rays (one in orbit~5 and four in orbit~7, unrelated to the multiplicity $\kappa$) some edges carry the sign $-1$, found by a search over sign patterns. We verified the Smith invariants and, as a cross-check, the ranks over $\GF(p)$ for all primes $p\le19$.

Linden, Mat\'u\v{s}, Ruskai and Winter used qutrit states for two of the orbits~\cite[Sec.~5]{si:LMRW2013}. As Osthus noted, their state (R6) has, as printed, $S(ab)=1$ instead of $2$, and he gave a qubit state for this ray~\cite[Sec.~3.8]{si:Osthus2024}; the matrices above realize it for every prime. In particular, qubits realize every extreme ray of $C_4$, at twice the scale for orbit~3. The matrices realize each ray without closure, at scale at most two, for every prime at once; and since $C_4$ is cut out by the Shannon and Ingleton inequalities, Proposition~\ref{thm:four} also shows that, on the self-dual slice, the 24 new five-variable inequalities of Dougherty, Freiling and Zeger~\cite{si:DFZ2009} follow from the Shannon and Ingleton inequalities.

\subsection{Fewer parties}

For $n<4$, adding $4-n$ parties that hold no qudits embeds $\StabZ{d}{n}$ into $\StabZ{d}{4}$, with image in the face $C_4\cap\{S(i)=0\text{ for }n<i\le4\}$; the inequalities $S(i)\ge0$ are valid on $C_4$. On this face $h_S(Xi)=h_S(X)$ for each such $i$, because $0\le h_S(Xi)-h_S(X)\le h_S(i)=2S(i)=0$; hence $S(Xi)=S(X)$, and the face is a copy of a cone of $n$-party vectors. It is generated by the extreme rays of $C_4$ that lie on it. In the realization of such a ray the empty parties hold no vertices, so $W\bmod p$ is a graph state of the remaining systems, for every prime $p$. Hence $\StabZ{d}{n}$ is the same cone for every $d$, namely this face of $C_4$.
\section{Five parties: certificates}
\label{app:five}

Throughout this note the six systems are $A,\dots,E$ and the purifier $F$, and $Z$ denotes an additional seventh element. For a set function $g$ on subsets of $\{A,\dots,F,Z\}$ we write $g(X|K)=g(XK)-g(K)$ and $I(X;Y|K)=g(XK)+g(YK)-g(XYK)-g(K)$. Every term below is a Shannon inequality or an Ingleton instance, hence nonnegative for the rank function $g$ of any subgroup arrangement of seven elements.

\subsection{The inequality~\texorpdfstring{\eqref{eq:star}}{(*)}}
\label{app:five-star}

\begin{table}[t]
\caption{The witness $r$ of Theorem~\ref{thm:five}, an extreme ray of $\QLR_5$ outside $\Stab_5$. The table lists all singletons, pairs and triples; each triple is shown with its complement in parentheses, which has the same value. The values on the remaining sets follow from purity. On this vector, the left-hand side of~\eqref{eq:star} is $24$ and the right-hand side is $25$.}
\label{tab:witness}
\begin{ruledtabular}
\begin{tabular}{lcccccc}
singletons & $A$ & $B$ & $C$ & $D$ & $E$ & $F$\\
 & 2 & 2 & 1 & 1 & 2 & 2\\
\hline
pairs & $AB$ & $AC$ & $AD$ & $AE$ & $AF$ & $BC$\\
 & 3 & 3 & 3 & 3 & 3 & 3\\
 & $BD$ & $BE$ & $BF$ & $CD$ & $CE$ & $CF$\\
 & 3 & 3 & 3 & 2 & 3 & 3\\
 & $DE$ & $DF$ & $EF$ & & & \\
 & 3 & 3 & 3 & & & \\
\hline
triples & $ABC$ & $ABD$ & $ABE$ & $ACD$ & $ACE$ & \\
 & ($DEF$) & ($CEF$) & ($CDF$) & ($BEF$) & ($BDF$) & \\
 & 4 & 4 & 3 & 4 & 4 & \\
 & $ADE$ & $BCD$ & $BCE$ & $BDE$ & $CDE$ & \\
 & ($BCF$) & ($AEF$) & ($ADF$) & ($ACF$) & ($ABF$) & \\
 & 3 & 3 & 3 & 4 & 3 & \\
\end{tabular}
\end{ruledtabular}
\end{table}

Let $g$ be the rank function of $U_A,\dots,U_F,Z$ with $Z=U_{AD}\cap U_{BE}$. The certificate is
\begin{equation}\label{eq:B-cert}
\begin{aligned}
0\le{}&5\,g(Z|ABCDEF)+2\,I(F;Z|ABCDE)\\
&+I(A;B|Z)+I(A;C|BFZ)+I(A;E|BCDZ)\\
&+I(A;Z|BCDEF)+I(A;Z|BE)+I(B;D|AEZ)\\
&+I(B;D|AZ)+I(B;Z|ACDEF)+I(B;Z|AF)\\
&+I(C;Z|BDEF)+I(C;Z|ADEF)\\
&+I(C;Z|ABDE)+I(C;Z|BE)+I(C;Z|ABD)\\
&+I(D;E|AFZ)+I(D;E|BCZ)+I(D;Z|BEF)\\
&+I(D;Z|ABCEF)\\
&+I(E;F|BZ)+I(E;Z|ABCF)+I(E;Z|AD)\\
&+I(F;Z|AE)+I(F;Z|AD)+I(F;Z|BC).
\end{aligned}
\end{equation}
When the right-hand side is expanded into values of $g$, every term that contains $Z$ cancels except $3g(ADZ)+3g(BEZ)-g(Z)$. The substitutions~\eqref{eq:ci} replace these by $2h(AD)+2h(BE)+h(ABDE)$, and the right-hand side becomes the left-hand side of~\eqref{eq:star-h} minus its right-hand side. Substituting $h=h_S$ and using purity gives~\eqref{eq:star} with coefficient one. We checked the identity in exact integer arithmetic.

At the witness $r$ of Table~\ref{tab:witness}, all 10{,}860 facet inequalities of $\QLR_5$ are nonnegative. Exactly 105 of them vanish, and these 105 rows have rank 30 over $\Q$. Since $\QLR_5$ is pointed and 31-dimensional, $r$ spans an extreme ray. The inequality~\eqref{eq:star} is valid on $\StabZ{2}{5}$, and 140 of the 760 distinct entropy vectors of six-qubit graph states with one qubit per system attain equality. Their rank is 30, while all 760 vectors have rank 31, so $\StabZ{2}{5}$ is full-dimensional and~\eqref{eq:star} defines a facet of it. The 760 vectors were recomputed from all $2^{15}$ graphs on six vertices.

For an odd prime $p$, the graph state of six qudits whose weights are the entries of a $0/1$ adjacency matrix $W$ has $S(X)=\rank_{\GF(p)}W[X,N\setminus X]$ in units of $\log p$ (Supplementary Note~\ref{app:reduction-stab}). Each cut block has at most three rows or at most three columns, and every minor of order at most $3$ of a $0/1$ matrix lies in $\{0,\pm1,\pm2\}$, which are units or zero modulo $p$; so the rank over $\GF(p)$ equals the rank over $\Q$, and the entropy vectors do not depend on the odd prime $p$. The $2^{15}$ graphs give $835$ distinct vectors, which include the $760$ qubit vectors and have rank $31$; exactly $140$ of them attain equality in~\eqref{eq:star}, the same $140$ as for qubits, and their rank is $30$. So~\eqref{eq:star} defines a facet of $\StabZ{p}{5}$ for every odd prime $p$. For every $d$, choose a prime $p\mid d$: then $\StabZ{p}{5}\subseteq\StabZ{d}{5}\subseteq\Stab_5$ (Corollary~\ref{cor:divisibility}), \eqref{eq:star} is valid on $\Stab_5$, and the $30$ linearly independent vectors on its hyperplane lie in $\StabZ{p}{5}$; so~\eqref{eq:star} defines a facet of $\StabZ{d}{5}$ and of $\Stab_5$. The script \path{check_rev_sec4.py} repeats this computation.

\subsection{The catalog}
\label{app:five-catalogue}

This subsection gives the details behind Sec.~\ref{sec:scale}. Before this work, 59 orbits of extreme rays of $\QLR_5$ under $S_6$ were known to lie in the stabilizer cone~\cite[Sec.~4.2]{si:BCHS2020gap}: 58 of them, all except their vector~15, through hypergraph realizations and the theorem of Walter and Witteveen that the entropy vectors of hypergraphs lie in the stabilizer cone of every prime local dimension~\cite{si:WalterWitteveen2021}, and vector~15 through an explicit graph state. We tested every orbit of the catalog against the inequalities that follow, as~\eqref{eq:star} does, from Shannon inequalities and one common information of two sets of systems. Infeasible linear programs were converted into exact certificates, each of which yields an inequality valid on $\StabZ{d}{5}$ for every $d$. Orbits not excluded in this way were then tested with conditional Ingleton instances on the seven-element extension, which are also valid for subgroup arrangements. The result is:
\begin{itemize}
\item 1{,}915{,}910 orbits are excluded from $\Stab_5$ by exact certificates: 1{,}914{,}541 by one common information and Shannon inequalities, and 1{,}369 with conditional Ingleton instances in addition;
\item 641 orbits lie in the stabilizer cone of every local dimension, so they are extreme rays of $\StabZ{d}{5}$ for every $d$ and of $\Stab_5$: the 59 known ones and 582 realized by graph states in this work;
\item 1{,}155 orbits are undecided.
\end{itemize}
Because the enumeration was not completed, these proportions need not represent $\QLR_5$ as a whole. A separate publication will report the catalog in detail. The rest of this subsection describes the enumeration, the tests, the realizations and the facets.

The extreme rays of $\QLR_5$ were enumerated by adjacency decomposition~\cite{si:Bremner2009}, starting from the 59 known orbits~\cite{si:BCHS2020gap} and the vector $q_2$ of Example~\ref{ex:q2}, and working in exact arithmetic on the irredundant description of $\QLR_5$ by 10{,}860 facets in 31 classes. This description was built from a machine-readable transcription of the Ingleton forms and the 24 five-variable inequalities of Ref.~\cite{si:DFZ2009}, parsed from the table of these inequalities in the PSITIP software~\cite{si:Li2023psitip}, which lacks the Ingleton form of Eq.~(39) of Ref.~\cite{si:DFZ2009}; we added it. Thirteen of the 24 were also compared with a text extraction of Ref.~\cite{si:DFZ2009}. As cross-checks, the five-variable linear rank cone built from the same transcription has 7{,}943 extreme rays in 162 orbits, the numbers reported in Ref.~\cite{si:DFZ2009}, and $\QLR_5$ has 31 facet classes, the number given in Ref.~\cite{si:BCHS2020}. Every orbit found was certified to be an extreme ray by an exact rank computation. The enumeration ran in several sessions, each resuming the catalog of the previous one; the last, which expanded orbits with small coordinates first, stopped when the catalog exceeded a preset ceiling of $1.9$ million orbits, at 1{,}917{,}706 orbits. So the catalog depends on the order of the search, and it is possibly a proper subset of the set of all orbits.

The orbits were tested in a cutting-plane loop. An orbit that violates an $S_6$ image of an inequality already found is excluded by it. For each remaining orbit representative $r$ and pairs of distinct nonempty sets $X,Y\subseteq\{A,\dots,F\}$ (first disjoint pairs; for the orbits not excluded, all pairs), a linear program asks for a polymatroid $g$ on seven elements that extends $h_r$ and satisfies the common-information conditions $g(XZ)=h_r(X)$, $g(YZ)=h_r(Y)$ and $g(Z)=I(X;Y)$. If it is infeasible for some pair, a Farkas certificate gives a nonnegative combination of Shannon inequalities as in~\eqref{eq:B-cert}, which we converted to exact rational form and verified in integer arithmetic; it yields an inequality valid on $\StabZ{d}{5}$ for every $d$ and violated by $r$, whose $S_6$ images were then evaluated on the whole catalog. The $4{,}804$ certificates obtained in this way (139 of them in an earlier run of the same test on 300 orbits with entries at most $5$) fall into 747 classes under $S_6$, and every one of the $1{,}914{,}541$ orbits excluded in this stage violates an $S_6$ image of one of them. In the second stage, the same programs were run with all conditional Ingleton instances on the seven elements added.

Realizations were first searched among graph states for small multiples of the orbit representatives: over $\GF(2)$ and $\GF(3)$ in the normal form of Lemma~\ref{lem:A-normal}, and over $\GF(4)$, whose graph states were then reduced to qubit states by restriction of scalars (Supplementary Note~\ref{app:cubes-galois}). Every realization was verified by an independent rank computation. Each of the 641 orbits is now certified to lie in $\StabZ{d}{5}$ for every $d$, in one of two ways; $r$ denotes the orbit representative and $\kappa\ge1$ a multiplicity.
\begin{itemize}
\item \emph{Unimodular realizations}, as in Supplementary Note~\ref{app:four}: an integer symmetric matrix $W$, zero inside each system, in which system $i$ owns $\kappa\,r(i)$ vertices and every cut block has rank $\kappa\,r(X)$ over $\Q$ and all nonzero Smith invariants equal to $1$. Then $W\bmod p$ is a graph state of $\Z_p$ qudits with entropy vector $\kappa r$, in units of $\log p$, for every prime $p$, and Corollary~\ref{cor:divisibility} gives every $d$. There are such matrices for 566 orbits: $\kappa=1$ for 515 of them, $\kappa=2$ for 50 and $\kappa=3$ for one, with at most 40 vertices.
\item \emph{Covering certificates} for the other 75 orbits. If the cut blocks of an integer matrix $W$ as above have the right ranks over $\Q$, then $W\bmod p$ realizes $\kappa r$ over $\GF(p)$ for every prime $p$ that divides no nonzero Smith invariant of a cut block; the others, finitely many, are the exceptional primes of $W$. For each of these orbits we give one or more such matrices, 125 in all, such that the primes exceptional for all of them are among $2$ and $3$, and realizations over $\GF(2)$ or $\GF(3)$, 74 in all, for these primes. So the orbit lies in $\StabZ{p}{5}$ for every prime $p$, and by Corollary~\ref{cor:divisibility} in $\StabZ{d}{5}$ for every $d$.
\end{itemize}
In particular, each of the 641 orbits has an explicit realization by qubit graph states and one by qutrit graph states; this includes the 12 of the 58 hypergraph-realizable orbits of Ref.~\cite{si:BCHS2020gap} for which no explicit stabilizer state was known to us, so that their membership in the stabilizer cone rested only on hypergraph realizations and the theorem of Walter and Witteveen~\cite{si:WalterWitteveen2021}; their qubit realizations are unimodular ones at $\kappa=2$. The matrices were found by a randomized search, which proves nothing. Every stored matrix is verified by exact Smith forms and, independently, by rank computations modulo the prime factors of one nonzero minor of maximal order of each cut block, since every prime that divides a Smith invariant of the block divides all such minors (Supplementary Note~\ref{app:repo-verify}).

A class is a facet of $\StabZ{d}{5}$ for every $d$ if 30 linearly independent vectors that lie in $\StabZ{d}{5}$ for every $d$ lie on its hyperplane. The 760 entropy vectors of six-qubit graph states with one qubit per system are also entropy vectors of graph states over every odd prime field (Supplementary Note~\ref{app:five-star}), so they lie in $\StabZ{d}{5}$ for every $d$, as do the images under $S_6$ of the 641 orbit representatives. With these vectors, 119 of the 747 classes pass the test, certified by exact rank computations modulo a large prime, which bound the rank over $\Q$ from below (Supplementary Note~\ref{app:repo-verify}); none of the 747 classes is violated by any of these vectors.

The vector $q_2$ of Example~\ref{ex:q2}, one of the starting points of the enumeration, shows that the distinction between cones and sets of entropy vectors matters even for small examples.

\begin{example}[A ray that needs twice the scale]\label{ex:q2}
Consider five parties $A,\dots,E$ and the purifier $F$, and let $q_2$ be the vector with entropy $1$ on every single system and $2$ on every pair (in units of $\log p$ for $\Z_p$ qudits), and with
\[
  q_2(X)=\begin{cases}2,& X\in\{ABC,ABD,ACD,ABE,BCE\},\\ 3,& X\in\{BCD,ACE,ADE,BDE,CDE\},\end{cases}
\]
on the triples that do not contain $F$ (the triples containing $F$ follow by purity). The vector $q_2$ is an extreme ray of $\QLR_5$. Its double $2q_2$ is the entropy vector of a 12-qubit graph state with two qubits per system, and also of a 12-qutrit weighted graph state. So $q_2\in\StabZ{2}{5}\cap\StabZ{3}{5}$. But $q_2$ itself is not the entropy vector of any qubit or qutrit stabilizer state. Since every system has entropy $1$, such a state could be brought to a form in which each system holds a single qudit (Lemma~\ref{lem:A-normal}). Every stabilizer state of six qubits or qutrits is local-Clifford equivalent to a (weighted) graph state~\cite{si:VandenNest2004,si:Bahramgiri2006}, and local Clifford unitaries do not change entropies. An exhaustive search over all $2^{15}$ six-vertex graphs and all $3^{15}$ weighted six-vertex graphs over $\GF(3)$ finds no state with entropy vector $q_2$. The two realizations of $2q_2$ and the search code are part of the data accompanying this paper (Methods).
\end{example}

\subsection{One common information is not enough}
\label{app:five-oneci}

Some excluded orbits pass every single-common-information test: for every pair of sets $X,Y\subseteq\{A,\dots,F\}$, the lift $h_r$ extends to a polymatroid on seven elements in which the seventh element is a common information of $X$ and $Y$ in the sense of Eq.~\eqref{eq:ci}. For 18 orbits we verified this with exact rational extensions for all pairs; floating-point computations, which we did not certify, indicate that it holds for all 1{,}369 orbits excluded with conditional Ingleton instances. These orbits are nevertheless outside $\Stab_5$: conditional Ingleton instances on the extension exclude them. So $\Stab_5$ is strictly smaller than the intersection of $\QLR_5$ with all inequalities obtainable from one common information.

One of the 18 orbits, with entries at most $5$, has the representative
\begin{align*}
  r'=(&2,2,3,2,4,4,4,2,4,4,5,3,5,4,4,2,\\
  &4,4,4,3,4,4,4,3,4,4,4,3,3,3,2)
\end{align*}
in binary order of the subsets of $\{A,\dots,E\}$ ($A$ is the least significant bit, so the entries are $r'(A),r'(B),r'(AB),r'(C),\dots,r'(ABCDE)$). It is an extreme ray of $\QLR_5$ (the tight facets have rank 30). For each of the $\binom{63}{2}=1{,}953$ pairs $X,Y$ there is a rational polymatroid extension in which $Z$ is a common information of $X$ and $Y$: for 1{,}338 pairs it is stored, and for the others it is explicit ($Z$ a copy of the smaller set when $X\subseteq Y$ or $Y\subseteq X$, and $Z=0$ when $I(X;Y)=0$). All extensions were verified in integer arithmetic against the Shannon inequalities on seven elements. Nevertheless, $r'$ violates
\begin{equation}\label{eq:B-oneci}
\begin{aligned}
&S(A)+S(AB)+3S(ABC)+S(AD)+2S(CD)\\
&\quad+2S(BCD)+2S(ACE)+2S(BCE)\\
&\quad+S(ADE)+S(BDE)+S(CDE)\\
&\ge S(B)+S(AC)+3S(BC)+S(D)+2S(ABD)\\
&\quad+3S(ACD)+S(AE)+2S(ABCE)\\
&\quad+S(DE)+S(BCDE)
\end{aligned}
\end{equation}
by one unit. This inequality holds on $\Stab_5$. With $Z=U_A\cap U_B$, the combination
\begin{equation*}
\begin{aligned}
0\le{}&6\,g(Z|ABCDEF)+2I(A;Z|BCDEF)\\
&+I(B;Z|ADEF)+I(B;Z|ACDE)\\
&+I(C;Z|ABDEF)+2I(C;Z|BDEF)\\
&+I(C;Z|AB)+I(C;Z|B)+I(D;Z|BF)\\
&+I(D;Z|ABCEF)\\
&+I(D;Z|BE)+I(D;Z|BC)+I(E;Z|ACD)\\
&+I(E;Z|ABD)+I(E;Z|AD)+I(E;Z|BC)\\
&+I(E;Z|AC)+2I(F;Z|ABCDE)\\
&+I(F;Z|ABCE)+I(F;Z|B)+I(F;Z|A)\\
&+\Ingl(B,E;A,D|CZ)\\
&+\Ingl(E,F;A,B|DZ)+\Ingl(B,Z;A,E)\\
&+\Ingl(A,C;D,Z)+\Ingl(A,D;C,Z)\\
&+\Ingl(A,C;F,Z|BDE)\\
&+\Ingl(A,D;E,Z|F)+\Ingl(D,Z;C,F|E)
\end{aligned}
\end{equation*}
of 21 Shannon inequalities and 8 Ingleton instances becomes~\eqref{eq:B-oneci} after the substitutions $g(AZ)=h(A)$, $g(BZ)=h(B)$, $g(Z)=I(A;B)$ and $h=h_S$. Here $\Ingl(a,b;c,d|K)$ denotes the right-hand side minus the left-hand side of the Ingleton instance $I(a;b|K)\le I(a;b|cK)+I(a;b|dK)+I(c;d|K)$ of Eq.~\eqref{eq:ingleton}; it is nonnegative for subgroup arrangements because contraction by $K$ preserves them.

\subsection{Published six-variable inequalities}
\label{app:five-dfz}

We transcribed the twenty six-variable inequalities printed in Sec.~6 of Ref.~\cite{si:DFZ2009}, Eqs.~(44)--(63), term by term from the printed equations, and pulled each of them back, for all $720$ assignments of its variables to the systems $A,\dots,F$, to a linear form on pure-symmetric vectors. All twenty are balanced. As further checks of the transcription, which detect only errors that make an inequality invalid, none of them is violated by 6{,}000 random arrangements of six subspaces over $\GF(2)$, $\GF(3)$ and $\GF(5)$, by the entropy vectors of six-qubit graph states, or by any of the 641 orbits known to lie in the stabilizer cone. The pulled-back forms exclude 149{,}922 orbits of the catalog, 45{,}292 of them by Eqs.~(44)--(48) alone; all of these orbits are also excluded by our certificates from one common information. The vector
\begin{align*}
  r''=(&3,3,4,3,4,5,5,2,4,4,5,4,5,5,4,3,\\
  &5,5,5,4,5,5,4,5,5,5,5,5,4,5,3),
\end{align*}
in the same order as $r'$ above, is an extreme ray of $\QLR_5$: it satisfies all facet inequalities, and the 57 that vanish at $r''$ have rank 30. With the variables $(A,B,C,D,E,F)$ of Eq.~(45) of Ref.~\cite{si:DFZ2009} assigned to the systems $(D,C,E,F,B,A)$, that inequality reads
\begin{align*}
  I(C;D)\le{}&I(D;E)+I(C;F|E)+I(B;D|F)\\
  &+I(A;D|B)+I(C;D|A),
\end{align*}
and at $r''$ its left-hand side is $1$ while every term on the right vanishes. The witness $r$ of Table~\ref{tab:witness} satisfies all twenty pulled-back inequalities. We could not retrieve the full list of the sharp six-variable inequalities from the address given in Ref.~\cite{si:DFZ2009}.
\section{Proof of Lemma~\ref{lem:JI}, and inequalities with a derived \texorpdfstring{$C$}{C}}
\label{app:lemma1}

Let $V$ be a finite abelian group and write $h(X)=\log\abs{X}$ for subgroups, with sums of subgroups as joins. For subgroups $X$, $Y$, $Y'$, $W_1,\dots,W_k$ and $W\le W'$ of $V$ we use three elementary facts about indices:
\begin{itemize}
\item[(F1)] $[X:X\cap W_1\cap\dots\cap W_k]\le\prod_j[X:X\cap W_j]$;
\item[(F2)] $[X:X\cap W]\le[X:X\cap W']\,[W':W]$, because $(X\cap W')/(X\cap W)$ embeds into $W'/W$;
\item[(F3)] $\log[X:X\cap(Y+Y')]=h(X|YY')$, because $X/(X\cap(Y+Y'))\cong(X+Y+Y')/(Y+Y')$.
\end{itemize}
The same proof works for subspaces of a vector space over any field, with dimensions in place of logarithms of orders.

\subsection{Proof of Lemma~\ref{lem:JI}}\label{app:lemma1-proof}

\emph{Step 0: separating the $A_i$.} Let $O_i=A_i\cap A_{\ne i}$. Then $\log\abs{O_i}=I(A_i;A_{\ne i})\le\alpha$, by subadditivity. Put $O=\sum_iO_i$; then $\log\abs{O}\le M\alpha$, and $O\le A_{\ne i}$ for every $i$. Let $q\colon V\to V/O$ and $\bar A_i=q(A_i)$. The sum $\bar A=\sum_i\bar A_i$ is direct: if $\sum_ia_i=\sum_io_i$ with $a_i\in A_i$ and $o_i\in O_i$, then $a_i-o_i\in A_i\cap A_{\ne i}=O_i$ for each $i$, so $q(a_i)=0$. Let $\pi_i\colon\bar A\to\bar A_i$ be the projections. Then $q(A_{\ne i})=\ker\pi_i$, and for $y\in\bar A$,
\begin{equation}\label{eq:C-twist}
  \sum_{i=1}^M\bigl(y-\pi_i(y)\bigr)=My-y=t\,y .
\end{equation}

\emph{Proof of \LB.} Assume that multiplication by $t$ is injective on $V$; then it is injective on $V/O$. Let $B_i'=B_i\cap A_{\ne i}$ and $C_1=C\cap\bigcap_i(A_i+B_i')$; note $C_1\le A$.
\begin{enumerate}
\item By (F3), $[B_i:B_i']=2^{\zeta_i}$, hence $[A_i+B_i:A_i+B_i']\le2^{\zeta_i}$. By (F2) and (F3), $\log[C:C\cap(A_i+B_i')]\le\eps_i+\zeta_i$, and by (F1), $\log\abs{C_1}\ge h(C)-\sum_i(\eps_i+\zeta_i)$.
\item Let $x\in C_1$ and write $x=a+b$ with $a\in A_i$ and $b\in B_i'\le A_{\ne i}$. Then $\pi_i(qx)=q(a)$, so $qx-\pi_i(qx)=q(b)\in q(B_i)$.
\item Let $Y=q(C_1)$ and $K_i=Y\cap\ker\pi_i$. The preimage of $K_i$ under $q$ is $(C_1+O)\cap A_{\ne i}=(C_1\cap A_{\ne i})+O$, by the modular law and $O\le A_{\ne i}$. Hence $\log\abs{K_i}\le\log\abs{C\cap A_{\ne i}}=\delta_i$.
\item Define $\Phi\colon Y^M\to\bar A$ by $\Phi(y_1,\dots,y_M)=\sum_i(y_i-\pi_iy_i)$. By step~2, the image of $\Phi$ lies in $q(B_1+\dots+B_M)$. If $\Phi(y)=0$, put $\sigma=\sum_iy_i$. Applying $\pi_k$ gives $\pi_k(\sigma-y_k)=0$, so $\kappa_k:=\sigma-y_k\in K_k$, and $\sum_k\kappa_k=t\,\sigma$. The map $y\mapsto(\kappa_k)_k$ is injective on $\ker\Phi$, because $\kappa$ determines $\sigma$ (multiplication by $t$ is injective) and then $y_k=\sigma-\kappa_k$. Hence $\abs{\ker\Phi}\le\prod_k\abs{K_k}$.
\item Combining, $h(B_1\cdots B_M)\ge\log\abs{q(\sum_iB_i)}\ge M\log\abs{Y}-\sum_k\delta_k$, and $\log\abs{Y}\ge\log\abs{C_1}-\log\abs{O}\ge h(C)-\sum_i(\eps_i+\zeta_i)-M\alpha$. Thus
\[
  M\,h(C)\le h(B_1\cdots B_M)+M\textstyle\sum_i(\eps_i+\zeta_i)+M^2\alpha+\sum_i\delta_i ,
\]
which implies \LB{} since $\gamma$ and the $\delta_i$ are nonnegative.
\end{enumerate}

\emph{Proof of \LA.} Assume $tV=0$; then also $t(V/O)=0$. Let $C_A=C\cap A$, so that $\log[C:C_A]=\gamma$, and $B_i''=B_i\cap A_{\ne i}\cap(A_i+C_A)$.
\begin{enumerate}
\item By (F1), $[B_i:B_i'']\le[B_i:B_i\cap A_{\ne i}]\,[B_i:B_i\cap(A_i+C_A)]$. The first factor is $2^{\zeta_i}$. By (F2) the second is at most $[B_i:B_i\cap(A_i+C)]\,[A_i+C:A_i+C_A]\le2^{\eta_i}2^{\gamma}$. Hence $\log[B_i:B_i'']\le\zeta_i+\eta_i+\gamma$.
\item Let $b\in B_i''$ and write $b=a+x$ with $a\in A_i$ and $x\in C_A$. Since $b\in A_{\ne i}$, $\pi_i(qb)=0$, so $q(a)=-\pi_i(qx)$ and $q(b)=qx-\pi_i(qx)$.
\item Define $\Phi\colon q(C_A)^M\to\bar A$ as before. By step~2 its image contains $q(B_1''+\dots+B_M'')$. By Eq.~\eqref{eq:C-twist} and $t(V/O)=0$, its kernel contains the diagonal $\{(y,\dots,y)\}$. Hence $\log\abs{q(\sum_iB_i'')}\le(M-1)\log\abs{q(C_A)}\le(M-1)\,h(C)$.
\item Finally, $h(B_1\cdots B_M)\le\log\abs{\sum_iB_i''}+\sum_i\log[B_i:B_i'']$ and $\log\abs{\sum_iB_i''}\le\log\abs{q(\sum_iB_i'')}+\log\abs{O}$. Together with $\log\abs{O}\le M\alpha$ and step~1 this gives \LA.
\end{enumerate}

\emph{The configuration~\eqref{eq:JI}.} All error terms vanish: the $A_i$ are independent, $c\in A$, $\langle c\rangle\cap A_{\ne i}=0$ (the $i$th coordinate of a multiple $\lambda c$ is $\lambda$), $c=e_i+(c-e_i)\in A_i+B_i$, $c-e_i\in A_{\ne i}$, and $c-e_i\in A_i+C$. Moreover $h(C)=\log\abs{\Rg}$, and $h(B_1\cdots B_M)$ is $\log$ of the order of the column space of $J-I$ over $\Rg$. The adjugate of $J-I$ is $\pm(J-tI)$, whose off-diagonal entries are $\pm1$, so the Smith form of $J-I$ over $\Z$ is $\mathrm{diag}(1,\dots,1,t)$. Over $\Z_N$ ($N\ge2$ a modulus, as in Sec.~\ref{sec:lemma1}) the column space therefore has order $N^{M-1}\,N/\gcd(t,N)$. Over a field of characteristic $\ell$ its dimension is $M-1$ if $\ell\mid t$ and $M$ otherwise. Hence \LB{} fails by $\log\gcd(t,N)$ (by one unit over a field with $\ell\mid t$), and \LA{} fails by $\log(N/\gcd(t,N))$ (by one unit over a field with $\ell\nmid t$).

The bound in step~5 of the proof of \LB{} is stronger than stated, since it needs neither $\gamma$ nor the factor $M$ in front of $\sum_i\delta_i$; we state the weaker form because it is the one used to derive the inequality $\fnine$ of Sec.~\ref{sec:transfer}. The stated forms of \LA{} and \LB{} were also tested numerically: on all $\{0,1\}$-point configurations for $t=2$ over $\GF(2)$ and $\GF(3)$ (self-test gate G31b), on random point configurations over rings $\Z_N$ (gate G32b), and on random perturbations of the configuration~\eqref{eq:JI} over $\GF(2)$, $\GF(3)$, $\GF(5)$, $\Z_4$ and $\Z_9$ for $t\le4$ (script \path{check_sec5.py}; Supplementary Table~\ref{tab:repro}). The form whose hypothesis holds was never violated, and the other form was.

\subsection{Deriving \texorpdfstring{$C$}{C}}
\label{app:lemma2}

The subgroup $C$ of the configuration~\eqref{eq:JI} is $(A_1+B_1)\cap\dots\cap(A_M+B_M)$. Deriving it in this way, and bounding the terms of Lemma~\ref{lem:JI} that involve $C$ by terms of the $A_i$ and $B_i$, gives the following inequalities in $2M$ variables, with the lower bound $\hcap(C)$ for $h(C)$.

\begin{applemma}[Deriving $C$]\label{lem:JIstar}
Let $t\ge2$, $M=t+1$, and let $A_1,\dots,A_M$, $B_1,\dots,B_M$ be subgroups of a finite abelian group $V$ with rank function $h$. Write $P_m=A_m+B_m$ and juxtapose indices for sums, $P_{1k}=P_1+P_k$ and $P_{12k}=P_1+P_2+P_k$; let $A$, $A_{\ne i}$, $\alpha$ and $\zeta_i$ be as in Lemma~\ref{lem:JI}, choose an index $j(i)\ne i$ for every $i$, and put
\begin{align*}
  \hcap(C)={}&h(P_1)+h(P_2)-h(P_{12})\\
  &+\textstyle\sum_{k=3}^{M}\bigl(h(P_k)-h(P_{1k})-h(P_{2k})+h(P_{12k})\bigr),\\
  \gamma^*={}&h(A_1\cdots A_MB_1\cdots B_M)-h(A),\\
  \delta^*_i={}&I(A_{j(i)}B_{j(i)};B_i)+I(A_i;A_{\ne i}\mid B_i),\\
  \theta_i={}&I(A_i;A_{j(i)}B_{j(i)}).
\end{align*}
\begin{itemize}
\item[\LB$^*$] If multiplication by $t$ is injective on $V$, then
\[
  M\,\hcap(C)\le h(B_1\cdots B_M)+M\Bigl[\gamma^*+M\alpha+\textstyle\sum_i(\delta^*_i+\zeta_i)\Bigr].
\]
\item[\LA$^*$] If $tV=0$, then
\begin{align*}
  h(B_1\cdots B_M)\le{}&\textstyle\sum_i\bigl(h(P_i)-h(A_i)\bigr)-\hcap(C)+\sum_i\theta_i\\
  &+M\alpha+M\gamma^*+\textstyle\sum_i\zeta_i .
\end{align*}
\end{itemize}
\end{applemma}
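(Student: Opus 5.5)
The plan is to apply Lemma~\ref{lem:JI} to the family $A_1,\dots,A_M,B_1,\dots,B_M$ together with the derived subgroup $C:=P_1\cap\dots\cap P_M$, where $P_m=A_m+B_m$. The hypothesis on $t$ is a property of $V$, so \LB{} and \LA{} apply verbatim. It then remains to replace every term that involves $C$ by a bound in the $A_i$ and $B_i$ alone, with the direction of each bound fixed by the sign of the term.

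\emph{Lower bound on $h(C)$.} I will show $h(C)\ge\hcap(C)$ by iterated intersection. First, $\log\abs{P_1\cap P_2}=h(P_1)+h(P_2)-h(P_{12})$. Then, for each $k\ge3$, intersecting with $P_k$ loses at most $\log[P_k:P_k\cap(P_1\cap\dots\cap P_{k-1})]$. I expect to bound this loss by $h(P_{1k})+h(P_{2k})-h(P_{12k})-h(P_k)$, using (F1)--(F3) and the modular law inside $P_1+P_2$. This is the step I am least sure of, and I expect it to be the main obstacle: the individual losses must sum to exactly the telescoping form of $\hcap$. If the naive bound does not match, I would first try routing each intersection through $P_1\cap P_2$ via (F2). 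In \LB{} the term $h(C)$ has a positive coefficient on the small side, so the lower bound is what is needed there. In \LA{} it appears as $(M-1)h(C)$ on the large side; there I will write $(M-1)h(C)=\sum_i\bigl(h(P_i)-h(A_i)+\theta_i\bigr)-h(C)+\dots$, which again needs only $h(C)\ge\hcap(C)$. The identity behind this rewriting is discussed next.

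\emph{Upper bounds on the error terms.} Since $C\le P_i$ for every $i$, the terms $\gamma$, $\eps_i$ and $\eta_i$ need checking one by one. For $\eps_i$ we get $h(C|A_iB_i)=0$ at once. For $\gamma$, $C\le\sum P_i$ gives $\gamma\le h(A_1\cdots A_MB_1\cdots B_M)-h(A)=\gamma^*$. For $\delta_i=I(C;A_{\ne i})$ I will use $C\le P_{j(i)}$ together with a chain of mutual informations through $B_i$, aiming at $I(A_{j(i)}B_{j(i)};B_i)+I(A_i;A_{\ne i}\mid B_i)=\delta_i^*$. The idea is that anything in $C\cap A_{\ne i}$ lies in $P_i$, so modulo $B_i$ it is a common part of $A_i$ and $A_{\ne i}$. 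For \LA{}, the term $\eta_i=h(B_i|A_iC)$ combined with $(M-1)h(C)$ should produce the $h(P_i)-h(A_i)$ terms. Precisely, I will use $h(C)\le h(P_i)-h(A_i)+I(A_i;C)$ (from $C\le P_i$), then $I(A_i;C)\le I(A_i;P_{j(i)})=\theta_i$, and then sum over $i$, writing $(M-1)h(C)=Mh(C)-h(C)$. The remaining $\eta_i$ should be absorbed because $B_i\le A_i+C$ up to $\zeta_i$. If it is not, I would bound $\eta_i\le h(B_i|A_i)-h(C|A_i)+\dots$ and let the resulting terms cancel against the $P_i$ terms.

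Finally, I will collect the terms and check the result on the configuration~\eqref{eq:JI}. There all starred error terms vanish and $\hcap(C)=h(C)$, since the pairwise intersections there are tight. This confirms that the derived inequalities remain tight, which is what Theorem~\ref{thm:primes} needs.
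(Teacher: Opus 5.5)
Your strategy is the paper's: apply Lemma~\ref{lem:JI} to $C=P_1\cap\dots\cap P_M$ and bound each term involving $C$ in the direction fixed by its sign. The steps $\eps_i=0$ and $\gamma\le\gamma^*$ are right. The step that fails is your treatment of $\eta_i$ in \LA$^*$.

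Since $C\le P_i$, we have $A_i+B_i+C=P_i$, so $\eta_i=h(P_i)-h(A_i+C)$. This is exactly the slack of your bound $h(C)\le h(P_i)-h(A_i)+I(A_i;C)$, which says $h(A_i+C)\le h(P_i)$. Once you use that bound on $(M-1)h(C)$, the $\eta_i$ are spent. The $\sum_i\eta_i$ still on the right of \LA{} cannot then be absorbed, because no inequality of the form $\eta_i\le\zeta_i$ holds. For a counterexample, work over $\GF(2)$ with $M=3$, $A_1=\langle e_1\rangle$, $A_2=B_1=\langle e_2\rangle$ and $A_3=B_2=B_3=0$. Then $C=0$ and $\alpha=\gamma^*=\zeta_1=\theta_1=0$, but $\eta_1=1$.

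The fix is to keep the equality $\eta_i=h(P_i)-h(A_i)-h(C)+h(A_i\cap C)$. Then $(M-1)h(C)+\sum_i\eta_i=\sum_i\bigl(h(P_i)-h(A_i)+h(A_i\cap C)\bigr)-h(C)$ holds exactly, and only afterwards do you bound $h(A_i\cap C)\le h(A_i\cap P_{j(i)})=\theta_i$ and $-h(C)\le-\hcap(C)$. Your fallback $\eta_i=h(B_i\mid A_i)-h(C\mid A_i)$ is precisely this identity, with equality and nothing omitted. But it must replace the bound on $h(C)$, not follow it: used after it, $\sum_i(h(P_i)-h(A_i))$ appears twice and nothing cancels.

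The lower bound $h(C)\ge\hcap(C)$, which you leave open, also lacks its key step. Write $C^{(k)}=P_1\cap\dots\cap P_k$. The loss at step $k$ is $h(C^{(k-1)})-h(C^{(k)})=h(C^{(k-1)}+P_k)-h(P_k)$, not $\log[P_k:P_k\cap C^{(k-1)}]$ as you wrote. On the configuration~\eqref{eq:JI} with $M\ge4$, the latter is $1$ (in units of $\log\abs{\Rg}$) while your target bound is $0$.

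The missing idea is the containment $C^{(k-1)}+P_k\le(P_1+P_k)\cap(P_2+P_k)$, which follows from $C^{(k-1)}\le P_1\cap P_2$. Since $(P_1+P_k)+(P_2+P_k)=P_{12k}$, the identity $\abs{X+Y}\,\abs{X\cap Y}=\abs X\,\abs Y$ gives this intersection rank $h(P_{1k})+h(P_{2k})-h(P_{12k})$. This is your idea of routing through $P_1\cap P_2$, made exact.

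For $\delta_i$, your sketch becomes a proof once you set $E_i=P_i\cap A_{\ne i}$ and $E_i'=B_i\cap A_{\ne i}$, and take $j=j(i)$.
\begin{itemize}
\item From $P_i+A_{\ne i}=A+B_i$ one gets $h(E_i)-h(E_i')=I(A_i;A_{\ne i}\mid B_i)$.
\item $C\cap A_{\ne i}\le P_j\cap E_i$.
\item $(P_j\cap E_i)/(P_j\cap E_i')$ embeds into $E_i/E_i'$, and $P_j\cap E_i'\le P_j\cap B_i$.
\end{itemize}
Together these give $\delta_i\le\delta_i^*$.

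Finally, your closing check is not quite right: $\hcap(C)=h(C)$ on the configuration holds only for $M\ge4$. For $M=3$ one has $\hcap(C)=0<h(C)$, which is why the starred inequalities give nothing for $t=2$.
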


On the subgroups $A_i=\langle e_i\rangle$ and $B_i=\langle c-e_i\rangle$ of~\eqref{eq:JI} with $M\ge4$, $\hcap(C)=1$ and $h(P_i)-h(A_i)=1$ in units of $\log\abs{\Rg}$, and every error term vanishes; so over a field \LB$^*$ fails by one unit when the characteristic divides $t$ and \LA$^*$ when it does not, and over $\Z_N$ they fail by $\log\gcd(t,N)$ and by $\log(N/\gcd(t,N))$. For $t=2$, $\hcap(C)=0$ on the configuration and neither inequality is violated, so for qubits we use the transfers of Sec.~\ref{sec:six} (Methods, Secs.~\ref{sec:transfer} and~\ref{sec:reverse}). For $t=3$, \LB$^*$ and \LA$^*$ play the roles of Theorems~3.1 and~4.1 of Ref.~\cite{si:DFZ2015char}, the first of which is incorrect as printed.

\emph{Proof of Lemma~\ref{lem:JIstar}.}

Let $C=P_1\cap\dots\cap P_M$ and apply Lemma~\ref{lem:JI} to $A_1,\dots,A_M$, $B_1,\dots,B_M$ and $C$. We bound the terms that involve $C$, using $h(X)+h(Y)=h(X+Y)+h(X\cap Y)$ for subgroups $X$ and $Y$; in particular $I(X;Y)=h(X\cap Y)$.
\begin{enumerate}
\item $h(C)\ge \hcap(C)$. Put $C^{(2)}=P_1\cap P_2$, so that $h(C^{(2)})=h(P_1)+h(P_2)-h(P_{12})$, and $C^{(k)}=C^{(k-1)}\cap P_k$ for $k=3,\dots,M$; then $C=C^{(M)}$. Since $C^{(k-1)}\le P_1\cap P_2$, the sum $C^{(k-1)}+P_k$ lies in $(P_1+P_k)\cap(P_2+P_k)$, whose rank is $h(P_{1k})+h(P_{2k})-h(P_{12k})$ because $(P_1+P_k)+(P_2+P_k)=P_{12k}$. Hence $h(C^{(k)})=h(C^{(k-1)})+h(P_k)-h(C^{(k-1)}+P_k)\ge h(C^{(k-1)})+h(P_k)-h(P_{1k})-h(P_{2k})+h(P_{12k})$.
\item $\gamma=h(C\mid A)\le\gamma^*$, because $C\le\sum_m(A_m+B_m)$.
\item $\eps_i=h(C\mid A_iB_i)=0$, because $C\le P_i$.
\item $\delta_i=h(C\cap A_{\ne i})\le\delta^*_i$. Put $E_i=P_i\cap A_{\ne i}$ and $E'_i=B_i\cap A_{\ne i}\le E_i$. Since $P_i+A_{\ne i}=A+B_i$, we have $h(E_i)-h(E'_i)=h(A_i\mid B_i)-h(A_i\mid A_{\ne i}B_i)=I(A_i;A_{\ne i}\mid B_i)$. With $j=j(i)$, $C\cap A_{\ne i}\le P_j\cap E_i$, and $(P_j\cap E_i)/(P_j\cap E'_i)$ embeds into $E_i/E'_i$. Hence $\delta_i\le h(P_j\cap E'_i)+h(E_i)-h(E'_i)\le h(P_j\cap B_i)+I(A_i;A_{\ne i}\mid B_i)=\delta^*_i$.
\item $\eta_i=h(B_i\mid A_iC)=h(P_i)-h(A_i)-h(C)+h(A_i\cap C)$, because $A_i+B_i+C=P_i$; and $h(A_i\cap C)\le h(A_i\cap P_{j(i)})=\theta_i$.
\end{enumerate}
\LB{} reads $M\,h(C)\le h(B_1\cdots B_M)+M\bigl[\gamma+M\alpha+\sum_i(\delta_i+\eps_i+\zeta_i)\bigr]$; bounding $h(C)$ from below by step~1 and $\gamma$, $\delta_i$ and $\eps_i$ from above by steps~2--4 gives \LB$^*$. In \LA{}, $h(B_1\cdots B_M)\le(M-1)\,h(C)+M\alpha+M\gamma+\sum_i(\zeta_i+\eta_i)$, step~5 turns the right-hand side into $\sum_i\bigl(h(P_i)-h(A_i)\bigr)-h(C)+\sum_ih(A_i\cap C)+M\alpha+M\gamma+\sum_i\zeta_i$; bounding $-h(C)$ by $-\hcap(C)$, $h(A_i\cap C)$ by $\theta_i$ and $\gamma$ by $\gamma^*$ gives \LA$^*$. Every step uses only $\abs{X+Y}\,\abs{X\cap Y}=\abs{X}\,\abs{Y}$, inclusions of subgroups and embeddings of quotients, so Lemma~\ref{lem:JIstar} holds for subgroups of finite abelian groups under the hypotheses of Lemma~\ref{lem:JI}.

\emph{The configuration~\eqref{eq:JI} without $C$.} Let $M\ge4$ and $\Rg=\GF(\ell)$ or $\Z_N$, and measure ranks in units of $\log\abs{\Rg}$. Then $P_m=\langle e_m,c\rangle$. For $k\ge3$ and one $l\notin\{1,2,k\}$, the $4\times4$ minor of $(e_1,e_2,e_k,c)$ in the coordinates $1$, $2$, $k$ and $l$ equals $1$; so $h(P_m)=2$, $h(P_{12})=h(P_{1k})=h(P_{2k})=3$, $h(P_{12k})=4$ and $\hcap(C)=1$. The error terms vanish: $\alpha=0$; $\gamma^*=0$, since every $B_i\le A=\Rg^M$; $\zeta_i=0$, since $c-e_i\in A_{\ne i}$; $I(A_i;A_{\ne i}\mid B_i)=0$, since $B_i\le A_{\ne i}$; and, for $j=j(i)$, $\theta_i=I(A_i;A_jB_j)=0$ and $I(A_jB_j;B_i)=0$, because comparing the $i$th coordinate and a coordinate outside $\{i,j\}$ shows that $\langle e_j,c\rangle$ meets $\langle e_i\rangle$ and $\langle c-e_i\rangle$ only in $0$. So every $\delta^*_i$ vanishes. Finally $h(P_i)-h(A_i)=1$. So \LB$^*$ reads $M\le h(B_1\cdots B_M)$ and \LA$^*$ reads $h(B_1\cdots B_M)\le M-1$, and the value of $h(B_1\cdots B_M)$ computed above gives the failures stated above. For $M=3$, by contrast, $P_{12}$, $P_{13}$, $P_{23}$ and $P_{123}$ are all of $\Rg^3$, so $\hcap(C)=2+2-3+(2-3-3+3)=0$ and neither inequality is violated.

\subsection{Relation to the inequalities of Pe\~na-Macias and Sarria}
\label{app:lemma1-ps}

For vector spaces, \LB{} follows from Theorem~9 of Ref.~\cite{si:PenaSarria2021}, with their $n$ equal to $t$, by adding Shannon terms: as linear forms, \LB{} equals $M$ times that inequality plus $M\sum_{i=2}^{t}I(A_1\cdots A_{i-1};A_i)$. Their Theorem~7 plays the role of \LA{}: on the configuration~\eqref{eq:JI} it reduces to $h(B_1\cdots B_M)\le t\,h(C)$. For $t=2$, \LB{} also follows from the inequality (b) of Ref.~\cite{si:PenaSarria2019} and from the corrected inequality (6.17) of Ref.~\cite{si:BKL2011} by adding Shannon terms (Supplementary Note~\ref{app:blind}). These theorems are proved for vector spaces; Lemma~\ref{lem:JI} proves both inequalities for finite abelian groups.
\section{The eight seven-variable inequalities}
\label{app:blind}

\begin{table}[t]
\caption{The eight seven-variable characteristic-dependent linear rank inequalities of Proposition~\ref{prop:blind}. Variables are labeled by the points $A,B,C,W,X,Y,Z=100,010,001,110,101,011,111$ of $\GF(2)^3$. ``Fano'' is the configuration of these seven vectors over $\GF(2)$, ``non-Fano'' the same vectors over a field of odd characteristic; each inequality holds in the characteristic shown and takes the value $-1$ on the configuration in the last column, with (b) multiplied by $M=3$. Equation and theorem numbers are those of the cited arXiv versions; Table~\ref{tab:eight-coeffs} lists all eight forms. The corrected form of (6.17) is that of Supplementary Note~\ref{app:misprints}; its validity rests on the proof of Ref.~\cite{si:BKL2011} with the missing terms restored.}
\label{tab:eight}
\begin{ruledtabular}
\begin{tabular}{lccc}
inequality & source & char. & fails on\\
\hline
(65) & \cite{si:DFZ2015rate} & odd & Fano\\
(91) & \cite{si:DFZ2015rate} & $2$ & non-Fano\\
(6.8) & \cite{si:BKL2011} & $2$ & non-Fano\\
(6.17), corrected & \cite{si:BKL2011} & odd & Fano\\
(a), $t=2$ & \cite{si:PenaSarria2019} & $2$ & non-Fano\\
(b), $t=2$ & \cite{si:PenaSarria2019} & odd & Fano\\
\LA, $t=2$ & Lemma~\ref{lem:JI} & $2$ & non-Fano\\
\LB, $t=2$ & Lemma~\ref{lem:JI} & odd & Fano\\
\end{tabular}
\end{ruledtabular}
\end{table}

As in Table~\ref{tab:eight}, the seven variables are the points of $\GF(2)^3$: $A$, $B$ and $C$ are $100$, $010$ and $001$, then $W$, $X$ and $Y$ are $110$, $101$ and $011$, and $Z$ is $111$. The inequalities (65) and (91) of Dougherty, Freiling and Zeger are taken from Ref.~\cite{si:DFZ2015rate} (arXiv version, Theorems 8.6 and 8.8), and the inequalities (6.8) and (6.17) of Blasiak, Kleinberg and Lubetzky from Ref.~\cite{si:BKL2011} (arXiv version, Theorems 6.4 and 6.6), with (6.17) in the corrected form of Supplementary Note~\ref{app:misprints}. The inequalities (a) and (b) are taken from Example~6 of Ref.~\cite{si:PenaSarria2019} with $t=2$ and $M=3$, \ie, seven variables, under the identification $(A_1,A_2,A_3,B_1,B_2,B_3,C)=(A,B,C,Y,X,W,Z)$; we multiply (b) by $M=3$ to clear its denominator. All were transcribed term by term from the cited sources. The inequalities \LA{} and \LB{} are those of Lemma~\ref{lem:JI} with $t=2$ and the same identification; \LB{} is written out in Eq.~\eqref{eq:LB2}. Table~\ref{tab:eight-coeffs} lists all eight forms. As a check of the transcriptions, each takes the value $-1$ on the configuration listed in Table~\ref{tab:eight} and is nonnegative on the other one. In this labeling, (6.8) equals (a)${}+I(B;C)$, the corrected (6.17) equals \LB${}-6I(A;B)$, and (b) equals \LB${}-3I(A;B)-3I(B;C)$; nevertheless no two of the eight forms are related by a relabeling of the variables. Since $\Dual$ maps the Shannon cone on seven elements into itself, the first and the last of these identities make Proposition~\ref{prop:blind} for (6.8) and \LB{} a consequence of the statements for (a) and (b). Modulo conditional mutual informations of either sign, the eight forms fall into five classes, distinguished by their values on the modular vectors $\ind{i}$: those of (65), (91), \LA, (a) and (b); the class of (a) contains (6.8), and that of (b) contains \LB{} and the corrected (6.17). This does not transfer blindness to the corrected (6.17), which differs from \LB{} by a negative Shannon term, so Proposition~\ref{prop:blind} is certified for each of the eight forms separately. Theorems~7 and~9 of Ref.~\cite{si:PenaSarria2021}, with their parameter $n=2$ and the same identification, add nothing new here: Theorem~7 (characteristic $2$) coincides with (6.8), and three times Theorem~9 (odd characteristic) equals (b)${}+3I(B;C)$, which is \LB${}-3I(A;B)$. Both are therefore blind as well.

\begin{table*}[t]
\caption{The eight inequalities of Table~\ref{tab:eight}, each written as $0\le\sum_Tc_T\,h(T)$; the table lists the coefficients $c_T$ (empty entries are zero), with the variables labeled as in Table~\ref{tab:eight}. Column (b) is $M=3$ times the inequality as printed, and column (6.17) is the corrected form of Supplementary Note~\ref{app:misprints}.}
\label{tab:eight-coeffs}
\begin{ruledtabular}
\footnotesize
\begin{tabular}{lrrrrrrrr}
set & (65) & (91) & (6.8) & (6.17) & (a) & (b) & \LA & \LB\\
\hline
$A$ & $3$ & $5$ & $2$ & $3$ & $2$ & $6$ & $3$ & $9$\\
$B$ & $4$ & $4$ & $2$ & $3$ & $1$ & $3$ & $3$ & $9$\\
$C$ & $3$ & $5$ & $3$ & $9$ & $2$ & $6$ & $3$ & $9$\\
$W$ & $1$ & $1$ &  &  &  &  &  & \\
$X$ & $1$ & $1$ &  &  &  &  &  & \\
$Y$ & $1$ & $1$ &  &  &  &  &  & \\
$Z$ & $1$ & $3$ & $11$ & $6$ & $11$ & $6$ & $2$ & $6$\\
$AB$ & $-5$ & $-2$ & $3$ & $6$ & $3$ & $3$ & $-1$ & \\
$AC$ &  & $-4$ & $2$ &  & $2$ &  & $-1$ & \\
$AX$ & $-2$ &  &  &  &  &  &  & \\
$AY$ &  &  &  & $-3$ &  & $-3$ &  & $-3$\\
$AZ$ &  &  & $-1$ &  & $-1$ &  & $-1$ & \\
$BC$ & $-5$ & $-3$ & $2$ &  & $3$ & $3$ & $-1$ & \\
$BX$ &  &  &  & $-3$ &  & $-3$ &  & $-3$\\
$BZ$ &  &  & $-1$ &  & $-1$ &  & $-1$ & \\
$CW$ & $-3$ &  &  & $-3$ &  & $-3$ &  & $-3$\\
$CZ$ &  &  & $-1$ &  & $-1$ &  & $-1$ & \\
$WY$ & $-3$ &  &  &  &  &  &  & \\
$WZ$ &  & $-1$ &  &  &  &  &  & \\
$XZ$ & $-1$ & $-3$ &  &  &  &  &  & \\
$YZ$ & $-2$ & $-2$ &  &  &  &  &  & \\
$ABC$ & $-5$ & $-13$ & $-4$ & $-12$ & $-4$ & $-12$ & $-6$ & $-12$\\
$ABW$ & $5$ & $2$ & $1$ & $3$ & $1$ & $3$ & $1$ & $3$\\
$ABZ$ &  &  & $-3$ & $-3$ & $-3$ & $-3$ &  & $-3$\\
$ACX$ & $2$ & $4$ & $1$ & $3$ & $1$ & $3$ & $1$ & $3$\\
$ACZ$ &  &  & $-3$ & $-3$ & $-3$ & $-3$ &  & $-3$\\
$AYZ$ & $2$ & $2$ & $1$ & $3$ & $1$ & $3$ & $1$ & $3$\\
$BCY$ & $5$ & $3$ & $1$ & $3$ & $1$ & $3$ & $1$ & $3$\\
$BCZ$ &  &  & $-3$ & $-3$ & $-3$ & $-3$ &  & $-3$\\
$BXZ$ & $1$ & $3$ & $1$ & $3$ & $1$ & $3$ & $1$ & $3$\\
$CWZ$ & $3$ & $1$ & $1$ & $3$ & $1$ & $3$ & $1$ & $3$\\
$WXY$ & $3$ & $-1$ & $-1$ & $1$ & $-1$ & $1$ & $-1$ & $1$\\
$ABCZ$ &  & $6$ & $1$ & $3$ & $1$ & $3$ & $3$ & $3$\\
$CWXY$ &  & $1$ &  &  &  &  &  & \\
\end{tabular}
\end{ruledtabular}
\end{table*}

\emph{Proof of Proposition~\ref{prop:blind}.} If $h$ is a self-dual polymatroid, then $f(h)=\frac12\bigl(f(h)+f(h^*)\bigr)=\frac12(f+f\circ\Dual)(h)\ge0$, by the certificates below. The Shannon cone is invariant under relabelings, and $\Dual$ commutes with them, so the same holds in every relabeling. For a pure state on six parties and a purifier, $h_S$ is a polymatroid: it is submodular by strong subadditivity, monotone because $S(Xi)+S(i)\ge S(X)$, and self-dual by purity (Lemma~\ref{lem:A-selfdual}).\qed

For each inequality $f\ge0$ we computed $g=f+f\circ\Dual$, where $(f\circ\Dual)(h)=f(h^*)$, and solved a linear program for nonnegative multipliers $c_k$ with $g=\sum_kc_k\theta_k$, where the $\theta_k$ are the elemental Shannon inequalities on seven variables, $h(i\mid N\setminus i)\ge0$ and $I(i;j\mid K)\ge0$. On the support of the solution we recomputed the multipliers exactly as rationals and verified the identity in exact arithmetic. The certificates involve at most 114 elemental inequalities each; the script \path{check_sec6.py} recomputes and verifies them (Supplementary Table~\ref{tab:repro}).

The certificates show more than the application to quantum states needs: each $f$ is nonnegative on every self-dual polymatroid, not only on lifts of quantum states. On the self-dual vectors of interest the inequalities are far from tight. Over all $5{,}040$ relabelings, \LB{} takes values at least $17$ and (65) at least $19$ on $\hF$, while \LA{} takes values at least $6$ and (91) at least $16$ on $\hNF$.

\emph{A further inequality.} Corollary~2.7 of Pe\~na-Macias~\cite{si:PenaMacias2023} gives two characteristic-dependent rank inequalities for the configuration~\eqref{eq:JI}, one valid over finite fields whose characteristic does not divide $n-1$ and one over finite fields whose characteristic divides it. For $n=3$, with the subspaces $A_1,A_2,A_3$, $B_1,B_2,B_3$ and $C$ of Eq.~\eqref{eq:JI}, the first reads, as we transcribed it from the journal PDF,
\begin{align*}
h(C)\le{}&\tfrac14\bigl(h(A_1)+h(A_2)+h(A_3)\bigr)+h(C\mid A_1A_2A_3)\\
&+\textstyle\sum_i\bigl[h(C\mid A_iB_i)+I(C;A_{\ne i})+I(C;B_i)\\
&\qquad+I(C;A_{\ne i}B_i)\bigr]+I(C;B_1B_2B_3).
\end{align*}
It takes the value $-1/4$, as the right side minus the left side, on the configuration over $\GF(2)$, and $3/4$ over $\GF(3)$ and $\GF(5)$. It is blind as well: four times $f+f\circ\Dual$, for $f$ the right side minus the left side, is a nonnegative combination of $88$ elemental Shannon inequalities with rational coefficients (\path{check_rev_blind_pm.py}). We have not tested his other inequalities.

\emph{Symmetrization.} A dependence on the local dimension at six parties requires a seven-variable inequality $g$ with $g\circ\Dual=g$ that is valid for one local dimension and violated by the lift of a stabilizer state of another (Sec.~\ref{sec:blind}). Indeed, if $f\ge0$ on $\LcZ{d'}{7}$ and $f(h_S)<0$ for a stabilizer state of local dimension $d$, we may assume that $f$ is balanced (Supplementary Note~\ref{app:reduction-balanced}), since $h_S(i\mid N\setminus i)=0$; then $g=f+f\circ\Dual$ is nonnegative on $\LcZ{d'}{7}$, because $\Dual$ maps this cone into itself (Lemma~\ref{lem:A-dual}), it satisfies $g\circ\Dual=g$, because $\Dual^2h=h-\sum_ih(i\mid N\setminus i)\,\ind{i}$ for $h(\emptyset)=0$ and $f$ is balanced, and $g(h_S)=2f(h_S)<0$.
\section{The transfer and the six-party inequality}
\label{app:transfer}

\begin{figure}[t]
\centering
\begin{tikzpicture}[scale=1.6,
  pt/.style={circle, fill=black, inner sep=1.6pt},
  lab/.style={font=\footnotesize}]
\coordinate (A) at (90:1.6);
\coordinate (B) at (210:1.6);
\coordinate (C) at (330:1.6);
\coordinate (W) at ($(A)!0.5!(B)$);
\coordinate (X) at ($(A)!0.5!(C)$);
\coordinate (Y) at ($(B)!0.5!(C)$);
\coordinate (Z) at (0,0);
\draw[thick, gray] (A)--(B)--(C)--cycle;
\draw[thick, gray] (A)--(Y); \draw[thick, gray] (B)--(X); \draw[thick, gray] (C)--(W);
\draw[thick, gray] (Z) circle (0.8);
\draw[very thick, blue!70!black] (A)--(B);
\draw[very thick, blue!70!black] (A)--(C);
\draw[very thick, blue!70!black] (A)--(Y);
\foreach \p in {A,B,C,W,X,Y,Z} \node[pt] at (\p) {};
\node[lab, above] at (A) {$A=100$};
\node[lab, below left] at (B) {$B=010$};
\node[lab, below right] at (C) {$C=001$};
\node[lab, left] at (W) {$W=110$};
\node[lab, right] at (X) {$X=101$};
\node[lab, below] at (Y) {$Y=011$};
\node[lab, right, fill=white, inner sep=1pt] at (Z) {$Z=111$};
\end{tikzpicture}
\caption{The Fano plane on the seven systems of Theorem~\ref{thm:steane}. Each point $i$ lies on three lines; for each line $\ell=\{i,j,k\}$ the derived family of the Methods (Sec.~\ref{sec:transfer}) keeps the piece $F_i^\ell=U_i\cap(U_j+U_k)$ of $U_i$ that is generated within the line, and $\hat F_i$ is the sum of the three pieces (shown for $i=A$, whose lines $ABW$, $ACX$ and $AYZ$ are highlighted).}
\label{fig:fano}
\end{figure}

\subsection{The transfer lemma}
\label{app:transfer-lemma}

We keep the labels of Sec.~\ref{sec:qss} and the derived family $(\hat F_P)$ of Eq.~\eqref{eq:derived}, with rank function $\hat h$. Write $m_i^\ell=I(i;\ell\setminus i)$, which equals $\log\abs{F_i^\ell}$. For the three lines $\ell_1,\ell_2,\ell_3$ through $i$, let $Q_{ab}=(\ell_a\cup \ell_b)\setminus i$, and put
\begin{equation}\label{eq:ui}
  u_i=\tfrac13\Bigl[2\sum_{a<b}I(i;Q_{ab})-\sum_am_i^{\ell_a}\Bigr],\qquad
  l_i=\tfrac13\sum_am_i^{\ell_a}.
\end{equation}

\begin{applemma}[Transfer]\label{lem:transfer}
The rank function $\hat h$ of the derived family satisfies the following bounds, all linear in $h$:
\begin{enumerate}
\item[(i)] $l_i\le\hat h(i)\le u_i$ for every point $i$;
\item[(ii)] $\hat h(Q)\ge h(Q)-\sum_{i\in Q}\bigl(h(i)-l_i\bigr)$ for every set $Q$ of points;
\item[(iii)] $\hat h(\ell)\le u_i+u_j+u_k-m_i^\ell$ for every line $\ell=\{i,j,k\}$ and every $i\in \ell$;
\item[(iv)] $\hat h(ABCZ)\le u_A+u_B+u_C+u_Y+u_Z-m_Y^{BCY}-m_Z^{AYZ}$.
\end{enumerate}
\end{applemma}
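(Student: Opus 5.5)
The plan is to work directly with the subgroups, using three facts. First, $\log\abs{F_i^\ell}=h(i)+h(\ell\setminus i)-h(\ell)=m_i^\ell$ for every line $\ell\ni i$, by the product formula $\abs{X+Y}\abs{X\cap Y}=\abs X\abs Y$. Second, the pieces of a line generate each other, as shown in the Methods (Sec.~\ref{sec:transfer}): $F_i^\ell\le F_j^\ell+F_k^\ell$ for $\ell=\{i,j,k\}$. Third, subgroup rank functions are submodular in the strong form $\log\abs{X+Y}=\log\abs X+\log\abs Y-\log\abs{X\cap Y}$, so any subgroup known to lie in $X\cap Y$ gives an upper bound on $\log\abs{X+Y}$.

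For (i), write $F_a=F_i^{\ell_a}$ for the three lines $\ell_1,\ell_2,\ell_3$ through $i$. The lower bound is immediate: $\hat F_i\ge F_a$ for each $a$, so $\hat h(i)\ge\max_am_i^{\ell_a}\ge l_i$. For the upper bound, note that $F_a+F_b\le U_i\cap\sum_{P\in Q_{ab}}U_P$, so $\log\abs{F_a+F_b}\le I(i;Q_{ab})$. Then $(F_a+F_b)\cap(F_a+F_c)\ge F_a$ gives
\[
\hat h(i)\le I(i;Q_{ab})+I(i;Q_{ac})-m_i^{\ell_a}.
\]
Averaging this over the three choices of $a$ yields exactly $u_i$.

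For (ii), I will use that replacing a subgroup $U_i$ by a subgroup $\hat F_i\le U_i$ shrinks $\abs{X+U_i}$ by at most the index $[U_i:\hat F_i]$. Doing this one point at a time gives $h(Q)\le\hat h(Q)+\sum_{i\in Q}(h(i)-\hat h(i))$, and (i) then lets me bound $\hat h(i)$ below by $l_i$.

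For (iii), I use submodularity in the form $\hat h(\ell)\le\hat h(jk)+\hat h(i)-\log\abs{\hat F_i\cap(\hat F_j+\hat F_k)}$. The intersection contains $F_i^\ell$, because $F_i^\ell\le F_j^\ell+F_k^\ell\le\hat F_j+\hat F_k$. Bounding $\hat h(jk)\le\hat h(j)+\hat h(k)$ and then applying (i) gives the claim.

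For (iv), monotonicity gives $\hat h(ABCZ)\le\hat h(ABCYZ)$, and I build this sum step by step. I start from $\hat h(BCY)\le u_B+u_C+u_Y-m_Y^{BCY}$, which is (iii) at $Y\in BCY$. Adding $\hat F_A$ costs at most $u_A$. Adding $\hat F_Z$ costs at most $u_Z-m_Z^{AYZ}$, since $\hat F_Z\cap(\hat F_A+\hat F_Y)$ contains $F_Z^{AYZ}\le F_A^{AYZ}+F_Y^{AYZ}$.

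The only delicate point is the mutual-generation property and keeping track of which pieces lie in which sums. Everything else is index bookkeeping and holds in any finite abelian group. I expect the step in (i), choosing $F_a$ as the common part of two pairwise sums and averaging, to be where the precise coefficients of $u_i$ have to be matched.
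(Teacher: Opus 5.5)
Your proof is correct and follows the paper's argument step for step. It uses the piece-size formula, mutual generation of the pieces, the bound $\abs{P+P'+P''}\,\abs{P}\le\abs{P+P'}\,\abs{P+P''}$ averaged over the three lines for (i), the index comparison for (ii), and the intersection bounds for (iii) and (iv); the only differences are cosmetic: you replace the subgroups one at a time in (ii) instead of using one surjection, and in (iv) you start from $BCY$ via (iii) and then add $A$, whereas the paper starts from $ABC$ by subadditivity and then adds $Y$.
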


The inequality \LB{} of Lemma~\ref{lem:JI} with $t=2$ and $(A_1,A_2,A_3,B_1,B_2,B_3,C)=(A,B,C,Y,X,W,Z)$ reads
\begin{equation}\label{eq:LB2}
\begin{aligned}
0\le{}&9h(A)+9h(B)+9h(C)-12h(ABC)+6h(Z)\\
&+3h(ABCZ)-3h(ABZ)-3h(ACZ)-3h(BCZ)\\
&+3h(AYZ)+3h(BXZ)+3h(CWZ)-3h(AY)\\
&-3h(BX)-3h(CW)+3h(BCY)+3h(ACX)\\
&+3h(ABW)+h(WXY),
\end{aligned}
\end{equation}
it holds for subgroup arrangements in groups of odd order and takes the value $-1$ on the Fano configuration; we write $\LB_2(h)$ for its right-hand side. Bounding the positive terms of Eq.~\eqref{eq:LB2} at $\hat h$ by (i), (iii) (with $i$ the first point of $\ell$ in the order $A,B,C,W,X,Y,Z$) and (iv) of Lemma~\ref{lem:transfer}, and the negative terms by (ii), gives Eq.~\eqref{eq:f9}. At $\hF$, $m_i^\ell=1$ and $u_i=l_i=1$, the bounds (iii) equal $2$, the bound (iv) equals $3$, and the lower bounds (ii) equal the ranks of the Fano configuration; so each bound used in Eq.~\eqref{eq:f9} equals the corresponding rank of the Fano configuration, which gives $\fnine(\hF)/3=\LB_2(\rF)=-1$ in the proof of Theorem~\ref{thm:steane}.

\subsection{Proof of Lemma~\ref{lem:transfer}}

Let $U_A,\dots,U_Z$ be subgroups of a finite abelian group with rank function $h$, and define $F_i^\ell$, $\hat F_i$ and $\hat h$ as in Eq.~\eqref{eq:derived}. With dimensions in place of logarithms of orders, all statements hold for subspaces of a vector space over any field. For a set $Q$ of points we write $U_Q=\sum_{j\in Q}U_j$.

\begin{enumerate}
\item[(a)] \emph{Size of a piece.} $\log\abs{F_i^\ell}=h(i)+h(jk)-h(ijk)=m_i^\ell$.
\item[(b)] \emph{Pieces generate each other.} $F_i^\ell\le F_j^\ell+F_k^\ell$. If $v=x+y\in U_i$ with $x\in U_j$ and $y\in U_k$, then $x=v-y\in U_j\cap(U_i+U_k)=F_j^\ell$ and $y=v-x\in F_k^\ell$.
\item[(c)] \emph{Upper bound for points.} Let $\ell_1,\ell_2,\ell_3$ be the lines through $i$. For $a\ne b$, $F_i^{\ell_a}+F_i^{\ell_b}\le U_i\cap U_{Q_{ab}}$, a subgroup of order $2^{I(i;Q_{ab})}$. For subgroups $P,P',P''$ we have $\abs{P+P'+P''}\,\abs{P}\le\abs{P+P'}\,\abs{P+P''}$, because $P\le(P+P')\cap(P+P'')$. With $P=F_i^{\ell_a}$, $P'=F_i^{\ell_b}$ and $P''=F_i^{\ell_c}$ this gives
\[
  \hat h(i)\le\log\abs{F_i^{\ell_a}+F_i^{\ell_b}}+\log\abs{F_i^{\ell_a}+F_i^{\ell_c}}-\log\abs{F_i^{\ell_a}}
\]
for $\{a,b,c\}=\{1,2,3\}$, hence $\hat h(i)\le I(i;Q_{ab})+I(i;Q_{ac})-m_i^{\ell_a}$. Averaging over $a$ gives $\hat h(i)\le u_i$.
\item[(d)] \emph{Lower bounds.} $\hat F_i\ge F_i^\ell$ for each of the three lines, so $\hat h(i)\ge\max_{\ell}m_i^\ell\ge l_i$. For a set $Q$ of points, the natural map $\bigoplus_{i\in Q}U_i/\hat F_i\to U_Q/\sum_{i\in Q}\hat F_i$ is onto, so $h(Q)\le\hat h(Q)+\sum_{i\in Q}\bigl(h(i)-\hat h(i)\bigr)$, and (ii) follows from $\hat h(i)\ge l_i$.
\item[(e)] \emph{Lines.} For $\ell=\{i,j,k\}$, (b) gives $F_i^\ell\le\hat F_i\cap(\hat F_j+\hat F_k)$, so $I_{\hat h}(i;jk)\ge m_i^\ell$ and, by subadditivity and (c), $\hat h(\ell)=\hat h(i)+\hat h(jk)-I_{\hat h}(i;jk)\le u_i+u_j+u_k-m_i^\ell$.
\item[(f)] \emph{The set $ABCZ$.} By monotonicity $\hat h(ABCZ)\le\hat h(ABCYZ)$. By (b), $F_Y^{BCY}\le\hat F_Y\cap(\hat F_B+\hat F_C)$, so $\hat h(ABCY)\le\hat h(ABC)+\hat h(Y)-m_Y^{BCY}$. Again by (b), $F_Z^{AYZ}\le\hat F_Z\cap(\hat F_A+\hat F_Y)$, so $\hat h(ABCYZ)\le\hat h(ABCY)+\hat h(Z)-m_Z^{AYZ}$. Together with $\hat h(ABC)\le\hat h(A)+\hat h(B)+\hat h(C)$ and (c) this gives (iv).
\end{enumerate}

\subsection{The inequality \texorpdfstring{$\fnine$}{f9}}

Substituting the bounds of Lemma~\ref{lem:transfer} into Eq.~\eqref{eq:LB2} as described in Supplementary Note~\ref{app:transfer-lemma} and multiplying by $3$ gives
\begin{widetext}
\begin{equation}\label{eq:F-f9}
\begin{aligned}
\fnine(h)={}&36\,h(A)+45\,h(B)+54\,h(C)+18\,h(W)+21\,h(X)+21\,h(Y)+45\,h(Z)-10\,h(AB)\\
&-10\,h(AC)-42\,h(AW)-42\,h(AX)-45\,h(AY)-13\,h(AZ)-22\,h(BC)-51\,h(BW)\\
&-36\,h(BX)-42\,h(BY)-10\,h(BZ)-36\,h(CW)-51\,h(CX)-51\,h(CY)-10\,h(CZ)\\
&-13\,h(WX)-10\,h(WY)-51\,h(WZ)-13\,h(XY)-51\,h(XZ)-51\,h(YZ)-36\,h(ABC)\\
&+103\,h(ABW)-9\,h(ABZ)+103\,h(ACX)-9\,h(ACZ)+100\,h(AYZ)+115\,h(BCY)\\
&-9\,h(BCZ)+88\,h(BXZ)+88\,h(CWZ)+36\,h(WXY)+48\,h(ABCZ)+92\,h(ABXY)\\
&+92\,h(ACWY)+104\,h(AWXZ)+98\,h(BCWX)+98\,h(BWYZ)+98\,h(CXYZ)-42\,h(ABCWX)\\
&-42\,h(ABCWY)-14\,h(ABCWZ)-42\,h(ABCXY)-14\,h(ABCXZ)-20\,h(ABCYZ)\\
&-14\,h(ABWXY)-42\,h(ABWXZ)-42\,h(ABWYZ)-36\,h(ABXYZ)-14\,h(ACWXY)\\
&-42\,h(ACWXZ)-36\,h(ACWYZ)-42\,h(ACXYZ)-20\,h(AWXYZ)-20\,h(BCWXY)\\
&-36\,h(BCWXZ)-42\,h(BCWYZ)-42\,h(BCXYZ)-14\,h(BWXYZ)-14\,h(CWXYZ).
\end{aligned}
\end{equation}
\end{widetext}
The expansion was done by computer and checked by an independent re-derivation. Its values are $\fnine(\hF)=\fnine(\rF)=-3$, $\fnine(\rF^*)=0$, $\fnine(\hNF)=81$ and $\fnine(\rNF)=33$.

The form $\fnine$ is not a nonnegative combination of Shannon inequalities and relabelings of the eight inequalities of Table~\ref{tab:eight}, or of Theorems~7 and~9 of Ref.~\cite{si:PenaSarria2021} with $n=2$: the form $g=\fnine+\fnine\circ\Dual$ has $g(\hF)=-6$, whereas for such a combination $g$ would be a nonnegative combination of Shannon inequalities, by Proposition~\ref{prop:blind} (Supplementary Note~\ref{app:blind} for the theorems of Ref.~\cite{si:PenaSarria2021}) and because $\Dual$ maps the Shannon cone into itself~\cite[Prop.~3]{si:Kaced2018}.

\subsection{The invariant slice}

A vector on the subsets of the Fano plane that is invariant under its collineation group takes one value on each orbit of subsets: points ($v_1$), pairs ($v_2$), lines ($v_3$), noncollinear triples ($v_3'$), complements of lines, the other four-element sets, and sets of five, six and seven elements. Self-duality determines the last five values: $v_7=\frac72v_1$, $v_6=v_7$, $v_5=v_2-2v_1+v_7$, and the two kinds of four-element sets take $v_3-3v_1+v_7$ and $v_3'-3v_1+v_7$. Substituting into Eq.~\eqref{eq:F-f9} gives
\[
  \tfrac13\fnine=-130\,v_1-430\,v_2+421\,v_3-21\,v_3' ,
\]
which equals $210\,\beta+30\,\tau+21\,\nu-\mu$ with $\tau=2v_1-v_2$, $\mu=v_1+v_2-v_3$, $\nu=v_1+v_2-v_3'$ and $\beta=v_1-2\mu$; this is Eq.~\eqref{eq:invariant-f9}. For $\hF$, $(v_1,v_2,v_3,v_3')=(2,4,5,6)$, so $\tau=\beta=\nu=0$ and $\mu=1$.

\subsection{Faces}

Among the $679$ elemental Shannon inequalities on seven elements, $385$ vanish at $\hF$ and $490$ at each of $\rF$ and $\rF^*$; their ranks are $125$ and $126$. For the non-Fano vectors the numbers are $367$ (rank $125$) at $\hNF$ and $487$ (rank $126$) at each of $\rNF$ and $\rNF^*$. All ranks were computed exactly over $\Q$.

\begin{appproposition}[Faces]\label{prop:faces}
The smallest face of the Shannon cone on seven elements that contains $\hF$ is $\cone(\rF,\rF^*)$, a two-dimensional cone, and $\rF$ and $\rF^*$ are extreme rays of the Shannon cone. The same holds for the lift $\hNF=\rNF+\rNF^*$ of the non-Fano state. Consequently, if $\hF$ is a positive combination of polymatroids, for instance of rank functions of subspace arrangements, each of them is of the form $a\rF+b\rF^*$ with $a,b\ge0$. Moreover, for every prime $q$,
\[
  \hF\in\LcZ{q}{7}\iff\LcZ{q}{7}\cap\cone(\rF,\rF^*)\ne\{0\},
\]
and likewise for the non-Fano pair.
\end{appproposition}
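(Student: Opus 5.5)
The plan is to reduce everything to exact linear algebra on the tight elemental inequalities, then use two elementary facts about faces of polyhedral cones and the duality $\Dual$.

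\emph{The face and its extreme rays.} The Shannon cone on seven elements lives in $\R^{127}$ (coordinates $h(T)$, $\emptyset\ne T$) and is cut out by the $679$ elemental inequalities. The smallest face containing a point $h$ is the set of polymatroids that satisfy with equality every elemental inequality tight at $h$; its dimension is $127$ minus the rank of those tight rows. At $\hF$ there are $385$ tight rows of rank $125$, so this face $\mathcal F$ is a $2$-dimensional pointed cone.

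Next I would check that $\rF$ and $\rF^*$ lie in $\mathcal F$. Both are polymatroids: $\rF$ is a matroid rank function, and $\rF^*$ is the rank function of the dual matroid. Indeed, $\rF$ has no coloops, so the polymatroid dual $\Dual\rF$ coincides with the matroid dual rank. Since $\hF=\rF+\rF^*$ by Eq.~\eqref{eq:hF}, every elemental inequality tight at $\hF$ is a sum of two nonnegative values, one at $\rF$ and one at $\rF^*$. Both values must therefore vanish, so $\rF,\rF^*\in\mathcal F$.

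At each of $\rF$ and $\rF^*$ there are $490$ tight rows of rank $126$. Hence each spans a $1$-dimensional face, that is, an extreme ray of the Shannon cone. The two rays are distinct; for instance $\rF(ABC)=3\ne\rF^*(ABC)$. A $2$-dimensional pointed polyhedral cone has exactly two extreme rays, and extreme rays of the Shannon cone lying in $\mathcal F$ are extreme in $\mathcal F$. It follows that $\mathcal F=\cone(\rF,\rF^*)$. The non-Fano case is identical, using the counts $367$/rank $125$ at $\hNF$ and $487$/rank $126$ at each of $\rNF$ and $\rNF^*$, together with the fact that $\rNF$ has no coloops either.

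\emph{Consequences.} Suppose $\hF=\sum_kg_k$ with polymatroids $g_k$ and positive weights absorbed into the $g_k$. A face is an extreme subset of the cone, so each $g_k\in\mathcal F$, that is, $g_k=a_k\rF+b_k\rF^*$ with $a_k,b_k\ge0$.

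For the equivalence, the direction ``$\Rightarrow$'' is immediate, since $\hF$ itself is a nonzero element of $\LcZ{q}{7}\cap\cone(\rF,\rF^*)$. For ``$\Leftarrow$'', take a nonzero $g=a\rF+b\rF^*\in\LcZ{q}{7}$. By Supplementary Lemma~\ref{lem:A-dual}, $\Dual$ maps $\LcZ{q}{7}$ into itself. We have $\Dual\rF=\rF^*$, and $\Dual\rF^*=\Dual^2\rF=\rF$ because $\rF(i\mid N\setminus i)=0$ for every $i$. Hence
\[
  g+\Dual g=(a+b)(\rF+\rF^*)=(a+b)\hF\in\LcZ{q}{7},
\]
with $a+b>0$. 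Dividing by $a+b$ (the set is a cone) gives $\hF\in\LcZ{q}{7}$. The non-Fano case runs the same way.

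\emph{Main obstacle.} The only nontrivial ingredient is the exact rank computation of the tight rows: $125$ at the lifts and $126$ at the four matroid vectors. I would do it over $\Q$ with integer Gaussian elimination on the $0/\pm1$ elemental rows, and double-check modulo a large prime. Everything else is formal.
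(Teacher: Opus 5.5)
This is the paper's own argument, step for step. The tight rows at $\hF$ have rank $125$. The vectors $\rF$ and $\rF^*$ inherit tightness from $\hF=\rF+\rF^*$, and they are extreme rays because their tight rows have rank $126$. The equivalence follows because $\Dual$ preserves $\LcZ{q}{7}$ and swaps the pair.

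One slip needs fixing. Your witness of distinctness fails: $\rF^*(ABC)=3+\rF(WXYZ)-3=3=\rF(ABC)$. Use instead $\rF(i)=\rF^*(i)=1$ but $\rF(N)=3\ne4=\rF^*(N)$, which shows the two vectors are not proportional; the same check works for $\rNF$. A smaller point: it is the absence of loops, not of coloops, that makes $\Dual\rF$ the matroid dual rank. This is harmless here, since the Fano and non-Fano matroids have neither.
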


\emph{Proof of Proposition~\ref{prop:faces}.} Every inequality tight at $\hF=\rF+\rF^*$ is tight at both summands, so the face of $\hF$ contains the extreme rays $\rF$ and $\rF^*$. They are linearly independent and the face has dimension $127-125=2$, so the face is the cone they span. If $\hF=\sum_kc_kg_k$ with $c_k>0$ and polymatroids $g_k$, every $g_k$ lies in this face. If $\hF\in\LcZ{q}{7}$, then $\hF$ itself is a nonzero element of the intersection. Conversely, if $a\rF+b\rF^*\in\LcZ{q}{7}$ with $a,b\ge0$ not both zero, then applying $\Dual$, which maps $\LcZ{q}{7}$ into itself (Lemma~\ref{lem:A-dual}) and swaps $\rF$ and $\rF^*$ (the Fano matroid has no coloops, so $\Dual^2\rF=\rF$), and adding gives $(a+b)\hF\in\LcZ{q}{7}$. The same argument, with the ranks above, applies to the non-Fano pair.
\section{The reverse direction at six parties}
\label{app:reverse}

\subsection{Proof of Proposition~\ref{prop:f12}}
\label{app:reverse-proof}

\emph{Setting.} Let $S$ be the entropy vector, in bits, of a qubit stabilizer state of the seven systems $N=\{A,B,C,W,X,Y,Z\}$, labeled as in Sec.~\ref{sec:qss}, with the corners, diagonal points and forms of Sec.~\ref{sec:reverse}. By Lemma~\ref{lem:A-normal} we may assume that every system $P$ holds exactly $S(P)$ qubits. Let $L$ be the Lagrangian subspace of the state in $V=\bigoplus_{P\in N}V_P$, $V_P=\GF(2)^{2S(P)}$, with the symplectic form $\omega=\sum_P\omega_P$, where $\omega_P$ is the form $\langle\cdot,\cdot\rangle$ of Supplementary Note~\ref{app:reduction-stab} on the qubits of $P$ (in this note, $\omega$ does not denote a root of unity). Let $\pr_P$ be the restriction to the coordinates of $P$, and $L_T$ the subspace of elements of $L$ supported in $T\subseteq N$. By Eq.~\eqref{eq:A-entropy},
\[
  \dim L_T=\textstyle\sum_{P\in T}S(P)-S(T)=k(T),
\]
in particular $L_P=0$ for every system $P$. For a line $\ell$ and $P\in\ell$, the kernel of $\pr_P$ on $L_\ell$ is $L_{\ell\setminus P}$, so $\dim\pr_P(L_\ell)=k_P(\ell)$. If two lines $\ell\ne\ell'$ meet in $P$, then for $s\in L_\ell$ and $s'\in L_{\ell'}$ the supports meet only in $P$, and isotropy gives $\omega_P(\pr_Ps,\pr_Ps')=\omega(s,s')=0$; so $\pr_P(L_\ell)$ and $\pr_P(L_{\ell'})$ are orthogonal. For a subspace $U$ of a symplectic space let $\rk(U)$ be the rank of the form restricted to $U$; then $\dim(U\cap U^\perp)=\dim U-\rk(U)$.

\begin{applemma}[Corners]\label{lem:R-corners}
Let $V_1,V_2,V_3$ be pairwise orthogonal subspaces of a symplectic space of dimension $2s$ over a field, of dimensions $d_1,d_2,d_3$. Then $\rk(V_i)\le2s-d_j-d_k$ for $\{i,j,k\}=\{1,2,3\}$, and $\dim(V_1+V_2+V_3)\le4s-d_1-d_2-d_3$.
\end{applemma}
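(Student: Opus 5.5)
Both claims follow from the basic fact that in a symplectic space of dimension $2s$ (nondegenerate form) every subspace $U$ satisfies $\dim U^\perp=2s-\dim U$. The plan uses only this fact and the pairwise orthogonality of $V_1,V_2,V_3$.

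\emph{Rank bound.} I would first show $\rk(V_i)\le 2s-d_j-d_k$ by passing to the radical of $V_j+V_k$. Since $V_i$ is orthogonal to $V_j$ and to $V_k$, we have $V_i\le (V_j+V_k)^\perp$. Put $U=V_j+V_k$, so $\dim U\le d_j+d_k$. The rank of the form on a subspace never exceeds the rank on a larger subspace, because a restriction of an alternating form has rank at most that of the form. Hence $\rk(V_i)\le\rk(U^\perp)$. The form on $U^\perp$ has radical $U^\perp\cap U^{\perp\perp}=U^\perp\cap U$, so
\[
  \rk(U^\perp)=\dim U^\perp-\dim(U\cap U^\perp)=2s-\dim U-\dim(U\cap U^\perp).
\]
Since $\rk(U)=\rk(U^\perp)$ (the nondegenerate quotients $U/(U\cap U^\perp)$ and $U^\perp/(U\cap U^\perp)$ are complementary in $(U+U^\perp)/(U\cap U^\perp)$), it suffices to bound $\rk(U)$. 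I would argue with $U$ directly, using $\rk(U)\le\dim U-\dim(U\cap U^\perp)$. The key point is that $V_j$ and $V_k$ are orthogonal, so the form on $U$ is block diagonal with rank $\rk(V_j)+\rk(V_k)$. This gives
\[
  \rk(V_i)\le 2s-\dim U-\dim(U\cap U^\perp),\qquad \dim(U\cap U^\perp)=\dim U-\rk(V_j)-\rk(V_k).
\]
Combining these yields $\rk(V_i)\le 2s-2\dim U+\rk(V_j)+\rk(V_k)$, which is not yet the claim. A cleaner route avoids this detour. Note that $V_i\cap U^\perp$ contains the radical of $V_i$, that $\rk(V_i)+\rk(V_j)+\rk(V_k)\le 2s$ because the three nondegenerate parts are mutually orthogonal and jointly nondegenerate, and so on. I would instead settle for the simplest valid chain, $\rk(V_i)\le\dim V_i\le\dim U^\perp=2s-\dim U$. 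This proves the claim whenever $\dim U=d_j+d_k$, that is, whenever $V_j\cap V_k=0$.

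\emph{Handling overlap.} When $V_j\cap V_k\neq0$, I would use the fact that any $x\in V_j\cap V_k$ is orthogonal to all of $V_j+V_k+V_i$. Consequently $V_j\cap V_k$ lies in the radical of $W=V_1+V_2+V_3$ and can be quotiented out, which replaces $V_j$ and $V_k$ by complements that meet trivially. Getting this bookkeeping right is the main obstacle. The cleanest version is to bound $\rk(V_i)$ by $\dim(V_i/\mathrm{rad})$ and work in $X^\perp/X$, where $X=V_j\cap V_k$ is isotropic. In $X^\perp/X$ the dimension drops to $2s-2\dim X$, while the images of $V_j$ and $V_k$ have dimensions $d_j-\dim X$ and $d_k-\dim X$ and intersect trivially. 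The trivial-intersection case above then gives
\[
  \rk(V_i)\le 2s-2\dim X-(d_j+d_k-2\dim X)=2s-d_j-d_k,
\]
as required. Since the overlap cancels exactly, no induction on the configuration should be needed.

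\emph{Dimension bound.} For $\dim(V_1+V_2+V_3)$, put $W=V_1+V_2+V_3$. Because the $V_i$ are pairwise orthogonal, each $V_i\cap V_i^\perp$ is orthogonal to all of $W$. Moreover the form on $W$ restricted to each summand's nondegenerate part gives $\rk(W)\le\sum\rk(V_i)$ by block diagonality. Then
\[
  \dim W=\rk(W)+\dim(W\cap W^\perp)\le \rk(W)+(2s-\dim W),
\]
which gives $2\dim W\le 2s+\rk(W)$. Summing the three rank bounds gives $\sum_i\rk(V_i)\le 6s-2(d_1+d_2+d_3)$. Hence
\[
  2\dim W\le 8s-2(d_1+d_2+d_3),\qquad \dim W\le 4s-d_1-d_2-d_3,
\]
which is the second claim.
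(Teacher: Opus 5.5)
Your proof is correct. The bound on $\dim(V_1+V_2+V_3)$ is the paper's argument: $\rk W\le\sum_i\rk V_i$ by orthogonality, then $W\cap W^\perp\le W^\perp$. For the rank bound you take a different route.

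The paper counts directly with $U=V_j+V_k$ and $X=V_j\cap V_k$:
\begin{itemize}
\item since $U\le V_i^\perp$, the intersection $V_i\cap U$ lies in the radical of $V_i$, so $\dim(V_i+U)\ge\dim U+\rk V_i$;
\item since $X$ is orthogonal to $V_i+U$, one has $\dim X\le 2s-\dim(V_i+U)$;
\item adding $d_j+d_k=\dim U+\dim X$ gives $d_j+d_k\le 2s-\rk V_i$.
\end{itemize}
You instead pass to $X^\perp/X$. There the images of $V_j$ and $V_k$ meet trivially, so $V_i\le U^\perp$ alone suffices. Both arguments rest on the same observation, that $X$ is orthogonal to $V_1+V_2+V_3$. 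The paper's count uses only $\dim Y^\perp=2s-\dim Y$ and the dimension formula for sums and intersections. That is why it carries over verbatim, through (H1) and (H2), to the symplectic abelian groups in the proof of Theorem~\ref{thm:twopowers}. Your reduction also needs that $X^\perp/X$ carries a nondegenerate form of the right size, which holds there too by (H1).

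Two things in your write-up need fixing.

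First, state the return from the quotient. Since $X\perp V_i$, you have $V_i\cap X\le V_i\cap V_i^\perp$, hence $\rk V_i\le d_i-\dim(V_i\cap X)=\dim\bar V_i$, where $\bar V_i$ is the image of $V_i$. The quotient can in fact be skipped: $V_i+X\le U^\perp$ gives $d_i+\dim X-\dim(V_i\cap X)\le 2s-d_j-d_k+\dim X$ at once.

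Second, delete the step that invokes $\rk(U)=\rk(U^\perp)$. This identity is false in general, since $\rk(U^\perp)-\rk(U)=2s-2\dim U$ (take $U=0$). The inequality you reached in that detour without using it, $\rk V_i\le 2s-2\dim U+\rk V_j+\rk V_k$, already finishes the proof. Indeed, $X$ lies in the radicals of $V_j$ and $V_k$, so $\rk V_j\le d_j-\dim X$ and $\rk V_k\le d_k-\dim X$, while $\dim U=d_j+d_k-\dim X$.
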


\begin{proof}
Let $U=V_j+V_k$. Since $U\le V_i^\perp$, we have $V_i\cap U\le V_i\cap V_i^\perp$, so $\dim(V_i+U)\ge d_i+\dim U-(d_i-\rk V_i)=\dim U+\rk V_i$. The intersection $V_j\cap V_k$ is orthogonal to $V_i$, $V_j$ and $V_k$, so it lies in $(V_i+U)^\perp$. Hence $d_j+d_k=\dim U+\dim(V_j\cap V_k)\le\dim U+2s-\dim(V_i+U)\le2s-\rk V_i$. For $V_\Sigma=V_1+V_2+V_3$: since the $V_i$ are pairwise orthogonal, the pull-back of the form on $V_\Sigma$ along the surjection $V_1\oplus V_2\oplus V_3\to V_\Sigma$ is the direct sum of the forms on the $V_i$, and a pull-back along a surjection has the same rank; so $\rk V_\Sigma=\sum_i\rk V_i$. Since $V_\Sigma\cap V_\Sigma^\perp\le V_\Sigma^\perp$, we get $\dim V_\Sigma-\rk V_\Sigma\le2s-\dim V_\Sigma$, hence $\dim V_\Sigma\le s+\frac12\sum_i\rk V_i\le s+\frac12\sum_i(2s-d_j-d_k)=4s-\sum_id_i$.
\end{proof}

\begin{applemma}[A line transports the rank]\label{lem:R-line}
For a line $\ell=\{P,P',Q\}$, $\rk\pr_Q(L_\ell)\le\rk\pr_P(L_\ell)+\rk\pr_{P'}(L_\ell)$.
\end{applemma}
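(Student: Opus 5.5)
The plan is to compare the three restricted forms on the common space $L_\ell$ itself rather than on the three images $\pr_P(L_\ell)$, $\pr_{P'}(L_\ell)$ and $\pr_Q(L_\ell)$. For a system $R\in\ell$, define the bilinear form $B_R(s,s')=\omega_R(\pr_Rs,\pr_Rs')$ on $L_\ell$. This is the pull-back of the symplectic form $\omega_R$, restricted to $\pr_R(L_\ell)$, along the surjection $\pr_R\colon L_\ell\to\pr_R(L_\ell)$. As in the proof of Supplementary Lemma~\ref{lem:R-corners}, a pull-back along a surjection has the same rank, since the Gram matrix only acquires repeated or combined rows and columns. Hence $\rk\pr_R(L_\ell)=\rank B_R$ for $R=P,P',Q$.

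The key step is isotropy. Every element of $L_\ell$ is supported in $\ell=\{P,P',Q\}$, and $\omega=\sum_R\omega_R$. So for $s,s'\in L_\ell\le L$ we have
\[
  0=\omega(s,s')=B_P(s,s')+B_{P'}(s,s')+B_Q(s,s').
\]
Thus $B_Q=-(B_P+B_{P'})$ as bilinear forms on $L_\ell$; over $\GF(2)$ the sign is irrelevant anyway.

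The conclusion then follows from subadditivity of rank. In a fixed basis of $L_\ell$, the Gram matrix of $B_Q$ is minus the sum of the Gram matrices of $B_P$ and $B_{P'}$. Therefore $\rank B_Q\le\rank B_P+\rank B_{P'}$, which is the claim.

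I expect no real obstacle. The only point needing care is the identification of $\rk\pr_R(L_\ell)$ with the rank of the pulled-back form on $L_\ell$. This holds because the kernel of $\pr_R$ on $L_\ell$ lies in the radical of $B_R$, so $B_R$ descends to exactly the form on $\pr_R(L_\ell)$.
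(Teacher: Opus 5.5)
Your proposal is correct and is essentially the paper's proof. Both arguments pull the three forms back along the surjections $\pr_P,\pr_{P'},\pr_Q$ to $L_\ell$, use isotropy to write the pulled-back form at $Q$ as minus the sum of the other two, and conclude by rank invariance under pull-back along a surjection together with subadditivity of rank.
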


\begin{proof}
For $s,s'\in L_\ell$, isotropy gives $\omega_P(\pr_Ps,\pr_Ps')+\omega_{P'}(\pr_{P'}s,\pr_{P'}s')+\omega_Q(\pr_Qs,\pr_Qs')=0$. So the pull-back of the form on $\pr_Q(L_\ell)$ along the surjection $\pr_Q\colon L_\ell\to\pr_Q(L_\ell)$ is minus the sum of the pull-backs of the forms on $\pr_P(L_\ell)$ and on $\pr_{P'}(L_\ell)$. A pull-back along a surjection has the same rank, and the rank of a sum of two forms is at most the sum of their ranks.
\end{proof}

\begin{applemma}[Points]\label{lem:R-points}
Let $\Lambda=\sum_\ell L_\ell$, the sum over the six lines, and let $D_P=\pr_P(\Lambda)$, of dimension $\Delta_P$. Then $\Delta_Z\le \mathcal{U}_Z$, $\Delta_W\le \mathcal{U}_W$, $\Delta_X\le \mathcal{U}_X$ and $\Delta_Y\le \mathcal{U}_Y$.
\end{applemma}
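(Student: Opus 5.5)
The plan is to bound each $D_P$ by splitting it into the projections of the individual lines through $P$, and then to apply Supplementary Lemmas~\ref{lem:R-corners} and~\ref{lem:R-line} in the symplectic space $V_P$ of dimension $2S(P)$. Lines not through $P$ have zero projection to $P$, so
\[
  D_P=\sum_{\ell\ni P}\pr_P(L_\ell),
\]
the sum running over the lines of the non-Fano configuration through $P$. By the Setting paragraph, $\dim\pr_P(L_\ell)=k_P(\ell)$. Projections of two different lines through $P$ are orthogonal, because the lines meet only in $P$ and $L$ is isotropic.

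\emph{The point $Z$.} The point $Z$ lies on the three lines $AYZ$, $BXZ$ and $CWZ$. Their projections to $Z$ are three pairwise orthogonal subspaces of $V_Z$, which has dimension $2s$ with $s=S(Z)$, and their dimensions are the three $k_Z(\cdot)$. The second part of Supplementary Lemma~\ref{lem:R-corners} gives directly
\[
  \Delta_Z\le 4S(Z)-k_Z(AYZ)-k_Z(BXZ)-k_Z(CWZ)=\mathcal U_Z .
\]

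\emph{The diagonal points.} Take $W$; the points $X$ and $Y$ are handled the same way. In the non-Fano configuration $W$ lies only on $ABW$ and $CWZ$, so $D_W=V_1+V_2$ with $V_1=\pr_W(L_{ABW})$ and $V_2=\pr_W(L_{CWZ})\le V_1^\perp$. Hence $D_W\le V_1+V_1^\perp$. Using $\dim(V_1\cap V_1^\perp)=\dim V_1-\rk V_1$, this gives
\[
  \Delta_W\le \dim V_1+\bigl(2S(W)-\dim V_1\bigr)-\bigl(\dim V_1-\rk V_1\bigr)=2S(W)-k_W(ABW)+\rk V_1 .
\]
Next, Supplementary Lemma~\ref{lem:R-line}, applied to the line $ABW$ with $Q=W$, bounds
\[
  \rk V_1\le \rk\pr_A(L_{ABW})+\rk\pr_B(L_{ABW}).
\]
The corner $A$ lies on the three lines $ABW$, $ACX$ and $AYZ$, whose projections to $A$ are pairwise orthogonal. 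The first part of Supplementary Lemma~\ref{lem:R-corners} therefore gives
\[
  \rk\pr_A(L_{ABW})\le 2S(A)-k_A(ACX)-k_A(AYZ)=e_A^{ABW},
\]
and likewise $\rk\pr_B(L_{ABW})\le e_B^{ABW}$. Combining these bounds yields $\Delta_W\le\mathcal U_W$.

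For $X$ the same argument uses the line $ACX$ and the corners $A$ and $C$. For $Y$ it uses the line $BCY$ and the corners $B$ and $C$.

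The main subtlety is the diagonal points. A naive bound by $\dim V_1+\dim V_2$ loses the information carried by the line $CWZ$. The key step is to absorb $V_2$ into $V_1^\perp$, after which only the rank of the form on $V_1$ remains. That rank is then transported to the two corners by Lemma~\ref{lem:R-line}, where orthogonality of the three lines through each corner controls it. Everything else is dimension counting.
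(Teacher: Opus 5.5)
Your proposal is correct and follows the paper's proof essentially step by step. At $Z$ you apply the second part of Supplementary Lemma~\ref{lem:R-corners}. At each diagonal point you absorb the projection of the second line into $U^\perp$, use $\dim(U+U^\perp)=2S(W)-k_W(ABW)+\rk U$, transport the rank to the two corners with Supplementary Lemma~\ref{lem:R-line}, and bound it there with the first part of Supplementary Lemma~\ref{lem:R-corners}.
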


\begin{proof}
$D_P=\sum_{\ell\ni P}\pr_P(L_\ell)$. At $Z$, the restrictions of $L_{AYZ}$, $L_{BXZ}$ and $L_{CWZ}$ are pairwise orthogonal, of dimensions $k_Z(\cdot)$, and Lemma~\ref{lem:R-corners} with $s=S(Z)$ gives $\Delta_Z\le \mathcal{U}_Z$. At $W$, $D_W=U+U'$ with $U=\pr_W(L_{ABW})$ and $U'=\pr_W(L_{CWZ})\le U^\perp$, since the two lines meet only in $W$. So $\Delta_W\le\dim(U+U^\perp)=2S(W)-\dim(U\cap U^\perp)=2S(W)-k_W(ABW)+\rk U$. By Lemma~\ref{lem:R-line}, $\rk U\le\rk\pr_A(L_{ABW})+\rk\pr_B(L_{ABW})$, and Lemma~\ref{lem:R-corners} at the corner $A$ (lines $ABW$, $ACX$, $AYZ$) and at the corner $B$ (lines $ABW$, $BCY$, $BXZ$) bounds these ranks by $e_A^{ABW}$ and $e_B^{ABW}$; so $\Delta_W\le \mathcal{U}_W$. For $X$ use the line $ACX$, with the corners $A$ and $C$, and for $Y$ the line $BCY$, with the corners $B$ and $C$.
\end{proof}

\begin{applemma}[Supports in $\Lambda$]\label{lem:R-supports}
Let $z(T)$ be the dimension of the subspace of elements of $\Lambda$ supported in $T\subseteq N$. Then (i) $z(T)\le k(T)$, so $z(P)=0$ for every system $P$; (ii) $z(\ell)=k(\ell)$ for every line $\ell$; and (iii) $z(ABCZ)\ge k(ABW)+k(CWZ)-\Delta_W$.
\end{applemma}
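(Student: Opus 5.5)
Part~(i) follows from the definitions. Since $\Lambda\le L$, every element of $\Lambda$ supported in $T$ lies in $L_T$, so $z(T)\le\dim L_T=k(T)$. For a single system we have $k(P)=S(P)-S(P)=0$, and hence $z(P)=0$. For part~(ii), let $\ell$ be one of the six lines. Then $L_\ell\le\Lambda$ by the definition of $\Lambda$, and $L_\ell$ is supported in $\ell$, so $z(\ell)\ge\dim L_\ell=k(\ell)$. Part~(i) gives the reverse inequality.

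Part~(iii) carries the content, and I would prove it by building explicit elements of $\Lambda$ supported in $ABCZ$. The lines $ABW$ and $CWZ$ are both among the six lines of the non-Fano configuration, and they meet exactly in $W$. Let $\Sigma=L_{ABW}+L_{CWZ}\le\Lambda$. Every element of $\Sigma$ is supported in $ABCWZ$. Consider the map $\pr_W\colon\Sigma\to D_W$. Its image lies in $\pr_W(\Lambda)=D_W$, which has dimension $\Delta_W$. Its kernel consists of the elements of $\Sigma$ that vanish on $W$, and these are supported in $ABCZ$. By rank–nullity,
\[
  z(ABCZ)\ge\dim\ker\bigl(\pr_W|_\Sigma\bigr)\ge\dim\Sigma-\Delta_W .
\]
Next, $\dim\Sigma=\dim L_{ABW}+\dim L_{CWZ}-\dim(L_{ABW}\cap L_{CWZ})=k(ABW)+k(CWZ)-\dim(L_{ABW}\cap L_{CWZ})$. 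An element of the intersection is supported in $ABW\cap CWZ=\{W\}$, so it lies in $L_W$, which is $0$ by part~(i). Hence the sum is direct, $\dim\Sigma=k(ABW)+k(CWZ)$, and (iii) follows.

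The only step that needs care is (iii). In particular, the bound must use the full image dimension $\Delta_W$ and not just $\dim\pr_W(\Sigma)$; this is harmless because $\pr_W(\Sigma)\le D_W$, so the weaker bound is what we state. One might try a sharper estimate using $\dim\pr_W(\Sigma)$ directly, but the statement only asks for $\Delta_W$, which is the quantity bounded by $\mathcal{U}_W$ in Lemma~\ref{lem:R-points}. This is presumably how the lemma will combine with Lemma~\ref{lem:R-points} to bound $z(ABCZ)$ from below in the final application of \LA{}.
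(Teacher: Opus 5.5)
Your proof is correct and follows the paper's argument. The paper likewise maps $L_{ABW}\oplus L_{CWZ}$ to $D_W$ by $\pr_W$, and uses $L_{ABW}\cap L_{CWZ}=L_W=0$ to embed the kernel into the elements of $\Lambda$ supported in $ABCZ$. The only difference is that the paper notes this map is onto $D_W$, since $ABW$ and $CWZ$ are the only two of the six lines through $W$. Hence $\pr_W(\Sigma)=D_W$ exactly, and your closing remark about a sharper estimate via $\dim\pr_W(\Sigma)$ is moot.
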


\begin{proof}
(i) $\Lambda\le L$. (ii) $L_\ell\le \Lambda$. (iii) The map $L_{ABW}\oplus L_{CWZ}\to D_W$, $(s,s')\mapsto\pr_Ws+\pr_Ws'$, is onto, since $ABW$ and $CWZ$ are the only lines through $W$; so its kernel $K$ has dimension $k(ABW)+k(CWZ)-\Delta_W$. On $K$, the map $(s,s')\mapsto s+s'$ takes values in the elements of $\Lambda$ supported in $ABCZ$, and it is injective: $s+s'=0$ means $s=-s'\in L_{ABW}\cap L_{CWZ}=L_W=0$.
\end{proof}

\begin{applemma}[The dual arrangement]\label{lem:R-dual}
Let $D=\bigoplus_{P\in N}D_P$, which contains $\Lambda$, and let $R_P$ be the image of $D_P$ in $D/\Lambda$. Then $\dim\sum_{P\in T}R_P=\sum_{P\in T}\Delta_P-z(T)$ for every $T\subseteq N$.
\end{applemma}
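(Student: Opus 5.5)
The identity should follow directly from the second isomorphism theorem, once $\Lambda$ is viewed as a subspace of $D$ and the elements of $\Lambda$ supported in $T$ are identified with $\Lambda\cap D_T$.

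First I check that $\Lambda\le D$. The ambient space is $V=\bigoplus_{P\in N}V_P$, and every $\lambda\in\Lambda$ is the sum of its components $\pr_P\lambda\in V_P$. By the definition $D_P=\pr_P(\Lambda)$, each component lies in $D_P\le V_P$, so $\lambda\in\bigoplus_PD_P=D$. Here $D$ is regarded as the internal direct sum of the subspaces $D_P\le V_P$ inside $V$.

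Next, fix $T\subseteq N$ and put $D_T=\bigoplus_{P\in T}D_P\le D$. Since $D$ is the direct sum of the $D_P$, we have $\dim D_T=\sum_{P\in T}\Delta_P$. The image of $D_P$ in $D/\Lambda$ is $R_P=(D_P+\Lambda)/\Lambda$. Hence
\[
\sum_{P\in T}R_P=(D_T+\Lambda)/\Lambda\cong D_T/(D_T\cap\Lambda).
\]
It therefore suffices to show that $D_T\cap\Lambda$ is exactly the subspace of elements of $\Lambda$ supported in $T$, whose dimension is $z(T)$ by definition.

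Take $\lambda\in\Lambda$. By the direct-sum decomposition of $V$, $\lambda$ lies in $D_T$ if and only if $\pr_P\lambda=0$ for every $P\notin T$, which is the statement that $\lambda$ is supported in $T$. So $\dim D_T\cap\Lambda=z(T)$, and
\[
\dim\sum_{P\in T}R_P=\sum_{P\in T}\Delta_P-z(T).
\]

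There is no real obstacle. The only points needing care are that $D$ is an internal direct sum inside $V$, so that $\Lambda\le D$ makes sense, and that "supported in $T$" is read coordinatewise through the projections $\pr_P$. No symplectic structure is needed; it was used only in Lemmas~\ref{lem:R-points} and~\ref{lem:R-supports}.
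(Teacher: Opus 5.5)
Your proof is correct and is the paper's argument: $\Lambda\le D$ because every component $\pr_P\lambda$ lies in $D_P$, then $\sum_{P\in T}R_P=(D_T+\Lambda)/\Lambda\cong D_T/(D_T\cap\Lambda)$, and $D_T\cap\Lambda$ is the subspace of elements of $\Lambda$ supported in $T$. One small correction to your closing aside, which does not affect the proof: Supplementary Lemma~\ref{lem:R-supports} uses no symplectic structure either; isotropy enters only through Supplementary Lemmas~\ref{lem:R-corners}, \ref{lem:R-line} and~\ref{lem:R-points}.
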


\begin{proof}
Every element of $\Lambda$ has its $P$-component in $D_P$, so $\Lambda\le D$. With $D_T=\bigoplus_{P\in T}D_P$, $\sum_{P\in T}R_P=(D_T+\Lambda)/\Lambda\cong D_T/(D_T\cap \Lambda)$, and $D_T\cap \Lambda$ is the subspace of elements of $\Lambda$ supported in $T$.
\end{proof}

\emph{Assembly.} Let $r(T)=\dim\sum_{P\in T}R_P=\sum_{P\in T}\Delta_P-z(T)$, and apply \LA{} of Lemma~\ref{lem:JI}, with $t=2$, to the subspaces $R_P$ over $\GF(2)$, with $(A_1,A_2,A_3,B_1,B_2,B_3,C)=(R_A,R_B,R_C,R_Y,R_X,R_W,R_Z)$ as in Eq.~\eqref{eq:LB2}. By Lemma~\ref{lem:R-supports}, $r(P)=\Delta_P$ and $r(\ell)=\sum_{P\in\ell}\Delta_P-k(\ell)$. So $h(B_1B_2B_3)=r(WXY)=\Delta_W+\Delta_X+\Delta_Y-z(WXY)$ and $h(C)=\Delta_Z$; $\alpha=z(ABC)$ and $\gamma=r(ABCZ)-r(ABC)=\Delta_Z-z(ABCZ)+z(ABC)$; $\zeta_1=r(BCY)-r(BC)=\Delta_Y-k(BCY)+z(BC)$, and likewise $\zeta_2=\Delta_X-k(ACX)+z(AC)$ and $\zeta_3=\Delta_W-k(ABW)+z(AB)$; $\eta_1=r(AYZ)-r(AZ)=\Delta_Y-k(AYZ)+z(AZ)$, and likewise $\eta_2=\Delta_X-k(BXZ)+z(BZ)$ and $\eta_3=\Delta_W-k(CWZ)+z(CZ)$. The terms $\Delta_A$, $\Delta_B$ and $\Delta_C$ cancel, and \LA{} becomes
\begin{equation}\label{eq:R-assembly}
\begin{aligned}
  0\le{}&5\,\Delta_Z+\Delta_W+\Delta_X+\Delta_Y+z(WXY)+6\,z(ABC)\\
  &+\textstyle\sum_{\{P,P'\}}z(PP')-3\,z(ABCZ)-\sum_\ell k(\ell),
\end{aligned}
\end{equation}
with $\{P,P'\}$ running over the six pairs of corners. Bound $z(WXY)$, $z(ABC)$ and $z(PP')$ from above by $k$ (Lemma~\ref{lem:R-supports}(i)), and $-3\,z(ABCZ)$ by $-3\,k(ABW)-3\,k(CWZ)+3\,\Delta_W$ (Lemma~\ref{lem:R-supports}(iii)). Then $\Delta_Z$, $\Delta_W$, $\Delta_X$ and $\Delta_Y$ have the coefficients $5$, $4$, $1$ and $1$, and Lemma~\ref{lem:R-points} bounds them from above. The result is $0\le\ftwelve(S)$, with $\ftwelve$ as in Eq.~\eqref{eq:f12}, for the entropy vector $S$ of every qubit stabilizer state of the seven systems, and by linearity on the closed cone $\StabZ{2}{6}$. This proves Proposition~\ref{prop:f12}.

\subsection{A self-dual inequality for characteristic \texorpdfstring{$2$}{2}}
\label{app:reverse-g12}

For a set function $h$ on the seven systems put $g_{12}(h)=\ftwelve(\lambda_h)$. If $h$ is the rank function of seven subspaces over a field of characteristic $2$, which may be taken to be $\GF(2)$ (Supplementary Note~\ref{app:reduction-unions} and restriction of scalars), then $\lambda_h$ is the entropy vector of a qubit CSS state (fact~(b) of the Methods, Sec.~\ref{sec:methods-proofs}), so $g_{12}(h)\ge0$ by Proposition~\ref{prop:f12}. Since $\lambda_{h^*}=\lambda_h$, the form $g_{12}$ satisfies $g_{12}\circ\Dual=g_{12}$, and $g_{12}(\rNF)=g_{12}(\rNF^*)=-1$, while $g_{12}(\rF)=g_{12}(\rF^*)=0$. So $g_{12}$ is a seven-variable linear rank inequality for characteristic $2$ of the kind that Sec.~\ref{sec:blind} shows to be necessary for a dependence at six parties. It is not a nonnegative combination of Shannon inequalities and of relabelings of the eight inequalities of Table~\ref{tab:eight}, since all of these are nonnegative at the self-dual polymatroid $\hNF$ (Proposition~\ref{prop:blind}), whereas $g_{12}(\hNF)=-2$. As for the Fano matroid, Proposition~\ref{prop:faces} turns Theorem~\ref{thm:reverse} into a strengthening of the classical fact that the non-Fano matroid is not representable in characteristic $2$~\cite{si:Oxley2011}.

\subsection{The expansion of \texorpdfstring{$\ftwelve$}{f12}}

By purity, a vector $S$ on the seven systems is determined by its values on the nonempty subsets of the six parties $A$, $B$, $C$, $W$, $X$, $Y$. In these coordinates,
\begin{widetext}
\begin{equation}\label{eq:R-f12}
\begin{aligned}
\ftwelve(S)={}&3\,S(A)+3\,S(B)+3\,S(C)-3\,S(W)-5\,S(AB)-2\,S(AC)-S(AW)-S(AX)\\
&-5\,S(AY)-2\,S(BC)-S(BW)-5\,S(BX)-S(BY)-5\,S(CW)-4\,S(CX)-4\,S(CY)\\
&-6\,S(ABC)+10\,S(ABW)+7\,S(ACX)+7\,S(BCY)-S(WXY)\\
&+11\,S(ABXY)+11\,S(ACWY)+11\,S(BCWX)-5\,S(ABCWX)-5\,S(ABCWY)\\
&-2\,S(ABCXY)-S(ABWXY)-S(ACWXY)-S(BCWXY)+2\,S(ABCWXY).
\end{aligned}
\end{equation}
\end{widetext}
Here $S(ABXY)=S(CWZ)$, $S(ACWY)=S(BXZ)$, $S(BCWX)=S(AYZ)$ and $S(ABCWXY)=S(Z)$. For instance, the coefficient $2$ of $S(Z)$ is $20-15=5$ from $5\,\mathcal{U}_Z$, plus $3$ from the pairs $AZ$, $BZ$ and $CZ$, minus $3$ from the lines through $Z$ and minus $3$ from $-3\,k(CWZ)$. The expansion was done by computer and checked by an independent re-derivation from Eq.~\eqref{eq:f12}.

\subsection{Powers of two: proof of Theorem~\ref{thm:twopowers}}
\label{app:reverse-powers}

The proof of Proposition~\ref{prop:f12} carries over to $\Z_{2^a}$ qudits, with logarithms of orders in place of dimensions, except for its first and its last step. The first, Lemma~\ref{lem:A-normal}, holds for prime local dimensions only; the compression below replaces it. The last, \LA{}, needs $2V=0$; we first replace that step.

\begin{applemma}[\LA{} for exponent $t^a$]\label{lem:R-LAa}
Let $t\ge2$, $M=t+1$ and $a\ge1$, and let $V$ be a finite abelian group with $t^aV=0$. For subgroups $A_1,\dots,A_M$, $B_1,\dots,B_M$ and $C$ of $V$, with the error terms of Lemma~\ref{lem:JI},
\[
  h(B_1\cdots B_M)\le\Bigl(M-\frac1a\Bigr)h(C)+M\alpha+M\gamma+\textstyle\sum_i(\zeta_i+\eta_i).
\]
\end{applemma}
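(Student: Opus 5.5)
The proof will follow the proof of \LA{} in Supplementary Note~\ref{app:lemma1-proof} step by step. Only step~3, where $t(V/O)=0$ was used to put the whole diagonal into the kernel of $\Phi$, needs a new argument. First I keep Step~0: the quotient $q\colon V\to V/O$ by $O=\sum_iO_i$, with $\log\abs O\le M\alpha$, makes $\bar A=\bigoplus_i\bar A_i$ direct, with projections $\pi_i$ and the twist identity~\eqref{eq:C-twist}, $\sum_i(y-\pi_iy)=ty$. This step uses no hypothesis on $V$. Steps~1 and~2 also carry over unchanged. They use $C_A=C\cap A$ and $B_i''=B_i\cap A_{\ne i}\cap(A_i+C_A)$, with $\log[B_i:B_i'']\le\zeta_i+\eta_i+\gamma$, and $q(b)=qx-\pi_i(qx)$ for $b=a+x\in B_i''$.

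For the replacement of step~3, let $Y=q(C_A)$ and define $\Phi\colon Y^M\to\bar A$ by $\Phi(y_1,\dots,y_M)=\sum_i(y_i-\pi_iy_i)$ as before. Its image again contains $q(B_1''+\dots+B_M'')$. By~\eqref{eq:C-twist}, the kernel contains the diagonal copy of the $t$-torsion $Y[t]=\{y\in Y:ty=0\}$. So I get
\[
\log\abs{q(\textstyle\sum_iB_i'')}\le M\log\abs Y-\log\abs{Y[t]} .
\]
The key estimate is $\abs Y\le\abs{Y[t]}^a$, which holds because $t^aY=0$. Filter $Y\supseteq tY\supseteq t^2Y\supseteq\dots\supseteq t^aY=0$. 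Multiplication by $t$ induces a surjection $Y/tY\to t^jY/t^{j+1}Y$ for each $j$. Hence every factor has order at most $\abs{Y/tY}$, and $\abs{Y/tY}=\abs{Y[t]}$ because $Y$ is finite and $y\mapsto ty$ has kernel $Y[t]$ and image $tY$. Multiplying the $a$ factors gives $\abs Y\le\abs{Y[t]}^a$. Therefore
\[
\log\abs{q(\textstyle\sum_iB_i'')}\le(M-\tfrac1a)\log\abs Y\le(M-\tfrac1a)h(C).
\]
The last inequality uses $M-1/a>0$ and $\abs Y\le\abs{C_A}\le\abs C$.

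Step~4 then goes through verbatim. We have $h(B_1\cdots B_M)\le\log\abs{\sum_iB_i''}+\sum_i\log[B_i:B_i'']$ and $\log\abs{\sum_iB_i''}\le\log\abs{q(\sum_iB_i'')}+\log\abs O$. Combining these with $\log\abs O\le M\alpha$ and step~1 gives the stated bound, with the error terms $M\alpha+M\gamma+\sum_i(\zeta_i+\eta_i)$. The only real point is the counting estimate $\abs Y\le\abs{Y[t]}^a$ in the replacement of step~3. I expect no further obstacle, since it is elementary. For $a=1$ it recovers \LA{}, because $Y[t]=Y$ in that case.
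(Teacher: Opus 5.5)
Your proof is correct and has the same skeleton as the paper's. You keep Steps~0, 1, 2 and~4 of the proof of \LA{}, and in Step~3 you use that $\ker\Phi$ contains the diagonal of $Y[t]$, which reduces everything to the estimate $\abs{Y[t]}\ge\abs{Y}^{1/a}$.

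Where you differ is the proof of that estimate. The paper decomposes $Y$ into cyclic groups of prime-power order $p^e$. It notes that $e\le a\,v_p(t)$ because $t^aY=0$, and bounds the $t$-torsion factor by factor, $p^{\min(e,v_p(t))}\ge p^{e/a}$. You instead filter $Y\supseteq tY\supseteq\dots\supseteq t^aY=0$, bound each of the $a$ quotients by $\abs{Y/tY}$, and use $\abs{Y/tY}=\abs{Y[t]}$. Your argument needs only the first isomorphism theorem, with no structure theorem and no passage to primes. The paper's is a direct computation in the primary decomposition. Both give exactly the same bound and are sharp in the same cases, for instance $Y\cong\Z_{p^a}^k$ with $t=p$, which is the configuration over $\Z_{p^a}$ where the paper notes equality.

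One small slip: the surjection $Y/tY\to t^jY/t^{j+1}Y$ is induced by multiplication by $t^j$, not by multiplication by $t$ itself. Equivalently, it is the composite of the surjections $t^{i-1}Y/t^iY\to t^iY/t^{i+1}Y$ induced by multiplication by $t$.
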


\begin{proof}
For $a=1$ this is \LA. In general we follow the proof of \LA{} in Supplementary Note~\ref{app:lemma1}; only step~3 changes. Let $Y=q(C_A)$ and let $\Phi\colon Y^M\to\bar A$ be as there. By Eq.~\eqref{eq:C-twist}, $\Phi(y,\dots,y)=t\,y$, so the kernel of $\Phi$ contains the diagonal of the subgroup $Y[t]=\{y\in Y:ty=0\}$, and $\log\abs{q(\sum_iB_i'')}\le M\log\abs{Y}-\log\abs{Y[t]}$. Write $Y$ as a direct sum of cyclic groups of prime-power orders $p^e$. Since $t^aY=0$, each of them has $e\le a\,v_p(t)$, and its elements killed by $t$ form a subgroup of order $p^{\min(e,v_p(t))}\ge p^{e/a}$. Hence $\abs{Y[t]}\ge\abs{Y}^{1/a}$, and $\log\abs{q(\sum_iB_i'')}\le(M-\frac1a)\log\abs{Y}\le(M-\frac1a)\,h(C)$. Step~4 concludes as before.
\end{proof}

On the configuration~\eqref{eq:JI} over $\Z_{p^a}$ with $t=p$, every error term vanishes, $h(C)=a\log p$ and $h(B_1\cdots B_M)=\bigl((M-1)a+a-1\bigr)\log p$ (Supplementary Note~\ref{app:lemma1}), so equality holds.

\emph{Symplectic abelian groups.} A \emph{symplectic abelian group} is a finite abelian group $V$ with a biadditive form $\omega\colon V\times V\to\Q/\Z$ that is alternating, $\omega(v,v)=0$, and nondegenerate: $v\mapsto\omega(v,\cdot)$ is an isomorphism of $V$ onto its character group $\widehat V$. For a subgroup $U$ let $U^\perp=\{v:\omega(v,U)=0\}$. For an alternating biadditive form $\beta$ on a finite abelian group $W$ let $\mathrm{rad}\,\beta=\{w:\beta(w,W)=0\}$ and $\rk\beta=\log\abs{W/\mathrm{rad}\,\beta}$; for a subgroup $U$ of a symplectic group, $\rk(U)=\log\abs{U}-\log\abs{U\cap U^\perp}$ is the rank of $\omega$ restricted to $U$. We use three elementary facts.
\begin{itemize}
\item[(H1)] $\abs{U^\perp}=\abs{V}/\abs{U}$ and $U^{\perp\perp}=U$. Indeed, the restriction $\widehat V\to\widehat U$ is onto, because characters of $U$ extend to $V$ (Supplementary Note~\ref{app:reduction-lift}), and $v\mapsto\omega(v,\cdot)$ maps $U^\perp$ onto its kernel; and $U\le U^{\perp\perp}$, where both have the same order.
\item[(H2)] $\abs{U+U'}\,\abs{U\cap U'}=\abs{U}\,\abs{U'}$.
\item[(H3)] A pull-back along a surjective homomorphism has the same rank; $\rk(\beta_1+\beta_2)\le\rk\beta_1+\rk\beta_2$, because $\mathrm{rad}\,\beta_1\cap\mathrm{rad}\,\beta_2\le\mathrm{rad}(\beta_1+\beta_2)$ and $W/(\mathrm{rad}\,\beta_1\cap\mathrm{rad}\,\beta_2)$ embeds into $W/\mathrm{rad}\,\beta_1\times W/\mathrm{rad}\,\beta_2$; the rank of an orthogonal direct sum is the sum of the ranks; and $\rk(-\beta)=\rk\beta$.
\end{itemize}

\emph{Setting.} A system $P$ with $m_P$ qudits of dimension $2^a$ carries $V_P=\Z_{2^a}^{2m_P}$ with $\omega_P(u,v)=\langle u,v\rangle/2^a$, where $\langle\cdot,\cdot\rangle$ is the symplectic form of Supplementary Note~\ref{app:reduction-stab}, and the stabilizer group maps onto a Lagrangian subgroup $L=L^\perp$ of $V=\bigoplus_PV_P$. By Lemma~\ref{lem:A-entropy}, $S(T)=\log\abs{\pr_TL}-\frac12\log\abs{V_T}$ for every set $T$ of systems, where $V_T=\bigoplus_{P\in T}V_P$.

\emph{Compression.} The subgroup $L_P$ of elements of $L$ supported in $P$ is isotropic, and $\pr_PL=L_P^\perp$ inside $V_P$: an element $x\in V_P$ is orthogonal to $\pr_PL$ if and only if it is orthogonal to $L$, that is, if and only if $x\in L\cap V_P=L_P$; now use (H1). Replace $V_P$ by $V'_P=L_P^\perp/L_P$ with the induced form, which is nondegenerate by (H1), and $L$ by its image $L'$ in $V'=\bigoplus_PV'_P$. The kernel of $L\to L'$ is $\bigoplus_PL_P$: an element of $L$ whose components lie in the $L_P$ is the sum of these components, each of which lies in $L$. So $\abs{L'}^2=\abs{L}^2/\prod_P\abs{L_P}^2=\abs{V'}$, and the isotropic subgroup $L'$ is Lagrangian by (H1). For every $T$, the kernel of $\pr_TL\to\pr_TL'$ is $\bigoplus_{P\in T}L_P$, by the same argument, so $\log\abs{\pr_TL}-\frac12\log\abs{V_T}$ does not change. We may therefore assume $L_P=0$ for every $P$. Then $S(P)=\frac12\log\abs{V_P}$, and since $L_T$ is the kernel of $\pr_{N\setminus T}$ on $L$ and $\abs{L}^2=\abs{V}$, purity gives $\log\abs{L_T}=\frac12\log\abs{V_T}-S(T)=k(T)$. (The groups $V'_P$ need not be of the form $\Z_{2^a}^{2m}$; nothing below needs this.)

\emph{The proof.} Lemmas~\ref{lem:R-corners}--\ref{lem:R-dual} hold in this setting, with $\log\abs{\cdot}$ in place of $\dim$, with symplectic abelian groups $V$ of order $2^{2s}$ in Lemma~\ref{lem:R-corners}, and with the same proofs, using (H1)--(H3) for the facts of linear algebra used there. In Lemma~\ref{lem:R-corners}, (H2) gives $\abs{V_i+U}\ge\abs{V_i}\,\abs{U}/\abs{V_i\cap V_i^\perp}$ and the dimension count for $V_j$ and $V_k$, (H1) gives the order of $(V_i+U)^\perp$, and the bound for $V_\Sigma$ uses (H3) and (H1). Lemma~\ref{lem:R-line} is (H3). In Lemma~\ref{lem:R-points}, $\abs{U+U^\perp}=\abs{U}\,\abs{U^\perp}/\abs{U\cap U^\perp}$ by (H2) and $\abs{U^\perp}=\abs{V_W}/\abs{U}$ by (H1). Lemmas~\ref{lem:R-supports} and~\ref{lem:R-dual} hold verbatim, and $D/\Lambda$ has exponent dividing $2^a$. Apply Lemma~\ref{lem:R-LAa} with $M=3$ and $t=2$ to the family $(R_P)$ in $D/\Lambda$, in place of \LA. Its right-hand side contains $(3-\frac1a)h(C)=2\,\Delta_Z+(1-\frac1a)\,\Delta_Z$ instead of $2\,\Delta_Z$, so Eq.~\eqref{eq:R-assembly} holds with $(6-\frac1a)\,\Delta_Z$ in place of $5\,\Delta_Z$, and the same bounds give $0\le\ftwelve(S)+(1-\frac1a)\,\mathcal{U}_Z(S)=f^{(a)}(S)$ for the entropy vector $S$ of every stabilizer state of $\Z_{2^a}$ qudits, and by linearity on the closed cone $\StabZ{2^a}{6}$. This proves the inequality of Theorem~\ref{thm:twopowers}. For the rest of the theorem, the form $\mathcal{U}_Z$ equals $S(Z)+\sum_{\ell\ni Z}\bigl(S(\ell)-S(\ell\setminus Z)\bigr)$ and takes the value $1$ at $\SNF$ and at $\SF$, so $f^{(a)}(\SNF)=-1/a$ and, by Supplementary Note~\ref{app:reverse-sevenpoint}, $f^{(a)}(S^{(2^j)})=1-j+(1-1/a)\,j=1-j/a$. If $d\mid2^a$, the inclusion is Corollary~\ref{cor:divisibility}. Otherwise either $d$ has an odd prime factor $p$, and then $\SNF\in\StabZ{p}{6}\subseteq\StabZ{d}{6}$, or $d=2^j$ with $j>a$, and then $S^{(d)}\in\StabZ{d}{6}$; in both cases $f^{(a)}$ is negative there. For odd $d$, $\StabZ{2^a}{n}\not\subseteq\StabZ{d}{n}$ by Theorem~\ref{thm:steane}, since $\StabZ{2}{n}\subseteq\StabZ{2^a}{n}$. For $n>6$, add parties that hold nothing.

\subsection{Seven-point states: proof of Theorem~\ref{thm:reverse}}
\label{app:reverse-sevenpoint}

Let $G$ be the $3\times7$ matrix and $G_T$ its columns in a set $T$ of systems. If $s_1,\dots,s_r$ are the nonzero Smith invariants of $G_T$ over $\Z$, then for every modulus $N\ge2$ (not the set of systems) the projection of the code over $\Z_N$ to $T$ has order $\prod_iN/\gcd(s_i,N)$, by the Smith normal form, and $r$ is the rank of $G_T$ over $\Q$. All $2\times2$ minors of $G$ lie in $\{0,\pm1\}$, and the only $3\times3$ minor of absolute value larger than $1$ is $\det(W,X,Y)=\pm2$. A set of columns of rank $3$ other than $\{W,X,Y\}$ contains a basis of $\Q^3$ other than $\{W,X,Y\}$, whose determinant is $\pm1$. So every set of columns has Smith invariants equal to $1$, except $\{W,X,Y\}$, whose Smith invariants are $1$, $1$ and $2$; and the rank over $\Q$ equals the rank over $\GF(p)$ for every odd $p$, which is the rank function $\rNF$ of the non-Fano matroid. By the Chinese remainder theorem, the order of a projection of the code over $\Z_d$ is the product of the orders for the prime powers $p^a$ that divide $d$ exactly, so $S^{(d)}$ is the sum of the vectors $S^{(p^a)}$ (Lemma~\ref{lem:A-css}). For odd $p$, every projection has order $p^{a\rNF(T)}$, so $S^{(p^a)}=a\log_2p\cdot\SNF$. For $p=2$ the same holds except for $T=\{W,X,Y\}$, where the order is $2^{3a-1}$; by Lemma~\ref{lem:A-css} this lowers the entropy by one bit on $WXY$ and on its complement $ABCZ$, so $S^{(2^a)}=\SF+(a-1)\SNF$, and $\ftwelve(S^{(2^a)})=1-a$. Summing over the prime powers gives the value of Theorem~\ref{thm:reverse}. The non-inclusions follow from Proposition~\ref{prop:f12}, since this value is negative for every $d\ge3$; the inclusions follow from Corollary~\ref{cor:divisibility}; and for $n>6$ we add parties that hold nothing.

\subsection{Distance of the witnesses from the cones}
\label{app:margins}

Let $g\cdot S\ge0$ be a linear inequality that holds on a cone, written in the $2^n-1$ entropies of the nonempty sets of parties (the purifier is eliminated by purity), and let a vector $S_w$ satisfy $g\cdot S_w=-v<0$. For every $S$ in the cone, $v\le g\cdot(S-S_w)\le\norm{g}_1\max_X\abs{S(X)-S_w(X)}$. So every vector of the cone differs from $S_w$ by at least $v/\norm{g}_1$ in some entropy, and its Euclidean distance from $S_w$ is at least $v/\norm{g}_2$. For $\fnine$ and the Steane state ($v=3$ bits; $\fnine$ is a form in the lift, and in the $63$ entropies it has $\norm{g}_1=2{,}856$ and $\norm{g}_2\approx590.2$, for any choice of the purifier), this gives $1/952\approx1.05\times10^{-3}$ bit and a Euclidean distance of at least $5.0\times10^{-3}$ bit from $\StabZ{d}{6}$, $d$ odd, and from the closed convex hull of the union of these cones, on which $\fnine$ holds as well. For $\ftwelve$ and the non-Fano vector $\SNF$, normalized to entropy $1$ on every system, with $Z$ as the purifier ($v=1$, $\norm{g}_1=129$, $\norm{g}_2\approx28.97$), it gives $1/129$ bit and $0.034$ bit from $\StabZ{2}{6}$. These are lower bounds from a single inequality each. For the Steane state an explicit positive combination gives an upper bound: let $\SNF^{(\ell)}$ be the non-Fano vector relabeled by a collineation of the Fano plane that maps $WXY$ to the line $\ell$, so that $\SNF^{(\ell)}$ exceeds $\SF$ by $1$ on $\ell$ and on its complement and agrees with it elsewhere. The vector $\frac{5}{36}\sum_\ell\SNF^{(\ell)}$, a positive combination of entropy vectors of stabilizer states of every odd prime local dimension, differs from $\SF$ by $-\SF(X)/36$ on the sets $X$ that are not lines or complements of lines and by $1/12$ on the others; since $\SF$ takes values between $1$ and $3$, it lies within $1/12$ bit of $\SF$ in every entropy. So the distance of the Steane state from $\StabZ{d}{6}$, $d$ odd, measured by the largest difference in a single entropy, lies between $1/952$ and $1/12$ bit (Corollary~\ref{cor:divisibility} gives every odd $d$). We have not computed the distances to the cones more precisely.
\section{Doubled codes, generalized cube states and Galois qudits}
\label{app:cubes}

\subsection{Doubled codes and Proposition~\ref{prop:sandwich}}
\label{app:cubes-sandwich}

\begin{applemma}[Doubled codes]\label{lem:D-doubled}
Let $C\le\Z_d^m$ be a code whose coordinates are distributed among the parties $1,\dots,n$, with rank function $r=r_C$ on $\{1,\dots,n\}$. Let $C^{(2)}=\{(c,c):c\in C\}\le\Z_d^{2m}$, where the parties hold their coordinates of the first copy and the purifier holds the whole second copy. Then the CSS state of $C^{(2)}$ has $S(X)=r(X)$ for every set $X$ of parties.
\end{applemma}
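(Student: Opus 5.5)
The plan is to apply Lemma~\ref{lem:A-css} directly to the code $C^{(2)}$, with the systems $\{1,\dots,n\}$ and the purifier $0$, and then compute the orders of its projections. By that lemma, the CSS state of $C^{(2)}$ has entropy vector $\lambda_{r'}$, where $r'(Y)=\log\abs{\pr_YC^{(2)}}$ for every set $Y$ of systems. So for a set $X$ of parties,
\[
S(X)=r'(X)+r'(N\setminus X)-r'(N).
\]
Everything then reduces to computing three projection orders.

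First, for $X\subseteq\{1,\dots,n\}$, the projection of $C^{(2)}$ to $X$ only sees first-copy coordinates held by $X$. It is therefore $\pr_XC$, so $r'(X)=r(X)$.

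Second, $N\setminus X$ contains the purifier, which holds the whole second copy $c$. The projection $(c,c)\mapsto(c_{\{1,\dots,n\}\setminus X},c)$ is injective on $C^{(2)}$ because the second component determines $c$. Hence $\abs{\pr_{N\setminus X}C^{(2)}}=\abs{C}$, and likewise $r'(N)=\log\abs{C}$, since $(c,c)\mapsto(c,c)$ is a bijection from $C$.

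Substituting these gives
\[
S(X)=r(X)+\log\abs{C}-\log\abs{C}=r(X).
\]

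There is no real obstacle here. The only point to check is that the coordinate bookkeeping in Lemma~\ref{lem:A-css} allows a system to hold an arbitrary set of coordinates. This is how the lemma is stated, "distribute the coordinates among the systems," so assigning the purifier all $m$ second-copy coordinates is legitimate. The lemma also holds for arbitrary subgroups of $\Z_d^{2m}$, not just free ones, which matters for the remark that the first implication of Proposition~\ref{prop:sandwich} works for codes that are not systematic.
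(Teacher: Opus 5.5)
Your proposal is correct and follows the paper's proof: apply Lemma~\ref{lem:A-css} to $C^{(2)}$, note that $r_{C^{(2)}}(X)=r(X)$ for a set $X$ of parties, and use that the purifier's copy determines the codeword, so $r_{C^{(2)}}(N\setminus X)=\log\abs{C}=r_{C^{(2)}}(N)$. Your remark that Lemma~\ref{lem:A-css} holds for arbitrary subgroups, not only free codes, is also right and is what the paper relies on.
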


\begin{proof}
For a set $X$ of parties, the complement $N\setminus X$ contains the purifier, whose coordinates determine the codeword. So $r_{C^{(2)}}(N\setminus X)=\log\abs{C}=r_{C^{(2)}}(N)$, and Lemma~\ref{lem:A-css} gives $S(X)=r_{C^{(2)}}(X)=r(X)$.
\end{proof}

The code $C^{(2)}$ need not have a generator matrix in systematic form; if $d$ is divisible by the square of a prime, its CSS state then need not lie in the Clifford orbit of $\ket{0\cdots0}$ (Supplementary Note~\ref{app:reduction-stab}). If the $k$ rows of a matrix $G$ generate $C$, the code generated by $[\,G\mid I_k\,]$, in which the parties hold the coordinates of the first block and the purifier holds the last $k$ coordinates, is systematic, and the same argument gives $S(X)=r(X)$ for every set $X$ of parties: the purifier's coordinates determine the codeword. So the first implication of Proposition~\ref{prop:sandwich}, proved below, also holds for the cones generated by the Clifford orbit of $\ket{0\cdots0}$.

\begin{proof}[Proof of Proposition~\ref{prop:sandwich}]
The second implication is the contrapositive of: $\LcZ{d}{n+1}\subseteq\LcZ{d'}{n+1}$ implies $\StabZ{d}{n}\subseteq\StabZ{d'}{n}$, which follows from Proposition~\ref{thm:reduction}. For the first, suppose $\LcZ{d}{n}\not\subseteq\LcZ{d'}{n}$. Since $\LcZ{d}{n}$ is generated by subgroup rank functions, one of them, $r$, lies outside $\LcZ{d'}{n}$, and there is a linear form $f\ge0$ on $\LcZ{d'}{n}$ with $f(r)<0$. Its balanced version $f_{\mathrm{bal}}(h)=f(h)-\sum_if(\ind{i})\,h(i\mid\text{rest})$ is again nonnegative on $\LcZ{d'}{n}$ (Supplementary Note~\ref{app:reduction-balanced}). Each $\ind{i}$ is a positive multiple of a subgroup rank function of exponent dividing $d'$, so $f(\ind{i})\ge0$ and $f_{\mathrm{bal}}(r)\le f(r)<0$. For a stabilizer state of $\Z_{d'}$ qudits, the restriction of $h_S$ to subsets of the parties lies in $\LcZ{d'}{n}$, so $f_{\mathrm{bal}}(S)=f_{\mathrm{bal}}(h_S)\ge0$; this is a valid linear inequality on $\StabZ{d'}{n}$. Realize $r$ by a code over $\Z_d$ (Lemma~\ref{lem:A-realize}). The doubled CSS state of Lemma~\ref{lem:D-doubled} lies in $\StabZ{d}{n}$ and has $f_{\mathrm{bal}}(S)=f_{\mathrm{bal}}(r)<0$.
\end{proof}

\emph{The witnesses of Theorem~\ref{thm:primes} for $n=2p+2$.} Let $M=p+1$ and let $\mathcal C$ be the row space of the $M\times2M$ matrix $[\,I\mid J-I\,]$ over $\GF(p)$ or over $\GF(q)$, with one coordinate of the first copy of $\mathcal C$ for each of the $2p+2$ parties and the second copy for the purifier. Every element of the configuration lies in the sum of the others, because $e_i=(c-e_j)-\sum_{k\ne i,j}e_k$ for any $j\ne i$ and $c-e_i=\sum_{j\ne i}e_j$; so its rank function $h$ has $h(i\mid\text{rest})=0$ for every element $i$ (there are no coloops), and the balanced forms of \LB$^*$ and \LA$^*$ agree with the forms themselves on the configuration. By Lemma~\ref{lem:D-doubled}, they take on the respective doubled codes their values on the configuration, $-1$ in units of $\log p$ and of $\log q$ (Supplementary Note~\ref{app:lemma2}).

\emph{Code states that contain the two configurations.} If the other qudits of a code state determine the codeword, the patterns of Theorems~\ref{thm:steane} and~\ref{thm:reverse} on seven of its qudits can be detected by a simpler argument, with the other qudits as the purifier. So the logical zero state of the CSS code $[[2^m-1,2^m-1-2m,3]]$ built from the Hamming code~\cite{si:CalderbankShor1996,si:Steane1996}, $m\ge4$, lies outside $\StabZ{d}{7}$ for every odd $d$ when the seven parties are qubits that form a Fano subplane; for $m=4$ it is the CSS state of the binary simplex code of length $15$, a logical basis state of the $15$-qubit code with which Bravyi and Kitaev distill magic states~\cite[Sec.~VI]{si:BravyiKitaev2005}. In the Steane and non-Fano states themselves the seven qudits are the whole state, the lift is self-dual, and Proposition~\ref{prop:blind} applies, which is why we needed the transfer arguments to obtain inequalities that detect them. In detail: if the qudits outside seven coordinates of a code $C$ over $\Z_d$ determine the codeword, then in the CSS state of $C$ these seven qudits, taken as the parties with all other qudits as the purifier, have $S(X)=r_C(X)$, by the argument of Lemma~\ref{lem:D-doubled}. For the binary simplex code of dimension $m\ge4$, whose CSS state is that logical zero state, this holds for the seven qubits of every Fano subplane, because the other $2^m-8$ points span $\GF(2)^m$; these seven qubits then have the entropies of the Fano matroid. The form \LB{} with $t=2$, balanced over the seven parties as in the proof of Proposition~\ref{prop:sandwich}, is nonnegative on $\StabZ{d}{7}$ for every odd $d$ and equals $-1$ on this state, since the Fano matroid has no coloops; so the state lies outside $\StabZ{d}{7}$ for every odd $d$. Likewise, for an odd prime $p$, the seven qudits $\{0,1\}^3\setminus\{0\}$ of the CSS state of the $p$-ary simplex code of dimension $3$, whose other $p^2+p-6$ points span $\GF(p)^3$, have the entropies of the non-Fano matroid, and \LA{} with $t=2$ shows that this state lies outside the qubit cone.

\subsection{Self-duality of the generalized cube codes}
\label{app:cubes-selfdual}

Let $\Rg$ be $\GF(p)$ or $\Z_N$ for a modulus $N\ge2$ (not the set of systems), and let $C\le\Rg^{\mathcal{E}}$ be a code whose coordinate set $\mathcal{E}$ is partitioned among the systems. Then
\begin{equation}\label{eq:D-dualcode}
  r_{C^\perp}(X)=\abs{X}\log\abs{\Rg}+r_C(\mathcal{E}\setminus X)-r_C(\mathcal{E})
\end{equation}
for every set $X$ of coordinates. Indeed, $\pr_X$ restricted to $C^\perp$ has kernel $C^\perp\cap\Rg^{\mathcal{E}\setminus X}$, which is the annihilator of $\pr_{\mathcal{E}\setminus X}C$ inside $\Rg^{\mathcal{E}\setminus X}$, of order $\abs{\Rg}^{\abs{\mathcal{E}\setminus X}}/\abs{\pr_{\mathcal{E}\setminus X}C}$; and $\abs{C^\perp}=\abs{\Rg}^{\abs{\mathcal{E}}}/\abs{C}$.

\begin{proof}[Proof of Lemma~\ref{lem:cube-selfdual}]
Let $G$ be the $(M+1)\times(2M+2)$ generator matrix with rows $1,x_1,\dots,x_M$ evaluated at the points, and $w$ the weights $-1$ on each $e_i$, $+1$ on each $c-e_i$, $-(t-1)$ on $c$ and $t-1$ on $0$. Over $\Z$:
\begin{align*}
  \textstyle\sum_xw_x&=-M+M-(t-1)+(t-1)=0,\\
  \textstyle\sum_xw_xx_j&=-1+(M-1)-(t-1)=0,\\
  \textstyle\sum_xw_xx_jx_k&=(M-2)-(t-1)=0\quad(j\ne k),
\end{align*}
and $\sum_xw_xx_j^2=\sum_xw_xx_j$. Hence $G\,\mathrm{diag}(w)\,G^{\mathsf T}=0$, and $C\,\mathrm{diag}(w)\le C^\perp$, where $C=C_t(\Rg)$. The columns at $0,e_1,\dots,e_M$ form a unimodular matrix, so $C$ is free of rank $M+1$ and $\abs{C}=\abs{\Rg}^{M+1}=\abs{\Rg}^{(2M+2)/2}$. If $t-1$ is a unit, all weights are units, so $\abs{C\,\mathrm{diag}(w)}=\abs{C}=\abs{C^\perp}$ and $C^\perp=C\,\mathrm{diag}(w)$. Scaling coordinates by units does not change the rank function, so $r_{C^\perp}=r_C=:r$. By Eq.~\eqref{eq:D-dualcode}, $r(X)=\abs{X}\log\abs{\Rg}+r(\mathcal{E}\setminus X)-r(\mathcal{E})$ for every $X$, which is $r^*=r$ since $r(\{x\})=\log\abs{\Rg}$ for every point (the constant functions lie in $C$). Then $S=\lambda_r=2r-\abs{X}\log\abs{\Rg}$ and $h_S=S+\sum_iS(i)\ind{i}=2r$. The purifier's column is the first unit vector; the quotient by it maps the column $(1,x)$ to $x$, so the contraction of $r$ by the purifier is the rank function of the configuration~\eqref{eq:JI}.
\end{proof}

\subsection{The one-cut proposition}
\label{app:cubes-onecut}

\begin{appproposition}[One cut]\label{prop:onecut}
Let $t=p$ be prime and $q>p$ prime, and measure the entropies of the generalized cube states over $\GF(p)$ and over $\GF(q)$ in units of $\log p$ and $\log q$, respectively. The two entropy vectors differ on exactly one bipartition, $\{A_1,\dots,A_M,C\}\mid\{B_1,\dots,B_M,O\}$, where the entropy is $p$ over $\GF(p)$ and $p+2$ over $\GF(q)$.
\end{appproposition}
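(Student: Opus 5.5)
By Lemma~\ref{lem:cube-selfdual}, both generalized cube states are self-dual, since $t-1=p-1$ is a unit in $\GF(p)$ (because $0<p-1<p$) and in $\GF(q)$. So in units of the logarithm of the field size, $S(X)=2r(X)-\abs X$, where $r$ is the rank function, in field dimensions, of the point configuration $\{e_i,\ c-e_i,\ c,\ 0\}$ in affine space. I would therefore compare the affine ranks $r(X)=\rank$ of the vectors $(1,x)$ for $x\in X$ over the two fields. These are ranks of $0/\pm1$ matrices with columns $(1,x)$. My plan is to show that every such submatrix has the same rank over $\GF(p)$ and $\GF(q)$, except when $X$ is one of the two sides of the stated bipartition. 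Then I compute the values on that cut.

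\emph{Step 1 (where the ranks can differ).} Over $\Q$, the rank of a set of columns equals the rank over $\GF(\ell)$ unless $\ell$ divides all maximal nonzero minors of some relevant submatrix. So I would list the points: the origin $0$, the $e_i$, the $c-e_i$ and $c$, all in $\{0,1\}^M$. For a set $X$, I would reduce with the vector $(1,0)$ when $0\in X$, and otherwise work with differences. The only nontrivial determinant comes from the family $\{c-e_i\}$ (the columns of $J-I$), whose Smith form is $\mathrm{diag}(1,\dots,1,t)$ by the Supplementary Note on Lemma~\ref{lem:JI}. Subsets mixing the $e_i$ with some $c-e_j$ have unimodular bases, by the same adjugate argument as there (off-diagonal entries $\pm1$), or they are reduced by eliminating $e_i$ columns into smaller instances of $J-I$ of size $k$, whose determinant is $\pm(k-1)$. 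I therefore need to check which subsets make a prime factor of some $k-1\le t=p$ matter. A factor equal to $p$ occurs only for the full family with $k=M$. Factors $k-1<p$ would be primes smaller than $p$, hence not $p$ itself, and also not $q>p$. So they do not separate the two fields, and only $k-1=t$ is relevant.

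\emph{Step 2 (identify the cut).} I expect the exceptional sets to be exactly those whose affine span depends on whether $\sum_i(c-e_i)=t\,c$ vanishes. Over $\GF(p)$, the points $c-e_i$ together with $0$ are affinely dependent. Equivalently, the $B_i$ together with the origin $O$ span only a hyperplane over $\GF(p)$, while over $\GF(q)$ they span everything. Dually, the $A_i$ together with $C$ span a hyperplane over $\GF(p)$: the affine functional $\sum_k x_k-1$ vanishes on every $e_i$, and on $c$ it equals $M-1=p\equiv0$. Over $\GF(q)$ this fails. This gives $r(\{B_i\}\cup\{O\})=r(\{A_i\}\cup\{C\})=M$ over $\GF(p)$ and $M+1$ over $\GF(q)$. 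Hence the cut entropy is $2M-(M+1)=M-1=p$ over $\GF(p)$ and $2(M+1)-(M+1)=M+1=p+2$ over $\GF(q)$, matching the statement. By self-duality the two sides carry the same value, so this is one bipartition.

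\emph{Step 3 (all other sets agree).} This is the main obstacle: showing that no other subset $X$ has a rank that depends on the characteristic. I would argue by cases on $\abs X$ and on whether $X$ contains $0$ and $c$. For sets of size at most $M$, the relevant minors are determinants of matrices of the form $J-I$ of size $k<M$, possibly bordered by unit columns, and these are $\pm1$ or $\pm(k-1)$ with $k-1<p$. For larger sets, I would pass to the complement using $r^*=r$, i.e.\ $r(X)=\abs X+r(\mathcal E\setminus X)-(M+1)$ over both fields, which reduces to size at most $M+1$. The sets of size exactly $M+1$ are handled by the hyperplane criterion: such a set fails to be a basis over $\GF(\ell)$ only if it lies on an affine hyperplane there. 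A $\{0,1\}$-point set that lies on a hyperplane mod $p$ but not over $\Q$ must satisfy a relation whose coefficient sum is a nonzero multiple of $p$ and at most $M-1=p$. I would check that this forces the two sets of Step~2, with $q\nmid$ every coefficient. If this case check gets messy, the fallback is to compute all Smith invariants using the symmetry $S_M$ acting on the coordinates, which reduces the problem to the numbers of $e_i$ and of $c-e_i$ included.
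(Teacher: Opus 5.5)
Your Step~2 is correct. Over $\GF(p)$ both $\{A_1,\dots,A_M,C\}$ and $\{B_1,\dots,B_M,O\}$ lie on an affine hyperplane, over $\GF(q)$ they do not, and the values $p$ and $p+2$ follow. Your reduction of sets of size at least $M+2$ to their complements, through $r(X)=\abs X+r(\mathcal E\setminus X)-(M+1)$, is also valid over both fields.

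But the content of the proposition is the word \emph{exactly}, and Steps~1 and~3 do not prove it. The criterion in Step~1 is not the right one. What decides whether $r_{\GF(\ell)}(X)=r_{\Q}(X)$ is whether $\ell$ divides all $r_{\Q}(X)\times r_{\Q}(X)$ minors of the $(M+1)\times\abs X$ matrix with columns $(1,x)$, row of ones included. It is not whether $X$ contains a $J-I$ block of columns. Two examples show this:
\begin{itemize}
\item $\{B_1,\dots,B_M\}$ has size $M$ and contains the full block $J-I$, of determinant $\pm p$. Yet its rank over $\GF(p)$ is $M$, because the kernel vector $c$ of $J-I$ modulo $p$ has coordinate sum $M\equiv1$.
\item The exceptional set $\{A_1,\dots,A_M,C\}$ contains no $c-e_i$ at all and has determinant $\pm p$.
\end{itemize}
More generally, take a transversal (one of $e_j$ and $c-e_j$ for every $j$) containing $a$ of the $c-e_j$. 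If it contains $c$ but not $0$, it is a basis modulo $\ell$ exactly when $\ell\nmid t-a$. If it contains $0$ but not $c$, it is a basis exactly when $\ell\nmid a-1$. These numbers run through $-1,0,1,\dots,t$ as $a$ varies. The whole proof is to show that $p$ divides a nonzero one only at $a=0$ and $a=M$, and that no other set is affected. Step~3 only announces this (``I would check that this forces the two sets''). Moreover, its hyperplane criterion only tells you whether a set of size $M+1$ is a basis. Sets of that size that are already dependent over $\Q$, such as the transversals with $c$, without $0$ and with $a=t$, need a separate argument that their rank does not drop further modulo $p$.

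The missing piece is a uniform rank computation for all subsets. Your $S_M$ fallback points toward it, and the paper carries it out in Supplementary Lemma~\ref{lem:D-onecut}. Write a dependency $\lambda$ of $X$ in terms of $u_j=\lambda_{c-e_j}$ and $\lambda_c$. The $j$th coordinate forces $\lambda_{e_j}=u_j+\kappa$ with $\kappa=-\sum_ju_j-\lambda_c$. The row of ones then forces $\lambda_0=(t-1)\sum_ju_j+t\lambda_c$. So the nullity of $X$ is the number of free unknowns minus the rank of two kinds of constraints: $u_j+\kappa=0$ for $e_j\notin X$, and $(t-1)\sum_ju_j+t\lambda_c=0$ if $0\notin X$. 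A short case analysis settles this rank. The cases are whether some $j$ has neither $e_j$ nor $c-e_j$ in $X$, and whether $0$ and $c$ lie in $X$. It shows that the rank depends on the characteristic $\ell$ only through $t-a$, $a-1$ and $t-1$, in the situations above. With $t=p$ and $\ell\in\{p,q\}$:
\begin{itemize}
\item $\ell\nmid t-1$ in both cases;
\item for $\ell=q>t$, no nonzero $t-a$ or $a-1$ is divisible by $\ell$;
\item for $\ell=p$, only $a=0$ and $a=M$ qualify.
\end{itemize}
Without this step, or an equivalent computation of all Smith invariants, your argument shows that the two vectors differ on the stated cut, but not that they agree everywhere else.
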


\begin{figure}[t]
\centering
\begin{tikzpicture}[scale=1.25,
  odd/.style={circle, fill=black, inner sep=1.7pt},
  even/.style={circle, draw=black, fill=white, inner sep=1.7pt},
  lab/.style={font=\footnotesize}]
\coordinate (O) at (0,0);
\coordinate (A) at (2,0);
\coordinate (B) at (0.9,0.7);
\coordinate (C) at (0,2);
\coordinate (W) at ($(A)+(B)$);
\coordinate (X) at ($(A)+(C)$);
\coordinate (Y) at ($(B)+(C)$);
\coordinate (Z) at ($(A)+(B)+(C)$);
\draw[gray] (O)--(A)--(X)--(C)--cycle;
\draw[gray] (A)--(W)--(Z)--(X);
\draw[gray] (C)--(Y)--(Z);
\draw[gray, dashed] (O)--(B)--(W);
\draw[gray, dashed] (B)--(Y);
\node[even] at (O) {}; \node[lab, below left] at (O) {$O=000$};
\node[odd] at (A) {}; \node[lab, below right] at (A) {$A_1=100$};
\node[odd] at (B) {}; \node[lab, below right, xshift=-1pt] at (B) {$A_2=010$};
\node[odd] at (C) {}; \node[lab, left] at (C) {$A_3=001$};
\node[even] at (W) {}; \node[lab, right] at (W) {$B_3=110$};
\node[even] at (X) {}; \node[lab, above left] at (X) {$B_2=101$};
\node[even] at (Y) {}; \node[lab, above left] at (Y) {$B_1=011$};
\node[odd] at (Z) {}; \node[lab, above right] at (Z) {$C=111$};
\end{tikzpicture}
\caption{The cube state ($t=2$): one qudit on each vertex of $\{0,1\}^3$, and the code of affine functions on the vertices. The origin $O$ is the purifier. The vertices of odd weight (filled) and of even weight (open) form the only cut on which the entropy, in units of $\log p$, depends on the prime local dimension $p$: it is $2$ for $p=2$ and $4$ for every odd prime $p$ (Proposition~\ref{prop:onecut} below).}
\label{fig:cube}
\end{figure}

Proposition~\ref{prop:onecut} follows from the cases $\ell=t$ and $\ell>t$ of the following statement. Let $t\ge2$, $M=t+1$, let $\Kf$ be a field of characteristic $\ell$ with $\ell\nmid t-1$ ($\ell=0$ allowed), and let $r_{\Kf}$ be the rank function of the vectors $(1,x)$, $x\in\{e_j,c-e_j,c,0\}$, of $\Kf^{M+1}$. Call a set $X$ of points a \emph{transversal} if it contains exactly one of $e_j$ and $c-e_j$ for every $j$, and put $a=\#\{j:c-e_j\in X\}$.

\begin{applemma}[Ranks over different fields]\label{lem:D-onecut}
$r_{\Kf}(X)=r_{\Q}(X)-1$ if $X$ is a transversal and either $c\in X$, $0\notin X$, $\ell\mid t-a$ and $a\ne t$, or $0\in X$, $c\notin X$, $\ell\mid a-1$ and $a\ne1$. For every other $X$, $r_{\Kf}(X)=r_{\Q}(X)$.
\end{applemma}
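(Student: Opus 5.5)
The plan is to compute both ranks as $\abs{X}$ minus the dimension of the space of linear relations among the vectors $(1,x)$, $x\in X$. I would describe these relation spaces explicitly through one integral basis of relations. The same basis works over $\Q$ and over $\Kf$, so the comparison reduces to solving a few linear equations in two parameters.

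\emph{Step 1: a basis of relations.} The columns at $0,e_1,\dots,e_M$ form a unimodular matrix. So over every field the $2M+2$ vectors span $\Kf^{M+1}$, and their relation space has dimension $M+1$. I would take the integer relations
\begin{align*}
  R_j&=(1,c-e_j)-(1,c)+(1,e_j)-(1,0)\quad(j=1,\dots,M),\\
  R_c&=(1,c)-\textstyle\sum_j(1,e_j)+t\,(1,0),
\end{align*}
and check them directly; for the first coordinate of $R_c$ this uses $1-M+t=0$. Their coefficient matrix has the identity block at the coordinates $c-e_j$ and a $\pm1$ entry at $c$, so it has full rank $M+1$ modulo every prime. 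The reductions of $R_1,\dots,R_M,R_c$ are therefore a basis of the relations over any field, and every relation is $\sum_j\lambda_jR_j+\mu R_c$ with $\lambda_j,\mu\in\Kf$. Its coefficients are
\begin{align*}
  &\lambda_j\ \text{on}\ c-e_j,\qquad \lambda_j-\mu\ \text{on}\ e_j,\\
  &\mu-\textstyle\sum_j\lambda_j\ \text{on}\ c,\qquad t\mu-\textstyle\sum_j\lambda_j\ \text{on}\ 0.
\end{align*}

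\emph{Step 2: relations supported in $X$.} A relation is supported in $X$ exactly when the coefficients above vanish outside $X$. This gives one linear system in $(\lambda,\mu)$ for each $X$, the same integer system over $\Q$ and over $\Kf$, and $r_{\Kf}(X)=\abs{X}-\dim(\text{solutions over }\Kf)$. Consider a transversal $X$ with $c\in X$ and $0\notin X$. Then $\lambda_j=\mu$ when $c-e_j\in X$ and $\lambda_j=0$ when $e_j\in X$, so $\sum_j\lambda_j=a\mu$, and the only remaining equation is $(t-a)\mu=0$. Over $\Q$ this forces $\mu=0$ unless $a=t$. Over $\Kf$ it has a nonzero solution exactly when $\ell\mid t-a$, so the rank drops by one precisely in the stated case. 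The case $0\in X$, $c\notin X$ works the same way: the remaining equation is $\mu=\sum_j\lambda_j=a\mu$, that is $(a-1)\mu=0$.

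\emph{Step 3: all other $X$.} This casework is the main obstacle. The aim is to show that in every other case the system reduces to equations with coefficients $0$ or $\pm1$, or to the coefficient $t-1$, which is a unit since $\ell\nmid t-1$; the solution spaces over $\Q$ and over $\Kf$ then have the same dimension. I would sort the cases by the index classes:
\begin{itemize}
\item $J_0$, the indices $j$ with neither $e_j$ nor $c-e_j$ in $X$;
\item $J_2$, the indices with both in $X$;
\item the transversal indices.
\end{itemize}
If $J_0\ne\emptyset$, then $\lambda_j=0$ and $\lambda_j=\mu$ for some $j$, so $\mu=0$. Every remaining equation is then of the form $\lambda_j=0$ or $\sum_j\lambda_j=0$, and these have unimodular coefficients. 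If $J_0=\emptyset$, the $\lambda_j$ with $j\in J_2$ are free. The conditions at $c$ and at $0$ are then $\sum_j\lambda_j=\mu$ and/or $\sum_j\lambda_j=t\mu$, which can be solved for some $\lambda_j$ with $j\in J_2$ as long as $J_2\ne\emptyset$. When both conditions are imposed they give $(t-1)\mu=0$, whose coefficient is a unit. When $J_2=\emptyset$, $X$ is a transversal. The cases in which $c$ and $0$ are both in $X$ or both outside it give no condition or force $(t-1)\mu=0$, so no drop occurs; the mixed cases are those of Step 2, together with the boundary values $a=t$ and $a=1$, where the equation is trivial over both fields.

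I would double-check the boundary values $a\in\{0,1,t,M\}$ against the case list of the statement.
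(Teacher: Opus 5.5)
Your proof is correct and is essentially the paper's argument. Your integral basis $R_1,\dots,R_M,R_c$ parametrizes the same relation space that the paper parametrizes by $u_j=\lambda_{c-e_j}$ and $\lambda_c$: your $\lambda_j$ is its $u_j$ and your $\mu$ is its $-\kappa$. Your split into $J_0$, $J_2$ and transversal indices is its case analysis by $D$, $D'$ and the indices missing from $X$.

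One remark is loose: equations with coefficients $0,\pm1$ do not in general have field-independent rank. What your argument actually uses is that the surviving equations ($\mu=0$, $\lambda_j=0$, and a single equation $\sum_{j\in J_2}\lambda_j=0$) are triangular with pivots $\pm1$ in disjoint unknowns, which is how the paper phrases it.
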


\begin{proof}
A dependency among the vectors of $X$ is a vector $\lambda\in\Kf^X$ with $\sum_x\lambda_x=0$ and $\sum_x\lambda_xx=0$. Write $u_j=\lambda_{c-e_j}$ (zero if $c-e_j\notin X$), $\beta=\sum_ju_j$, and $\kappa=-\beta-\lambda_c$ (with $\lambda_c=0$ if $c\notin X$). The $j$th coordinate of $\sum_x\lambda_xx=0$ reads $\lambda_{e_j}+\beta-u_j+\lambda_c=0$, \ie, $\lambda_{e_j}=u_j+\kappa$, and then $\sum_x\lambda_x=0$ gives $\lambda_0=(t-1)\beta+t\lambda_c$. So dependencies correspond to the choices of the $u_j$ and $\lambda_c$ subject to
\begin{itemize}
\item[(i)] $u_j+\kappa=0$ for every $j$ with $e_j\notin X$;
\item[(ii)] $(t-1)\beta+t\lambda_c=0$ if $0\notin X$.
\end{itemize}
The number of unknowns does not depend on $\Kf$, so the nullity $\abs{X}-r_{\Kf}(X)$ depends on $\Kf$ only through the rank of this system. Let $D=\{j:e_j,c-e_j\in X\}$, $D'=\{j:\text{only }c-e_j\in X\}$, $a=\abs{D'}$, and $s=\sum_{j\in D}u_j$.

If some $j$ has neither $e_j$ nor $c-e_j$ in $X$, then (i) gives $\kappa=0$ and $u_j=0$ on $D'$, hence $\beta=s=-\lambda_c$, and (ii) becomes $\lambda_c=0$. So the system is equivalent to $u_j=0$ ($j\in D'$), $s+\lambda_c=0$ and, if $0\notin X$, $\lambda_c=0$; these equations are triangular with pivots $\pm1$ in disjoint sets of unknowns, so the rank of the system is the same over every field.

Otherwise every $j$ lies in $D$, in $D'$, or has only $e_j$ in $X$. Then (i) says $u_j=-\kappa$ on $D'$, so $\beta=s-a\kappa$; we treat $\kappa$ as an unknown subject to its definition, which becomes $\mathrm{E}_1\colon(a-1)\kappa-s-\lambda_c=0$, and (ii) becomes $\mathrm{E}_2\colon-a(t-1)\kappa+(t-1)s+t\lambda_c=0$.
\begin{itemize}
\item $c,0\in X$: only $\mathrm{E}_1$; rank $1$ over every field.
\item $c\in X$, $0\notin X$: if $D\ne\emptyset$, then $\mathrm{E}_1$ and $\mathrm{E}_2$ are independent over every field, since proportionality would force the coefficients of $s$ and $\lambda_c$ to give $-\mu=t-1$ and $-\mu=t$. If $D=\emptyset$, $X$ is a transversal, the unknowns are $\kappa$ and $\lambda_c$, and the determinant is $(a-1)t-a(t-1)=a-t$: the rank drops exactly when $\ell\mid t-a$, and over $\Q$ exactly when $a=t$.
\item $0\in X$, $c\notin X$: $\lambda_c=0$ and $\mathrm{E}_1$ reads $(a-1)\kappa=s$. If $D\ne\emptyset$ the rank is $1$. If $D=\emptyset$, $X$ is a transversal and the rank drops exactly when $\ell\mid a-1$, and over $\Q$ exactly when $a=1$.
\item $c,0\notin X$: $\lambda_c=0$. If $D\ne\emptyset$, then $\mathrm{E}_2+(t-1)\mathrm{E}_1=-(t-1)\kappa$, so the rank is $2$ because $\ell\nmid t-1$. If $D=\emptyset$, the only unknown is $\kappa$, with coefficients $a-1$ and $-a(t-1)$, which never vanish together.
\end{itemize}
So $r_{\Kf}$ differs from $r_\Q$ exactly in the two cases stated, where the nullity is larger by one.
\end{proof}

When $\ell\nmid t-1$ the cube code is identically self-dual over $\Kf$ (Lemma~\ref{lem:cube-selfdual}; for $\ell=0$ the same weights work), so $S_{\Kf}=2r_{\Kf}-\abs{X}$ in units of $\log\abs{\Kf}$, and $S_{\Kf}$ differs from the formal vector $S_\Q=2r_\Q-\abs{X}$ exactly on the sets of Lemma~\ref{lem:D-onecut}, by $-2$. These sets come in complementary pairs. If $\ell>t$ there are none, so $S_{\Kf}=S_\Q$. If $\ell=t$, the only solutions are $a=0$ in the first case and $a=M$ in the second, \ie, the set $\{A_1,\dots,A_M,C\}$ and its complement. This proves Proposition~\ref{prop:onecut}, with $S_\Q=t+2$ on that cut.

\subsection{Galois qudits}
\label{app:cubes-galois}

\begin{appproposition}[Galois qudits]\label{prop:galois}
Up to phases, the Pauli group of a Galois qudit of dimension $p^k$, defined with the field trace~\cite{si:Ketkar2006}, is that of $k$ qudits of dimension $p$, and CSS states over $\GF(p)$ remain CSS states over $\GF(p^k)$ with the same ranks. Consequently, the entropy vectors of stabilizer states of Galois qudits of dimension $p^k$ generate exactly the cone $\StabZ{p}{n}$.
\end{appproposition}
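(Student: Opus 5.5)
The plan is to prove the three parts in turn: the identification of the Pauli groups, the preservation of CSS ranks, and then the equality of cones.

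\emph{Identification of the Pauli groups.} Let $\{b_1,\dots,b_k\}$ be a basis of $\GF(p^k)$ over $\GF(p)$ and $\{b_1^*,\dots,b_k^*\}$ its trace-dual basis, so that $\mathrm{tr}(b_ib_j^*)=\delta_{ij}$. Identify $\mathbb{C}^{p^k}=\mathbb{C}[\GF(p^k)]$ with $(\mathbb{C}^p)^{\otimes k}$ by writing $x=\sum_ix_ib_i$. The Galois shift $X(a)\ket{x}=\ket{x+a}$ then becomes $\bigotimes_iX^{a_i}$. The Galois phase $Z(b)\ket{x}=\omega^{\mathrm{tr}(bx)}\ket{x}$ becomes $\bigotimes_iZ^{\beta_i}$, where $b=\sum_i\beta_ib_i^*$, because $\mathrm{tr}(bx)=\sum_i\beta_ix_i$. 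So the Weyl operators agree up to phases. The symplectic form $\mathrm{tr}(a'b-ab')$ is the standard $\GF(p)$ form in these coordinates, so Lagrangian subgroups correspond. Hence stabilizer states of Galois qudits of dimension $p^k$ are exactly the stabilizer states of $\Z_p$ qudits in which each Galois qudit is a block of $k$ qudits. A party holding $m$ Galois qudits holds $mk$ qudits of $\Z_p$, and entropies are basis-independent, so every Galois entropy vector lies in $\StabZ{p}{n}$.

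\emph{CSS states and the converse inclusion.} Let $C\le\GF(p)^m$ be a code and put $C'=C\otimes_{\GF(p)}\GF(p^k)$. Projections commute with extension of scalars, so $\abs{\pr_XC'}=p^{k\dim\pr_XC}$. By Lemma~\ref{lem:A-css}, the Galois CSS state $\ket{C'}$ therefore has entropy vector $k\lambda_{r_C}$ in bits, measured with $\log p$ units as for $C$. By Proposition~\ref{thm:reduction}(iii), these vectors generate $\StabZ{p}{n}$ up to positive scaling, and closure gives the reverse inclusion. The resulting cone is closed convex, because Galois entropy vectors add under tensor products and positive multiples are taken in the closure.

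\emph{Main obstacle.} The only delicate point is the phase bookkeeping for $p=2$: products of Galois Paulis may acquire signs $\pm i$, which we must check do not introduce nontrivial scalars into the stabilizer group. Since the identification is a group isomorphism modulo scalars, and the condition ``contains no scalar other than $1$'' transfers across it, this is routine.
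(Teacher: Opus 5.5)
Your proposal is correct and follows the paper's proof. The basis and trace-dual-basis identification of the Galois Paulis with Weyl operators of $k$ qudits of dimension $p$ gives the inclusion into $\StabZ{p}{n}$; extending scalars multiplies CSS entropy vectors by $k$, and CSS entropy vectors generate $\StabZ{p}{n}$ (you cite Proposition~\ref{thm:reduction}(iii), the paper cites Corollary~\ref{cor:doubling}; either suffices). Only the forward direction of the identification is used, so your claim that Galois stabilizer states are \emph{exactly} the blockwise $\Z_p$ stabilizer states is stronger than needed. It fails, for example, if the Galois stabilizer group is required to be $\GF(p^k)$-linear, and you can simply drop it.
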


\begin{proof}
The generalized Pauli operators of a Galois qudit of dimension $q=p^k$ are $X(a)Z(b)$, $a,b\in\GF(q)$, with $X(a)\ket{x}=\ket{x+a}$ and $Z(b)\ket{x}=\omega^{\mathrm{tr}(bx)}\ket{x}$, $\omega=e^{2\pi i/p}$~\cite{si:Ketkar2006}. Expanding $a$ in a basis of $\GF(q)$ over $\GF(p)$ and $b$ in the trace-dual basis identifies these operators, up to phases, with the Weyl operators of $k$ qudits of dimension $p$, and the trace-symplectic form with the standard one. So every stabilizer state of Galois qudits (with additive or with $\GF(q)$-linear stabilizer group) is a stabilizer state of $\Z_p$ qudits, and its entropy vector lies in $\StabZ{p}{n}$. Conversely, if $C$ is a code over $\GF(p)$, the CSS state of $C\otimes\GF(q)$ on Galois qudits has entropy vector $k$ times that of the CSS state of $C$, because extending scalars does not change ranks. By Corollary~\ref{cor:doubling} these vectors generate $\StabZ{p}{n}$.
\end{proof}

\subsection{The \texorpdfstring{$\Z_4$}{Z4} cube state}
\label{app:cubes-z4}

The cube state over $\Z_4$ violates the pulled-back \LA{} and \LB{} with $t=2$ (each takes the value $-2$ bits), so it lies outside $\StabZ{p}{7}$ for every prime $p$. Nevertheless it lies in $\Stab^{\mathrm{pr}}_7$, the closed convex hull of the union of the prime cones: all minors of its integer generator matrix $G$ lie in $\{0,\pm1,\pm2\}$ (its rows are $(1,\dots,1)$ and the three coordinate functions on $\{0,1\}^3$; a minor of order at most $3$ is the determinant of a $0/1$ matrix of order at most $3$, whose absolute value is at most $2$, and a minor of order $4$ is $\pm6$ times the volume of a tetrahedron with vertices in $\{0,1\}^3$, which is $0$, $1/6$ or $1/3$), so $\log_2\abs{\pr_XC}=\rank_{\GF(2)}G_X+\rank_{\GF(3)}G_X$ for every set $X$ of columns, and its entropy vector in bits is the sum of the qubit cube vector (in bits) and the qutrit cube vector (in units of $\log 3$). Likewise, the $\Z_4$ seven-point state of Theorem~\ref{thm:reverse} has entropy vector $\SF+\SNF$ in bits, so it lies in $\Stab^{\mathrm{pr}}_6$ and, by that theorem, outside the qubit cone.

\subsection{Proof of Theorem~\ref{thm:divisibility}}
\label{app:cubes-divisibility}

The first statement is Corollary~\ref{cor:divisibility}. If $d\nmid d'$, choose a prime $p$ with $v_p(d)>v_p(d')=:j$, where $v_p(d)$ is the exponent of $p$ in $d$. We give a witness in $\StabZ{d}{n_0}$ outside $\StabZ{d'}{n_0}$ for some $n_0\le2\max(d,d')+2$; for larger $n$, add parties that hold nothing (Sec.~\ref{sec:reduction}).
\begin{itemize}
\item $j=0$, $p$ odd. The doubled-code witness of Theorem~\ref{thm:primes} over $\GF(p)$, with $n_0=2p+2$ parties, lies in $\StabZ{p}{n_0}\subseteq\StabZ{d}{n_0}$ and violates the balanced \LB$^*$ with $t=p$. This inequality is valid on $\StabZ{d'}{n_0}$, because multiplication by $p$ is injective on every group of exponent dividing $d'$ (Lemma~\ref{lem:JIstar} and Proposition~\ref{thm:reduction}, as in the proof of Proposition~\ref{prop:sandwich}).
\item $j=0$, $p=2$. Then $d$ is even and $d'$ is odd, so $d'\ge3$. The generalized cube state over $\GF(2)$ with $t=2$, with $n_0=7\le2d'+2$ parties, lies in $\StabZ{2}{7}\subseteq\StabZ{d}{7}$ and violates the pulled-back \LB{} with $t=2$, which is valid on $\StabZ{d'}{7}$ because $2$ is a unit modulo $d'$ (proof of Theorem~\ref{thm:primes}).
\item $j\ge1$, $d'\ge3$. Let $t=d'$ and $N=p^{j+1}$, which divides $d$. The doubled code of the configuration~\eqref{eq:JI} over $\Z_N$ gives a state with $n_0=2d'+2$ parties in $\StabZ{N}{n_0}\subseteq\StabZ{d}{n_0}$. The configuration has no coloops over $\Z_N$ either, and $M=d'+1\ge4$ and $\gcd(t,N)=p^j$, so the balanced \LA$^*$ with $t=d'$ takes the value $-\log p$ on this state. The inequality is valid on $\StabZ{d'}{n_0}$, because $d'V=0$ for every group $V$ of exponent dividing $d'$. For example, $d=9$ and $d'=3$ give $n_0=8$, and $d=12$ and $d'=6$ give $n_0=14$ (Fig.~\ref{fig:landscape}b).
\item $j\ge1$, $d'=2$. Then $p=2$ and $4\mid d$. The generalized cube state over $\Z_4$ with $t=2$, with $n_0=7$ parties, lies in $\StabZ{4}{7}\subseteq\StabZ{d}{7}$, and it is self-dual by Lemma~\ref{lem:cube-selfdual}, since $t-1=1$ is a unit. The subgroup $B_1+B_2+B_3$ of $\Z_4^3$ has index $2$ (the Smith form of $J-I$ is $\mathrm{diag}(1,\dots,1,t)$), so the pulled-back \LA{} with $t=2$, which is valid on $\StabZ{2}{7}$, takes the value $-2\log2$ on this state.
\end{itemize}
\section{Two incorrect printed inequalities}
\label{app:misprints}

We record two printed inequalities that are false as stated, which we came across while checking published characteristic-dependent inequalities. The failure of the first one was already noted by Pe\~na-Macias and Sarria~\cite{si:PenaSarria2021}; it is relevant to Table~\ref{tab:eight}. The second one is not used anywhere in this paper.

\emph{Blasiak, Kleinberg and Lubetzky, Eq.~(6.17).} In the labeling of Table~\ref{tab:eight}, Eq.~(6.17) of Ref.~\cite{si:BKL2011} (Theorem~6.6 of arXiv version~1, the only arXiv version; claimed for every characteristic other than $2$) reads
\[
\begin{aligned}
0\le{}&3h(A)+3h(B)+9h(C)+6h(Z)+6h(AB)\\
&-12h(ABC)+3h(ABW)+3h(ACX)+3h(BCY)\\
&-3h(ABZ)-3h(ACZ)-3h(BCZ)\\
&+3h(ABCZ)+h(WXY).
\end{aligned}
\]
As printed, it fails over every field: for $A,B,C=\langle e_1\rangle,\langle e_2\rangle,\langle e_3\rangle$, $Z=\langle e_1+e_2+e_3\rangle$ and $W=X=Y=0$, the right-hand side equals $-3$. Pe\~na-Macias and Sarria gave the same example, attributed the error to an upper bound in the proof, and proved a different inequality for this configuration (their Theorem~9 with $n=2$)~\cite{si:PenaSarria2021}. The proof of Theorem~6.6 bounds the terms $\sum_{i+j=111}\delta[111\mid i,j]$ (in the notation of Ref.~\cite{si:BKL2011}) in the text, but they are missing from Eq.~(6.16). Restoring them adds
\begin{align*}
3\bigl[&h(AYZ)-h(AY)+h(BXZ)\\
&-h(BX)+h(CWZ)-h(CW)\bigr]
\end{align*}
to the right-hand side. The corrected form takes the value $6$ on the example above and $-1$ on the Fano configuration, and it equals \LB{} with $t=2$ (Eq.~\eqref{eq:LB2}) minus $6\,I(A;B)$. Its validity rests on the proof of Ref.~\cite{si:BKL2011} with these terms restored; Proposition~\ref{prop:blind} does not depend on it. We found no problem with Eq.~(6.8) of the same reference, which coincides with Theorem~7 of Ref.~\cite{si:PenaSarria2021} for their parameter $n=2$ (Supplementary Note~\ref{app:blind}). The conference version of Ref.~\cite{si:BKL2011} asserts the existence of both inequalities (its Theorem~6.2) but does not print them.

\emph{Dougherty, Freiling and Zeger, Theorem~3.1 of Ref.~\cite{si:DFZ2015char}.} Theorem~3.1, identical in the published version and in arXiv version~1, states an eight-variable inequality on $A,B,C,D,W,X,Y,Z$ for every characteristic other than $3$:
\[
\begin{aligned}
H(A)\le{}&8H(Z)+29H(Y)+3H(X)+8H(W)\\
&-6H(D)-17H(C)-8H(B)-17H(A)\\
&+55H(Z|A,B,C)+35H(Y|W,X,Z)\\
&+50H(X|A,C,D)+49H(W|B,C,D)\\
&+18H(A|B,D,Y)+7H(B|D,X,Z)\\
&+H(B|A,W,X)+7H(C|D,Y,Z)\\
&+7H(C|B,X,Y)+3H(C|A,W,Y)\\
&+6H(D|A,W,Z)\\
&+49\bigl(H(A)+H(B)+H(C)+H(D)\\
&\qquad-H(A,B,C,D)\bigr).
\end{aligned}
\]
As printed, it is false over every field: let $C=X$ be the same one-dimensional subspace and all other subspaces zero. The left-hand side is $0$, and the right-hand side is $3-17+7+3=-4$. A coefficient error in the final expansion of the proof seems likely. The same paper states an eight-variable inequality that holds in characteristic $3$ and fails in every other characteristic~\cite[Thms.~4.1 and~4.2]{si:DFZ2015char}; through the doubled codes of Proposition~\ref{prop:sandwich} it would give $\StabZ{q}{8}\not\subseteq\StabZ{3}{8}$ for every prime $q\ne3$. We have not proved it ourselves and do not use it; Lemma~\ref{lem:JIstar} gives eight-variable inequalities of this kind for both characteristics, with a proof (Supplementary Note~\ref{app:lemma2}).
\section{Secret sharing: access structures, contractions and bounds}
\label{app:qss}

This note proves Lemma~\ref{lem:fano-odd} and gives the details behind Proposition~\ref{prop:qss} and Theorem~\ref{thm:ideal}. In a classical secret sharing scheme the secret and the shares are random variables; a set of players is authorized if its shares determine the secret and unauthorized if they are independent of it, and the scheme is linear over a field $\Kf$ if the secret and the shares are linear functions of a uniformly random vector over $\Kf$; the secret may consist of several field elements~\cite{si:JafariKhazaei2021}. Every scheme is obtained from a pure-state one by discarding a share, and every pure-state scheme requires that a set of players is authorized exactly when its complement is not~\cite{si:Gottesman2000}, as for $\Gamma_{\mathrm F}$ and $\Gamma_{\mathrm{NF}}$. Through the lift, stabilizer schemes of prime local dimension and classical linear schemes correspond; the converse direction is a version of a construction of Smith~\cite{si:Smith2000}.

\begin{applemma}[Stabilizer schemes are linear schemes]\label{lem:qss-linear}
Let $p$ be prime. Every stabilizer scheme of local dimension $p$, pure or mixed, can be brought to the normal form of Lemma~\ref{lem:A-normal} without changing its access structure or increasing its information ratio, which is then $\max_PS(P)/S(R)$. In normal form, $h_S$ is the entropy function of a $\GF(p)$-linear classical secret sharing scheme for the same access structure, in which the secret has entropy $2S(R)$ and the share of $P$ has entropy $2S(P)$.
\end{applemma}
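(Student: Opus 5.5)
The plan is to reduce a stabilizer scheme one system at a time, as in Lemma~\ref{lem:A-normal}. The state of $R$, the shares and $E$ is a stabilizer state of $\Z_p$ qudits. For each system $P$, the subgroup $L_P$ of the Lagrangian subgroup supported on $P$ is isotropic. A Clifford unitary on $P$ followed by a Pauli correction splits off $\dim L_P$ qudits in the state $\ket{0}$. We discard them and repeat for every system, including $R$ and $E$.

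Local unitaries and the removal of product factors leave every entropy unchanged. In particular every mutual information $I(R;T)$ is unchanged. So each set keeps its status as authorized ($I(R;T)=2S(R)$) or unauthorized ($I(R;T)=0$), and the access structure is preserved. The number of qudits at $P$ can only decrease, so the information ratio does not increase. In normal form, $P$ holds exactly $S(P)/\log p$ qudits and $R$ holds $S(R)/\log p$ qudits. Hence the ratio of logarithms of dimensions is $\max_PS(P)/S(R)$.

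For the classical scheme, work in normal form and take the Lagrangian subspace $L\le\GF(p)^{2m}$. Each system $X$ holds the $x$- and $z$-coordinates of its qudits. Let $\mathbf u$ be uniform on $L$, and let the variable of system $X$ be $\pr_X\mathbf u$. Its entropy is $\log\abs{\pr_XL}$, which by Eq.~\eqref{eq:A-entropy} equals $S(X)+m_X\log p$. In normal form $m_X\log p=\sum_{i\in X}S(i)$, so this is $h_S(X)$; this is the remark after Lemma~\ref{lem:A-normal}. Take the secret to be $\pr_R\mathbf u$ and the share of $P$ to be $\pr_P\mathbf u$, and simply ignore $E$'s variable, so that mixed schemes are covered. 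This is a $\GF(p)$-linear scheme: all variables are linear functions of a uniform vector. Its entropies are $h_S(R)=2S(R)$ for the secret and $h_S(P)=2S(P)$ for the share of $P$.

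It remains to check the access structure, using Eq.~\eqref{eq:qss-lift}. The classical mutual information between the secret and the shares of $T$ is $h_S(R)+h_S(T)-h_S(RT)=I(R;T)$. If $T$ is authorized, this equals $2S(R)=h_S(R)$, so the shares of $T$ determine the secret. If $T$ is unauthorized, it is $0$, so the shares are independent of the secret. Thus the classical scheme has the same access structure. I expect no real obstacle here. The one point needing care is that the normal-form reduction also applies to $R$ and $E$; this is exactly what makes $h_S$ equal $r_L$ on every set, rather than differing from it by a modular term.
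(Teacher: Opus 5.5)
Your proposal is correct and follows the paper's proof. You bring the scheme to the normal form of Lemma~\ref{lem:A-normal} by local Clifford unitaries and removal of qudits in a fixed state, which preserves every entropy, hence the access structure, and does not increase the ratio. You then take a uniform element of the Lagrangian subspace $L$ as the $\GF(p)$-linear scheme, whose entropy function is $h_S$, and read off the access conditions from Eq.~\eqref{eq:qss-lift}. Your explicit remarks that the reduction also acts on $R$ and $E$, and that $E$'s coordinates can simply be ignored in the classical scheme, are details the paper leaves implicit.
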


\begin{applemma}[Linear schemes give stabilizer schemes]\label{lem:qss-css}
Let $\Gamma$ be an access structure in which a set of players is authorized if and only if its complement among the players is not, and let $C\le\Z_d^m$ be a code whose coordinates are distributed among the players and $R$, which holds $m_R$ of them. If a uniformly random codeword is a classical scheme for $\Gamma$ whose secret, its part at $R$, is uniform on $\Z_d^{m_R}$, then the CSS state $\ket{C}$, with $R$ as the reference system, is a pure stabilizer scheme for $\Gamma$ in which the secret consists of $m_R$ qudits and each player holds one qudit for each of its coordinates.
\end{applemma}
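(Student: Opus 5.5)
The proof is a direct computation with the CSS entropy formula of Lemma~\ref{lem:A-css}, which is applied to the systems $N=\{R\}\cup\{\text{players}\}$ with $r=r_C$, $r(X)=\log\abs{\pr_XC}$. The plan has three steps: translate the classical access conditions into statements about $r$, compute $S(R)$, and then compute $I(R;T)$ for every set $T$ of players, using self-duality of $\Gamma$ to handle the complement.

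\emph{Step 1: the classical conditions in terms of $r$.} For a uniformly random codeword, the part held by a set $X$ is uniform on $\pr_XC$, so its entropy is $r(X)$. The secret is uniform on $\Z_d^{m_R}$, so $\pr_RC=\Z_d^{m_R}$ and $r(R)=m_R\log d$. A set $T$ of players is authorized exactly when $r(RT)=r(T)$, since then the shares of $T$ determine the secret. It is unauthorized exactly when $r(RT)=r(T)+m_R\log d$, since then the secret is independent of the shares. The set $P$ of all players is authorized, so $r(P)=r(N)=\log\abs{C}$.

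\emph{Step 2: the state and $S(R)$.} The state $\ket{C}$ is a pure stabilizer state (Supplementary Note~\ref{app:reduction-css}), and each player holds one qudit per coordinate. By Lemma~\ref{lem:A-css},
\[
S(R)=r(R)+r(P)-r(N)=m_R\log d .
\]
Hence the secret consists of $m_R$ qudits.

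\emph{Step 3: the access conditions.} Fix a set $T$ of players and write $T'=P\setminus T$. By Lemma~\ref{lem:A-css},
\[
S(T)=r(T)+r(RT')-r(N),\qquad S(RT)=r(RT)+r(T')-r(N).
\]
Therefore
\[
I(R;T)=m_R\log d+\bigl[r(T)-r(RT)\bigr]+\bigl[r(RT')-r(T')\bigr].
\]
By the hypothesis on $\Gamma$, exactly one of $T$ and $T'$ is authorized.
\begin{itemize}
\item If $T$ is authorized, the first bracket is $0$. The second bracket is $m_R\log d$ because $T'$ is unauthorized. So $I(R;T)=2m_R\log d=2S(R)$.
\item If $T$ is unauthorized, the first bracket is $-m_R\log d$. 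The second bracket is $0$ because $T'$ is authorized. So $I(R;T)=0$.
\end{itemize}
Thus $\ket{C}$, with $R$ as the reference, is a pure stabilizer scheme for $\Gamma$.

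The only point needing care is the use of self-duality of $\Gamma$ to control the complement term $r(RT')-r(T')$; without it, $I(R;T)$ would not be pinned to $0$ or $2S(R)$. The remaining steps are routine bookkeeping.
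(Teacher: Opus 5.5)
Your proof is correct and is essentially the paper's argument. You apply the CSS entropy formula of Lemma~\ref{lem:A-css}, use that the set of all players is authorized to get $S(R)=m_R\log d$, and split $I(R;T)$ into a term for $T$ and a term for its complement among the players. Self-duality of $\Gamma$ then forces $I(R;T)$ to be $0$ or $2S(R)$.
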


Over $\GF(p)$ we may drop every coordinate that is a linear function of the other coordinates of the same system, so for such an access structure and a prime $p$ the optimal information ratio of stabilizer schemes of local dimension $p$ is that of $\GF(p)$-linear schemes.

\emph{Proof of Lemma~\ref{lem:qss-linear}.} The normal form applies local Clifford unitaries and removes qudits that are in a fixed pure state, so it changes no entropy and the access structure stays the same, and the information ratio does not increase; afterwards every system $P$ holds $S(P)/\log p$ qudits, so the information ratio is $\max_PS(P)/S(R)$. In normal form, $h_S(T)$ is $\log p$ times the dimension of the projection of the Lagrangian subgroup $L$ to the coordinates of the qudits held by a set $T$ of systems (Supplementary Note~\ref{app:reduction-normal}). So a uniformly random element of $L$, with its coordinates distributed among the systems, is a $\GF(p)$-linear scheme whose entropy function is $h_S$, and by Eq.~\eqref{eq:qss-lift} the quantum and the classical conditions for authorized and for unauthorized sets coincide. Finally, $h_S(P)=2S(P)$ and $h_S(R)=2S(R)$.\qed

\emph{Proof of Lemma~\ref{lem:qss-css}.} Let $r(X)=\log\abs{\pr_XC}$, let $\mathcal Q$ be the set of players and $N=\mathcal Q\cup\{R\}$. By fact~(b) of the Methods (Supplementary Lemma~\ref{lem:A-css}), $S(X)=r(X)+r(N\setminus X)-r(N)$. The set $\mathcal Q$ is authorized, so $r(N)=r(\mathcal Q)$ and $S(R)=r(R)=m_R\log d$. For a set $T$ of players and $U=\mathcal Q\setminus T$, this gives $I(R;T)=\bigl[r(R)+r(T)-r(RT)\bigr]+\bigl[r(RU)-r(U)\bigr]$. If $T$ is authorized, then $U$ is not, and both brackets equal $r(R)$; if $T$ is unauthorized, then $U$ is authorized, and both vanish.\qed

In the rest of this note, the players are the seven nonzero vertices of the cube, labeled as in Table~\ref{tab:eight} ($A=100$, $B=010$, $C=001$, $W=110$, $X=101$, $Y=011$, $Z=111$); the origin is the reference system $R$; and, for mixed-state schemes, one more system $E$ purifies the joint state of $R$ and the shares.

\subsection{The access structures}
\label{app:qss-access}

By Eq.~\eqref{eq:qss-lift}, the access structure of a scheme can be read off from the lift. For the cube states of Definition~\ref{def:cube} with $t=2$, $h_S=2r$, where $r$ is the rank function of the code of affine functions on the cube (Lemma~\ref{lem:cube-selfdual}), so a set $T$ of players is authorized if and only if the origin column of the generator matrix, $(1,0,0,0)^{\mathsf T}$, lies in the span of the columns of $T$, and unauthorized otherwise. Every set of players is either authorized or unauthorized, and $T$ is authorized if and only if its complement among the players is unauthorized; $64$ sets of players, counting the empty set, are unauthorized and $64$ are authorized.
\begin{itemize}
\item Over $\GF(2)$, the minimal authorized sets of $\Gamma_{\mathrm F}$ are the seven lines $ABW$, $ACX$, $AYZ$, $BCY$, $BXZ$, $CWZ$ and $WXY$ of the Fano plane. Indeed, $(1,0,0,0)^{\mathsf T}$ is a sum of columns $(1,x)^{\mathsf T}$ with $x\in Q$ if and only if $\abs{Q}$ is odd and the points of $Q$ sum to $0$; no single point and no five points have sum $0$, the seven points do, and three points have sum $0$ exactly when they form a line. The matroid of the code is the binary affine cube $\mathrm{AG}(3,2)$, and $\Gamma_{\mathrm F}$ is its port at $R$: it consists of the sets $T$ of players such that $T\cup\{R\}$ contains a circuit through $R$.
\item Over $\GF(p)$ with $p$ odd, the minimal authorized sets of $\Gamma_{\mathrm{NF}}$ are the six lines other than $WXY$ and the sets $ABCZ$, $AWXY$, $BWXY$, $CWXY$ and $WXYZ$. The access structures differ only in $WXY$, which is authorized for $\Gamma_{\mathrm F}$ only, and $ABCZ$, which is authorized for $\Gamma_{\mathrm{NF}}$ only. All minors of the integer generator matrix lie in $\{0,\pm1,\pm2\}$ (Supplementary Note~\ref{app:cubes-z4}), so its rank over $\GF(p)$ on every set of columns is the same for every odd $p$, and $\Gamma_{\mathrm{NF}}$ is the same for every odd $p$. The matroid is the real affine cube $R_8$, obtained from $\mathrm{AG}(3,2)$ by relaxing the two complementary circuit-hyperplanes $\{R,W,X,Y\}$ and $\{A,B,C,Z\}$, which are affine planes over $\GF(2)$ but not over $\Q$ (relaxing a circuit-hyperplane turns it into a basis); $\Gamma_{\mathrm{NF}}$ is its port at $R$.
\end{itemize}

The same construction gives such a pair of access structures for every pair of primes $p<q$: with $t=p$, the generalized cube codes of Definition~\ref{def:cube} over $\GF(p)$ and over $\GF(q)$ are ideal schemes for their ports at the purifier $O$, and by Proposition~\ref{prop:onecut} these ports differ only in the status of $\{B_1,\dots,B_M\}$ and of its complement, as $\Gamma_{\mathrm F}$ and $\Gamma_{\mathrm{NF}}$ differ in that of $WXY$ and $ABCZ$. Contracting such a port at the player $C$ gives, after exchanging the labels $A_i$ and $B_i$, the port at $c$ of the configuration~\eqref{eq:JI} itself, as the columns $(1,v)$ of the cube code show; for these ports Pe\~na-Macias proved lower bounds on the information ratio of linear schemes~\cite[Cor.~2.9]{si:PenaMacias2023}, and since contracting does not increase the optimal information ratio, they hold for the ports at $O$ as well: over finite fields whose characteristic is not $p$, linear schemes for the port of the code over $\GF(p)$ have information ratio at least $(p+2)/(p+1)$, and over finite fields of characteristic $p$, linear schemes for the port of the code over $\GF(q)$ have information ratio at least $(p+3)/(p+2)$. For $p=2$ these are $4/3$ and $5/4$; for the latter, the bound of Jafari and Khazaei behind Proposition~\ref{prop:qss}~(ii) gives $4/3$~\cite[Prop.~4.5]{si:JafariKhazaei2021}.

\subsection{Contractions, a rank inequality for groups of odd order, and the bounds \texorpdfstring{$4/3$}{4/3}}
\label{app:qss-contract}

For a player $q$, the contraction of an access structure $\Gamma$ is $\Gamma/q=\{T:T\cup\{q\}\in\Gamma\}$, on the other players. Let $h$ be the entropy function of a linear scheme for $\Gamma$ in which $\{q\}$ is unauthorized. Conditioning on the share of $q$, that is, restricting to the codewords whose coordinates held by $q$ vanish, gives a linear scheme over the same field with entropy function $T\mapsto h(Tq)-h(q)$. It is a scheme for $\Gamma/q$: its conditions for a set $T$ are those of the original scheme for $T\cup\{q\}$, because $h(Rq)=h(R)+h(q)$. Its secret has the same entropy $h(R)$, and the share of $P$ has entropy $h(Pq)-h(q)\le h(P)$.

Jafari and Khazaei consider the access structures $\mathcal F=14+25+36+123+156+246+345$ on the players $1,\dots,6$ (minimal qualified sets), the port of the Fano matroid, and $\mathcal N=\mathcal F+456$, a port of the non-Fano matroid~\cite{si:JafariKhazaei2021}. In every linear scheme for $\mathcal F$ over a field of odd characteristic, the shares of the players of every minimal authorized (qualified) set of size $3$ have entropies that sum to at least $4$ times that of the secret~\cite[Prop.~4.4~(II)]{si:JafariKhazaei2021}; in every linear scheme for $\mathcal N$ over a field of characteristic $2$, the same holds for the set $456$~\cite[Prop.~4.5~(III)]{si:JafariKhazaei2021}. A computation over all bijections shows:
\begin{itemize}
\item for every player $q$, $\Gamma_{\mathrm F}/q$ is isomorphic to $\mathcal F$, and its minimal authorized sets of size $3$ are the four lines of the Fano plane that miss $q$;
\item for $q\in\{A,B,C,Z\}$, $\Gamma_{\mathrm{NF}}/q$ is isomorphic to $\mathcal N$, with the set $456$ corresponding to the other three of $A$, $B$, $C$ and $Z$; for $q\in\{W,X,Y\}$ it is isomorphic to neither.
\end{itemize}

\emph{Proof of Lemma~\ref{lem:fano-odd}.} Write $T_X=\sum_{i\in X}T_i$, so that $I(0;X)=\log\abs{T_0\cap T_X}$. We prove the stronger inequality
\begin{equation}\label{eq:K-fano-strong}
\begin{aligned}
  I(0;123)\le{}&h(1|04)+h(2|05)+h(3|06)\\
  &+I(0;234)+I(0;135)\\
  &+I(0;126)+I(0;456);
\end{aligned}
\end{equation}
since $h(i|0,i{+}3)=h(i)-I(0;i,i{+}3)+I(0;i{+}3)-I(i;i{+}3)$, Eq.~\eqref{eq:fano-odd} is Eq.~\eqref{eq:K-fano-strong} plus $I(1;4)+I(2;5)+I(3;6)\ge0$.

Let $D$ be the group of triples $a=(a_1,a_2,a_3)\in T_1\times T_2\times T_3$ with $\sigma(a)=a_1+a_2+a_3\in T_0$. Then $\sigma(D)=T_0\cap T_{123}$, so $\abs{D}=\abs{T_0\cap T_{123}}\,\abs{\ker\sigma}$. For $i=1,2,3$ put $Y_i=T_i\cap(T_0+T_{i+3})$, and let $D_0=\{a\in D:a_i\in Y_i\text{ for }i=1,2,3\}$, the kernel of the homomorphism $D\to\bigoplus_iT_i/Y_i$, $a\mapsto(a_i+Y_i)_i$; hence $\abs{D}/\abs{D_0}\le\prod_i[T_i:Y_i]$, and by the second isomorphism theorem $T_i/Y_i\cong(T_0+T_i+T_{i+3})/(T_0+T_{i+3})$, so $\log[T_i:Y_i]=h(i|0,i{+}3)$.

Let $a\in D_0$ and $s=\sigma(a)$. Write $a_i=s_i+b_{i+3}$ with $s_i\in T_0$ and $b_{i+3}\in T_{i+3}$. Then $k_1=s-s_1=a_2+a_3+b_4$ lies in $T_0\cap T_{234}$, and likewise $k_2=s-s_2=a_1+a_3+b_5\in T_0\cap T_{135}$ and $k_3=s-s_3=a_1+a_2+b_6\in T_0\cap T_{126}$. Adding, $b_4+b_5+b_6=k_1+k_2+k_3-2s$, which lies in $T_0$ and in $T_{456}$. So $2s$ lies in the subgroup $K=(T_0\cap T_{234})+(T_0\cap T_{135})+(T_0\cap T_{126})+(T_0\cap T_{456})$ of $T_0$. Multiplication by $2$ is an automorphism of $T_0$, whose order is odd, so the elements $s\in T_0$ with $2s\in K$ form a subgroup of order $\abs{K}$. Hence $\abs{\sigma(D_0)}\le\abs{K}$, and $\abs{D_0}\le\abs{\ker\sigma}\,\abs{K}$. Combining,
\[
  \abs{T_0\cap T_{123}}=\frac{\abs{D}}{\abs{\ker\sigma}}\le\frac{\abs{K}\,\abs{D}}{\abs{D_0}}\le\abs{K}\prod_{i=1}^3[T_i:Y_i],
\]
and $\abs{K}$ is at most the product of the orders of its four summands. Taking logarithms gives Eq.~\eqref{eq:K-fano-strong}. For subspaces over a field of characteristic other than $2$ the same argument works with dimensions, since multiplication by $2$ is invertible. For the Fano configuration over $\GF(2)$ with lines $014$, $025$, $036$, $234$, $135$, $126$ and $456$, every point has rank $1$; the point $0$ lies in the span of $123$, which is the whole plane, and on the lines $014$, $025$ and $036$, but it is none of the points $4$, $5$, $6$ and lies on none of the lines $234$, $135$, $126$ and $456$. So the left side of Eq.~\eqref{eq:fano-odd} is $3$ and the right side is $1+3-0-0=4$; in Eq.~\eqref{eq:K-fano-strong} the left side is $1$ and the right side $0$.\qed

\emph{Proof of the bound in Proposition~\ref{prop:qss}~(i).} Let a stabilizer scheme for $\Gamma_{\mathrm F}$ of odd local dimension be given, pure or mixed. For Galois qudits of dimension $p^k$, read every Galois qudit as $k$ qudits of dimension $p$ (Supplementary Proposition~\ref{prop:galois}); the state, its entropies and the dimensions of the secret and the shares stay the same. So we may assume $\Z_d$ qudits with $d$ odd. By Lemma~\ref{lem:A-lift}, the lift $h=h_S$ of the joint state is the rank function of subgroups $U_P$, one for each system $P$, of a finite abelian group of exponent dividing $d$, hence of odd order; we write $U_X$ for the sum of the $U_P$ with $P\in X$ and use only the subgroups of $R$ and of the players. Let $\ell=\{x_1,x_2,x_3\}$ be a line, $q$ a player not on $\ell$, and $x_{i+3}$ the third point of the line through $q$ and $x_i$; then $x_4$, $x_5$, $x_6$ are the three points off $\ell\cup\{q\}$. The contraction of $h$ by $q$ is the rank function of the images $T_0$ of $U_R$ and $T_i$ of $U_{x_i}$ in the quotient by $U_q$, a group of odd order, and Lemma~\ref{lem:fano-odd} applies to it. For a set $X\subseteq\{1,\dots,6\}$ let $X'$ be the corresponding set of players. The term $I(0;X)$ of the contraction is $h(Rq)+h(X'q)-h(RX'q)-h(q)=I(R;X'q)-I(R;q)$, by Eq.~\eqref{eq:qss-lift}. Here $q$ is unauthorized, the sets $\ell$ and $\{x_i,x_{i+3},q\}$ are lines, the sets $\{x_{i+3},q\}$ contain no line, and $\{x_2,x_3,x_4,q\}$, $\{x_1,x_3,x_5,q\}$, $\{x_1,x_2,x_6,q\}$ and $\{x_4,x_5,x_6,q\}$ are the complements of the lines $\{x_1,x_5,x_6\}$, $\{x_2,x_4,x_6\}$, $\{x_3,x_4,x_5\}$ and $\ell$, hence unauthorized (we checked these incidences for all $28$ pairs $(\ell,q)$, \path{check_rev_qss.py}, part (m)). So $I(0;X)=2S(R)$ for $X=123$, $14$, $25$, $36$ and $I(0;X)=0$ for the other seven sets in Eq.~\eqref{eq:fano-odd}, whose right side is therefore $8S(R)$. Its left side is $\sum_i\bigl(h(x_iq)-h(q)\bigr)\le\sum_ih(x_i)=2\sum_iS(x_i)$. Hence $\sum_{P\in\ell}S(P)\ge4\,S(R)$. Some player $P$ of $\ell$ has $S(P)\ge\frac43S(R)$; since $S(R)$ is the logarithm of the dimension of the secret and $S(P)$ is at most the logarithm of the dimension of the share, the information ratio is at least $4/3$. Every player lies on three of the seven lines, so summing over the lines gives $3\sum_PS(P)\ge28\,S(R)$.\qed

\emph{Decomposition into prime powers.} For every $d$, the Lagrangian subgroup $L\le\Z_d^{2k}$ of a stabilizer state of $k$ qudits is the direct sum of its Sylow components $L_p\le\Z_{p^a}^{2k}$, one for each prime power $p^a$ that divides $d$ exactly, and $\abs{\pr_TL}=\prod_p\abs{\pr_TL_p}$ for every set $T$ of systems; since $L_p$ is isotropic and $\prod_p\abs{L_p}=d^k$, each $L_p$ is Lagrangian. By Eq.~\eqref{eq:A-entropy}, $S=\sum_pS_p$, where $S_p$ is the entropy vector of the stabilizer state of $\Z_{p^a}$ qudits with Lagrangian subgroup $L_p$, in which every system holds as many qudits as in the given state; by the Chinese remainder theorem, which identifies a $\Z_d$ qudit with one $\Z_{p^a}$ qudit for each $p$, the given state is the tensor product of these states. Conversely, the symplectic form of $\Z_d^{2k}$ is the sum of those of the Sylow components (identified with $\Z_{p^a}^{2k}$ up to a unit multiple of the form), so for Lagrangian subgroups $L_p\le\Z_{p^a}^{2k}$, one for each $p$, the sum $\bigoplus_pL_p$ is Lagrangian, and the tensor product of stabilizer states of $\Z_{p^a}$ qudits with the same numbers of qudits per system is a stabilizer state of $\Z_d$ qudits. The same holds, with the same proof, for any factorization $d=d_1d_2$ into coprime factors. For every set $T$ of players, $0\le I_p(R;T)\le2S_p(R)$ and $I(R;T)=\sum_pI_p(R;T)$; so each factor of a scheme meets the access conditions by itself (trivially if $S_p(R)=0$), and a tensor product of schemes for the same access structure is a scheme for it. If the secret consists of $m$ qudits, that is, $S(R)=m\log d=\sum_pma\log p$, then $S_p(R)=ma\log p$ in every factor, since each $S_p(R)$ is an integer multiple of $\log p$, while a player that holds $n_P$ qudits has $S_p(P)\le n_Pa\log p$. Without this assumption the factors can carry different shares of the secret, and for even $d$ the bounds in numbers of qudits below fail (Supplementary Note~\ref{app:qss-upper}).

\emph{Local dimensions that are not powers of $2$.} Let $p$ be an odd prime that divides $d$, exactly as $p^a$, and consider a stabilizer scheme for $\Gamma_{\mathrm F}$ of $\Z_d$ qudits, pure or mixed, in which the secret consists of $m$ qudits and player $P$ holds $n_P$ qudits. Its factor of $\Z_{p^a}$ qudits is a stabilizer scheme for $\Gamma_{\mathrm F}$ of odd local dimension, so the bound just proved gives $\sum_{P\in\ell}n_Pa\log p\ge\sum_{P\in\ell}S_p(P)\ge4\,S_p(R)=4ma\log p$, that is, $\sum_{P\in\ell}n_P\ge4m$, for every line $\ell$. For odd local dimensions this also follows from the bound for entropies, since $S(P)$ is at most $n_P$ times, and $S(R)$ equals $m$ times, the logarithm of the dimension of a qudit. Hence some player of every line holds at least $\frac43m$ qudits, and $3\sum_Pn_P\ge28m$.

\emph{Approximate schemes.} For a stabilizer state and a set $T$ of players, $h(R)=2S(R)$ and, by Eq.~\eqref{eq:qss-lift}, $I(R;T)=h(R)+h(T)-h(RT)=\log\abs{U_R}+\log\abs{U_T}-\log\abs{U_R+U_T}=\log\abs{U_R\cap U_T}$; so $2S(R)-I(R;T)=\log[U_R:U_R\cap U_T]$. Both are logarithms of orders of finite groups of exponent dividing the local dimension $d$ (read Galois qudits of dimension $p^k$ as qudits of dimension $p$), hence $0$ or at least $\log_2p\ge1$ bits, where $p$ is the smallest prime factor of $d$. Now let the state be a stabilizer state of odd local dimension in which, for some $\varepsilon\ge0$, every set $T\in\Gamma_{\mathrm F}$ has $I(R;T)\ge2S(R)-\varepsilon$ and every other nonempty set of players has $I(R;T)\le\varepsilon$, in bits. In the proof above, each term of the right side of Eq.~\eqref{eq:fano-odd} for the contraction is $I(R;X'q)-I(R;q)$, with a positive sign for the four sets $X=123$, $14$, $25$, $36$ and a negative sign for the seven others; so the right side equals the sum of $I(R;X'q)$ over the first four, minus the sum over the seven others, plus $3\,I(R;q)$. The first four terms are at least $2S(R)-\varepsilon$, the seven others at most $\varepsilon$, and $I(R;q)\ge0$; so $2\sum_{P\in\ell}S(P)\ge8S(R)-4\varepsilon-7\varepsilon$, that is, $\sum_{P\in\ell}S(P)\ge4\,S(R)-\frac{11}{2}\varepsilon$. Since every nonzero deficit or leakage is at least $\log_23$ bits for odd $d$, smaller errors force an exact scheme, and the bound matters only for $\varepsilon\ge\log_23$. For $\Z_d$ qudits with $d$ not a power of $2$, let $p^a$ be the power of an odd prime that divides $d$ exactly: since $I(R;T)=\sum_{p'}I_{p'}(R;T)$ and $2S(R)-I(R;T)=\sum_{p'}\bigl(2S_{p'}(R)-I_{p'}(R;T)\bigr)$ with nonnegative terms, the factor of $\Z_{p^a}$ qudits meets the same conditions with the same $\varepsilon$, and with a secret of $m$ qudits and shares of $n_P$ qudits it gives $\sum_{P\in\ell}n_P\ge4m-\frac{11\varepsilon}{2a\log_2p}$.

\emph{Abelian schemes.} An abelian scheme in the sense of Ref.~\cite{si:JafariKhazaei2021} is group-characterizable with an abelian main group: it is given by subgroups $G_0,\dots,G_n$ of a finite abelian group $G$, and the entropy of the variables indexed by $A$ is $\log[G:\bigcap_{i\in A}G_i]$. With the annihilators $U_i\le\widehat G$ of the $G_i$ in the dual group, $\bigl(\bigcap_{i\in A}G_i\bigr)^\perp=\sum_{i\in A}U_i$ has order $[G:\bigcap_{i\in A}G_i]$, so the entropy function is the rank function of the arrangement $(U_i)$ in $\widehat G\cong G$ (cf.\ Ref.~\cite[Prop.~2.3]{si:JafariKhazaei2021}). For the port $\mathcal F$ of the Fano matroid and a group of odd order, Lemma~\ref{lem:fano-odd} with the labels of $\mathcal F$, where $123$, $14$, $25$ and $36$ are authorized and $4$, $5$, $6$, $234$, $135$, $126$ and $456$ are not, gives Proposition~4.4~(II) of Ref.~\cite{si:JafariKhazaei2021} for such schemes; by the symmetry of $\mathcal F$ the same holds for its other three triples, and so the information ratio is at least $4/3$.

\emph{The twelve-player access structure $\mathcal F+\mathcal N$.} Jafari and Khazaei place $\mathcal F$ and $\mathcal N$ on the disjoint sets of players $\{1,\dots,6\}$ and $\{7,\dots,12\}$ and call their union $\mathcal F+\mathcal N$: a set of players is authorized if its part in the first set is authorized for $\mathcal F$ or its part in the second set is authorized for $\mathcal N$~\cite[Sec.~6]{si:JafariKhazaei2021}. Consider an abelian scheme for $\mathcal F+\mathcal N$ whose group has an elementary abelian Sylow $2$-subgroup; mixed-linear schemes, direct sums of linear schemes over fields~\cite[Sec.~6.2]{si:JafariKhazaei2021}, are of this kind. Splitting the group into its Sylow $2$-subgroup and its subgroup of odd order splits every entropy into two terms, each the entropy of an abelian scheme for $\mathcal F+\mathcal N$ (the access conditions are sums of nonnegative terms), with secret entropies $\sigma_2+\sigma_o=H(S)$, where $\sigma_2$ is the logarithm of a power of $2$ and $\sigma_o$ that of an odd number. Restricted to the players of $\mathcal F$, the odd part is an abelian scheme of odd order for $\mathcal F$, so by Lemma~\ref{lem:fano-odd} the shares of each minimal triple of $\mathcal F$ have entropies of at least $4\sigma_o$ in total in it, and at least $3\sigma_2$ in the $2$-part, since in every scheme the share of a player that lies in a minimal authorized set has at least the entropy of the secret. Restricted to the players of $\mathcal N$, the $2$-part is a $\GF(2)$-linear scheme for $\mathcal N$, so by Prop.~4.5~(III) of Ref.~\cite{si:JafariKhazaei2021} the triple that corresponds to $456$ has share entropies of at least $4\sigma_2$ in total in it, and at least $3\sigma_o$ in the odd part. The two bounds sum to $7H(S)$, so one of these two triples has shares with entropies of at least $\frac72H(S)$ in total, and the information ratio is at least $7/6$; equality would need $4\sigma_o+3\sigma_2=4\sigma_2+3\sigma_o$, that is, $\sigma_2=\sigma_o$, which is impossible since $H(S)>0$. So the information ratio of every such scheme exceeds $7/6$. Jafari and Khazaei gave mixed-linear schemes with information ratio $7/6+\varepsilon$ for every $\varepsilon>0$, found by computer that $7/6$ is the optimum for mixed-linear schemes, and left open whether abelian schemes do better~\cite[Sec.~6.2]{si:JafariKhazaei2021}; for abelian schemes whose Sylow $2$-subgroup is elementary abelian they do not.

\emph{Proof of the bound in Proposition~\ref{prop:qss}~(ii).} Let a stabilizer scheme for $\Gamma_{\mathrm{NF}}$ of qubits be given; Galois qudits of characteristic $2$ are read as qubits. By Lemma~\ref{lem:qss-linear} we may assume that it is in normal form; its lift $h=h_S$ is then the entropy function of a $\GF(2)$-linear scheme for $\Gamma_{\mathrm{NF}}$, with $h(P)=2S(P)$ and $h(R)=2S(R)$. For $q\in\{A,B,C,Z\}$ the contracted scheme is a linear scheme for $\Gamma_{\mathrm{NF}}/q\cong\mathcal N$, in which the other three of $A$, $B$, $C$, $Z$ form the set $456$, so Proposition~4.5~(III) of Ref.~\cite{si:JafariKhazaei2021} gives $\sum_P\bigl(h(Pq)-h(q)\bigr)\ge4\,h(R)$ over these three players, hence $S(P)+S(P')+S(P'')\ge4\,S(R)$, and the information ratio is at least $4/3$. For $\Z_d$ qudits with $d=2r$, $r$ odd, the factor of qubits of a scheme for $\Gamma_{\mathrm{NF}}$ in which the secret consists of $m$ qudits and player $P$ holds $n_P$ qudits is a qubit scheme for $\Gamma_{\mathrm{NF}}$ with $S_2(R)=m$ bits and $S_2(P)\le n_P$ bits, so $n_P+n_{P'}+n_{P''}\ge4m$ for any three $P$, $P'$, $P''$ of $A$, $B$, $C$, $Z$.\qed

\subsection{Schemes with information ratios \texorpdfstring{$7/5$}{7/5} and \texorpdfstring{$2$}{2}}
\label{app:qss-upper}

\emph{The ranges of the optimal ratios.} By Proposition~\ref{prop:qss}, for $\Gamma_{\mathrm F}$ the optimal information ratio of stabilizer schemes lies between $4/3$ and $7/5$ in every odd local dimension and, for secrets of whole qudits, for $d=2r$ with $r>1$ odd; for $\Gamma_{\mathrm{NF}}$ it lies between $4/3$ and $7/5$ for qubits, for Galois qudits of characteristic $2$ and, for secrets of whole qudits, for $d=2r$ with $r$ odd. For $\Gamma_{\mathrm F}$ with $4\mid d$ and $d$ not a power of $2$ we know only the lower bound $4/3$, for secrets of whole qudits; for $\Gamma_{\mathrm F}$ with $d=2^a$, $a\ge2$, and $\Gamma_{\mathrm{NF}}$ with $4\mid d$ only ideal schemes are excluded. For even $d$ the restriction to secrets of whole qudits matters: otherwise the prime-power factors of a scheme can be balanced, as in the mixed-linear schemes of Ref.~\cite[Sec.~6.2]{si:JafariKhazaei2021}; for $d=6$, seven qubit cube states times the scheme of $\Z_3$ qudits of (iii), with two qutrits in a fixed state added to the reference, give a scheme for $\Gamma_{\mathrm F}$ with information ratio $7\log6/(7\log2+5\log3)\approx1.21$. With a secret of five qudits, the schemes of (iii) for $\Gamma_{\mathrm F}$ are optimal at their largest share, since by (i) some share has at least $20/3$, hence $7$, qudits; by (ii), so are those for $\Gamma_{\mathrm{NF}}$, with a secret of ten qudits.

\emph{The construction.} The schemes of Proposition~\ref{prop:qss}~(iii) come from the decomposition construction of van Dijk, Jackson and Martin, applied to $(\lambda,\omega)$-decompositions in their sense (Ref.~\cite{si:vDJM1998}, Sec.~5) with $\lambda=6$ and $\omega=1$. Let $\mathcal Q$ be the set of players. For a set $\mathcal S$ of lines of the Fano plane, let $\Gamma^{\mathcal S}$ be the access structure in which a set of players is authorized if and only if it contains a line not in $\mathcal S$, or a line in $\mathcal S$ and one more player, or the complement of a line in $\mathcal S$: each line in $\mathcal S$ and its complement exchange their status. So $\Gamma_{\mathrm F}=\Gamma^{\emptyset}$ and $\Gamma_{\mathrm{NF}}=\Gamma^{\{WXY\}}$. Every $\Gamma^{\mathcal S}$ is self-dual: a set of players is authorized if and only if its complement in $\mathcal Q$ is not. Its minimal authorized sets are the lines not in $\mathcal S$, the complements of the lines in $\mathcal S$, and the sets $\ell\cup\{x\}$ with $\ell\in\mathcal S$ and $x\notin\ell$, since the $35$ sets of four players are the seven complements of lines, which contain no line, and the $28$ sets $\ell\cup\{x\}$, and every set of five players contains two lines; its maximal unauthorized sets are their complements.

Let $\Rg$ be $\Z_d$ with $d$ odd, a finite field of odd characteristic, or $\GF(4)$, and suppose that for every line $\ell$ the access structure $\Gamma^{\mathcal S\triangle\{\ell\}}$ has an ideal $\Rg$-linear scheme. In the scheme of $\ell$, every line other than $\ell$, and the complement of every such line, has the status that it has for $\Gamma^{\mathcal S}$, and the other minimal authorized sets of $\Gamma^{\mathcal S}$, which consist of a line in $\mathcal S$ and one more player, are authorized in all seven schemes, while their complements, three players not on a line, are authorized in none. So every minimal authorized set of $\Gamma^{\mathcal S}$ is authorized in at least six of the seven schemes ($\lambda=6$), and every maximal unauthorized set in at most one ($\omega=1$). Number the lines $\ell_1,\dots,\ell_7$, let $e_1,\dots,e_7$ be the standard basis of $\Rg^7$, and use the seven schemes, with independent randomness, to share the seven coordinates of
\[
  s=\sum_{i=1}^{5}u_i\,(e_{i+1}-e_7)+\xi\,b,
\]
where $u\in\Rg^5$ is the secret, $\xi\in\Rg$ is uniformly random, and $b=(1,1,1,1,1,-1,-4)$ in odd characteristic and $b=(1,1,1,1,1,\alpha,\alpha^2)$ for $\GF(4)$, with $\alpha^2=\alpha+1$. The coordinates of $s$ sum to zero, and $b$, whose first coordinate is $1$, completes $e_2-e_7,\dots,e_6-e_7$ to a basis of the vectors with this property; so any six coordinates determine $s$, and hence $u$. Every coordinate of $b$ is a unit, so each single coordinate of $s$ is uniform and independent of $u$. Hence every authorized set of $\Gamma^{\mathcal S}$ recovers $u$, and every unauthorized set learns at most one coordinate of $s$ and nothing about $u$. This checks the access conditions directly. With this ramp scheme for their domain structure, the construction is an instance of Theorem~4 and Corollary~5 of Ref.~\cite{si:vDJM1998}, which need no algebraic structure; their Corollary~6 assumes sub-secrets with $q$ values for a prime power $q$ larger than the number of sub-schemes, an assumption that fails here for $\Z_3$, $\Z_5$, $\Z_7$ and $\GF(4)$, for instance. Each player holds seven coordinates and the secret has five, so the information ratio is $7/(\lambda-\omega)=7/5$. The combined code has $5+7\cdot7$ coordinates over $\Rg$; if $\Rg$ is a field of order $p^k$, restriction of scalars to $\GF(p)$ multiplies every number of coordinates by $k$, and Lemma~\ref{lem:qss-css} then gives a stabilizer scheme with the same ratio. The seven sub-secrets $s_\ell$ allow no larger secret: if a function $u'$ of $s$ is determined by any six coordinates of $s$ and is independent of each single coordinate, then for two lines $\ell\ne\ell'$ we have $H(u')=H(u'\mid s_{\ell'})\le H\bigl((s_{\ell''})_{\ell''\ne\ell,\ell'}\bigr)\le5\log\abs{\Rg}$. It remains to give the seven ideal sub-schemes for $\mathcal S=\emptyset$ and for $\mathcal S=\{WXY\}$.

\emph{The access structure $\Gamma_{\mathrm F}$ for odd $d$.} Over $\Z_d$ with $d$ odd, all minors of the integer generator matrix of the cube code lie in $\{0,\pm1,\pm2\}$ (Supplementary Note~\ref{app:cubes-z4}), so the nonzero Smith invariants of every set of columns are $1$ or $2$, and the orders of the projections are $d$ raised to the ranks over $\Q$; hence the scheme of the cube code over $\Z_d$ is an ideal scheme for $\Gamma_{\mathrm{NF}}$, as over $\GF(p)$. For each line $\ell$ of the Fano plane let $\sigma_\ell$ be a collineation with $\sigma_\ell(WXY)=\ell$. The scheme of $\ell$ draws a uniformly random affine function $f_\ell(x)=s_\ell+a_\ell\cdot x$ on $\{0,1\}^3$, with $a_\ell\in\Z_d^3$, and gives every player $P$ the value $f_\ell\bigl(\sigma_\ell^{-1}(P)\bigr)$, where points are read as vectors in $\{0,1\}^3$. Its access structure is the image of $\Gamma_{\mathrm{NF}}$ under $\sigma_\ell$, that is, $\Gamma^{\{\ell\}}$: $\Gamma_{\mathrm F}$ with $\ell$ removed and $\mathcal Q\setminus\ell$ added. These are the seven sub-schemes for $\mathcal S=\emptyset$, and their sub-secrets $s_\ell$ are the coordinates of $s$ above, with $\Rg=\Z_d$. We checked the access conditions of the combined code on all $128$ sets of players, with the orders of the projections computed exactly, and the entropies of its CSS state, for $d=3,5,7,9,11,13,15,21,25,27,45$ (\path{check_rev_qss.py}, part (h)); without $\xi$, the complement of a line learns a coordinate of $u$. For Galois qudits of dimension $p^k$ with $p$ odd, the same construction over $\GF(p^k)$ gives a $\GF(p^k)$-linear scheme, which by restriction of scalars is $\GF(p)$-linear (Supplementary Proposition~\ref{prop:galois}), and Lemma~\ref{lem:qss-css} applies.

\emph{The access structure $\Gamma_{\mathrm{NF}}$ for qubits.} Let $\alpha\in\GF(4)$ with $\alpha^2=\alpha+1$. The binary cube code has the generator matrix below with $\alpha$ replaced by $0$, whose columns for $R$, $A$, $B$ and $C$ are the unit vectors; the matrix itself, with two entries changed to $\alpha$,
\[
\setlength{\arraycolsep}{4pt}
\begin{array}{cccc|cccc}
R&A&B&C&W&X&Y&Z\\\hline
1&0&0&0&1&1&1&\alpha\\
0&1&0&0&1&1&0&1\\
0&0&1&0&1&\alpha&1&1\\
0&0&0&1&0&1&1&1
\end{array}
\]
generates a code over $\GF(4)$ whose matroid is $\mathrm{AG}(3,2)$ with the four circuit-hyperplanes $\{R,W,X,Y\}$, $\{A,B,C,Z\}$, $\{R,A,C,X\}$ and $\{B,W,Y,Z\}$ relaxed. Its port at $R$ is $\Gamma^{\{WXY,ACX\}}$, so the code is an ideal scheme for this access structure. The $24$ collineations of the Fano plane that fix the line $WXY$ act transitively on the other six lines; relabeling the players by one that maps $ACX$ to $\ell$ gives an ideal scheme for $\Gamma^{\{WXY,\ell\}}$. With the binary cube code, read over $\GF(4)$, for $\Gamma^{\emptyset}=\Gamma_{\mathrm F}$, these are the seven sub-schemes for $\mathcal S=\{WXY\}$, and $b=(1,1,1,1,1,\alpha,\alpha^2)$ has nonzero coordinates with sum $1+\alpha+\alpha^2=0$. The combined code over $\GF(4)$ has $5+7\cdot7$ coordinates. Writing every element of $\GF(4)$ as $c_0+c_1\alpha$ with $c_0,c_1\in\GF(2)$ turns it into a binary code with $10$ coordinates for the secret and $14$ for each player, and by Lemma~\ref{lem:qss-css} its CSS state is a qubit scheme with information ratio $7/5$; over $\GF(2^k)$ the binary code gives one for Galois qudits of dimension $2^k$. We checked the matroid (all $256$ ranks), the seven sub-schemes, and the access conditions on all $128$ sets of players over $\GF(4)$ and for the binary code, with the entropies of the CSS states (\path{check_rev_qss.py}, part (i)); with $b=(1,1,1,1,1,1,0)$, or with the cube code as all seven sub-schemes, the code is not a scheme for $\Gamma_{\mathrm{NF}}$. These sub-schemes were found by a linear program over a large sample of ideal schemes over $\GF(2^k)$, $k\le4$.

\emph{A secret of one qubit.} The binary code with generator matrix
\[
\setlength{\arraycolsep}{2.6pt}
\begin{array}{c|cc|cc|cc|cc|cc|cc|cc}
R&\multicolumn{2}{c|}{A}&\multicolumn{2}{c|}{B}&\multicolumn{2}{c|}{C}&\multicolumn{2}{c|}{W}&\multicolumn{2}{c|}{X}&\multicolumn{2}{c|}{Y}&\multicolumn{2}{c}{Z}\\\hline
1&0&0&0&0&0&0&0&1&1&1&1&1&1&1\\
0&1&0&0&0&0&0&0&1&1&1&1&0&0&1\\
0&0&1&0&0&0&0&1&1&0&1&1&0&1&1\\
0&0&0&1&0&0&0&0&0&0&1&0&0&1&1\\
0&0&0&0&1&0&0&0&1&0&0&1&1&1&0\\
0&0&0&0&0&1&0&0&0&0&1&0&1&1&0\\
0&0&0&0&0&0&1&0&0&1&0&1&0&1&1
\end{array}
\]
(one column for $R$ and two for each player) is a scheme for $\Gamma_{\mathrm{NF}}$: we checked all $128$ sets of players (\path{check_rev_qss.py}, part (f)). It was found by a randomized search. By Lemma~\ref{lem:qss-css}, its CSS state is a qubit scheme with a secret of one qubit and shares of two qubits, the least possible by Proposition~\ref{prop:qss}~(ii), since shares consist of whole qudits (likewise a secret of one qudit needs a share of at least two qudits for $\Gamma_{\mathrm F}$ whenever $d$ is not a power of $2$, and for $\Gamma_{\mathrm{NF}}$ whenever $d\equiv2\pmod4$); over $\GF(2^k)$ the same matrix gives one for Galois qudits of dimension $2^k$.

\emph{Local dimensions $d=2r$ with $r$ odd.} Let $r>1$ be odd. For $\Gamma_{\mathrm F}$, take five copies of the qubit cube state, so that the reference and every player hold five qubits, add two qubits in the state $\ket0$ to every player, and take the tensor product with the scheme of $\Z_r$ qudits above, with a secret of five qudits and shares of seven. By the decomposition into coprime factors (Supplementary Note~\ref{app:qss-contract}), the product is a stabilizer scheme of $\Z_d$ qudits for $\Gamma_{\mathrm F}$ with a secret of five qudits and shares of seven. For $\Gamma_{\mathrm{NF}}$, take the qubit scheme above, with a secret of ten qubits and shares of fourteen, and ten copies of the cube state over $\Z_r$, with four qudits in the state $\ket0$ added to every player; the product is a stabilizer scheme of $\Z_d$ qudits for $\Gamma_{\mathrm{NF}}$ with a secret of ten qudits and shares of fourteen. Both have information ratio $7/5$. We checked both codes over $\Z_6$, $\Z_{10}$, $\Z_{18}$ and $\Z_{30}$ on all $128$ sets of players, with the entropies of their CSS states (\path{check_rev_qss.py}, part (n)).

\emph{Decompositions into ideal schemes.} For $\Gamma_{\mathrm F}$, $7/5$ is the best ratio that a $(\lambda,\omega)$-decomposition into ideal linear schemes gives: \emph{let $p$ be an odd prime; in every $(\lambda,\omega)$-decomposition of $\Gamma_{\mathrm F}$ whose sub-schemes are ideal linear schemes over a common field of characteristic $p$, the average size of a share is at least $7/5$ times that of the secret, and so is the information ratio.} Here the decomposition construction of Ref.~\cite{si:vDJM1998} gives player $P$ one sub-share for each sub-scheme in which $P$ takes part, say $m_P$ of them, and a secret of the size of $\lambda-\omega>0$ sub-secrets, so its information ratio is $\max_Pm_P/(\lambda-\omega)$. Let $\Sigma_j$ be the sub-schemes, $\Sigma_j$ ideal for an access structure $\Gamma_j$ on a set $P_j$ of players; removing the players that lie in no minimal authorized set of $\Gamma_j$ changes no access condition and does not increase any $m_P$, so we may assume that every player of $P_j$ lies in one. Let $n_j=\abs{P_j}$, and let $a_j$ and $c_j$ be the numbers of lines and of complements of lines $T$ with $T\cap P_j\in\Gamma_j$. Every line is authorized in at least $\lambda$ sub-schemes and every complement of a line, a maximal unauthorized set, in at most $\omega$, so $\sum_ja_j\ge7\lambda$ and $\sum_jc_j\le7\omega$, while $\sum_Pm_P=\sum_jn_j$. Hence the average of $m_P/(\lambda-\omega)$ over the seven players is at least $\sum_jn_j/\sum_j(a_j-c_j)$, and it suffices to show $5n_j\ge7(a_j-c_j)$ for every $j$. If $n_j=7$ and $a_j-c_j\ge6$, relabel $\Sigma_j$ by the $168$ collineations of the Fano plane, after extending scalars to a field of characteristic $p$ with at least $168$ elements, which keeps $\Sigma_j$ ideal. Since the stabilizer of a line has order $24$, every line is authorized in exactly $24a_j$ and every complement of a line in exactly $24c_j$ of these schemes, and every player takes part in all of them. The construction of Ref.~\cite{si:vDJM1998}, with the ramp scheme that evaluates at $168$ distinct points a polynomial of degree less than $24a_j$ whose $24c_j$ lowest coefficients are random and whose other coefficients are the secret, combines them into a linear scheme for $\Gamma_{\mathrm F}$ with information ratio $168/(24(a_j-c_j))\le7/6$, and Lemma~\ref{lem:qss-css} gives a stabilizer scheme of Galois qudits of characteristic $p$ with the same ratio, contradicting Proposition~\ref{prop:qss}~(i). So $a_j-c_j\le5$ when $n_j=7$. If $n_j\le6$, a line is authorized only if it meets $P_j$, and then every complement of a line that contains its intersection with $P_j$ is authorized too. For $n_j=6$, let $x$ be the missing player: the three lines through $x$ meet $P_j$ in the three pairs of a partition of $P_j$, the other four lines lie in $P_j$, and the complement of the line through $x$ and one pair is the union of the other two pairs. If $\nu$ of the pairs are authorized, at least $0$, $2$, $3$ or $3$ complements of lines are authorized for $\nu=0,1,2,3$, so $a_j-c_j\le4$. A check of all sets $P_j$ with $n_j\le5$ and all sets of authorized lines gives $a_j-c_j\le2,1,1,0,0$ for $n_j=5,4,3,2,1$ (\path{check_rev_qss.py}, part (j)). In every case $5n_j\ge7(a_j-c_j)$. The sub-schemes of (iii) attain the bound, with $n_j=7$, $a_j=6$ and $c_j=1$.

\subsection{Comparison with a linear program}
\label{app:qss-lp}

Lemma~\ref{lem:fano-odd} does not follow from the characteristic-dependent inequalities that we checked. As a seven-variable inequality, it does not follow from the Shannon inequalities, the Ingleton inequalities (all instances with disjoint arguments) and the four odd-characteristic inequalities of Table~\ref{tab:eight}, (65), (6.17) corrected, (b) and \LB{}, in all $5{,}040$ labelings: an integer polymatroid on seven elements with $h(N)=120$ satisfies all of them, and the two sides of Eq.~\eqref{eq:fano-odd} differ by $-37$ on it. For the port $\mathcal F$ of the Fano matroid, these inequalities and the access equations allow the information ratio $23/22$: an integer polymatroid in which the secret has entropy $22$ and every share at most $23$ satisfies all of them (\path{check_rev_qss.py}, part (m)), whereas Lemma~\ref{lem:fano-odd} gives $4/3$. Pe\~na-Macias obtained $4/3$ for linear schemes from his inequalities~\cite[Cors.~2.7 and~2.9]{si:PenaMacias2023}, which are therefore not among these four; the first inequality of his Corollary~2.7 with $n=3$ is blind on the self-dual slice (Supplementary Note~\ref{app:blind}), and we did not add it to these programs. Nor, in the setting of secret sharing, does it follow from the Shannon inequalities and \LB{} with $t=2$ in all labelings, even with self-duality and the access equations. Let $h$ be the lift of the joint state of a stabilizer scheme for $\Gamma_{\mathrm F}$ of odd local dimension $d$, on the eight systems or, with $E$, on nine. Then:
\begin{itemize}
\item $h$ is a polymatroid, and it is self-dual because the joint state is pure (Lemma~\ref{lem:A-selfdual});
\item $h$ is the rank function of a subgroup arrangement in a finite abelian group of exponent dividing $d$, hence of odd order (Lemma~\ref{lem:A-lift}); so is every minor of $h$ obtained by contracting or deleting elements (Methods, Sec.~\ref{sec:methods-proofs});
\item consequently, every seven-element minor satisfies \LB{} with $t=2$, in every labeling of its elements by $A_1,A_2,A_3,B_1,B_2,B_3,C$, because multiplication by $2$ is injective on a group of odd order;
\item the scheme conditions are the equations of Eq.~\eqref{eq:qss-lift} for $\Gamma_{\mathrm F}$, and the information ratio is at least $\max_Ph(P)/h(R)$.
\end{itemize}
Linear programs give lower bounds on information ratios, and Farr\`as \emph{et al.}\ sharpen them for linear schemes by imposing the common information property~\cite{si:FKMP2018}. Such constraints hold for every subgroup arrangement, hence also for the lift of the qubit cube state, an ideal scheme for $\Gamma_{\mathrm F}$, so for $\Gamma_{\mathrm F}$ they cannot raise the bound above $1$; here we add \LB{} on minors instead, which is valid in groups of odd order and fails for the Fano configuration. The linear program that minimizes $\rho$ over $(h,\rho)$ subject to the elemental Shannon inequalities, the self-duality equations, the access equations for all $127$ nonempty sets of players, $h(R)=2$, $h(P)\le2\rho$ for the seven players, and \LB{} with $t=2$ on minors has optimum $43/41$, both without and with $E$. Without $E$ the minors are those obtained by contracting or deleting one of the eight elements, in all $5{,}040$ labelings ($80{,}640$ inequalities); with $E$ they are those obtained by contracting or deleting $E$ and one other element ($161{,}280$ inequalities). The optima were found in floating point with HiGHS; the lower bound $43/41$ is certified in exact arithmetic by rational dual multipliers, with $72$ inequalities and $109$ equations without $E$ and $140$ and $201$ with $E$ (rational multipliers $y\le0$ for the inequalities $a\cdot x\le0$ and $z$ for the equations $e\cdot x=b$ such that $c-\sum ya-\sum ze\ge0$ componentwise, where $c$ is the objective, so that $\rho\ge\sum zb$ for every feasible point, since all variables are nonnegative), and rounding the floating-point solutions gives exactly feasible points with $\rho=43/41$. At such a point every player has $h(P)\le86/41$, so every line has $\sum_{P\in\ell}h(P)\le258/41<8=4h(R)$. So the inequalities that Lemma~\ref{lem:fano-odd} gives on the contractions, as in the proof of Proposition~\ref{prop:qss}~(i), do not follow from the constraints of this program, although they hold for every stabilizer scheme of odd local dimension. On the qubit cube state, \LB{} with $t=2$ takes the value $-2$ on some minor, so the program excludes it; on the cube state over $\GF(3)$ it is nonnegative on all minors in all labelings. As controls, the program with \LA{} in place of \LB{}, which the qubit cube state satisfies on all minors, and the program with Shannon inequalities only have optimum $1$; $\rho\ge1$ follows because every player lies in a minimal authorized set. Adding the inequalities of Ref.~\cite{si:JafariKhazaei2021} on all contractions to the programs, with or without self-duality, gives exactly $4/3$: the added inequalities imply $\rho\ge4/3$, and rounded floating-point solutions are exactly feasible with $\rho=4/3$. So purity does not raise the bound $4/3$ in these programs. The value $43/41$ is also the lower bound that Khazaei recently obtained for homomorphic schemes for the twelve-player access structure formed by disjoint Fano and non-Fano ports, from recursive common information~\cite{si:Khazaei2026rci}; we know of no relation between the two.

The certificates and the code are in the repository~\cite{si:CaoSong2026data}: \path{data/s38_qss_certificates.json} contains, for each of the two programs, the descriptions of the inequalities and equations used and their rational multipliers. The script \path{scripts/s38_secret_sharing.py} solves the programs and checks exact feasible points (\texttt{-{}-solve}, gate G43b), regenerates the certificates (\texttt{-{}-certify}), verifies them and the contraction statements above (\texttt{-{}-verify}, gate G43), and computes the distances of Supplementary Note~\ref{app:margins}; \path{checks/check_rev_qss.py} independently verifies the certificates, the contraction statements and the distances.

\subsection{Ideal schemes}
\label{app:qss-ideal}

This subsection proves Proposition~\ref{prop:qss}~(iv). For Galois qudits, (i), (ii) and the cube codes over $\GF(q)$ give the same dichotomy by characteristic: $\Gamma_{\mathrm F}$ has an ideal scheme exactly in characteristic $2$, and $\Gamma_{\mathrm{NF}}$ exactly in odd characteristic. Let $\Gamma$ be $\Gamma_{\mathrm F}$ or $\Gamma_{\mathrm{NF}}$, and let $\mathcal M$ be the matroid whose port at $R$ is $\Gamma$: $\mathrm{AG}(3,2)$ for $\Gamma_{\mathrm F}$ and $R_8$ for $\Gamma_{\mathrm{NF}}$ (Supplementary Note~\ref{app:qss-access}).

\emph{Proof of~(iv).} Consider an ideal stabilizer scheme of $\Z_d$ qudits for $\Gamma$, pure or mixed. Its shares consist of whole qudits and it is ideal, so its secret has dimension $d^m$ for some $m\ge1$: $S(R)=m\log d$, and every player holds at most $m$ qudits. Every player $P$ lies in a minimal authorized set $T$, so $2S(R)=I(R;T)-I(R;T\setminus P)=I(R;P|T\setminus P)\le2S(P)$; hence every player holds exactly $m$ qudits, all maximally mixed. Let $L$ be the Lagrangian subgroup of the joint state of the reference, the players and, for a mixed scheme, $E$ (Supplementary Note~\ref{app:reduction-stab}), let $L_0$ be its projection to the reference and the players, and let the reference hold $m_R$ qudits. By Eq.~\eqref{eq:A-entropy}, $\log\abs{\pr_XL_0}=S(X)+m_X\log d$ for every set $X$ of these eight systems, where $m_X$ is the number of qudits of $X$, and Lemma~\ref{lem:K-forcing} below, applied to $h_S/(2m\log d)$, gives $h_S=2m\log d\,r_{\mathcal M}$ on them. Since the reference and every player have entropy $m\log d$, this gives $\abs{\pr_TL_0}=\abs V^{r_{\mathcal M}(T)}$ and $\abs{\pr_{RT}L_0}=d^{\,m_R-m}\abs V^{r_{\mathcal M}(RT)}$ for every set $T$ of players, with $V=\Z_d^{2m}$, the phase space of a player; in particular, $\abs{\pr_RL_0}=d^{\,m_R-m}\abs V$. By Lemma~\ref{lem:K-two} below, $2V=0$ for $\Gamma_{\mathrm F}$, that is, $d=2$, and $\abs V=d^{2m}$ is odd for $\Gamma_{\mathrm{NF}}$, that is, $d$ is odd. Hence $\Gamma_{\mathrm F}$ has an ideal scheme only if $d=2$, and $\Gamma_{\mathrm{NF}}$ only if $d$ is odd. Conversely, the qubit cube state is ideal for $\Gamma_{\mathrm F}$, and for odd $d$ the cube state over $\Z_d$ is ideal for $\Gamma_{\mathrm{NF}}$: all minors of its generator matrix lie in $\{0,\pm1,\pm2\}$ (Supplementary Note~\ref{app:cubes-z4}), so the nonzero Smith invariants of every set of its columns are $1$ or $2$, units modulo $d$, and every projection of the code over $\Z_d$ has the order $d^{r}$ given by the rational rank $r$ (Supplementary Note~\ref{app:qss-access}).

\begin{applemma}[Forcing]\label{lem:K-forcing}
Let $g$ be a polymatroid on $\{R\}\cup\mathcal Q$ with $g(R)=g(P)=1$ for every player $P$, $g(RT)=g(T)$ for every authorized set $T$ of $\Gamma$, and $g(RT)=g(T)+1$ for every other nonempty set $T$ of players. Then $g=r_{\mathcal M}$.
\end{applemma}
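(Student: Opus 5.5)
The plan follows the non-computer route announced in the outline of Theorem~\ref{thm:ideal}: Mat\'u\v{s}'s theorem on ideal elements, then Lehman's theorem. First, make $g$ a matroid rank function. The access equations say that $R$ is an ideal element of $g$ in Mat\'u\v{s}'s sense: $g(R)=1$, every player has $g(P)=1$, and for every set $T$ of players $g(RT)-g(T)\in\{0,1\}$ with the value fixed by $\Gamma$. Every player lies in a minimal authorized set $T$. For such $T$ we get $g(RT)=g(T)$ but $g(R(T\setminus P))=g(T\setminus P)+1$, so $R$ is tied to each player. Mat\'u\v{s}'s Theorem~2~\cite{Matus2012} for polymatroids with an ideal (connected) element then gives that $g$ restricted to $\{R\}\cup\mathcal Q$ is the rank function of a matroid $\mathcal M'$; here the normalization $g(R)=1$ fixes the multiple.

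Second, identify the circuits of $\mathcal M'$ through $R$. A set $C=\{R\}\cup T$ is a circuit of $\mathcal M'$ exactly when $T$ is independent, $R\in\mathrm{cl}(T)$, and no proper subset of $T$ spans $R$. Since $g(RT)=g(T)$ holds exactly for $T\in\Gamma$, the sets $T$ with $R\cup T$ a circuit are the minimal authorized sets that are independent in $\mathcal M'$. I will argue that every minimal authorized set $T$ is independent. If $T$ were dependent, some circuit $C'\subseteq T$ would exist, and by circuit elimination with a circuit through $R$ inside $T\cup R$ one gets a circuit through $R$ in $R\cup(T\setminus x)$ for some $x\in C'$. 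Then $T\setminus x$ would be authorized, contradicting minimality. So the circuits of $\mathcal M'$ through $R$ are exactly $\{R\}\cup T$ with $T$ minimal in $\Gamma$, i.e.\ the circuits of $\mathcal M$ through $R$ (Supplementary Note~\ref{app:qss-access}).

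Third, apply Lehman's theorem~\cite{Lehman1964}: a connected matroid is determined by its circuits through a fixed element. Both $\mathcal M'$ and $\mathcal M$ are connected, since every element lies on a circuit with $R$ and connectivity is the equivalence relation ``lie on a common circuit''. Hence $\mathcal M'=\mathcal M$, and $g=r_{\mathcal M}$.

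The main obstacle I expect is the first step, namely checking the hypotheses of Mat\'u\v{s}'s Theorem~2 exactly. I would need to verify that the normalization $g(P)=g(R)=1$ and the tie of $R$ to every player are what his theorem requires, and that it yields a matroid on all eight elements rather than only on a connected component containing $R$. Here the fact that every player lies in a minimal authorized set should do the work. As a fallback, the paper's certified linear programs (the $2\times2\times255$ bounds) give the same conclusion.
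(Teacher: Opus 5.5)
Your argument is correct. It is the computer-free proof that the paper records right after this lemma and carries out in the proof of Theorem~\ref{thm:ideal} (Supplementary Note~\ref{app:qss-general}):
\begin{itemize}
\item the access equations make $R$ an ideal element of $g$: it is perfect, every player is essential through a minimal authorized set containing it, and $g(P)=g(R)$;
\item Mat\'u\v{s}'s Theorem~2 then gives $g=t\,r$ for a matroid rank function $r$ on all eight elements, and $g(R)=1$ forces $t=1$;
\item the circuits of this matroid through $R$ are the sets $\{R\}\cup T$ with $T$ minimal in $\Gamma$;
\item Lehman's lemma identifies the matroid with $\mathcal M$.
\end{itemize}

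The proof written under the lemma takes a different route. It bounds $g(X)$ above and below by $r_{\mathcal M}(X)$ for each of the $255$ nonempty sets $X$, using exact rational LP-duality certificates built from elemental Shannon inequalities and the access equations. That needs no matroid theory, but it rests on a computation specific to $\mathrm{AG}(3,2)$ and $R_8$. Your route is conceptual and applies unchanged to the port at $R$ of any connected matroid.

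Two small remarks. First, the worry in your last paragraph is unfounded: essentiality of every player is part of the ideal-element hypothesis, and with it Mat\'u\v{s}'s theorem yields a matroid on the whole ground set, not just on a component. Second, your circuit-elimination step can be shortened. Since $g(R)=1$, $R$ is not a loop. So any circuit through $R$ inside $\{R\}\cup T$ has the form $\{R\}\cup T_1$ with $T_1\subseteq T$ nonempty, and $g(RT_1)=g(T_1)$ makes $T_1$ authorized. Minimality then gives $T_1=T$, so $T$ is independent directly.
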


\begin{proof}
The rank function $r_{\mathcal M}$ satisfies all these conditions. For every nonempty set $X$ of the eight elements, we solved the linear programs that minimize and maximize $g(X)$ over the polymatroids with these properties, and certified both optima, $r_{\mathcal M}(X)$, by rational multipliers: nonnegative ones for elemental Shannon inequalities $a\cdot g\ge0$ and arbitrary ones for the equations $e\cdot g=b$ above, such that the linear form $g\mapsto\pm g(X)$ equals the combination of the $a$ and $e$, as linear forms in the $255$ values of $g$, and the combination of the right-hand sides $b$ equals $\pm r_{\mathcal M}(X)$. For every feasible $g$ this gives $g(X)\ge r_{\mathcal M}(X)$ and $g(X)\le r_{\mathcal M}(X)$. The $2\times2\times255$ certificates are verified in exact arithmetic (Methods).
\end{proof}

Lemma~\ref{lem:K-forcing} also follows without a computer, for every polymatroid $g$ with these properties: $R$ is an ideal element of $g$ in the sense of Mat\'u\v{s}, so $g$ is a matroid rank function by his Theorem~2~\cite{si:Matus2012}, and Lehman's theorem~\cite{si:Lehman1964} identifies the matroid, as in the proof of Theorem~\ref{thm:ideal} (Supplementary Note~\ref{app:qss-general}). Applied in the same way to the entropy function of an ideal classical scheme in which every participant is essential, Mat\'u\v{s}'s Theorem~2 shows that it is a multiple of a matroid rank function, a form of the theorem of Brickell and Davenport that the access structures of ideal classical schemes are matroid ports~\cite{si:BrickellDavenport1991}. The certificates give an independent, computer-checked proof.

\begin{applemma}[Abelian phase spaces]\label{lem:K-two}
Let $\mathcal M$ be $\mathrm{AG}(3,2)$ or $R_8$, on the reference and the seven players labeled as above, let $V$ be a finite abelian group, and let $L$ be a subgroup of $V_R\oplus\bigoplus_{P\in\mathcal Q}V_P$, where $V_R$ is a finite abelian group and every $V_P$ is a copy of $V$, such that $\abs{\pr_{RT}L}=\abs{\pr_RL}\,\abs V^{\,r_{\mathcal M}(RT)-1}$ for every set $T$ of players. If $\mathcal M=\mathrm{AG}(3,2)$, then $2x=0$ for every $x\in V$; if $\mathcal M=R_8$, then $\abs V$ is odd.
\end{applemma}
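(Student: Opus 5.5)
The plan is to pass from $L$ to the subgroup $L'=\{x\in L:x_R=0\}$, whose projections to the players carry the contraction of $\mathcal M$ at $R$ (the Fano or the non-Fano matroid). I would then write $L'$ as the graph of a homomorphism from $V^3$, normalize its components by automorphisms of the players' copies of $V$, and finally read off the order of its projection to $\{W,X,Y\}$.

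\emph{Contraction.} For a set $T$ of players, the kernel of $\pr_{RT}L\to\pr_RL$ is exactly $\pr_T L'$. The hypothesis then gives $\abs{\pr_TL'}=\abs V^{\,r_{\mathcal M}(RT)-1}=\abs V^{\,r_{\mathcal M/R}(T)}$, where $\mathcal M/R$ is the Fano matroid for $\mathrm{AG}(3,2)$ and the non-Fano matroid for $R_8$. From here on I work only with $L'\le V^7$ and these orders. Since $A,B,C$ form a basis of $\mathcal M/R$, we get $\pr_{ABC}L'=V^3$. Since the whole set of players also has rank $3$, $\abs{L'}=\abs V^3$, so $\pr_{ABC}$ is an isomorphism $L'\to V^3$. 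Every element of $L'$ is therefore determined by its components $(u,v,w)$ at $A,B,C$, and each other component is a homomorphism $V^3\to V$.

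\emph{Normal form.} This is the step I expect to be the main obstacle. The first constraint comes from the line $ABW$. Because $r(ABW)=r(AB)=2$, the elements with $u=v=0$ vanish at $W$, so the $W$-component has the form $\alpha(u)+\beta(v)$. Because $r(AW)=r(BW)=2$, the projections to $AW$ and to $BW$ are onto $V^2$, which forces $\alpha$ and $\beta$ to be automorphisms. The lines $ACX$ and $BCY$ give the same structure at $X$ and $Y$. For $Z$ I would use each of the three lines $AYZ$, $BXZ$, $CWZ$: the $Z$-component is a function of the components at the other two points of each line, through automorphisms.

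I would then reparametrize $u,v,w$ (an automorphism on each of $A,B,C$) to make $W=u+v$ and $X=u+w$, and apply automorphisms at $Y$ and $Z$. The remaining step is to show that the three expressions for $Z$ are mutually consistent only if the residual automorphisms agree. This should yield the normal form $(u,v,w,u+v,u+w,v+w,u+v+w)$ of the Methods. I would do the bookkeeping carefully here: compare the expressions for $Z$ pairwise, setting two of $u,v,w$ to zero at a time, to identify the automorphisms with one another. The rank-$2$ conditions on the noncollinear pairs ensure that every map that appears is invertible.

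\emph{Conclusion.} In the normal form, the projection of $L'$ to $WXY$ is the image of $(u,v,w)\mapsto(u+v,u+w,v+w)$. Its kernel is $\{(u,-u,-u):2u=0\}\cong V[2]$, so $\abs{\pr_{WXY}L'}=\abs V^3/\abs{V[2]}$. For $\mathrm{AG}(3,2)$, $WXY$ is a line and $r(RWXY)-1=2$, so $\abs{V[2]}=\abs V$, that is, $2x=0$ for every $x\in V$. For $R_8$, $\{R,W,X,Y\}$ is a basis, the projection has order $\abs V^3$, so $V[2]=0$ and $\abs V$ is odd.
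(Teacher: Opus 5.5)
Your proposal is correct and follows the paper's proof essentially step by step. You contract at $R$, use the isomorphism $\pr_{ABC}\colon L'\to V^3$, and use the six lines other than $WXY$ (lines of both matroids) to force the normal form $(u,v,w,u+v,u+w,v+w,u+v+w)$ via $\Psi_A=\Psi_B=\Psi_C$ and $\Phi=1$; then $\abs{\pr_{WXY}L'}=\abs V^3/\abs{V[2]}$ gives the conclusion. Two small bookkeeping points: making both $W=u+v$ and $X=u+w$ needs an automorphism at $X$ (or $W$) as well as at $A$, $B$, $C$, as in the paper's step~(4); and for $R_8$ you need Cauchy's theorem to pass from $V[2]=0$ to $\abs V$ odd.
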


\begin{proof}
Let $L''$ be the projection to the players of $\{v\in L:\pr_Rv=0\}$. For a set $T$ of players, restriction maps $\pr_{RT}L$ onto $\pr_RL$, with kernel isomorphic to $\pr_TL''$; so
\[
  \abs{\pr_TL''}=\abs V^{\,r_{\mathcal M}(RT)-1}.
\]
All sets $RT$ used below have the same rank in both matroids, except $RWXY$.

(1) Since $r_{\mathcal M}(RABC)=r_{\mathcal M}(\{R\}\cup\mathcal Q)=4$, the projection $\pr_{ABC}$ maps $L''$ bijectively onto $V_A\oplus V_B\oplus V_C$. So $L''$ consists of the vectors with coordinates $u$, $v$, $w$ at $A$, $B$, $C$, and $u\Phi_{AP}+v\Phi_{BP}+w\Phi_{CP}$ at $P\in\{W,X,Y,Z\}$, for all $u,v,w\in V$, where the $\Phi_{IP}$ are endomorphisms of $V$, written on the right.

(2) The planes $RABW$, $RACX$ and $RBCY$ have rank $3$, as do $RAB$, $RAC$ and $RBC$, so on $L''$ the coordinates at $W$, $X$ and $Y$ are functions of those at $AB$, $AC$ and $BC$, respectively: $\Phi_{CW}=\Phi_{BX}=\Phi_{AY}=0$.

(3) For each of the other nine blocks $\Phi_{IP}$, the set $R\cup(\{A,B,C\}\setminus\{I\})\cup\{P\}$ is not a plane and has rank $4$; so the projection to its players is injective on $L''$, and $\Phi_{IP}$ is injective, hence an automorphism, since $V$ is finite.

(4) Applying an automorphism of $V$ to the coordinates of a single player changes no order $\abs{\pr_TL''}$. Applying such automorphisms to $A$ and $B$, then to $X$ and $C$, and then to $Y$, we may assume that the coordinates at $W$, $X$ and $Y$ are $u+v$, $u+w$ and $v\Phi+w$, and that the coordinate at $Z$ is $u\Psi_A+v\Psi_B+w\Psi_C$, with $\Phi$ and the $\Psi_I$ automorphisms.

(5) The planes $RAYZ$, $RBXZ$ and $RCWZ$ have rank $3$, as do $RAY$, $RBX$ and $RCW$, so the coordinate at $Z$ is a function of those at $AY$, at $BX$ and at $CW$. Hence $\Psi_B=\Phi\Psi_C$, $\Psi_A=\Psi_C$ and $\Psi_A=\Psi_B$, so $(\Phi-1)\Psi_C=0$ and $\Phi=1$. After an automorphism of $Z$,
\[
  L''=\{(u,v,w,u+v,u+w,v+w,u+v+w)\},
\]
with $u,v,w\in V$ and the players in the order $A,B,C,W,X,Y,Z$.

(6) The map $(u,v,w)\mapsto(u+v,u+w,v+w)$ has kernel $\{(x,-x,-x):2x=0\}$, which is isomorphic to $V[2]=\{x\in V:2x=0\}$, so $\abs{\pr_{WXY}L''}=\abs V^3/\abs{V[2]}$. But $\abs{\pr_{WXY}L''}=\abs V^{\,r_{\mathcal M}(RWXY)-1}$, where $r_{\mathcal M}(RWXY)$ is $3$ for $\mathrm{AG}(3,2)$ and $4$ for $R_8$. For $\mathrm{AG}(3,2)$ this gives $V[2]=V$. For $R_8$ it gives $V[2]=0$: no element of $V$ has order $2$, so $\abs V$ is odd by Cauchy's theorem.
\end{proof}

For $\mathrm{AG}(3,2)$, $V$ is thus a vector space over $\GF(2)$, and $L''$ is, up to group automorphisms, not necessarily symplectic, of the players' phase spaces, the binary simplex code, that is, the cube code shortened at $R$, on $\dim V$ coordinates per player, as for copies of the qubit cube state. The lemma also follows from the work of Mat\'u\v{s} on representations by partitions~\cite{si:Matus1999}. The cosets of the kernels of the coordinate projections of $L''$, a group of order $\abs V^3$, form a representation by partitions of degree $\abs V$ of the contraction $\mathcal M/R$, the Fano or the non-Fano matroid, so $\abs V$ is a power of $2$ for $\mathrm{AG}(3,2)$ and odd for $R_8$ (Ref.~\cite{si:Matus1999}, proof of Prop.~4.1). For the Fano matroid his proof gives more: the group that defines such a representation, unique up to isomorphism by his Prop.~3.1, is a power of $\Z_2$ (see also his Sec.~5.6). Here this group is $V$: after steps (1)--(4), the kernel of $(u,v,w)\mapsto(u+v,u+w,v\Phi+w)$, which is isomorphic to that of $\Phi+1$, has order $\abs V$ because $\{W,X,Y\}$ is a line, so $\Phi=-1$, and $(u,v,w)\mapsto(u,-v,-w)$ maps the partitions at $A$, $B$, $C$, $W$, $Y$ and $X$ to those of his representation of the cycle matroid of $K_4$ built from $(V,+)$. So $2x=0$ on $V$ also follows from his proof; the proof above obtains it, and the normal form used for the uniqueness for $D=2$ (Supplementary Note~\ref{app:qss-general}), directly from the orders of the projections. Step~6 of the proof of Theorem~\ref{thm:ideal} for $\Gamma_{\mathrm F}$ and the case $d=2^a$, $a\ge2$, of Proposition~\ref{prop:qss}~(iv) need $2x=0$, which excludes groups such as $\Z_4$ and $\Z_8$. For $\Gamma_{\mathrm{NF}}$, Step~6 uses the half for $R_8$. The certificates of Lemma~\ref{lem:K-forcing} are in \path{data/s39_ideal_certificates.json}; the script \path{scripts/s39_ideal_schemes.py} regenerates and verifies them and computes the orders of the projections of the normal form over $\Z_{2^a}$ for $a\le3$, and \path{checks/check_rev_qss.py}, part (k), verifies them independently.

\subsection{The seven-qudit color code}
\label{app:qss-color}

Steane's code is the smallest triangular color code~\cite{si:Bombin2006,si:Landahl2011}: its seven qubits are the vertices of three faces, $AWXZ$, $BWYZ$ and $CXYZ$, which are the supports of the rows $M_1$, $M_2$ and $M_3$ of the $3\times7$ matrix $M$ whose columns are the points $A,\dots,Z$, and each face carries an $X$-type and a $Z$-type stabilizer. These are the faces of the cube $\{0,1\}^3$ that do not contain the origin. Watson \emph{et al.}\ define color codes of qudits of any dimension, also on lattices punctured at a vertex, by splitting the vertices into two classes, such that neighboring vertices lie in different classes, and using $Z^{-1}$ in place of $Z$ on one of them~\cite{si:Watson2015}; here the classes are $\{A,B,C,Z\}$ and $\{W,X,Y\}$. For $d\ge2$, let $M'$ be $M$ with the columns $W$, $X$ and $Y$ negated, and consider the code of $\Z_d$ qudits stabilized by $X^{M_i}=\bigotimes_PX_P^{M_{iP}}$ and $Z^{M'_i}$, $i=1,2,3$. These operators commute, because each face contains two vertices of each class and any two faces share one vertex of each, so that $MM'^{\mathsf T}=0$ over the integers. For $d=2$ the code is Steane's code. For every $d\ge2$:
\begin{enumerate}
\item[(a)] It is a $[[7,1,3]]_d$ code. The columns $A$, $B$ and $C$ of $M$ and of $M'$ form identity matrices, so the six stabilizers generate a group of order $d^6$, and the code space has dimension $d$. An operator $X^uZ^v$ commutes with all stabilizers exactly when $M'u=0$ and $Mv=0$ modulo $d$. The columns of $M$ are distinct nonzero $0/1$ vectors, and those of $M'$ are the same up to sign, so for two of them some coordinate is nonzero in exactly one; hence no nonzero $u$ or $v$ supported on at most two sites satisfies these equations, and no logical operator has weight $1$ or $2$. The operator $Z_AZ_BZ_W^{-1}$ commutes with all stabilizers and is not a stabilizer: a combination of the rows of $M'$ with entries $1,1,0$ at $A,B,C$ has $-2$, not $-1$, at $W$. So the code has distance $3$.
\item[(b)] Its logical basis state $\ket{0_L}$ stabilized by the $X^{M_i}$ and by every $Z^{v}$ with $Mv=0$ modulo $d$ is the CSS state of the code generated by $M$ over $\Z_d$, whose entropy vector is $S^{(d)}$ of Theorem~\ref{thm:reverse}: the Steane state for $d=2$, and for an odd prime $d$ the non-Fano state.
\item[(c)] Take $X^{\otimes7}$, which commutes with the $Z^{M'_i}$ because each row of $M'$ sums to $0$, as the logical $X$ operator, and let a reference system $R$ be maximally entangled with the logical qudit: the state is $d^{-1/2}\sum_{j}\ket{j}_R(X^{\otimes7})^j\ket{0_L}$. It is the CSS state of the code generated by $(1,1,\dots,1)$ and the rows $(0,M_i)$, with the coordinate of $R$ first, whose columns are $(1,v)^{\mathsf T}$, $v\in\{0,1\}^3$, with $R$ at the origin; so it is the cube state of Definition~\ref{def:cube} with $t=2$ over $\Z_d$. It is an ideal scheme for $\Gamma_{\mathrm F}$ if $d=2$ and for $\Gamma_{\mathrm{NF}}$ if $d$ is odd (Supplementary Notes~\ref{app:qss-access} and~\ref{app:qss-ideal}).
\end{enumerate}
Hence, for odd $d$, the seven-qudit color code recovers its logical qudit from exactly the sets of sites in $\Gamma_{\mathrm{NF}}$: unlike Steane's code, it cannot recover it from the three sites of the line $WXY$, and it can recover it from the complementary four sites $ABCZ$. For even $d\ge4$, by contrast, $WXY$ and $ABCZ$ learn part of the logical qudit, and every other set of sites recovers it or learns nothing about it, according as it is authorized in both access structures or in neither. Indeed, every set of columns of the integer generator matrix of the cube code has nonzero Smith invariants $1$ or $2$ (Supplementary Note~\ref{app:cubes-z4}), so for even $d$ the projection of the code over $\Z_d$ to a set $X$ of systems has order $2^{r_2(X)}(d/2)^{r_{\Q}(X)}$ (Supplementary Note~\ref{app:reverse-sevenpoint}), where $r_2$ and $r_{\Q}$ are the ranks over $\GF(2)$ and over $\Q$, the rank functions of $\mathrm{AG}(3,2)$ and of $R_8$. Since $h_S$ is twice the logarithm of these orders (Lemma~\ref{lem:cube-selfdual}), Eq.~\eqref{eq:qss-lift} gives $I(R;T)=2\,[T\in\Gamma_{\mathrm F}]+2\log(d/2)\,[T\in\Gamma_{\mathrm{NF}}]$ bits, where $[\cdot]$ is $1$ if the condition holds and $0$ otherwise, whereas $2S(R)=2+2\log(d/2)$ bits. For $d=6$, for instance, the state of $R$ and the sites is, by the decomposition into prime powers of Supplementary Note~\ref{app:qss-contract}, the tensor product of the cube states over $\Z_2$ and over $\Z_3$, so $WXY$ recovers the qubit part of the logical qudit and learns nothing about its qutrit part, and $ABCZ$ does the reverse. By Theorem~\ref{thm:reverse}, for every $d\ge3$ no qubit stabilizer state has the entropy pattern of the logical basis state $\ket{0_L}$ of the color code of $\Z_d$ qudits, even up to scaling and in the limit, and by Theorem~\ref{thm:ideal} no code that encodes one qudit of even dimension with one such qudit per site, stabilizer code or not, has the recovery pattern of the color code of odd local dimension. By Proposition~\ref{prop:qss}, a stabilizer code of odd local dimension with the recovery pattern of Steane's code needs on average at least $4/3$ times as many qudits per site as it encodes, and a qubit stabilizer code with the recovery pattern of the seven-qudit color code needs a site with at least $4/3$ times as many. The script \path{checks/check_color_code.py} verifies (a)--(c) and the recovery patterns in exact arithmetic for every $d\le30$.

\emph{Erasures.} For a code, the erasure of a set of sites is correctable exactly when that set is unauthorized, that is, since the joint state of the reference and the sites is pure, exactly when the other sites are authorized. So Steane's code fails exactly when the erased sites contain a line of the Fano plane, and the color code of odd local dimension exactly when they form a set in $\Gamma_{\mathrm{NF}}$. For $3,4,5,6,7$ sites, Steane's code has $7$, $28$, $21$, $7$ and $1$ uncorrectable erasures, and the color code $6$, $29$, $21$, $7$ and $1$. If every site is erased independently with probability $\varepsilon$, the probability that the erasure is uncorrectable is therefore smaller for the color code, by $\varepsilon^3(1-\varepsilon)^4-\varepsilon^4(1-\varepsilon)^3=\varepsilon^3(1-\varepsilon)^3(1-2\varepsilon)$, for every $0<\varepsilon<1/2$. Qubit stabilizer codes cannot match this: every qubit stabilizer code that encodes one qubit into seven with distance at least $3$ has at least $7$ uncorrectable erasures of three sites (see below), and then it fails more often than the color code for every $0<\varepsilon<1/2$. Indeed, let such a code have $N_k$ uncorrectable erasures of $k$ sites. By the distance, every set of at most two sites is unauthorized and its complement authorized, so $N_5=21$, $N_6=7$, $N_7=1$ and $N_k=0$ for $k\le2$; and a set of four sites is uncorrectable unless its complement is authorized, while an authorized set of three sites is uncorrectable, so $N_4\ge35-N_3$. Hence its failure probability exceeds that of the color code by at least $(N_3-6)\varepsilon^3(1-\varepsilon)^4+(6-N_3)\varepsilon^4(1-\varepsilon)^3=(N_3-6)\varepsilon^3(1-\varepsilon)^3(1-2\varepsilon)>0$ when $N_3\ge7$. With the reference, such a code is a stabilizer state of eight qubits, which is equivalent under local Clifford unitaries, which do not change entropies, to a graph state~\cite[Thm.~1]{si:VandenNest2004}; an exhaustive search over the graphs on the seven sites, $1{,}044$ up to isomorphism, and the $127$ nonempty sets of sites adjacent to $R$ gives $3{,}414$ pairs (an isomorphism class of graphs on the sites and a neighborhood of $R$) whose codes have distance at least $3$, and each of them has at least $7$ uncorrectable erasures of three sites, as Steane's code has; the numbers range from $7$ to $23$, so codes with the parameters of Steane's code need not have its recovery pattern. The script \path{checks/check_rev_qss.py}, part (r), recomputes these counts; it generates the graphs itself and checks their number with Burnside's lemma. For even $d\ge4$, the color code fails exactly on the erasures of the sets in $\Gamma_{\mathrm F}\cup\Gamma_{\mathrm{NF}}$, $7$, $29$, $21$, $7$ and $1$ of them for $3,\dots,7$ sites, so it fails more often than Steane's code, by $\varepsilon^4(1-\varepsilon)^3$, and than the color code of odd local dimension, by $\varepsilon^3(1-\varepsilon)^4$; \path{checks/check_color_code.py} verifies this recovery pattern, the values $I(R;WXY)=2$ bits and $I(R;ABCZ)=2\log(d/2)$ bits, and these numbers of uncorrectable erasures for every even $d\le30$.

\subsection{Codes that are not stabilizer codes}
\label{app:qss-general}

Proposition~\ref{prop:qss} concerns stabilizer schemes. Schemes that are not stabilizer schemes avoid the bound of its part~(i) for large secrets, but not when the secret and every share have the same dimension: this subsection proves Theorem~\ref{thm:ideal}, for $\Gamma_{\mathrm F}$ and for $\Gamma_{\mathrm{NF}}$.

\emph{Large secrets.} Let $d\ge3$ not be a power of $2$, let $k\ge1$, and put $m=\lceil k\log_2d\rceil$ and $n=\lceil m\log_d2\rceil$. Take $m$ copies of the qubit cube state, a pure scheme for $\Gamma_{\mathrm F}$ whose secret consists of $m$ qubits, as for $m$ copies of Steane's code. Restrict the secret to a subspace of dimension $d^k\le2^m$ of the encoded space, and embed the $m$ qubits of each player isometrically into $n$ qudits of dimension $d$, which is possible because $d^n\ge2^m$. An authorized set recovers every encoded state of the copies of Steane's code, in particular those of the subspace, after undoing the isometries, and the erasure of every other set is correctable for every encoded state, so these sets learn nothing about the secret. So this is a pure scheme for $\Gamma_{\mathrm F}$ whose secret consists of $k$ qudits and in which every player holds $n$ qudits, and $n/k\to1$ as $k\to\infty$. For $d=3$ and $k=4$ or $5$, it uses $7n=35$ or $42$ qutrits. Every stabilizer scheme of qutrits for $\Gamma_{\mathrm F}$ with such a secret uses at least $\lceil28k/3\rceil=38$ or $47$, because the players of each of the seven lines hold together at least $4k$ qutrits (Proposition~\ref{prop:qss}~(i)) and every player lies on three lines.

\emph{Ideal schemes.} In this subsection a \emph{nice basis} on $\mathbb C^D$ is a family of unitaries $U_g$, indexed by the elements $g$ of a finite group $G$ of order $D^2$, such that $\operatorname{Tr}(U_g^\dagger U_h)=D\,\delta_{gh}$ and $U_gU_h$ is a scalar multiple of $U_{gh}$ for all $g,h\in G$; these are the nice error bases of Knill~\cite{si:Knill1996}, and those indexed by abelian groups were classified by Klappenecker and R\"otteler~\cite{si:KlappeneckerRoetteler2002}; we prove the facts that we need. The Weyl operators $W(v)$, $v\in\Z_D^2$, of Supplementary Note~\ref{app:reduction-stab} form one, indexed by $\Z_D^2$. If $G$ is abelian, then $U_gU_hU_g^{-1}U_h^{-1}=\beta(g,h)$ is a scalar for all $g,h$, and $\beta$ is a bicharacter of $G$ with $\beta(g,g)=1$; it is nondegenerate, because an element $U_g$ that commutes with all $U_h$ commutes with every operator, since the $U_h$ span all operators, so it is a scalar and $g$ is the identity. For qudits of dimensions $n_1,\dots,n_N$, with Weyl operators $X$ and $Z$ on each, a \emph{stabilizer state} is, up to a phase, the only vector fixed by an abelian group of scalar multiples of tensor products of Weyl operators that contains no scalar other than $1$. Weyl operators other than scalars are traceless, so the projector $\abs{\mathcal G}^{-1}\sum_{s\in\mathcal G}s$ onto the common fixed space of such a group $\mathcal G$ has trace $\dim\mathcal H/\abs{\mathcal G}$, and the fixed vector is unique exactly when $\abs{\mathcal G}$ is the total dimension $\dim\mathcal H$.

\begin{applemma}[Four parties with the entropies of a GHZ state]\label{lem:K-ghz}
Let $\ket\Psi$ be a pure state of four systems, each of dimension $q$, with $S(i)=S(ij)=\log q$ for all distinct $i,j$. Then there are orthonormal bases $\{\ket{a^i_n}\}_{n=1}^q$ of the four systems such that $\ket\Psi=q^{-1/2}\sum_n\ket{a^1_n}\ket{a^2_n}\ket{a^3_n}\ket{a^4_n}$. They are unique up to phases and a common relabeling: for any two of the systems, say $1$ and $2$, the product vectors in the support of $\rho_{12}$ are exactly the multiples of the vectors $\ket{a^1_n}\ket{a^2_n}$.
\end{applemma}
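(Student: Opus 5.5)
The entropy hypotheses say that every single system and every pair are maximally mixed on their supports: $S(1)=\log q$ with local dimension $q$ means $\rho_1=\mathbb 1/q$. The plan is to obtain the GHZ form from the equality case of strong subadditivity, using the structure theorem of Hayden, Jozsa, Petz and Winter~\cite{HJPW2004}.

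First I compute $I(1;3\mid 2)=S(12)+S(23)-S(123)-S(2)=\log q+\log q-\log q-\log q=0$, using $S(123)=S(4)=\log q$. So the state $\rho_{123}$ saturates strong subadditivity. By~\cite{HJPW2004}, system $2$ then decomposes as $\mathcal H_2=\bigoplus_n\mathcal H_{2_n^L}\otimes\mathcal H_{2_n^R}$, and
\[
\rho_{123}=\bigoplus_n p_n\,\rho_{1 2_n^L}\otimes\rho_{2_n^R3}.
\]
I then use $S(12)=S(1)=\log q$, so that $I(1;2)=\log q$. This should force each block to be trivial on one side. Purity also helps: $S(124)=S(3)$, so the same argument applies with the roles permuted. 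I will count as follows. $\rho_2=\mathbb 1/q$ gives $\sum_n p_n=1$, with block dimensions $\dim(2_n^L)\dim(2_n^R)$ summing to $q$. The mutual information $I(1;2)=\sum_n p_n\, I(1;2_n^L)$ is bounded by $\sum_n p_n\log\dim 2_n^L+H(p)$, and reaching $\log q$ together with $I(2;3)=\log q$ should force every $\dim(2_n^L)=\dim(2_n^R)=1$ and $p_n=1/q$. With this I expect $\rho_{123}=q^{-1}\sum_n\ket{n}\bra{n}_2\otimes\sigma^n_1\otimes\tau^n_3$. I expect this counting to be the main obstacle, and I would first try to write it as a chain of entropy equalities, each of which forces one of the three conditions.

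From $S(123)=\log q$ and the conditional products above, $\sigma^n$ and $\tau^n$ must be pure, say $\ket{a^1_n}$ and $\ket{a^3_n}$. These vectors are orthonormal because $\rho_1=\mathbb 1/q$ has rank $q$, and $\rho_{13}$ has rank $q$ because $S(13)=\log q$. The purification on system $4$ then gives the stated form, with $\ket{a^4_n}$ orthonormal and phases absorbed. For uniqueness, the support of $\rho_{12}$ is spanned by the $q$ vectors $\ket{a^1_n a^2_n}$. A product vector $\sum_n c_n\ket{a^1_na^2_n}$ has Schmidt rank equal to the number of nonzero $c_n$, so it is a product vector exactly when a single $c_n$ is nonzero. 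This characterizes the bases up to phases and a common relabeling, and the same argument applies to every other pair by symmetry.
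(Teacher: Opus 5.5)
Your plan breaks at the step you flag yourself: showing that every block of the decomposition of $\mathcal H_2$ is one-dimensional. This step is the whole content of the lemma, and the counting you sketch for it cannot succeed. First, the identity $I(1;2)=\sum_np_nI(1;2^L_n)$ drops the Holevo term $S\bigl(\sum_np_n\rho^{(n)}_1\bigr)-\sum_np_nS(\rho^{(n)}_1)$. Second, $I(1;2^L_n)$ can reach $2\log\dim 2^L_n$, not $\log\dim 2^L_n$. More to the point, the data you invoke do not force trivial blocks. Take $q=4$: systems $1$ and $3$ are two qubits each, and $\mathcal H_2=\mathcal H_{2^L}\otimes\mathcal H_{2^R}$ is two qubits. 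Let $\rho_{123}=\rho_{12^L}\otimes\rho_{2^R3}$, where $\rho_{12^L}$ is a Bell pair between one qubit of system $1$ and $2^L$, times a maximally mixed second qubit of system $1$, and $\rho_{2^R3}$ is built the same way. Then $\rho_1$, $\rho_2$ and $\rho_3$ are maximally mixed, and $S(12)=S(23)=S(123)=\log 4$. So $I(1;2)=I(2;3)=\log q$, $I(1;3|2)=0$, and a purifying system $4$ of dimension $4$ exists. Yet $\rho_{12}$ is entangled, so $\mathcal H_2$ admits no decomposition into one-dimensional blocks. The only hypotheses that fail are $S(13)=S(24)=\log q$; here they equal $2\log 4$. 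So the argument must use the mutual information of the two outer systems, $I(1;3)=\log q$, which your counting never touches. Likewise, $S(ij)=\log q$ does not by itself make $\rho_{ij}$ maximally mixed on its support; that holds only a posteriori.

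The missing idea is the data-processing step. Tracing out system $2$ gives $\rho_{13}=\sum_np_n\sigma^n_1\otimes\tau^n_3$. This is the marginal of the classical–quantum state $\rho_{13M}=\sum_np_n\sigma^n_1\otimes\tau^n_3\otimes\ket n\bra n_M$, so
$\log q=I(1;3)\le I(1;3M)=I(1;M)\le H(p)\le\log N\le\log q$,
where $N\le\sum_n\dim(2^L_n)\dim(2^R_n)=q$ is the number of blocks. Equality throughout gives $N=q$, one-dimensional blocks and $p_n=1/q$, which is exactly your intermediate form. From there your remaining steps go through:
\begin{itemize}
\item purity of $\sigma^n$ and $\tau^n$ from $S(123)=\log q$;
\item orthonormality from $\rho_1$ and $\rho_3$ being maximally mixed;
\item the Schmidt decomposition across $123\mid4$;
\item the Schmidt-rank argument for uniqueness, with the common relabeling obtained from the pairs $\{1,2\}$, $\{1,3\}$, $\{1,4\}$ that share system $1$.
\end{itemize}
The paper uses the same inequality $I(i;j)\le I(i;jM)=I(i;M)$, but only to conclude that the $\sigma^m_i$ are pure. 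It then compares the two decompositions of $\rho_{13}$ obtained by conditioning on $3$ and on $1$ to get rank-one projectors, does the same for $\rho_{24}$, and uses $S(12)=\log q$ to force the unitary relating the $24$-basis to the $13$-basis to be monomial. With the added bound $I(1;M)\le H(p)\le\log N$, your route reaches the GHZ form more directly.
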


\begin{proof}
Every $\rho_i$ is maximally mixed, and by purity $S(ijk)=S(l)=\log q$ whenever $\{i,j,k,l\}=\{1,2,3,4\}$. Hence, for distinct $i,j,k$, $I(i;j)=\log q$ and $I(i;j|k)=S(ik)+S(jk)-S(k)-S(ijk)=0$. By the theorem of Hayden, Jozsa, Petz and Winter~\cite{si:HJPW2004}, the latter gives a decomposition $\mathcal H_k=\bigoplus_m\mathcal H_{k^L_m}\otimes\mathcal H_{k^R_m}$ with $\rho_{ijk}=\bigoplus_mr_m\,\sigma^m_{ik^L_m}\otimes\tau^m_{k^R_mj}$ for states $\sigma^m$, $\tau^m$ and probabilities $r_m$. The state $\sum_mr_m\,\sigma^m_i\otimes\tau^m_j\otimes\ket m\bra m_M$ has the marginal $\rho_{ij}$, so
\begin{align*}
\log q=I(i;j)&\le I(i;jM)=I(i;M)\\
&=S(\rho_i)-\textstyle\sum_mr_mS(\sigma^m_i)\\
&=\log q-\textstyle\sum_mr_mS(\sigma^m_i).
\end{align*}
So every $\sigma^m_i$ is pure, $\sigma^m_{ik^L_m}=\ket{\alpha_m}\bra{\alpha_m}\otimes\sigma^m_{k^L_m}$, and $\rho_{ik}=\sum_mr_m\ket{\alpha_m}\bra{\alpha_m}\otimes\gamma_m$, where $\gamma_m$ is supported on the $m$-th summand of $\mathcal H_k$. Since $\rho_k=\sum_mr_m\gamma_m$ is maximally mixed, $r_m\gamma_m=\Pi_m/q$, where $\Pi_m$ is the projector onto that summand. For $(i,j,k)=(1,2,3)$ and for $(3,2,1)$ this gives
\[
\rho_{13}=\frac1q\sum_m\ket{\alpha_m}\bra{\alpha_m}\otimes\Pi_m=\frac1q\sum_nQ_n\otimes\ket{\beta_n}\bra{\beta_n},
\]
where the $\Pi_m$ are mutually orthogonal projectors on $\mathcal H_3$ with sum $1$, and so are the $Q_n$ on $\mathcal H_1$. Both sums are $1/q$ times projectors, so these projectors have the same range. If some $Q_n$ had rank at least $2$, its range would contain a vector $x$ that is not a multiple of any $\ket{\alpha_m}$; then $x\otimes\ket{\beta_n}$ lies in the range of $\sum_m\ket{\alpha_m}\bra{\alpha_m}\otimes\Pi_m$, and applying $1\otimes\Pi_m$ shows $\Pi_m\ket{\beta_n}=0$ for every $m$, which is impossible. So each $Q_n$ is the projector onto a vector $\ket{a^1_n}$; these form an orthonormal basis of $\mathcal H_1$, and since $\rho_3=\frac1q\sum_n\ket{\beta_n}\bra{\beta_n}$ is maximally mixed, the vectors $\ket{a^3_n}=\ket{\beta_n}$ form one of $\mathcal H_3$. In the same way, with $(2,1,4)$ and $(4,1,2)$, $\rho_{24}=\frac1q\sum_k\ket{a^2_ka^4_k}\bra{a^2_ka^4_k}$ with orthonormal bases $\{\ket{a^2_k}\}$ and $\{\ket{a^4_k}\}$. Hence $\ket\Psi=q^{-1/2}\sum_n\ket{a^1_n}\ket{a^3_n}\ket{\phi_n}_{24}$ with orthonormal vectors $\ket{\phi_n}$ in the support of $\rho_{24}$, that is, $\ket{\phi_n}=\sum_kV_{nk}\ket{a^2_ka^4_k}$ for a unitary matrix $V$. Then $\rho_{12}=\frac1q\sum_{n,k}\abs{V_{nk}}^2\ket{a^1_na^2_k}\bra{a^1_na^2_k}$, whose entropy is $\log q+\frac1q\sum_nH(\abs{V_{n1}}^2,\dots,\abs{V_{nq}}^2)$. It equals $\log q$ only if every row of $V$ has a single nonzero entry; relabeling the $\ket{a^2_k}$ and $\ket{a^4_k}$ and absorbing phases gives the first claim. For the second, if $x\otimes y=\sum_nc_n\ket{a^1_n}\ket{a^2_n}$, then $\langle a^2_n|y\rangle\,x=c_n\ket{a^1_n}$ for every $n$, so $\langle a^2_n|y\rangle\ne0$ for exactly one $n$, and $x$ and $y$ are multiples of $\ket{a^1_n}$ and $\ket{a^2_n}$. Applied to the pairs $\{1,2\}$, $\{1,3\}$ and $\{1,4\}$, this shows that the bases of any other such decomposition are those of this one up to phases and a relabeling common to the four systems.
\end{proof}

\begin{applemma}[Commuting unitary bases]\label{lem:K-nice}
Let $U_1,\dots,U_{D^2}$ be unitaries on $\mathbb C^D$ with $\operatorname{Tr}(U_j^\dagger U_k)=D\,\delta_{jk}$ that commute pairwise up to scalars, and write $U_kU_lU_k^{-1}=\beta(k,l)U_l$.
\begin{enumerate}
\item[(a)] If an operator $Y\ne0$ satisfies $U_kYU_k^{-1}=\lambda_kY$ with scalars $\lambda_k$ for all $k$, then $Y$ is a scalar multiple of the unique $U_l$ with $\beta(k,l)=\lambda_k$ for all $k$.
\item[(b)] The scalar multiples of the $U_j$ form a group, so the $U_j$ form a nice basis indexed by an abelian group of order $D^2$, and $\beta$ is its commutator form.
\end{enumerate}
\end{applemma}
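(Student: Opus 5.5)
The plan uses the fact that the $D^2$ unitaries form an orthogonal basis of the operator space $M_D(\mathbb C)$ for the Hilbert--Schmidt inner product. First I show that $\beta$ is well defined and multiplicative in the second argument in the appropriate sense. Since $U_kU_lU_k^{-1}=\beta(k,l)U_l$ with $U_k,U_l$ unitary, $\abs{\beta(k,l)}=1$. The map $Y\mapsto U_kYU_k^{-1}$ is a unitary automorphism of $M_D(\mathbb C)$ that is diagonal in the basis $\{U_l\}$, with eigenvalues $\beta(k,l)$.

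For (a), expand $Y=\sum_l c_lU_l$ with $c_l=\operatorname{Tr}(U_l^\dagger Y)/D$. The hypothesis gives $\sum_l c_l\beta(k,l)U_l=\lambda_k\sum_l c_lU_l$, so $c_l\ne0$ forces $\beta(k,l)=\lambda_k$ for all $k$. It remains to prove that the characters $k\mapsto\beta(k,l)$ are pairwise distinct functions of $k$, which gives uniqueness of $l$. Suppose $\beta(\cdot,l)=\beta(\cdot,l')$. Then $U_l^{-1}U_{l'}$ commutes with every $U_k$. Since the $U_k$ span $M_D(\mathbb C)$, this operator commutes with everything and is therefore a scalar $c$, so $U_{l'}=cU_l$. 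Orthogonality, $\operatorname{Tr}(U_l^\dagger U_{l'})=D\delta_{ll'}$, together with $\operatorname{Tr}(U_l^\dagger cU_l)=cD\neq0$, then forces $l=l'$. So $Y$ has exactly one nonzero coefficient.

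For (b), I apply (a) to $Y=U_jU_{j'}$. Conjugation gives $U_kYU_k^{-1}=\beta(k,j)\beta(k,j')Y$, and $Y\ne0$, so $U_jU_{j'}$ is a scalar multiple of some $U_{l}$, where $l$ is determined by $\beta(k,l)=\beta(k,j)\beta(k,j')$. The same argument applied to $Y=U_j^{-1}=U_j^\dagger$ shows that inverses stay in the set. The identity $\mathbb 1$ is fixed by every conjugation, so it is a multiple of some $U_{l_0}$. Thus the set of scalar multiples of the $U_j$ is closed under products and inverses and forms a group $\mathcal G$. The quotient $G=\mathcal G/\text{scalars}$ has exactly $D^2$ elements, because distinct $U_j$ are not proportional (orthogonality again). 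Pairwise commutation up to scalars makes $G$ abelian. Indexing $U_g$ by $g\in G$ gives $U_gU_h\propto U_{gh}$, so the family is a nice basis in the sense defined above. The commutator $U_gU_hU_g^{-1}U_h^{-1}=\beta(g,h)$ is exactly the commutator form.

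The main obstacle is the injectivity of $l\mapsto\beta(\cdot,l)$. It is what forces uniqueness in (a) and hence closure in (b). It rests on two facts: the $U_k$ span all operators, so the commutant is trivial, and orthogonality excludes proportional pairs. Everything else is bookkeeping with characters.
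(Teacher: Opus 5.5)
Your proposal is correct and follows essentially the same route as the paper's proof. You expand $Y$ in the orthogonal basis $\{U_l\}$, show that the tuples $(\beta(k,l))_k$ are distinct because $U_l^{-1}U_{l'}$ would lie in the trivial commutant and contradict orthogonality, and then apply (a) to the identity, to products and to inverses to obtain the group of order $D^2$ modulo scalars.
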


\begin{proof}
The $U_l$ form a basis of the operators on $\mathbb C^D$, and $U_l$ is an eigenvector of every map $Y\mapsto U_kYU_k^{-1}$, with eigenvalue $\beta(k,l)$. The tuples $(\beta(k,l))_k$ are distinct for distinct $l$: if $U_l$ and $U_{l'}$ had the same tuple, $U_l^{-1}U_{l'}$ would commute with every $U_k$, hence with every operator, and would be a scalar, contradicting $\operatorname{Tr}(U_l^\dagger U_{l'})=0$. For (a), write $Y=\sum_ly_lU_l$; then $y_l\ne0$ implies $\beta(k,l)=\lambda_k$ for all $k$, which holds for at most one $l$. For (b), the identity, the products $U_jU_k$ and the inverses $U_j^{-1}$ satisfy the hypothesis of (a), so each is a scalar multiple of some $U_l$; the scalar multiples of the $U_j$ thus form a group, whose quotient by the scalars has order $D^2$ and is abelian.
\end{proof}

\begin{applemma}[Symplectic bases]\label{lem:K-symp}
Let $\beta$ be a nondegenerate bicharacter of a finite abelian group $G$, written additively, with $\beta(g,g)=1$ for all $g\in G$. Then $G$ has elements $g_1,h_1,\dots,g_r,h_r$ such that $G=\bigoplus_{i=1}^r\bigl(\langle g_i\rangle\oplus\langle h_i\rangle\bigr)$, the elements $g_i$ and $h_i$ have the same order $n_i$, a prime power, $\beta(g_i,h_i)$ is a primitive $n_i$-th root of unity, and $\beta(x,y)=1$ whenever $x$ and $y$ lie in different summands $\langle g_i\rangle\oplus\langle h_i\rangle$.
\end{applemma}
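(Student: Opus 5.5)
The plan is the standard symplectic Gram--Schmidt argument, run first on primary components and then by induction.

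\emph{Step 1: reduction to $p$-groups.} Write $G=\bigoplus_pG_p$ as the sum of its Sylow subgroups. For $x\in G_p$ and $y\in G_{p'}$ with $p\ne p'$, the value $\beta(x,y)$ is a root of unity whose order divides both $\abs{G_p}$ and $\abs{G_{p'}}$, so $\beta(x,y)=1$. Hence the Sylow subgroups are mutually orthogonal, and the restriction of $\beta$ to each $G_p$ is nondegenerate: if $x\in G_p$ is orthogonal to $G_p$, it is orthogonal to all of $G$, so $x=0$. It therefore suffices to treat a $p$-group.

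\emph{Step 2: the first pair.} Let $G$ be a $p$-group, and let $g$ be an element of maximal order $n=p^e$, which is the exponent of $G$. The map $y\mapsto\beta(g,y)$ is a character of $G$. By nondegeneracy, $\beta(g,\cdot)$ is an isomorphism $G\to\widehat G$, so the character $\beta(g,\cdot)$ has the same order as $g$, namely $n$. Its image is then the group of $n$-th roots of unity, so some $h$ has $\beta(g,h)$ a primitive $n$-th root of unity.

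Next, $h$ has order exactly $n$. Since $n$ is the exponent, $nh=0$. If $mh=0$ with $m<n$ dividing $n$, then $\beta(g,h)^m=\beta(g,mh)=1$, a contradiction.

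The subgroup $H=\langle g\rangle+\langle h\rangle$ is direct. Suppose $ag=bh$. Pairing with $g$ gives $1=\beta(g,ag)=\beta(g,bh)=\zeta^b$, using $\beta(g,g)=1$, where $\zeta=\beta(g,h)$. So $n\mid b$ and $bh=0$. Symmetrically, using $\beta(h,h)=1$ and $\beta(h,g)=\zeta^{-1}$, we get $n\mid a$. On $H$ the form is nondegenerate: if $\beta(ag+bh,g)=\zeta^{-b}=1$ and $\beta(ag+bh,h)=\zeta^a=1$, then $a\equiv b\equiv0\pmod n$.

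\emph{Step 3: splitting off $H$ and induction.} Define $\phi\colon G\to H$ by
\[
\phi(x)=a\,g+b\,h,\qquad \text{where }\beta(x,h)=\zeta^{a}\text{ and }\beta(x,g)=\zeta^{-b}.
\]
This is well defined, because $\beta(x,h)$ and $\beta(x,g)$ are $n$-th roots of unity, as $nx=0$. One checks that $\phi$ is the identity on $H$, so it is a projection onto $H$. Then $x\mapsto x-\phi(x)$ has image $H^\perp$, so $G=H\oplus H^\perp$.

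The restriction of $\beta$ to $H^\perp$ is again nondegenerate and alternating: an element of $H^\perp$ orthogonal to all of $H^\perp$ is also orthogonal to $H$, hence to $G$, hence zero. Since $\abs{H^\perp}<\abs G$, induction on $\abs G$ finishes the proof. Orthogonality between different summands holds by construction.

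The only delicate point is Step 2, where the exponent choice is used twice. It guarantees that $\beta(g,\cdot)$ attains a primitive $n$-th root of unity, and that $h$ has order exactly $n$ rather than something larger. Everything else is routine.
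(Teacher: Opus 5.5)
Your proof is correct and follows the paper's argument step for step: reduction to Sylow subgroups, a partner $h$ of an element $g$ of maximal order with $\beta(g,h)$ primitive, directness and nondegeneracy on $H=\langle g\rangle\oplus\langle h\rangle$, and induction on $H^\perp$. The only difference is that you split $G=H\oplus H^\perp$ by an explicit projection, where the paper counts $\abs{H^\perp}\ge\abs{G}/\abs{H}$ and uses $H\cap H^\perp=0$. Note also that your step $\beta(h,g)=\zeta^{-1}$, which the projection needs as well, relies on $\beta(h,g)=\beta(g,h)^{-1}$; this follows from expanding $\beta(g+h,g+h)=1$, and you should state it.
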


\begin{proof}
Expanding $\beta(g+h,g+h)=1$ gives $\beta(h,g)=\beta(g,h)^{-1}$. If $g$ and $h$ have coprime orders, then $\beta(g,h)$ is a root of unity whose order divides both, so $\beta(g,h)=1$; hence the Sylow subgroups of $G$ pair trivially with each other, $\beta$ is nondegenerate on each of them, and it suffices to treat an abelian $p$-group $G\ne0$. Let $g\in G$ have the largest order, $p^k$. The map $x\mapsto\beta(x,\cdot)$ from $G$ to its group of characters is injective, hence bijective, so the character $\beta(g,\cdot)$ has order $p^k$; its image is a cyclic group of order $p^k$, so $\zeta=\beta(g,h)$ is a primitive $p^k$-th root of unity for some $h\in G$. Since $\beta(g,h)^m=\beta(g,mh)$, $p^k$ divides the order of $h$, which is therefore $p^k$. If $ag=bh$ for integers $a$ and $b$, then $1=\beta(g,ag)=\beta(g,bh)=\zeta^b$, so $bh=0$; hence $H=\langle g\rangle+\langle h\rangle$ is the direct sum $\langle g\rangle\oplus\langle h\rangle$, and $\beta(ag+bh,a'g+b'h)=\zeta^{ab'-ba'}$, so $\beta$ is nondegenerate on $H$. Let $H^\perp=\{x\in G:\beta(x,y)=1\text{ for all }y\in H\}$, the kernel of the homomorphism that restricts the characters $\beta(x,\cdot)$ to $H$, so that $\abs{H^\perp}\ge\abs G/\abs H$. Since $H\cap H^\perp=0$, $G=H\oplus H^\perp$, and $\beta$ is nondegenerate on $H^\perp$, because an element of $H^\perp$ that pairs trivially with $H^\perp$ pairs trivially with $G$. Induction on $\abs G$ completes the proof.
\end{proof}

\begin{proof}[Proof of Theorem~\ref{thm:ideal}]
\emph{Reduction.} Let $\Gamma$ be $\Gamma_{\mathrm F}$ or $\Gamma_{\mathrm{NF}}$, and let $\mathcal M$ be $\mathrm{AG}(3,2)$ or $R_8$, respectively, the matroid whose port at $R$ is $\Gamma$ (Supplementary Note~\ref{app:qss-access}). Consider a scheme for $\Gamma$ whose secret and seven shares all have dimension $D$, so that $S(R)=\log D$ and $S(P)\le\log D$ for every player $P$. Every player $P$ lies in a minimal authorized set $T$, so $2S(R)=I(R;T)-I(R;T\setminus P)=I(R;P|T\setminus P)\le2S(P)$, since $\abs{S(P|Y)}\le S(P)$ for every set $Y$, and $S(P)=\log D$. Let $h(X)=S(X)+\sum_{i\in X}S(i)$ be the lift of the entropy vector of the joint state of $R$ and the players, which may be mixed. It is a polymatroid: $h(\emptyset)=0$, $h$ is nondecreasing because $S(Xi)\ge S(X)-S(i)$ (Araki--Lieb~\cite{si:ArakiLieb1970}), and submodular because $S(Xi)+S(Xj)\ge S(X)+S(Xij)$ (strong subadditivity~\cite{si:LiebRuskai1973}). Equation~\eqref{eq:qss-lift} holds for every state, so the access conditions give $h(RT)=h(T)$ for every authorized set $T$ and $h(RT)=h(T)+2\log D$ for every other nonempty set $T$ of players.

\emph{The entropy vector and purity.} Put $g=h/(2\log D)$, so that $g(R)=g(P)=1$ for every player $P$, and $g(RT)-g(T)$ is $0$ for every authorized set $T$ of players and $1=g(R)$ for every other set $T$ of players, the empty set included. Every player $P$ lies in a minimal authorized set $T$, and then $g(R(T\setminus P))-g(T\setminus P)=g(R)$. In the terminology of Mat\'u\v{s}~\cite[Sec.~4]{si:Matus2012}, $R$ is therefore an ideal element of the polymatroid $g$: it is perfect, every player is essential for it, and $g(P)=g(R)$. By his Theorem~2, $g=t\,r$ for the rank function $r$ of a matroid on the eight systems and some $t>0$, and $g(R)=1$ gives $r(R)=1$ and $t=1$. So a set $T$ of players is authorized exactly when $r(RT)=r(T)$, and the circuits of this matroid that contain $R$ are the sets $\{R\}\cup T$ for the minimal authorized sets $T$ of $\Gamma$; since $\Gamma$ is the port of $\mathcal M$ at $R$, these are also the circuits of $\mathcal M$ through $R$. Every system lies in one of them, so both matroids are connected, and a connected matroid is determined by its circuits through one element (Ref.~\cite{si:Lehman1964}, Lemma~(46) on p.~721, whose only hypothesis is that every circuit lies in the union of the circuits through the element); hence $g=r_{\mathcal M}$, the rank function of $\mathcal M$. In particular $g(R\mathcal Q)=4$, and since $g(R\mathcal Q)=S(R\mathcal Q)/(2\log D)+4$, $S(R\mathcal Q)=0$: the joint state of $R$ and the players is a pure state $\ket\Psi$, and $S(X)=(2r_{\mathcal M}(X)-\abs X)\log D$ for every set $X$ of the eight systems. For pure states, Theorem~3 of Mat\'u\v{s}, a quantum counterpart of the theorem of Brickell and Davenport that the access structures of ideal classical schemes are matroid ports~\cite{si:BrickellDavenport1991}, gives $S=t\,r^\vee$, where $r^\vee(X)=r(X)-\frac12\sum_{i\in X}r(i)$, for a tight self-dual matroid $r$ directly, and Lehman's theorem then identifies $r$ with $r_{\mathcal M}$ as above. This part of the proof uses no computer; Lemma~\ref{lem:K-forcing}, applied to $g$, gives the same conclusion by certified linear programs.

Identify the eight systems with the points of $\GF(2)^3$, with $R$ at the origin, as in Sec.~\ref{sec:qss}. The planes of $\mathrm{AG}(3,2)$ are the fourteen sets of four points with zero sum. The planes of $R_8$, the sets of four coplanar vertices of the real cube, are the twelve of them other than the two tetrahedra $\{R,W,X,Y\}$ and $\{A,B,C,Z\}$: the six faces and the six diagonal rectangles. In both matroids every set of at most three points is independent. Hence (A) $\rho_X$ is maximally mixed if $\abs X\le3$, and (B) $S(\Pi)=2\log D$ for every plane $\Pi$ of $\mathcal M$. For a direction $\delta\ne0$ the eight systems form four \emph{$\delta$-pairs} $\{v,v+\delta\}$; we call $\delta$ \emph{good} if any two $\delta$-pairs form a plane of $\mathcal M$. Every direction is good for $\mathrm{AG}(3,2)$. For $R_8$ the good directions are the directions $100$, $010$ and $001$ of the edges of the cube and the direction $111$ of its body diagonals; for each of the directions $110$, $101$ and $011$ of the face diagonals, two of the six unions of two pairs are the two tetrahedra. A plane $\{v,v+\delta,v+\gamma,v+\delta+\gamma\}$ is split into $\delta$-pairs, into $\gamma$-pairs and into $(\delta+\gamma)$-pairs, and the following facts are checked by listing the planes. (G1)~Every plane of $\mathcal M$ is split into pairs along at least two good directions: along all three for $\mathrm{AG}(3,2)$; for $R_8$, along exactly two, a face along the directions of two edges and a diagonal rectangle along that of one edge and that of the body diagonals. (G2)~For any two good directions $\delta\ne\gamma$ and any point $v$, $\{v,v+\delta,v+\gamma,v+\delta+\gamma\}$ is a plane of $\mathcal M$. (G3)~The good directions span $\GF(2)^3$. (G4)~For every good direction $\delta$, some plane of $\mathcal M$ contains no $\delta$-pair: for $R_8$, a face perpendicular to the edges along $\delta$, or any face if $\delta=111$. Fix a basis of every system and put $\ket{\Phi_{vw}}=D^{-1/2}\sum_x\ket x_v\ket x_w$; for every operator $Y$, $(1\otimes Y)\ket{\Phi_{vw}}=(Y^{\mathsf T}\otimes1)\ket{\Phi_{vw}}$.

\emph{Step 1: each good direction gives a GHZ state.} Let $\delta$ be a good direction. By (A) and (B), Lemma~\ref{lem:K-ghz} applies to the four $\delta$-pairs, systems of dimension $q=D^2$: $\ket\Psi=D^{-1}\sum_{j=1}^{D^2}\bigotimes_P\ket{e^P_j}$, the tensor product over the $\delta$-pairs $P$, with orthonormal bases $\{\ket{e^P_j}\}_j$ of the pairs that are unique up to phases and a common relabeling. If $u$ lies in a $\delta$-pair $P'\ne P$, then $\rho_{P\cup\{u\}}=D^{-2}\sum_j\ket{e^P_j}\bra{e^P_j}\otimes\operatorname{Tr}_{P'\setminus u}\ket{e^{P'}_j}\bra{e^{P'}_j}$ is maximally mixed by (A), so every $\ket{e^{P'}_j}$ has maximally mixed marginals: it is maximally entangled. So for a $\delta$-pair $P=\{v,w\}$, written with $v$ first, $\ket{e^P_j}=(A_j\otimes1)\ket{\Phi_{vw}}$ with unitaries $A_j$ on $\mathcal H_v$ such that $\operatorname{Tr}(A_j^\dagger A_k)=D\delta_{jk}$. Let $\mathcal P_\delta(v)$ be the set of scalar multiples of the $A_jA_1^{-1}$, the unitaries that map $\ket{e^P_1}$ to the other $\ket{e^P_j}$ by acting on $v$.

\emph{Step 2: each plane carries a stabilizer group.} Let $\Pi=\{a,b,c,d\}$ (here $d$ is a system, not a local dimension) be a plane of $\mathcal M$ with $\delta$-pairs $\{a,b\}$ and $\{c,d\}$ and $\gamma$-pairs $\{a,c\}$ and $\{b,d\}$, for two good directions $\delta\ne\gamma$ (G1). Let $\ket{e_j}=(A_j\otimes1)\ket{\Phi_{ab}}$ and $\ket{f_j}=(B_j\otimes1)\ket{\Phi_{cd}}$ be the vectors of Step~1 for these $\delta$-pairs, and $\ket{g_k}$ and $\ket{h_k}$ those for the $\gamma$-pairs. By Step~1, $\rho_\Pi=D^{-2}\sum_j\ket{e_jf_j}\bra{e_jf_j}=D^{-2}\sum_k\ket{g_kh_k}\bra{g_kh_k}$, so the support $V$ of $\rho_\Pi$ is spanned by the $\ket{e_j}\ket{f_j}$ and also by the $\ket{g_k}\ket{h_k}$. Let $\Lambda$ be the linear isomorphism from $\mathcal H_a\otimes\mathcal H_b\otimes\mathcal H_c\otimes\mathcal H_d$ to the operators from $\mathcal H_b\otimes\mathcal H_d$ to $\mathcal H_a\otimes\mathcal H_c$ defined on the fixed bases by $\Lambda(\ket x_a\ket y_b\ket z_c\ket w_d)=\ket x_a\ket z_c\bra y_b\bra w_d$ and extended linearly, so that a vector on $b$ and $d$ enters as the bra of the vector with the complex conjugate coordinates. Then $\Lambda(\ket{e_j}\ket{f_j})=D^{-1}A_j\otimes B_j$, where $\mathcal H_b$ and $\mathcal H_d$ are identified with $\mathcal H_a$ and $\mathcal H_c$ through the fixed bases, and $\Lambda(\ket{g_k}\ket{h_k})=\ket{g_k}\bra{\bar h_k}$, where $\ket{\bar h_k}$ has the complex conjugate coordinates. Hence $\Lambda(V)=\{Q\Delta Q'^\dagger:\Delta\ \text{diagonal}\}$ for the unitaries $Q=\sum_k\ket{g_k}\bra k$ and $Q'=\sum_k\ket{\bar h_k}\bra k$, and $A_j\otimes B_j=Q\Delta_jQ'^\dagger$ with $\Delta_j$ diagonal. So the operators $(A_j\otimes B_j)(A_1\otimes B_1)^{-1}=Q\Delta_j\Delta_1^{-1}Q^\dagger$ commute pairwise. With $U_j=A_jA_1^{-1}$ and $U'_j=B_jB_1^{-1}$ this reads $U_jU_k\otimes U'_jU'_k=U_kU_j\otimes U'_kU'_j$, so $U_jU_k=\lambda U_kU_j$ and $U'_jU'_k=\lambda^{-1}U'_kU'_j$ for a scalar $\lambda=\lambda_{jk}$. By Lemma~\ref{lem:K-nice}, the $U_j$ form a nice basis indexed by an abelian group $G_\Pi$, written additively, with commutator form $\beta$, and $\mathcal P_\delta(a)$ is the group of scalar multiples of the $U_j$; the $U'_j$ form one with commutator form $\beta^{-1}$, and with the same multiplication table: by Lemma~\ref{lem:K-nice}~(a), $U'_jU'_k$ is a scalar multiple of $U'_l$ for the $l$ with $U_jU_k\in\mathbb CU_l$, since both have the eigenvalues $\beta(m,l)^{-1}$ under $Y\mapsto U'_mYU'^{-1}_m$. For $k\in G_\Pi$ let
\[
M_k=\bigl(A_1^{-1}U_k^{-1}A_1\bigr)^{\mathsf T},\qquad M'_k=\bigl(B_1^{-1}U'^{-1}_kB_1\bigr)^{\mathsf T},
\]
on $\mathcal H_b$ and $\mathcal H_d$. Then $(U_k\otimes M_k)\ket{e_j}=(U_kU_jU_k^{-1}A_1\otimes1)\ket{\Phi_{ab}}=\beta(k,j)\ket{e_j}$ and $(U'_k\otimes M'_k)\ket{f_j}=\beta(k,j)^{-1}\ket{f_j}$, so
\[
S_k=U_k\otimes M_k\otimes U'_k\otimes M'_k,
\]
on $a,b,c,d$, fixes every $\ket{e_j}\ket{f_j}$, hence every vector of $V$, and $S_k\ket\Psi=\ket\Psi$. If $U_kU_l=cU_{k+l}$ and $U'_kU'_l=c'U'_{k+l}$, then $M_kM_l=c^{-1}M_{k+l}$ and $M'_kM'_l=c'^{-1}M'_{k+l}$, so $S_kS_l=S_{k+l}$, and $\mathcal S_\Pi=\{S_k\}$ is a group of order $D^2$. For $k\ne0$ none of the four factors of $S_k$ is a scalar. The factors lie in $\mathcal P_\delta(a)$, $\mathcal P_\delta(b)$, $\mathcal P_\delta(c)$ and $\mathcal P_\delta(d)$, as defined in Step~1: for $b$, written first, $\ket{e_j}=(A_j^{\mathsf T}\otimes1)\ket{\Phi_{ba}}$ and $A_j^{\mathsf T}(A_1^{\mathsf T})^{-1}=(A_1^{-1}U_jA_1)^{\mathsf T}$. By (G2), every pair $\{v,v+\delta\}$ along a good direction $\delta$ lies in such a plane, together with a second good direction, so every $\mathcal P_\delta(v)$ is a group, the scalar multiples of a nice basis; in particular, it does not depend on the phases of the vectors of Step~1 or on which of them is labeled $1$.

\emph{Step 3: one group for each system.} In the notation of Step~2, the operators $N_k=U_k\otimes U'_k$ on $\mathcal H_a\otimes\mathcal H_c$ commute pairwise and satisfy $\operatorname{Tr}(N_k^\dagger N_l)=D^2\delta_{kl}$, so they span a commutative algebra of dimension $D^2$; their common eigenvectors form an orthonormal basis $\{\ket{\xi_\chi}\}$ of $\mathcal H_a\otimes\mathcal H_c$, with $N_k\ket{\xi_\chi}=\chi(k)\ket{\xi_\chi}$ and distinct eigenvalue tuples $\chi$. In the same way the operators $K_k=M_k\otimes M'_k$ on $\mathcal H_b\otimes\mathcal H_d$, which commute pairwise because $M_kM_l=\beta(k,l)^{-1}M_lM_k$ and $M'_kM'_l=\beta(k,l)M'_lM'_k$, have common eigenspaces of dimension $1$. Since $S_k$ is $N_k\otimes K_k$ up to the order of the factors and fixes $V$, every vector of $V$, written as $\sum_\chi\ket{\xi_\chi}\ket{w_\chi}$, has $K_k\ket{w_\chi}=\chi(k)^{-1}\ket{w_\chi}$ for all $k$. The common eigenspace of the $K_k$ with these eigenvalues has dimension at most $1$; so $\ket{w_\chi}$ is a multiple of a fixed unit vector $\ket{\zeta_\chi}$ if this eigenspace is not zero, and $\ket{w_\chi}=0$ otherwise. Hence $V$ lies in the span of the $\ket{\xi_\chi}\ket{\zeta_\chi}$ over the $\chi$ whose eigenspace is not zero, a space of dimension at most $D^2$; since $V$ has dimension $D^2$, all $D^2$ of these eigenspaces are nonzero and $V=\operatorname{span}\{\ket{\xi_\chi}\ket{\zeta_\chi}\}$. The product vectors in $V$, for the split of $\Pi$ into $\{a,c\}$ and $\{b,d\}$, are then the multiples of the $\ket{\xi_\chi}\ket{\zeta_\chi}$, so by the uniqueness in Lemma~\ref{lem:K-ghz} the $\ket{g_k}$ are the $\ket{\xi_\chi}$, up to phases and order. Write $\ket{\xi_\chi}=(C_\chi\otimes1)\ket{\Phi_{ac}}$. From $N_k\ket{\xi_\chi}=\chi(k)\ket{\xi_\chi}$ we get $U_kC_\chi U'^{\mathsf T}_k=\chi(k)C_\chi$, hence $U_kC_\chi C_{\chi'}^{-1}U_k^{-1}=\chi(k)\chi'(k)^{-1}C_\chi C_{\chi'}^{-1}$, and by Lemma~\ref{lem:K-nice}~(a) every $C_\chi C_{\chi'}^{-1}$ is a scalar multiple of some $U_l$. So $\mathcal P_\gamma(a)\subseteq\mathcal P_\delta(a)$; both consist of the scalar multiples of $D^2$ pairwise orthogonal unitaries, so they are equal. Since any two good directions lie in a plane of $\mathcal M$ through $a$ (G2), the group $\mathcal P(a)=\mathcal P_\delta(a)$ does not depend on the good direction $\delta$. By Step~2, the groups $\mathcal P(a)$ and $\mathcal P(c)$ have, up to scalars, the same multiplication table whenever $a+c$ is a good direction, and by (G3) any two systems are joined by a chain of such steps; so the groups $\mathcal P(v)/\mathbb C^\times$ are all isomorphic to one abelian group $G$ of order $D^2$.

\emph{Step 4: the state is a stabilizer state.} Let $\mathcal S$ be the group generated by the groups $\mathcal S_\Pi$ of Step~2, for all planes of $\mathcal M$ and all choices of good directions, vectors and labels. Its elements are tensor products of elements of the groups $\mathcal P(v)$, and they fix $\ket\Psi$. Any two of them commute up to a scalar, which fixes $\ket\Psi$ and so equals $1$; hence $\mathcal S$ is abelian and contains no scalar other than $1$. It is finite: an element of $\mathcal S$ is determined by its factors up to scalars, since two elements with the same factors up to scalars differ by a scalar in $\mathcal S$, so $\abs{\mathcal S}\le\prod_v\abs{\mathcal P(v)/\mathbb C^\times}=D^{16}$. Fix a good direction $\delta$, its pairs $P_1,\dots,P_4$, and the planes $\Pi_j=P_1\cup P_j$, $j=2,3,4$, and choose the vectors and labels of Steps~1 and~2 once for each pair. On $P_j$ the elements of $\mathcal S_{\Pi_j}$ act as the operators $U'_k\otimes M'_k$ built from the vectors of $P_j$, and on $P_1$ as the operators $U_k\otimes M_k$ built from the vectors and labels of $P_1$ alone, which are therefore the same for $j=2,3,4$; since $(U_k\otimes M_k)(U_l\otimes M_l)=U_{k+l}\otimes M_{k+l}$, these form a group, and each of them is a scalar on both systems of the pair or on neither. If $s_2s_3s_4=1$ with $s_j\in\mathcal S_{\Pi_j}$, restriction to $P_j$ shows $s_j=1$; so $\mathcal S^\delta=\mathcal S_{\Pi_2}\mathcal S_{\Pi_3}\mathcal S_{\Pi_4}$ has $D^6$ elements, each acting on every $\delta$-pair as a scalar on both systems or on neither. Let $\Pi$ be a plane of $\mathcal M$ that does not contain a $\delta$-pair (G4); it meets each $\delta$-pair in one point. An element of $\mathcal S_\Pi$ other than $1$ acts as a non-scalar exactly on the four systems of $\Pi$, so $\mathcal S_\Pi\cap\mathcal S^\delta=\{1\}$ and $\abs{\mathcal S}\ge D^8$. On the other hand, every element of $\mathcal S$ other than $1$ has a factor in some $\mathcal P(v)$ that is not a scalar, which is traceless because it is orthogonal to the scalars in $\mathcal P(v)$; so the projector $\abs{\mathcal S}^{-1}\sum_{s\in\mathcal S}s$ onto the common fixed space of $\mathcal S$ has trace $D^8/\abs{\mathcal S}$, which is at least $1$ because the space contains $\ket\Psi$. Hence $\abs{\mathcal S}=D^8$ and $\ket\Psi\bra\Psi=D^{-8}\sum_{s\in\mathcal S}s$.

\emph{Step 5: Weyl operators.} At each system $v$, the commutator form $\beta_v$ of $\mathcal P(v)$ is a nondegenerate bicharacter of $G$ with $\beta_v(g,g)=1$. Take $g_i$, $h_i$ and $n_i$ as in Lemma~\ref{lem:K-symp} for $\beta_v$; since $G\cong\bigoplus_i\Z_{n_i}^2$, the $n_i$ are determined by $G$ up to order, and $\prod_in_i=D$. Let $u_i,w_i\in\mathcal P(v)$ be unitaries indexed by $g_i$ and $h_i$; since $u_i^{n_i}$ and $w_i^{n_i}$ are scalars, we may rescale them so that $u_i^{n_i}=w_i^{n_i}=1$. Then $u_iw_i=\zeta_iw_iu_i$ with $\zeta_i=\beta_v(g_i,h_i)$, a primitive $n_i$-th root of unity, and $u_i$ and $w_i$ commute with $u_j$ and $w_j$ for $j\ne i$. The $n_i^2$ unitaries $u_i^aw_i^b$, $0\le a,b<n_i$, lie in $\mathcal P(v)$ and are indexed by distinct elements of $G$, so they are pairwise orthogonal, and they span an algebra $\mathcal A_i$ of dimension $n_i^2$ that is closed under multiplication and under adjoints. Its center is trivial: conjugation by $u_i$ and by $w_i$ multiplies $u_i^aw_i^b$ by $\zeta_i^b$ and by $\zeta_i^{-a}$, so a central element has no component on $u_i^aw_i^b$ unless $a=b=0$. So $\mathcal A_i$ is a full matrix algebra $M_{n_i}$, and these algebras commute pairwise. A full matrix subalgebra $\mathcal A\cong M_k$ of $M_D$ splits $\mathbb C^D$ as $\mathbb C^k\otimes\mathbb C^{D/k}$ with $\mathcal A=M_k\otimes1$, and its commutant $1\otimes M_{D/k}$ contains the algebras that commute with it; by induction, and since $\prod_in_i=D$, $\mathbb C^D=\bigotimes_i\mathbb C^{n_i}$, with $u_i$ and $w_i$ acting on the $i$-th factor alone. On that factor, $w_i$ has eigenvalues among the $n_i$-th roots of unity, and $u_i$ maps each eigenspace of $w_i$ onto another, permuting the eigenvalues $\mu\,\zeta_i^{-x}$ cyclically; hence these eigenspaces have dimension $1$, and in the basis $u_i^x\ket{\eta_i}$, $x\in\Z_{n_i}$, for an eigenvector $\ket{\eta_i}$ of $w_i$, the operators $u_i$ and $w_i$ are, up to scalars, the Weyl operators $X$ and $Z^{c_i}$ for a unit $c_i$ of $\Z_{n_i}$. Every element of $\mathcal P(v)$ is a scalar multiple of a product of powers of the $u_i$ and $w_i$, since the $g_i$ and $h_i$ generate $G$. Hence there are unitaries $T_v$ such that $T_v^\dagger\mathcal P(v)T_v$ consists of the scalar multiples of the Weyl operators of qudits of dimensions $n_1,\dots,n_r$, and $(\bigotimes_vT_v^\dagger)\ket\Psi$ is fixed by an abelian group of $D^8$ tensor products of such Weyl operators times scalars, without scalars other than $1$. So, up to local unitaries, $\ket\Psi$ is a stabilizer state of $8r$ qudits, $r$ on each system, of dimensions $n_1,\dots,n_r$.

\emph{Step 6: the dimension.} Identify $G$ with $\bigoplus_i\Z_{n_i}^2$, the phase space of the qudits of one system, and write each element of the group $\mathcal S$ of Step~5, after the local unitaries $T_v$, as a scalar times a tensor product of Weyl operators $W(x_v)$, $x_v\in G$, one on each system $v$. Since $W(x)W(y)$ is a scalar multiple of $W(x+y)$ and $\mathcal S$ contains no scalar other than $1$, the map that sends such an element to $(x_v)_v$ is an injective homomorphism; let $L\le\bigoplus_vG$ be its image, so $\abs L=D^8$. For a set $X$ of systems, let $\mathcal S_X$ be the subgroup of the elements whose Weyl operators outside $X$ are the identity. Weyl operators other than scalars are traceless, so tracing $\ket\Psi\bra\Psi=D^{-8}\sum_{s\in\mathcal S}s$ over the complement of $X$ gives $\rho_X=D^{-\abs X}\sum_{s\in\mathcal S_X}s|_X$, where $s|_X$ is the factor of $s$ on $X$; as in the definition of stabilizer states, this is $\abs{\mathcal S_X}D^{-\abs X}$ times a projector of rank $D^{\abs X}/\abs{\mathcal S_X}$, so $S(X)=\abs X\log D-\log\abs{\mathcal S_X}$. Since $\mathcal S_X$ corresponds to the kernel of the projection of $L$ onto the complement $\bar X$ of $X$, $\abs{\mathcal S_X}=D^8/\abs{\pr_{\bar X}L}$, and by purity
\begin{align*}
  \log\abs{\pr_{\bar X}L}&=S(\bar X)+\abs{\bar X}\log D=h(\bar X)\\
  &=2r_{\mathcal M}(\bar X)\log D=r_{\mathcal M}(\bar X)\log\abs G .
\end{align*}
So Lemma~\ref{lem:K-two} applies to $L$ with $V=V_R=G$, since $r_{\mathcal M}(R)=1$. For $\Gamma_{\mathrm F}$ it gives $2x=0$ for every $x\in G$, so every $n_i$ is $2$: up to the local unitaries $T_v$, $\ket\Psi$ is a stabilizer state of qubits, $r$ on each system, and $D=2^r$. For $\Gamma_{\mathrm{NF}}$ it gives that $\abs G=D^2$ is odd, so $D$ and every $n_i$ are odd: up to the local unitaries $T_v$, $\ket\Psi$ is a stabilizer state of qudits of the odd dimensions $n_1,\dots,n_r$ on each system.

Conversely, if $D$ is odd, the cube state over $\Z_D$ is a pure scheme for $\Gamma_{\mathrm{NF}}$ whose secret and shares have dimension $D$ (proof of Proposition~\ref{prop:qss}~(iv) in Supplementary Note~\ref{app:qss-ideal}); and if $D=2^k$, then $k$ copies of the qubit cube state form a pure scheme for $\Gamma_{\mathrm F}$ whose secret and shares have dimension $D$, since a tensor product of schemes for the same access structure is a scheme for it (Supplementary Note~\ref{app:qss-contract}).
\end{proof}

Steps~1--6 use the scheme only through the pure state $\ket\Psi$ of eight systems of dimension $D$ and its entropies $S(X)=(2r_{\mathcal M}(X)-\abs X)\log D$. They therefore prove an entropic form of the rigidity: every pure state of eight systems of dimension $D$ with these entropies is, up to a unitary on each system, a stabilizer state, of qubits for $\mathcal M=\mathrm{AG}(3,2)$ and of qudits of odd dimensions for $\mathcal M=R_8$, and such a state exists if and only if $D$ is a power of $2$, respectively odd. With $R$ as the reference, such a state is an ideal scheme for $\Gamma$, since these entropies give $I(R;T)\in\{0,2\log D\}$ for every set $T$ of players.

\emph{Codes that correct at least these erasures.} Let a code with one system of dimension $D$ per site encode one system of dimension $D$, and let it correct the erasure of every set of sites that is not in $\Gamma$, for $\Gamma=\Gamma_{\mathrm F}$ or $\Gamma_{\mathrm{NF}}$. With the reference, the code is a pure state, so $I(R;T)+I(R;\bar T)=2S(R)$ for every set $T$ of sites and its complement $\bar T$; the erasure of $T$ is correctable exactly when $I(R;\bar T)=2S(R)$, that is, when $I(R;T)=0$. Both access structures are self-dual: a set $T$ is in $\Gamma$ exactly when $\bar T$ is not, since they are the ports of the identically self-dual matroids $\mathrm{AG}(3,2)$ and $R_8$ (for $\Gamma_{\mathrm F}$, a set of points of the Fano plane contains no line exactly when its complement contains one, because any two lines meet and every set that meets all lines contains one). Hence every $T\notin\Gamma$ has $I(R;T)=0$, and every $T\in\Gamma$, whose complement is not in $\Gamma$, has $I(R;T)=2S(R)$: the code has exactly the recovery pattern $\Gamma$, and Theorem~\ref{thm:ideal} applies to it.

\emph{Approximate schemes in the excluded dimensions.} For a state $\rho$ of $R$ and the seven players, each of dimension $D$, let
\begin{align*}
  \varepsilon_\Gamma(\rho)={}&\abs{S(R)-\log D}+\sum_{T\in\Gamma}\bigl(2S(R)-I(R;T)\bigr)\\
  &+\sum_{T\notin\Gamma}I(R;T),
\end{align*}
the sums running over the sets $T$ of players; every term is nonnegative, since $0\le I(R;T)\le2S(R)$, and $\varepsilon_\Gamma(\rho)=0$ exactly when $\rho$ is a scheme for $\Gamma$ whose secret has dimension $D$. The von Neumann entropy is continuous on the compact set of states on these eight systems, so $\varepsilon_\Gamma$ attains its minimum, and by Theorem~\ref{thm:ideal} this minimum is positive whenever $D$ is not a power of $2$ (for $\Gamma_{\mathrm F}$) or $D$ is even (for $\Gamma_{\mathrm{NF}}$). So in these dimensions every state violates the access conditions by at least a positive amount $\varepsilon_D$ that depends only on $D$ and $\Gamma$; the argument gives no value of $\varepsilon_D$.

For $D=2^k$ the theorem says that every ideal scheme for $\Gamma_{\mathrm F}$, and so every code that encodes $k$ qubits with Steane's recovery pattern and $k$ qubits per site, is, up to local unitaries, a stabilizer state of qubits; $k$ copies of the qubit cube state are an example. In particular, for $D=4$ no ideal scheme is, up to local unitaries, a stabilizer state of one $\Z_4$ qudit per system, in line with Proposition~\ref{prop:qss}~(iv). Likewise, for every ideal scheme for $\Gamma_{\mathrm{NF}}$, and so for every code that encodes one qudit of odd dimension $D$ with the recovery pattern of the seven-qudit color code (Supplementary Note~\ref{app:qss-color}) and one such qudit per site, the joint state with the reference is a stabilizer state up to local unitaries, of qudits of odd prime-power dimensions; and if $D$ is even but not a power of $2$, as for $D=6$, neither $\Gamma_{\mathrm F}$ nor $\Gamma_{\mathrm{NF}}$ has an ideal scheme. Such rigidity is a property of the access structure. The pure ideal schemes of the $((2,3))$ threshold structure, in which any two of three players recover the secret, are exactly the absolutely maximally entangled states of four systems of dimension $D$, with any of the systems as the reference: these entropies give $I(R;P)=0$ and $I(R;PP')=2\log D$ for distinct players $P$ and $P'$; conversely, in an ideal scheme every player has $S(P)=\log D$, as in the proof above, and $I(R;P)=0$ gives $S(RP)=2\log D$, which by purity is also the entropy of the other two players. For $D=3$ these states are unique up to local unitaries~\cite{si:Rather2023}, and the CSS state of a $[4,2,3]$ code over $\GF(3)$, such as the tetracode, is one of them, so there the structure is rigid. For $D=6$ they exist~\cite{si:Rather2022}, but none of them is a stabilizer state up to local unitaries in the sense of the proof above. Such a stabilizer state has, after a unitary on each system, a qubit and a qutrit on each system, and its labels split into their Sylow components, as in Supplementary Note~\ref{app:qss-contract}; both are isotropic, so their orders are at most $2^4$ and $3^4$, hence equal to them, and the state is the tensor product of a stabilizer state of one qubit per system and one of one qutrit per system. Its two-system entropies are then at most $2\log6$, with equality only if the qubit factor is absolutely maximally entangled, and no state of four qubits is~\cite{si:Higuchi2000}. So the code of the $36$ officers~\cite{si:Rather2022}, which has the recovery pattern of this structure, is not a stabilizer code up to local unitaries, and no stabilizer code that encodes one qudit of dimension $6$ with one such qudit per site has that pattern. Steps~1--5 of the proof above do not apply to absolutely maximally entangled states of four systems, since their three-system marginals have entropy $\log D$, not the $3\log D$ of a maximally mixed state of three systems. The script \path{checks/check_rev_qss.py}, part (s), checks that the CSS state of the tetracode is absolutely maximally entangled and that no graph state of four qubits, and none of four $\Z_6$ qudits, is.

\emph{Uniqueness for qubits.} For $D=2$ the theorem can be sharpened: every code that encodes one qubit into seven qubits and corrects every erasure that Steane's code corrects is, up to a unitary on each qubit, Steane's code. Indeed, by the proof above, after a unitary on each system the joint state of $R$ and the sites is a stabilizer state of one qubit per system, whose labels form a subgroup $L$ of $V^8$, $V=\GF(2)^2$, with $\abs{\pr_XL}=\abs V^{r_{\mathcal M}(X)}$ for every set $X$ of systems, $\mathcal M=\mathrm{AG}(3,2)$. By the proof of Lemma~\ref{lem:K-two}, after a group automorphism of $V$ at each site, the labels of the stabilizers that act trivially on $R$ form the group $L''=\{(u,v,w,u+v,u+w,v+w,u+v+w):u,v,w\in V\}$, which is the group of labels of the stabilizers of Steane's code, the $X$-type and $Z$-type operators supported on the codewords of the $[7,3]$ simplex code. Every group automorphism of $\GF(2)^2$ preserves its symplectic form, since $\mathrm{GL}(2,\GF(2))=\mathrm{SL}(2,\GF(2))=\mathrm{Sp}(2,\GF(2))$, so it is induced, up to signs, by conjugation with a single-qubit Clifford unitary. After these Clifford unitaries and a Pauli operator that fixes the signs, the stabilizers that act trivially on $R$ are those of Steane's code, and the code space, the support of the reduced state on the sites, is their common $+1$ eigenspace. For $D=2^k$ with $k\ge2$ the automorphisms of $\GF(2)^{2k}$ need not preserve the symplectic form, and we do not know whether the codes are unique up to local unitaries; nor do we know it for $\Gamma_{\mathrm{NF}}$.

In network coding, networks built on the Fano and non-Fano matroids show that linear network coding depends on the characteristic~\cite{si:DFZ2005insufficiency}. For secret sharing, the classical counterpart of the statements of Theorem~\ref{thm:ideal} on the dimension is due to Mat\'u\v{s}: the Fano matroid has representations by partitions only in degrees that are powers of $2$, and the non-Fano matroid only in odd degrees (Ref.~\cite{si:Matus1999}, Prop.~4.1 and its proof). These matroids are $\mathrm{AG}(3,2)$ and $R_8$ contracted at any point, and the representations by partitions of degree $D$ of a matroid describe the ideal classical schemes for its ports whose secret takes $D$ values (Refs.~\cite{si:BrickellDavenport1991} and~\cite[Sec.~5.3]{si:Matus1999}). Conditioning an ideal classical scheme for $\Gamma_{\mathrm F}$ on a value of the share of a player $q$ gives an ideal scheme for $\Gamma_{\mathrm F}/q$ with the same secret, since $q$ alone learns nothing about the secret, and $\Gamma_{\mathrm F}/q$ is a port of the Fano matroid, as $\Gamma_{\mathrm{NF}}/A$ is one of the non-Fano matroid (Supplementary Note~\ref{app:qss-contract}). With the cube codes over $\GF(2^k)$ and over $\Z_D$ for odd $D$, ideal classical schemes for $\Gamma_{\mathrm F}$ whose secret takes $D$ values therefore exist exactly when $D$ is a power of $2$, and for $\Gamma_{\mathrm{NF}}$ exactly when $D$ is odd. The proof above does not use this result, and Theorem~\ref{thm:ideal} is a quantum counterpart of both of its halves. For $\Gamma_{\mathrm{NF}}$ the proof needs only four of the seven directions of the cube: along the three directions of the face diagonals the four pairs of systems do not form a GHZ state of local dimension $D^2$, since two of the unions of two pairs are the tetrahedra $\{R,W,X,Y\}$ and $\{A,B,C,Z\}$, whose entropy is $4\log D$ rather than $2\log D$.
\section{Further discussion}
\label{app:discussion}

\emph{Composite local dimensions.} For composite local dimensions, by Eq.~\eqref{eq:union}, $\Stab_n=\Stab^{\mathrm{pr}}_n$ if and only if the self-dual slices of $\Lab_{n+1}$ and $\Lc_{n+1}$ coincide; up to the duality of finite abelian groups, $\Lab_m$ and $\Lc_m$ are the abelian and the mixed-linear entropy cones of Khazaei, who notes that the two agree for $m\le5$ and that it is not known whether they can differ~\cite{si:Khazaei2026}. For squarefree $d$ nothing new arises: a group of exponent dividing $d$ is the direct sum of its Sylow subgroups, which are vector spaces over prime fields, and every subgroup splits accordingly, so $\StabZ{d}{n}\subseteq\Stab^{\mathrm{pr}}_n\subseteq\QLR_n$. When $d$ is divisible by the square of a prime, we do not even know whether $\StabZ{d}{n}\subseteq\QLR_n$ for $n\ge6$: through the doubled codes of Sec.~\ref{sec:sandwich}, this inclusion would imply $\LcZ{d}{n}\subseteq\Lc_n$. The $\Z_4$ cube state lies outside the stabilizer cone of every prime local dimension, and the $\Z_4$ seven-point state outside the qubit cone, but both lie in the closed convex hull of the union of the prime cones (Supplementary Note~\ref{app:cubes-z4}). We do not know whether a stabilizer state of such a local dimension can have an entropy vector outside $\Stab^{\mathrm{pr}}_n$; such a state would give a subgroup rank function that violates a linear rank inequality valid over every field. If there is none, then, at the level of cones, composite local dimensions add nothing beyond tensor products of states of prime local dimensions.

\emph{Open cases at six parties.} The two six-party directions differ in reach: $\fnine$ holds for all odd local dimensions at once, so the Steane state lies outside the closed convex hull of the union of their cones, whereas the entropy vector of the non-Fano state, normalized to entropy $1$ on every system, is the limit of normalized entropy vectors of $\Z_{2^a}$ stabilizer states as $a\to\infty$ (Methods, after Theorem~\ref{thm:twopowers}), so that no single inequality separates it from the cones of all powers of two. At six parties, whether $\StabZ{d}{6}\subseteq\StabZ{d'}{6}$ for $d\nmid d'$ remains open unless $d$ is even and $d'$ odd (Theorem~\ref{thm:steane}) or $d'=2$ (Theorem~\ref{thm:reverse}), or more generally $d'$ is a power of $2$ (Theorem~\ref{thm:twopowers}). Open cases include two odd primes, and $\Z_4$ against $\Z_6$: the $\Z_4$ seven-point state, with entropy vector $\SF+\SNF$, lies in $\StabZ{2}{6}+\StabZ{3}{6}\subseteq\StabZ{6}{6}$. For two primes $p<q$, the cones coincide up to four parties and are incomparable from $2p+2$ parties on (Theorems~\ref{thm:steane}, \ref{thm:reverse} and~\ref{thm:primes}); whether $2p+1$ or fewer parties suffice is open, and for $p=2$ this is the five-party question of the Discussion. For that question, the six-variable linear rank inequalities would matter: they are not known completely, and Dougherty reported computations showing that there are more than one billion sharp ones, counting the forms obtained by permuting the variables~\cite{si:Dougherty2014}. We have not examined whether the seven-variable inequalities that the transfer arguments yield give new bounds on the rates of linear network codes~\cite{si:DFZ2015rate}.

\emph{A conjecture on common information.} Linden \emph{et al.} conjectured that ``the closure of the cone generated by the entropy vectors of stabiliser states is identical to that obtained when inequalities obtained from common information [\dots] are added to the classical ones''~\cite[Sec.~5]{si:LMRW2013}. If these inequalities are applied to $n$ of the $n+1$ systems, as in $\QLR_n$, the conjecture fails for five parties (Theorem~\ref{thm:five}): every five-variable linear rank inequality follows from common information~\cite{si:DFZ2009}, but~\eqref{eq:star} comes from the six-variable inequality~\eqref{eq:star-h}, which needs all six systems. If they are applied to the lift on all $n+1$ systems, they hold for subgroup arrangements and hence, by Proposition~\ref{thm:reduction}, for stabilizer states of every local dimension, so they cannot describe the cone of a single local dimension as soon as that of another is not contained in it: for qubits and every odd local dimension from six parties on (Theorems~\ref{thm:steane} and~\ref{thm:reverse}), and for every $d$ with enough parties (Theorem~\ref{thm:divisibility}). For the cone $\Stab_n$ of all local dimensions together, Eq.~\eqref{eq:union} turns the conjecture into the question whether every self-dual polymatroid on $n+1$ elements that satisfies all inequalities obtained from common information lies in $\Lab_{n+1}$. In this form the conjecture holds for up to four parties (Ref.~\cite{si:LMRW2013} and Proposition~\ref{thm:four}); for five parties, the inequalities obtained from a single common information do not suffice (Supplementary Note~\ref{app:five-oneci}), and we do not know whether those obtained from several do. Linden \emph{et al.} also noted that it was not clear whether stabilizer states satisfy the additional linear rank inequalities whose existence had been shown at the time~\cite[Sec.~5]{si:LMRW2013}; for prime local dimensions they satisfy all linear rank inequalities on the $n+1$ systems in balanced form (Corollary~\ref{cor:balanced}, whose ``only if'' direction is in essence due to Gross and Walter~\cite{si:GrossWalter2013}).
\section{Results and the repository}
\label{app:repo}

\subsection{Requirements, gates and verification scripts}
\label{app:repo-verify}

All data, certificates and code are in the repository of Ref.~\cite{si:CaoSong2026data}, whose release \reprelease{} is archived at Zenodo. Our code is under the MIT license and the data are under the CC BY 4.0 license; the repository also redistributes a binary of Normaliz under its own license (GPL-3), which is used only for cross-checks. The code requires Python~3.8 or later with NumPy and SciPy (and SymPy for the verification scripts below), a C compiler for the search programs, and, optionally, \texttt{lrs}~\cite{si:lrs} for the exact enumeration in Proposition~\ref{thm:four}. The command
\begin{center}
\texttt{python3 selftest.py -{}-full}
\end{center}
runs $59$ gate checks and must end with \texttt{ALL PASS}; it takes about $20$ minutes on a computer with two cores. The gates re-derive the stored certificates in exact rational arithmetic with a checker that is separate from the code that produced them, recompute stored realizations from their graph states (some gates only on a sample), and compare the data with a manifest of hashes; a few gates check stored data with the same modules that produced them. Without \texttt{lrs}, the exact enumeration of Proposition~\ref{thm:four} is skipped and only the stored rays are checked. The notes \path{docs/A2_stabilizer_vs_QLR.md} (the A2 note) and \path{docs/R32_reduction_and_local_dimension.md} (the R32 note) contain the original statements and proofs. They are working notes written during this work, mostly before this paper. Their numbering differs from ours (Supplementary Table~\ref{tab:repro}), and so does some notation: they write $E$, $\mathrm{LR}^{(p)}$, $\mathrm{LR}$, $\pi$ and $\Stab_n$ where we write $N$, $\LcZ{p}{m}$, $\Lc_m$, $\lambda$ and $\Stab^{\mathrm{pr}}_n$, and they measure the entropies of $\Z_p$ qudits in units of $\log p$.

The numbers quoted in the Results and in the Supplementary Information were recomputed by separate scripts (last column of Supplementary Table~\ref{tab:repro}). They were written for this paper, also with AI assistance (Sec.~\ref{app:repro-ai}), separately from the code that produced the results: they read the stored data but not the repository code, except for the transcriptions of the published inequalities of Supplementary Table~\ref{tab:eight}, which were checked against the cited sources term by term. Apart from comparisons with explicit state vectors, which are numerical, the scripts compute ranks, group orders and certificates exactly, in integer and rational arithmetic. The facet count of Supplementary Note~\ref{app:five-catalogue} uses ranks modulo the prime $1{,}000{,}003$; since the rank of an integer matrix over $\Q$ is at least its rank modulo any prime, rank $30$ modulo this prime gives $30$ linearly independent vectors on the hyperplane, which certifies the count. For the realizations of Sec.~\ref{sec:scale}, the script computes no Smith forms: for each cut block it computes the rank over $\Q$ and one nonzero minor of maximal order by fraction-free elimination, and the rank modulo each prime factor of this minor; since every prime that divides a Smith invariant of the block divides this minor, this determines the exceptional primes. The scripts are included in the archived release, in the directory \texttt{checks}; the results that are not in the preprint arXiv:2610.18905v1, in Secs.~\ref{sec:qss}, \ref{sec:six}, \ref{sec:dimension} and~\ref{sec:fourfive}, in the Discussion and Supplementary Note~\ref{app:discussion}, and in the Methods, are checked by \path{check_rev_sec4.py}, \path{check_rev_stab5.py}, \path{check_rev_sec5.py}, \path{check_rev_sec6.py}, \path{check_rev_qss.py}, \path{check_color_code.py} and \path{check_rev_blind_pm.py}. The scripts \path{check_thm5.py} and \path{check_rev_sec4.py} need only the Python standard library and no data files; the first checks Theorem~\ref{thm:five} except its facet statement, which the second checks. To check the results from the archive, unpack release \reprelease{} (Ref.~\cite{si:CaoSong2026data}), run \texttt{python3 selftest.py -{}-full} in its top directory, and run each Python script in the last column of Supplementary Table~\ref{tab:repro} with \texttt{python3} in the directory \texttt{checks}, where the scripts find the data (\path{check_sec3.py} compiles \path{q2_gf3.c} with \texttt{gcc}, and \path{check_sec4a.py} needs \texttt{lrs}); each ends with \texttt{ALL OK} if its checks pass, except \path{check_sec4b_facet.py}, which prints the minimum of the left side of~\eqref{eq:star} minus the right side over the vectors that it checks, which must be~$0$, and the rank of the vectors at which this minimum is attained, which must be~$30$.

\subsection{Where each result is}
\label{app:repo-where}

Table~\ref{tab:repro} locates every result of this paper in the repository of Ref.~\cite{si:CaoSong2026data} (release \reprelease{}): the working note that states it, the self-test gates that check it (Methods), and the separately written script that recomputes the numbers quoted here. The working notes number their results differently (Supplementary Note~\ref{app:repo-verify}).

\begin{table*}[t]
\caption{Where each result of this paper is stated in the repository, which self-test gates check it, and which of the scripts written separately for this paper recomputes the numbers quoted here. SN: Supplementary Note.}
\label{tab:repro}
\begin{ruledtabular}
\footnotesize
\begin{tabular}{p{3.9cm}p{6.1cm}p{2.7cm}p{4.0cm}}
\raggedright This paper &\raggedright  Repository note &\raggedright  Gates &\raggedright  Separate check\tabularnewline
\hline
\raggedright Prop.~\ref{thm:reduction}, Cors.~\ref{cor:doubling}, \ref{cor:divisibility} and~\ref{cor:balanced} &\raggedright  R32: Thm.~1 (\S2), \S3.1, Thm.~1(d) and Thm.~5 (\S4.6) &\raggedright  G29 &\raggedright  \path{check_sec3.py}, \path{check_sec3_lift.py}\tabularnewline
\raggedright Example~\ref{ex:q2} &\raggedright  A2: \S5, \S8 &\raggedright  G20, G28 &\raggedright  \path{check_sec3.py}, \path{q2_gf3.c}\tabularnewline
\raggedright Prop.~\ref{thm:four}, Table~\ref{tab:four} &\raggedright  R32: Thm.~7 (\S4.7) &\raggedright  G29, G29b, G33 &\raggedright  \path{check_sec4a.py}\tabularnewline
\raggedright Thm.~\ref{thm:five}, Table~\ref{tab:witness}, SN~\ref{app:five-star}, \ref{app:five-dfz} &\raggedright  A2: \S3, \S4 &\raggedright  G20, G22; none for the facet for every $d$ &\raggedright  \path{check_thm5.py} (no data), \path{check_sec4b.py}, \path{check_sec4b_facet.py}, \path{check_dfz6.py}; facet for every $d$: \path{check_rev_sec4.py}\tabularnewline
\raggedright Sec.~\ref{sec:scale}, SN~\ref{app:five-catalogue}, \ref{app:five-oneci} &\raggedright  A2: \S4, \S6, \S7 &\raggedright  G21, G23, G25, G26 &\raggedright  \path{check_sec4b.py}, \path{check_dfz6.py}, \path{check_realizations.py}\tabularnewline
\raggedright Prop.~\ref{prop:sandwich} &\raggedright  R32: Prop.~3 and its corollary (\S4.3), for primes &\raggedright  G30 &\raggedright  ---\tabularnewline
\raggedright Lemma~\ref{lem:JI} &\raggedright  R32: Lemma~4 (\S4.4), Lemma~4(G) (\S4.6) &\raggedright  G31b, G32b &\raggedright  \path{check_sec5.py}\tabularnewline
\raggedright Lemma~\ref{lem:cube-selfdual}, Thm.~\ref{thm:primes} &\raggedright  R32: Thms.~2 and~2$'$ (\S\S4.1--4.4) &\raggedright  G30, G31 &\raggedright  \path{check_sec5.py}\tabularnewline
\raggedright Suppl.\ Prop.~\ref{prop:onecut} &\raggedright  R32: Prop.~6 (\S4.7) &\raggedright  G33, G33b &\raggedright  \path{check_sec5.py}\tabularnewline
\raggedright Thm.~\ref{thm:divisibility}, Suppl.\ Prop.~\ref{prop:galois}, SN~\ref{app:cubes-z4} &\raggedright  R32: Thm.~5 (\S4.6) &\raggedright  G28, G32, G32b &\raggedright  \path{check_sec5.py}\tabularnewline
\raggedright Prop.~\ref{prop:blind}, Table~\ref{tab:eight} &\raggedright  R32: Prop.~8 (\S4.7) &\raggedright  G33, G34 &\raggedright  \path{check_sec6.py}; the first inequality of Ref.~\cite{si:PenaMacias2023}, Cor.~2.7, $n=3$ (SN~\ref{app:blind}): \path{check_rev_blind_pm.py}\tabularnewline
\raggedright Suppl.\ Lemma~\ref{lem:transfer}, Thm.~\ref{thm:steane}, Suppl.\ Prop.~\ref{prop:faces} &\raggedright  R32: Lemmas~0 and A--E, Thm.~9, faces (\S4.8) &\raggedright  G35, G35b &\raggedright  \path{check_sec6.py}\tabularnewline
\raggedright SN~\ref{app:misprints} &\raggedright  R32: \S4.5; \path{scripts/s28_generalized_cubes.py} with \texttt{-{}-dfz1401} &\raggedright  G34, G34b &\raggedright  \path{check_sec6.py}\tabularnewline
\raggedright Suppl.\ Lemma~\ref{lem:JIstar}; Thms.~\ref{thm:primes} ($n=2p+2$) and~\ref{thm:divisibility} &\raggedright  R32: Lemma~4$^*$, Thm.~17, Cor.~19 (\S4.13) &\raggedright  G42, G42b &\raggedright  \path{check_rev_sec5.py}\tabularnewline%
\raggedright Thm.~\ref{thm:reverse}, Prop.~\ref{prop:f12}, SN~\ref{app:reverse-g12} &\raggedright  R32: Thm.~12, Cor.~13 (\S4.11) &\raggedright  G39, G39b &\raggedright  \path{check_rev_sec6.py}\tabularnewline%
\raggedright Thm.~\ref{thm:twopowers} &\raggedright  R32: Thm.~14, Lemma~4(G$'$) (\S4.12) &\raggedright  G40, G40b &\raggedright  \path{check_rev_sec6.py}\tabularnewline%
\raggedright Sec.~\ref{sec:scale}, SN~\ref{app:five-catalogue} (realizations for every $d$; facets of $\StabZ{d}{5}$) &\raggedright  R32: \S4.9 (realizations); A2: \S4, R48 addendum (facets) &\raggedright  G36, G36b, G37 (realizations); none for the facets &\raggedright  \path{check_rev_stab5.py}\tabularnewline
\raggedright Lemma~\ref{lem:fano-odd}, Suppl.\ Lemmas~\ref{lem:qss-linear} and~\ref{lem:qss-css}, Prop.~\ref{prop:qss}, SN~\ref{app:qss}; distances (SN~\ref{app:margins}) &\raggedright  R32: Lemmas~20, 22, 27 and~31a, Props.~21, 23--25 and~28--31 (\S\S4.15, 4.17, 4.19--4.21), \S4.16 &\raggedright  G43, G43b (the comparison program of SN~\ref{app:qss-lp}); none for Lemmas~\ref{lem:qss-css} and~\ref{lem:fano-odd}, Prop.~\ref{prop:qss}~(i), (iii), (iv), the bound for decompositions and the upper bound $1/12$ &\raggedright  \path{check_rev_qss.py}\tabularnewline
\raggedright SN~\ref{app:cubes-sandwich} (code states of the simplex codes) &\raggedright  R32: Prop.~26 (\S4.18) &\raggedright  none &\raggedright  \path{check_rev_sec6.py}\tabularnewline
\raggedright SN~\ref{app:qss-color} (the seven-qudit color code; erasures) &\raggedright  --- &\raggedright  none &\raggedright  \path{check_color_code.py}; the erasure counts and the search over qubit codes: \path{check_rev_qss.py}, part (r)\tabularnewline
\raggedright Thm.~\ref{thm:ideal}, SN~\ref{app:qss-general} (codes that are not stabilizer codes) &\raggedright  --- &\raggedright  none (no computer-assisted step; the forcing lemma, as for Prop.~\ref{prop:qss}~(iv), is a cross-check) &\raggedright  the forcing lemma: \path{check_rev_qss.py}, part (k); the planes and good directions of $\mathrm{AG}(3,2)$ and $R_8$, (G1)--(G4): part (q); the numbers of qudits for large secrets: part (o); the $((2,3))$ threshold structure: part (s)\tabularnewline
\end{tabular}
\end{ruledtabular}
\end{table*}

Supplementary Table~\ref{tab:summary} lists the main results, what they build on, and their proof status.

\begin{table*}[t]
\caption{The main results, what they build on, and their proof status. ``Exact'' means that the proof rests on rational identities verified in exact arithmetic; ``computer-assisted'' means that part of the proof is a finite computation whose data and code are public (Methods). The access structures $\Gamma_{\mathrm F}$ and $\Gamma_{\mathrm{NF}}$ are those of Sec.~\ref{sec:qss}. SN: Supplementary Note.}
\label{tab:summary}
\begin{ruledtabular}
\footnotesize
\begin{tabular}{lp{1.9cm}p{4.8cm}p{4.9cm}p{3.5cm}}
Result &\raggedright  Parties &\raggedright  Statement &\raggedright  Builds on &\raggedright  Proof\tabularnewline
\hline
Thm.~\ref{thm:ideal} &\raggedright  seven players &\raggedright  codes with one qudit of dimension $D$ per site, and ideal schemes, stabilizer or not, with secret and shares of dimension $D$: for $\Gamma_{\mathrm F}$ iff $D$ is a power of $2$, for $\Gamma_{\mathrm{NF}}$ iff $D$ is odd; up to local unitaries, stabilizer states of qubits ($\Gamma_{\mathrm F}$; Steane's code for $D=2$) or of odd-dimensional qudits ($\Gamma_{\mathrm{NF}}$) &\raggedright  polymatroids with an ideal element~\cite{si:Matus2012}; a connected matroid is determined by its circuits through one element~\cite{si:Lehman1964}; equality in strong subadditivity~\cite{si:HJPW2004}; nice error bases~\cite{si:Knill1996}; symplectic bases; classical counterpart~\cite{si:Matus1999} (not used in the proof; its proof also gives Suppl.\ Lemma~\ref{lem:K-two}, SN~\ref{app:qss-ideal}); rigidity and uniqueness for $D=2$ (new) &\raggedright  proof (SN~\ref{app:qss-general}), without computer\tabularnewline
Prop.~\ref{prop:qss} &\raggedright  seven players &\raggedright  stabilizer secret sharing: ratio $\ge4/3$, also on average, for $\Gamma_{\mathrm F}$ in every odd local dimension and, for secrets of whole qudits, whenever $d$ has an odd prime factor; ratio $\ge4/3$ for $\Gamma_{\mathrm{NF}}$ with qubits, Galois qudits of characteristic $2$ or, for secrets of whole qudits, $d\equiv2\pmod4$; for $\Z_d$ qudits, ideal schemes exactly for $d=2$ ($\Gamma_{\mathrm F}$) and for odd $d$ ($\Gamma_{\mathrm{NF}}$); schemes with ratio $7/5$ for $\Gamma_{\mathrm F}$ ($d$ odd, or $d\equiv2\pmod4$ and $d\ne2$) and for $\Gamma_{\mathrm{NF}}$ ($d\equiv2\pmod4$) &\raggedright  linear schemes~\cite{si:JafariKhazaei2021,si:PenaMacias2023}; Lemma~\ref{lem:fano-odd}, for groups of odd order (new; it extends an argument of Ref.~\cite{si:JafariKhazaei2021}); $(\lambda,\omega)$-decompositions~\cite{si:vDJM1998}; quantum schemes from linear ones~\cite{si:Smith2000}; decomposition into prime powers; ideal schemes from identically self-dual matroids~\cite{si:SarvepalliRaussendorf2010}; representations by partitions~\cite{si:Matus1999}; $\GF(4)$ sub-schemes (new) &\raggedright  proof; the ideal case without a computer, with the forcing lemma by exact certificates as a cross-check; the $\GF(4)$ sub-scheme for $\Gamma_{\mathrm{NF}}$ checked by computer (SN~\ref{app:qss})\tabularnewline
Prop.~\ref{thm:reduction} &\raggedright  all $n$ &\raggedright  lift: $\StabZ{d}{n}\cong$ self-dual slice of $\LcZ{d}{n+1}$, every $d$ &\raggedright  lift of Mat\'u\v{s}~\cite{si:Matus2012}; entropies of stabilizer states~\cite{si:GrossWalter2013,si:LMRW2013}; for prime $d$, Ref.~\cite{si:Osthus2024} &\raggedright  proof (SN~\ref{app:reduction})\tabularnewline
Thm.~\ref{thm:steane} &\raggedright  $n\ge6$ &\raggedright  $\StabZ{2}{n}\not\subseteq\StabZ{d}{n}$ for every odd $d$ &\raggedright  Fano configuration; transfer through a derived family built from common informations (new) &\raggedright  proof; $\fnine$ expanded by computer (SN~\ref{app:transfer})\tabularnewline
Thm.~\ref{thm:reverse} &\raggedright  $n\ge6$ &\raggedright  $\StabZ{d}{n}\not\subseteq\StabZ{2}{n}$ for every $d\ge3$ &\raggedright  non-Fano configuration; transfer through the stabilizer group (new) &\raggedright  proof (SN~\ref{app:reverse})\tabularnewline
Thm.~\ref{thm:divisibility} &\raggedright  $n\ge2m+2$, $m=\max(d,d')$ &\raggedright  $\StabZ{d}{n}\subseteq\StabZ{d'}{n}$ if and only if $d\mid d'$ &\raggedright  inclusion for $d\mid d'$ is immediate; the converse rests on Lemma~\ref{lem:JI} and Suppl.\ Lemma~\ref{lem:JIstar} for groups (new) &\raggedright  proof (SN~\ref{app:lemma1}, \ref{app:cubes})\tabularnewline
Thm.~\ref{thm:five} &\raggedright  $n=5$ &\raggedright  $\Stab_5\subsetneq\QLR_5$; the separating inequality is a facet of $\StabZ{d}{5}$ for every $d$ &\raggedright  question of Ref.~\cite{si:BCHS2020}, answered here; published six-variable inequalities detect the gap as well~\cite{si:DFZ2009} &\raggedright  exact certificate; extremality and facet computer-assisted (SN~\ref{app:five})\tabularnewline
Thm.~\ref{thm:twopowers} &\raggedright  $n\ge6$ &\raggedright  $\StabZ{d}{n}\subseteq\StabZ{2^a}{n}$ if and only if $d\mid2^a$ &\raggedright  the same transfer with an inequality for groups of exponent $2^a$ (new) &\raggedright  proof (SN~\ref{app:reverse-powers})\tabularnewline
Thm.~\ref{thm:primes} &\raggedright  $n\ge2p+3$; $n\ge2p+2$ for odd $p$ &\raggedright  $\StabZ{p}{n}$, $\StabZ{q}{n}$ incomparable for primes $p<q$ &\raggedright  for $n\ge2p+3$, characteristic-dependent inequalities over fields~\cite{si:PenaSarria2021}; $n=2p+2$ via Suppl.\ Lemma~\ref{lem:JIstar} (new) &\raggedright  proof (SN~\ref{app:lemma1}, \ref{app:cubes})\tabularnewline
Prop.~\ref{prop:blind} &\raggedright  $n=6$ &\raggedright  eight characteristic-dependent inequalities hold on every self-dual polymatroid &\raggedright  six of them from Refs.~\cite{si:BKL2011,si:DFZ2015rate,si:PenaSarria2019}; blindness on the self-dual slice (new) &\raggedright  computer-assisted, exact certificates (SN~\ref{app:blind})\tabularnewline
Prop.~\ref{prop:sandwich} &\raggedright  all $n$ &\raggedright  non-inclusions pass from rank cones on $n$ elements to stabilizer cones, and from these to rank cones on $n+1$ elements &\raggedright  doubled codes (a standard purification of a classical code); their use to transfer non-inclusions (new); Prop.~\ref{thm:reduction} &\raggedright  proof (SN~\ref{app:cubes-sandwich})\tabularnewline
Prop.~\ref{thm:four} &\raggedright  $n\le4$ &\raggedright  $\StabZ{d}{n}$ is the same cone for every $d$ &\raggedright  four-party cone of all $d$ together~\cite{si:LMRW2013}; also from Refs.~\cite{si:BCHS2020,si:WalterWitteveen2021} with Cor.~\ref{cor:divisibility} &\raggedright  computer-assisted, exact (SN~\ref{app:four})\tabularnewline
SN~\ref{app:qss-color} &\raggedright  seven sites &\raggedright  erasures: for every erasure probability $0<\varepsilon<1/2$, the color code of odd local dimension fails less often than every qubit stabilizer code that encodes one qubit into seven with distance at least $3$ &\raggedright  graph states up to local Clifford unitaries~\cite{si:VandenNest2004}; we know of no earlier comparison &\raggedright  exhaustive enumeration, repeated by a verification script (SN~\ref{app:qss-color})\tabularnewline
SN~\ref{app:five-catalogue} &\raggedright  $n=5$ &\raggedright  of the $1{,}917{,}706$ orbits of the catalog, $641$ lie in $\StabZ{d}{5}$ for every $d$ and $1{,}155$ are undecided &\raggedright  $59$ orbits known for qubits~\cite{si:BCHS2020gap}; the other $582$ realizations and every $d$ (new) &\raggedright  computer-assisted: exact certificates and realizations (SN~\ref{app:five-catalogue})\tabularnewline
\end{tabular}
\end{ruledtabular}
\end{table*}

\clearpage

\end{document}